\DeclarePairedDelimiter\abs{\lvert}{\rvert}
\DeclarePairedDelimiter\len{\lvert}{\rvert}
\DeclarePairedDelimiter\set{\{}{\}}
\DeclarePairedDelimiter\parens{(}{)}
\DeclarePairedDelimiter\symnorm{\|}{\|_{*}}
\DeclarePairedDelimiter\symnormmax{\|}{\|_{*,\infty}}
\newtheorem{theorem}{Theorem}[section]
\newtheorem*{theorem*}{Theorem}
\newtheorem{proposition}[theorem]{Proposition}
\newtheorem*{proposition*}{Proposition}
\newtheorem{lemma}[theorem]{Lemma}
\newtheorem*{lemma*}{Lemma}
\newtheorem{corollary}[theorem]{Corollary}
\newtheorem*{conjecture*}{Conjecture}
\newtheorem{fact}[theorem]{Fact}
\newtheorem*{fact*}{Fact}
\newtheorem*{hypothesis*}{Hypothesis}
\newtheorem{itheorem}[theorem]{Informal Theorem}
\newtheorem{claim}[theorem]{Claim}
\newtheorem*{claim*}{Claim}
\theoremstyle{definition}
\newtheorem{definition}[theorem]{Definition}
\newtheorem{question}[theorem]{Question}
\newtheorem{openquestion}[theorem]{Open Question}
\theoremstyle{remark}
\newtheorem{remark}[theorem]{Remark}
\newtheorem*{remark*}{Remark}
\newtheorem*{observation*}{Observation}
\newcommand{\R}{\mathbb{R}}
\newcommand{\Z}{\mathbb{Z}}
\newcommand{\I}{\mathbb{I}}
\newcommand{\calD}{\mathcal{D}}
\newcommand{\calF}{\mathcal{F}}
\newcommand{\calG}{\mathcal{G}}
\newcommand{\calN}{\mathcal{N}}
\newcommand{\calT}{\mathcal{T}}
\newcommand{\calX}{\mathcal{X}}
\newcommand{\poly}{\mathrm{poly}}
\newcommand{\Paren}[1]{\left(#1\right)}
\newcommand{\Abs}[1]{\left\lvert#1\right\rvert}
\newcommand{\norm}[1]{\lVert #1 \rVert}
\newcommand{\bignorm}[1]{\big\lVert#1\big\rVert}
\newcommand{\Bignorm}[1]{\Big\lVert#1\Big\rVert}
\newcommand{\iprod}[1]{\langle#1\rangle}
\newcommand{\Iprod}[1]{\left\langle#1\right\rangle}
\newcommand{\Vol}{\text{Vol}}
\newcommand{\Esymb}{\mathbb{E}}
\newcommand{\Psymb}{\mathbb{P}}
\DeclareMathOperator*{\E}{\Esymb}
\DeclareMathOperator*{\ProbOp}{\Psymb}
\renewcommand{\Pr}{\ProbOp}
\newcommand{\grad}{\nabla}
\newcommand{\dy}{\, d{y}}
\newcommand{\dr}{\, d{r}}
\newcommand{\tr}{\text{tr}}
\newcommand{\tmu}{\tilde{\mu}}
\newcommand{\tsigma}{\tilde{\sigma}}
\newcommand{\tw}{\tilde{w}}
\newcommand{\tb}{\tilde{b}}
\newcommand{\tA}{\tilde{A}}
\newcommand{\tF}{\tilde{F}}
\newcommand{\tsf}{\tilde{f}}
\newcommand{\tcalG}{\tilde{\calG}}
\newcommand{\vx}{{\bf x}}
\newcommand{\vsig}[1]{{ {\xi}_{#1}{\scriptstyle (\sigma)} }}
\newcommand{\smax}{\sigma_{\max}}
\newcommand{\smin}{\sigma_{\min}}
\newcommand{\rhos}{\rho_{\sigma}}
\newcommand{\rhow}{\rho_w}
\newcommand{\eps}{\varepsilon}
\renewcommand{\epsilon}{\varepsilon}
\newcommand{\dx}{dx}
\newcommand{\im}{\mathrm{i}}
\newcommand{\wmin}{w_{\text{min}}}
\newcommand{\dparam}{\Delta_{\mathrm{param}}}
\newcommand{\ft}[1]{\widehat{#1}}
\newcommand{\phitail}[1]{\tilde{\Phi}_{#1}}
\newcommand{\phit}{\phitail{0,\sigma}}
\newcommand{\iphit}[1]{\tilde{\Phi}^{-1}_{0,\sigma}(#1)}
\newcommand{\cmoments}{c_0}
\newcommand{\basin}{c_0 d^{-5/2}}
\newcommand{\ehat}{\widehat{e}}
\newcommand{\ejl}{\ehat_{j \ell}}
\newcommand{\ejj}{\ehat_{jj}}
\newcommand{\ctails}{}
\newcommand{\sqsigsub}[1]{\vsig{i}^2 \tau^2_i}
\newcommand{\cI}{\mathcal{I}}
\newcommand{\ev}{\widehat{v}}
\newcommand{\er}[1]{\widehat{e}_{r_{#1}}}
\newcommand{\eji}{\widehat{e}_{ji}}
\newif\ifnotes\notesfalse
\definecolor{mygrey}{gray}{0.50}
\newcommand{\notename}[2]{{\textcolor{mygrey}{\footnotesize{\bf (#1:} {#2}{\bf ) }}}}
\newcommand{\noteswarning}{{\begin{center} {\Large WARNING: NOTES ON}\end{center}}}
\newcommand{\notename}[2]{{}}
\newcommand{\noteswarning}{{}}
\newcommand{\onote}[1]{{\notename{Oded}{#1}}}
\newcommand{\anote}[1]{{\notename{Aravindan}{#1}}}
\title{On Learning Mixtures of Well-Separated Gaussians}
\author{ 
	Oded Regev\thanks{Courant Institute of Mathematical Sciences, New York
 University. Supported by the Simons Collaboration on Algorithms and Geometry and by the National Science Foundation (NSF) under Grant No.~CCF-1320188.}
\and
Aravindan Vijayaraghavan\thanks{
  Department of Electrical Engineering and Computer Science,
  Northwestern University. Supported by the National Science Foundation (NSF) under Grant No.~CCF-1652491 and CCF-1637585. Any opinions, findings, and conclusions or recommendations expressed in this material are those of the authors and do not necessarily reflect the views of the NSF.} 
 }
\date{}
\begin{document}

\maketitle

\noteswarning

\begin{abstract}
We consider the problem of efficiently learning mixtures of 
a large number of 
spherical Gaussians, when the components of the mixture are well separated.
In the most basic form of this problem, we are given samples from a uniform mixture of $k$ standard spherical Gaussians with 
means $\mu_1,\ldots,\mu_k \in \R^d$,
and the goal is to estimate the means up to accuracy $\delta$ using $\poly(k,d, 1/\delta)$ samples.

In this work, we study the following question: what is the minimum separation needed between the means for 
solving this task?
The best known algorithm due to Vempala and Wang [JCSS 2004] requires a separation of 
roughly $\min\set{k,d}^{1/4}$. 
On the other hand, Moitra and Valiant [FOCS 2010] showed that with separation $o(1)$,
exponentially many samples are required. 
We address the significant gap between these two bounds, by showing the following results.

\begin{itemize}
\item We show that with separation $o(\sqrt{\log k})$, super-polynomially many samples are required. 
In fact, this holds even when the $k$ means of the Gaussians are picked at random in $d=O(\log k)$ dimensions. 
 

\item 
We show that with separation $\Omega(\sqrt{\log k})$,
$\poly(k,d,1/\delta)$ samples suffice. Notice that the bound on the separation is independent of $\delta$. 
This result is based on a new and efficient ``accuracy boosting'' algorithm that 
takes as input coarse estimates of the true means 
and in time (and samples) $\poly(k,d, 1/\delta)$ outputs estimates of the means up to 
arbitrarily good accuracy $\delta$ assuming the separation between the means 
is $\Omega(\min\set{\sqrt{\log k},\sqrt{d}})$ (independently of $\delta$). 
The idea of the algorithm is to iteratively solve a ``diagonally dominant'' system of non-linear equations.
\end{itemize}
We also (1) present a \emph{computationally efficient} algorithm in $d=O(1)$ dimensions with only $\Omega(\sqrt{d})$ separation,
and (2) extend our results to the case that components might have different weights and variances.
These results together essentially characterize the optimal order of separation between components that is needed to learn a mixture of $k$ spherical Gaussians with polynomial samples.
\end{abstract}

\section{Introduction}


Gaussian mixture models are one of the most widely used statistical models for clustering. 
In this model, we are given random samples, where each sample point $x \in \R^d$ is drawn independently from one of $k$ Gaussian components according to mixing weights $w_1, w_2, \dots, w_k$, where each Gaussian component $j \in [k]$ has a mean $\mu_j \in \R^d$ and a 
covariance $\Sigma_j \in \R^{d \times d}$. 
We focus on an important special case of the problem where each of the components is a \emph{spherical} Gaussian, i.e., the covariance matrix of each component is a multiple of the identity. 
If $f$  represents the p.d.f.\ of the Gaussian mixture $\calG$, and $g_j$ represents the p.d.f.\ of the $j$th Gaussian component,
$$g_j = \frac{1}{\sigma_j^d}\exp\left(- \pi \norm{x - \mu_j}_2^2 / \sigma_j^2\right), ~ f(x)=\sum_{j=1}^k w_j g_j(x).$$
The goal is to estimate the parameters $\set{(w_j, \mu_j, \sigma_j): j \in [k]}$ up to required accuracy $\delta>0$ in time and number of samples that is polynomial in $k,d, 1/\delta$.

Learning mixtures of Gaussians has a long and rich history, starting with the work of Pearson \cite{Pea94}.
(See Section~\ref{sec:priorwork} for an overview of prior work.)
Most of the work on this problem, especially in the early years but also recently, is 
under the assumption that there is some minimum \emph{separation} between the means of the components in the mixture. 
Starting with work by Dasgupta~\cite{Das99}, and continuing with a long line of work (including~\cite{AK01, VW04, AM05, KSV08, Srebro06,DS07, BV08,KK10, AS12,BWY,Hsuetal16,DTZ16}),
efficient algorithms were found under mild separation assumptions. 
Considering for simplicity the case of uniform mixtures (i.e., all weights are $1/k$) of standard Gaussians (i.e., spherical with $\sigma=1$),
the best known result due to Vempala and Wang~\cite{VW04} provides an efficient algorithm (both in terms of samples and running time)
under separation of at least $\min\set{k,d}^{1/4} \poly\log(dk/\delta)$  between any two means.

A big open question in the area is whether efficient algorithms exist under weaker separation assumptions. 
It is known that when the separation is $o(1)$, a super-polynomial number of samples is required (e.g.,~\cite{MV10,ABGRV14,HP15}),
but the gap between this lower bound and the above upper bound of $\min\set{k,d}^{1/4}\poly\log(dk/\delta)$ is quite wide. 
Can it be that efficient algorithms exist under only $\Omega(1)$ separation? In fact,
prior to this work, this was open even in the case of $d=1$.

\begin{question}\label{qn:gaussians}
What is the minimum order of separation that is needed to learn the parameters 
of a mixture of $k$ spherical Gaussians up to accuracy $\delta$ using $\poly(d,k,1/\delta)$ samples? 
\end{question}

\subsection{Our Results} \label{sec:results}

By improving both the lower bounds and the upper bounds mentioned above, we characterize (up to constants) the minimum 
separation needed to learn the mixture from polynomially many samples. 
Our first result shows super-polynomial lower bounds when the separation is of the order $o(\sqrt{\log k})$. 
In what follows, $\dparam(\calG, \tilde\calG)$ represents the ``distance'' between the parameters of the two mixtures of Gaussians $\calG, \tilde\calG$ (see Definition~\ref{def:paramdist} for the precise definition). 
\begin{itheorem}[Lower Bounds]\label{thm:informal:lowerbound}
For any $\gamma(k) = o(\sqrt{\log k})$, there are two uniform mixtures of 
standard spherical Gaussians $\calG, \tilde\calG$ in $d=O(\log k)$ dimensions 
with 
means $\set{\mu_1,\dots,\mu_k}, \set{\tmu_1, \tmu_2, \dots, \tmu_k}$ respectively, that are well separated 
$$\forall i \ne j \in [k]: \norm{\mu_i - \mu_j}_2 \ge \gamma(k), ~\text{ and } \norm{\tmu_i - \tmu_j}_2 \ge \gamma(k),$$ 
and whose parameter distance is large $\dparam\left(\set{\mu_1,\dots,\mu_k},\set{\tmu_1,\dots,\tmu_k} \right) = \Omega(1)$, but have very small statistical distance $\norm{\calG - \tilde\calG}_{TV} \le k^{-\omega(1)}$.
\end{itheorem}
The above statement implies that we need at least $k^{\omega(1)}$ many samples to distinguish between $\calG, \tilde\calG$, and identify $\calG$. 
See Theorem~\ref{thm:lowerbound:random} for a formal statement of the result. 
%
In fact, these sample complexity lower bounds hold even when the means of the Gaussians are picked randomly in a ball of radius $\sqrt{d}$ in $d=o(\log k)$ dimensions.   
This rules out obtaining smoothed analysis guarantees for small dimensions (as opposed to \cite{BCMV, ABGRV14} which give polytime algorithms for smoothed mixtures of Gaussians in $k^{\Omega(1)}$ dimensions). 

Our next result shows that the separation of $\Omega(\sqrt{\log k})$ is tight -- this separation suffices to learn the parameters of the mixture with polynomial samples. We state the theorem for the special case of uniform mixtures of spherical Gaussians. (See Theorem~\ref{thm:betterupperbounds} for the formal statement.)
\begin{itheorem}[Tight Upper Bound in terms of $k$]\label{thm:informal:betterupperbounds}
  There exists a universal constant $c>0$, such that given samples from a uniform mixture of standard spherical Gaussians in $\R^d$ with 
  well-separated means, i.e.,
\begin{equation}
\forall i, j \in [k], i \ne j:~ \norm{\mu_i - \mu_j}_2 \ge  c \sqrt{\log k} 
\end{equation}
there is an algorithm that for any $\delta>0$ uses only $\poly(k,d,1/\delta)$ samples and with high probability finds $\set{\tmu_1, \tmu_2, \dots, \tmu_k}$ satisfying $\dparam\left(\set{\mu_1,\dots,\mu_k},\set{\tmu_1,\dots,\tmu_k} \right) \le \delta$. 
\end{itheorem}

While the above algorithm uses only $\poly(k,d,1/\delta)$ samples, it is computationally inefficient.

Our next result shows that in constant dimensions, one can obtain a \emph{computationally efficient} algorithm.
In fact, in such low dimensions a separation of order $\Omega(1)$ suffices.

\begin{itheorem}[Efficient algorithm in low dimensions]\label{thm:informal:lowd}
  There exists a universal constant $c>0$, such that given samples from a uniform mixture of standard spherical Gaussians in $\R^d$ 
  with well-separated means, i.e.,
\begin{equation}
\forall i, j \in [k], i \ne j:~ \norm{\mu_i - \mu_j}_2 \ge  c \sqrt{d} 
\end{equation}
there is an algorithm that for any $\delta>0$ uses only $\poly_d(k,1/\delta)$ time (and samples) and with high probability finds $\set{\tmu_1, \tmu_2, \dots, \tmu_k}$ satisfying $\dparam\left(\set{\mu_1,\dots,\mu_k},\set{\tmu_1,\dots,\tmu_k} \right) \le \delta$. 
\end{itheorem}

See Theorem~\ref{thm:lowdims} for a formal statement. An important feature of the above two algorithmic results is that the separation is independent of the accuracy $\delta$ that we desire in parameter estimation ($\delta$ can be arbitrarily small compared to $k$ and $d$).
\onote{I'm confused about the following sentence: how does the low-dimensional result help in getting the tight characterization? does the ``above two results'' refer to the
``above two algorithmic results'' referred to in the previous sentence?}
These results together almost give a {\em tight characterization} (up to constants) for the amount of separation needed to learn with $\poly(k,d,1/\delta)$ samples.

\paragraph{Iterative Algorithm.} 
The core technical portion of Theorem~\ref{thm:informal:betterupperbounds} and Theorem~\ref{thm:informal:lowd} is a new iterative algorithm, which is the main algorithmic contribution of the paper. 
This algorithm takes coarse estimates of the means, and iteratively refines them to get arbitrarily good accuracy $\delta$. 
 We now present an informal statement of the guarantees of the iterative algorithm. 
\begin{itheorem}[Iterative Algorithm Guarantees]\label{thm:informal:iterative}
 There exists a universal constant $c>0$, such that given samples from a uniform mixture of standard spherical Gaussians in $\R^d$ with 
 well-separated means, i.e.
\begin{equation}
\forall i, j \in [k], i \ne j:~ \norm{\mu_i - \mu_j}_2 \ge  c \min\set{\sqrt{\log k},\sqrt{d}} 
\end{equation}
and suppose we are given initializers $\tmu_1, \dots, \tmu_k$ for the means $\mu_1, \dots, \mu_k$ satisfying
$$\forall j \in [k], ~\frac{1}{\sigma_j}\norm{\mu_j - \tmu_j}_2 \le 1/\poly\big(\min\set{d,k}\big).
$$
There exists an iterative algorithm that for any $\delta>0$ that runs in $\poly(k,d,1/\delta)$ time (and samples), and after $T=O(\log\log(k/\delta))$ iterations, finds with high probability $\mu^{(T)}_1, \dots, \mu^{(T)}_k$ such that $\dparam(\set{\mu_1, \dots, \mu_k}, \set{\mu^{(T)}_1, \dots, \mu^{(T)}_k}) \le \delta$.
\end{itheorem}

The above theorem also holds when the weights and variances are unequal. See Theorem~\ref{thm:boxiterative} for a formal statement. Note that in the above result,  the desired accuracy $\delta$ can be arbitrarily small compared to $k$, and the separation required does not depend on $\delta$. To prove the polynomial identifiability results (Theorems~\ref{thm:informal:betterupperbounds} and~\ref{thm:informal:lowd}), we first find coarse estimates of the means that serve as initializers to this iterative algorithm, which then recovers the means up to arbitrarily fine accuracy independent of the separation. 

The algorithm works by solving a system of non-linear equations that is obtained by estimating simple statistics (e.g., means) of the distribution restricted to certain carefully chosen regions.
We prove that the system of non-linear equations satisfies a notion of ``diagonal dominance'' that allows us to leverage iterative algorithms like Newton's method
and achieve rapid (quadratic) convergence.

The techniques developed here can find such initializers using only $\poly(k,d)$ many samples, but use time that is exponential in $k$. This leads to the following natural open question:

\begin{openquestion}
Given a mixture of spherical Gaussians with equal weights and variances, and with separation 
$$\forall i \ne j \in [k], \norm{\mu_i - \mu_j}_2 \ge c \sqrt{\log k}  $$
for some sufficiently large absolute constant $c>0$, is there an algorithm that recovers the parameters up to $\delta$ accuracy in time $\poly(k,d,1/\delta)$?  
\end{openquestion}     
Our iterative algorithm shows that to resolve this open question affirmatively, it is enough to find initializers that are reasonably close to the true parameters. 
In fact, a simple amplification argument shows that initializers that are $c\sqrt{\log k}/8$ close to the true means will suffice for this approach.

Our iterative algorithm is reminiscent of some commonly used iterative heuristics,
such as Lloyd's Algorithm and especially Expectation Maximization (EM). While these iterative methods are the practitioners' method-of-choice for learning probabilistic models, they have been notoriously hard to analyze.
We believe that the techniques developed here may also be useful to prove 
guarantees for these heuristics. 

\subsection{Prior Work and Comparison of Results} \label{sec:priorwork}


Gaussian mixture models are among the most widely used probabilistic models in statistical inference~\cite{Pea94,Tei61,Tei67}. 
Algorithmic results fall into two broad classes --- separation-based results, and moment-based methods that do not assume explicit geometric separation.

\paragraph{Separation-based results.} The body of work that is most relevant to this paper assumes that there is some minimum separation between the means of the components in the mixture. The first polynomial time algorithmic guarantees for mixtures of Gaussians were given by Dasgupta~\cite{Das99}, who showed how to learn mixtures of spherical Gaussians when the separation is of the order of $d^{1/2}$. 
This was later improved by a series of works~\cite{AK01,VW04,AM05,KSV08, DS07, BV08,KK10, AS12} for both spherical Gaussians and general Gaussians. 
The algorithm of Vempala and Wang~\cite{VW04} gives the best known result, and 
uses PCA along with distance-based clustering to learn mixtures of spherical Gaussians with separation 
$$\norm{\mu_i - \mu_j}_2 \ge \parens{\min\set{k,d}^{1/4} \log^{1/4}(dk/\delta)+ \log^{1/2}(dk/\delta)} (\sigma_i+\sigma_j).$$
We note that all these clustering-based 
algorithms require a separation that either implicitly or explicitly depend on the estimation 
accuracy $\delta$.\footnote{Such a dependency on $\delta$ seems necessary for 
clustering-based algorithms that cluster every point accurately with high probability.} 
Finally, although not directly related to our work, we note that 
a similar separation condition was shown to suffice also for \emph{non-spherical} 
Gaussians~\cite{BV08}, where separation is measured based on the variance 
along the direction of the line joining the respective means 
(as opposed, e.g., to the sum of maximum variances $\norm{\Sigma_i}+\norm{\Sigma_j}$ which could be much larger). 
 
Iterative methods like Expectation Maximization (EM) and Lloyd's algorithm (sometimes called the $k$-means heuristic) are commonly used in practice to learn mixtures of spherical Gaussians but, as mentioned above, are notoriously hard to analyze.
Dasgupta and Schulman~\cite{DS07} proved that a variant of the EM algorithm learns mixtures of spherical Gaussians with separation of the order of $d^{1/4}\text{polylog}(dk)$. 
Kumar and Kannan~\cite{KK10} and subsequent work~\cite{AS12} showed
that the spectral clustering heuristic (i.e., PCA followed by Lloyd's algorithm) provably recovers the clusters in a rather wide family of distributions which includes non-spherical Gaussians; in the special case of spherical Gaussians, their analysis requires separation of order $\sqrt{k}$.

Very recently, the EM algorithm was shown to succeed for mixtures of $k=2$ spherical Gaussians with $\Omega(\sigma)$ separation~\cite{BWY, Hsuetal16,DTZ16} (we note that in this setting with $k=O(1)$, polynomial time guarantees are also known using other algorithms like the method-of-moments~\cite{KMV10}, as we will see in the next paragraph). 
SDP-based algorithms have also been studied in the context of learning mixtures of spherical Gaussians with a similar separation requirement of $\Omega(k \max_i \sigma_i)$~\cite{Mixonetal}. The question of how much separation between the components is necessary was also studied 
empirically by Srebro et al.~\cite{Srebro06}, who observed that iterative heuristics successfully learn the parameters under much smaller separation compared to known theoretical bounds.

\paragraph{Moment-based methods.} In a series of influential results, algorithms based on the method-of-moments were developed by \cite{KMV10,MV10,BS10} for efficiently learning mixtures of $k=O(1)$ Gaussians under arbitrarily small separation. To perform parameter estimation up to accuracy $\delta$, the running time of the algorithms is 
$\poly(d,1/\wmin,1/\delta)^{O(k^2)}$ (this holds for mixtures of general Gaussians). This exponential dependence on $k$ is necessary in general, due to statistical lower bound results~\cite{MV10}. The running time dependence on $\delta$ was improved in the case of $k=2$ Gaussians  in \cite{HP15}. 

Recent work~\cite{HK13,BCV,GVX14,BCMV,ABGRV14,GHK} use uniqueness of tensor decompositions (of order $3$ and above) to implement the method of moments and give polynomial time algorithms assuming the means are sufficiently high dimensional, and do not lie in certain degenerate configurations. Hsu and Kakade~\cite{HK13} gave a polynomial time algorithm based on tensor decompositions to learn a mixture of spherical Gaussians, when the means are linearly independent. This was extended by \cite{GVX14,BCMV,ABGRV14} to give smoothed analysis guarantees to learn ``most'' mixtures of spherical Gaussians when the means are in $d=k^{\Omega(1)}$ dimensions. These algorithms do not assume any strict geometric separation conditions and learn the parameters in $\poly(k,d,1/\delta)$ time (and samples), when these non-degeneracy assumptions hold. 
However, there are many settings where the Gaussian mixture consists of many clusters in a low dimensional space, or have their means lying in a low-dimensional subspace or manifold, where these tensor decomposition guarantees do not apply.
Besides, these algorithms based on tensor decompositions seem less robust to noise than clustering-based approaches and iterative algorithms, giving further impetus to the study of the latter algorithms as we do in this paper.


\paragraph{Lower Bounds.} Moitra and Valiant~\cite{MV10} showed that $\exp(k)$ samples are needed to learn the parameters of a mixture of $k$ Gaussians \cite{MV10}. 
In fact, the lower bound instance of \cite{MV10} is one dimensional, 
with separation of order $1/\sqrt{k}$.
Anderson et al.~\cite{ABGRV14} proved a lower bound on sample complexity that is reminiscent
of our Theorem~\ref{thm:informal:lowerbound}. 
Specifically, they obtain a super-polynomial lower bound assuming separation $O(\sigma/\poly\log(k))$ for $d=O(\log k/ \log\log k)$. 
This is in contrast to our lower bound which 
allows separation greater than $\sigma$, or $o(\sigma \sqrt{\log k})$ to be precise.

\paragraph{Other Related work.} While most of the previous work deals with estimating the parameters (e.g., means) of the Gaussians components in the given mixture $\calG$, a recent line of work~\cite{FOS06,AJOS14,DK14} focuses on the task of 
learning a mixture of Gaussians $\calG'$ (with possibly very different parameters) that is close in statistical distance i.e., $\norm{\calG-\calG'}_{TV} < \delta$ (this is called ``proper learning'', since the hypothesis that is output is also a mixture of $k$ Gaussians). 
When identifiability using polynomial samples is known for the family of distributions,
proper learning automatically implies parameter estimation. 
Algorithms for proper learning mixtures of spherical Gaussians~\cite{AJOS14,DK14} give polynomial sample complexity bounds when $d=1$ (note that the lower bounds of \cite{MV10} do not apply here) and have run time dependence that is exponential in $k$; the result of ~\cite{FOS06} has sample complexity that is polynomial in $d$ but exponential in $k$. Algorithms that take time and samples $\poly(k,1/\delta)^d$ are also known for ``improper learning'' and density estimation for mixtures of $k$ Gaussians (the hypothesis class that is output may not be a mixture of $k$ Gaussians)~\cite{CDRS,Bhaskara15}. We note that known algorithms have sample complexity that is either exponential in $d$ or $k$, even though proper learning and improper learning are easier tasks than parameter estimation. To the best of our knowledge, better bounds are not known under additional separation assumptions.  

 A related problem in the context of clustering graphs and detecting communities is the problem of learning a stochastic block model or planted partitioning model~\cite{McSherry}. Here, a sharp threshold phenomenon involving an analogous separation condition (between intra-cluster probability of edges and inter-cluster edge probability) is known under various settings~\cite{MNS13,M14,MNS14colt} (see the recent survey by Abbe~\cite{Abbesurvey} for details). In fact, the algorithm of Kumar and Kannan~\cite{KK10} give a general separation condition that specializes to separation between the means for mixtures of Gaussians, and separation between the intra-cluster and inter-cluster edge probabilities for the stochastic block model.

\subsection{Overview of Techniques} \label{sec:techniques}


\paragraph{Lower bound for $O(\sqrt{\log k})$ separation.}

The sample complexity lower bound proceeds by showing a more general statement: in any large enough collection of uniform mixtures, for all but a small fraction of the mixtures, there is at least one other mixture in the collection that is close in statistical distance (see Theorem~\ref{thm:ddim}). For our lower bounds, we will just produce a large collection of uniform mixtures of well-separated spherical Gaussians in $d= c \log k$ dimensions, whose pairwise parameter distances are reasonably large. In fact, we can even pick the means of these mixtures randomly in a ball of radius $\sqrt{d}$ in $d=c \log k$ dimensions; w.h.p. most of these mixtures will need at least $k^{\omega(1)}$ samples to identify. 

To show the above pigeonhole style statement about large collections of mixtures, we will associate with a uniform mixture having  means $\mu_1, \dots, \mu_k$, the following quantities  
that we call ``mean moments,'' and we will use them as a proxy for the actual moments of the distribution: 
$$(M_1, \dots, M_R)~\text{where }~ \forall 1\le r\le R:  M_r= \frac{1}{k} \sum_{j=1}^k \mu_j^{\otimes r}.$$
The mean moments just correspond to the usual moments of a mixture of delta functions centered at $\mu_1, \dots, \mu_k$. 
Closeness in the first $R=O(1/\eps)$ mean moments (measured in injective tensor norm) implies that the two corresponding distributions are $\eps$ close in statistical distance (see Lemma~\ref{lem:matchingmoments} and Lemma~\ref{lem:ell2toell1}). 
The key step in the proof uses a careful packing argument to show that for most mixtures in a large enough collection, there is a different mixture in the collection that approximately matches in the first $R$ mean moments (see Lemma~\ref{lem:packing}).

\paragraph{Iterative Algorithm.}
Our iterative algorithm will function in both settings of interest: the high-dimensional setting when we have $\Omega(\sqrt{\log k})$ separation, and the low-dimensional setting when $d<\log k$ and we have $\Omega(\sqrt{d})$ separation. 
For the purpose of this description, let us assume $\delta$ is arbitrarily small compared to 
$d$ and $k$.
In our proposed algorithm, we will consider distributions obtained by restricting the support to just certain regions around the initializers $z_1=\tilde\mu_1, \dots,z_k=\tilde\mu_k$ that are somewhat close to the means $\mu_1, \mu_2, \dots, \mu_k$ respectively. Roughly speaking, we first partition the space into a Voronoi partition given by $\set{z_j: j \in [k]}$, and then for each component $j \in [k]$ in $\calG$, let $S_j$ denote the region containing $z_j$ (see Definition~\ref{def:box:Sj} for details). For each $j \in [k]$ we consider only the samples in the set $S_j$ and let $u_j \in \R^d$ be the (sample) mean of these points in $S_j$, after subtracting $z_j$.

The regions are chosen in such a way that $S_j$ has a large fraction of the probability mass from the $j$th component, and the total probability mass from the other components is relatively small (it will be at most $1/\poly(k)$ with $\Omega(\sqrt{\log k})$ separation, and $O_d(1)$ with $\Omega(1)$ separation in constant dimensions). 
However, since $\delta$ can be arbitrarily small functions of $k,d$, there can still be a relatively large contribution from the other components. For instance, in the low-dimensional case with $O(1)$ separation, there can be $\Omega(1)$ mass from a single neighboring component! Hence, $u_j$ does not give a $\delta$-close estimate for $\mu_j$ (even up to scaling), unless the separation is at least of order $\sqrt{\log (1/\delta)}$ -- this is too large when $\delta =k^{-\omega(1)}$ with $\sqrt{\log k}$ separation, or $\delta=o_d(1)$ with $\Omega(1)$ separation in constant dimensions.

Instead we will use these statistics to set up a system of non-linear equations where the unknowns are the true parameters and solve for them using the Newton method. We will use the initializers $z_j=\mu^{(0)}_j$, to define the statistics that give our equations.  Hence the unknown parameters $\set{\mu_i: i \in [k]}$ satisfy the following equation for each $j \in [k]$:
\begin{equation}\label{eq:intro:iterative}
\sum_{i=1}^k w_i \int_{y \in S_j} (y - z_j) \cdot \sigma_j^{-d}\exp\Paren{-\frac{\pi \norm{y - \mu_i}_2^2}{\sigma_i^2}} \dy = u_j.
\end{equation}
Note that in the above equation, the only unknowns or variables are the true means $\set{\mu_i: i \in [k]}$.  
After scaling the equations, and a suitable change of variables $\vx_j=\mu_j/\sigma_j$ to make the system ``dimensionless'' we get a non-linear system of equations denoted by $F(\vx)=b$. 
 For the above system, $\vx^*_{i}=\mu_i /\sigma_i$ represents a solution to the system given by the parameters of $\calG$.  The Newton algorithm uses the iterative update
$$\vx^{(t+1)} = \vx^{(t)} + \parens{F'(\vx^{(t)})}^{-1} ( b- F(\vx^{(t)}) ) . $$

For the Newton method we need access to the estimates for $b$, and the derivative matrix $F'$  (the Jacobian) evaluated at $\vx^{(t)}$. The derivative of the $j$ equation w.r.t. $\vx_i$ corresponds to 
$$\grad_{\vx_i} F_j (\vx) =\frac{2\pi w_i}{w_j \sigma_j \sigma_i}  \int_{y \in S_j} (y - z_j) (y- \sigma_i \vx_i)^T g_{\sigma_i \vx_i, \sigma_i}(y) \dy ~,$$
where $g_{\sigma_i \vx_i, \sigma_i}(y)$ represents the p.d.f. at a point $y$ due to a spherical Gaussian with mean at $\sigma_i \vx_i$ and covariance $\sigma_i^2/ (2 \pi)$ in each direction. Unlike usual applications of the Newton method, we  
do not have closed form expressions for $F'$ (the Jacobian), due to our definition of the set $S_j$. However, we will instead be able to estimate the Jacobian at $\vx^{(t)}$ by calculating the above expression (RHS) by considering a Gaussian with mean $\sigma_i \vx^{(t)}_i $ and variance $\sigma_i^2/(2\pi)$. The Newton method can be shown to be robust to errors in $b, F, F'$ (see Theorem~\ref{thm:newton:errors}).

We want to learn each of the $k$ means up to good accuracy; hence we will measure the error and convergence in $\norm{\cdot}_\infty$ norm. 
This is important in low dimensions since measuring convergence in $\ell_2$ norm will introduce extra $\sqrt{k}$ factors, that are prohibitive for us since the means are separated only by $\Theta_d(1)$. 
The convergence of the Newton's method depends on upper bounding the operator norm of the inverse of the Jacobian $\norm{(F')^{-1}}$ and the second-derivative $\norm{F''}$, with the initializer being chosen $\delta$-close to the true parameters so that $\delta \norm{(F')^{-1}} \norm{F''} < 1/2$. 

The main technical effort for proving convergence is in showing that the inverse $(F')^{-1}$ evaluated at any point in the neighborhood around $\vx^*$ is well-conditioned. 
We will show the convergence of the Newton method by showing ``diagonal dominance'' properties of the $dk \times dk$ matrix $F'$. This uses the separation between the means of the components, and the properties of the region $S_j$ that we have defined. For $\Omega(\sqrt{\log k})$ separation, this uses standard facts about Gaussian concentration to argue that each of the $(k-1)$ off-diagonal blocks (in the $j$th row of $F'$) is at most $1/(2k)$ factor of the corresponding diagonal term.  
With $\Omega(1)$ separation in $d=O(1)$ dimensions, we can not hope to get such a uniform bound on all the off-diagonal blocks (a single off-diagonal block can itself be $\Omega_d(1)$ times the corresponding diagonal entry). We will instead use careful packing arguments to show that the required diagonal dominance condition (see Lemma~\ref{lem:smallleakage} for a statement). 
\anote{Is the above too vague?}

Hence, the initializers are used to both define the regions $S_j$, and as initialization  for the Newton method. Using this diagonal dominance in conjunction with initializers (Theorem~\ref{thm:upperbounds}) and the (robust) guarantees of the Newton method (Theorem~\ref{thm:newton:errors}) gives rapid convergence to the true parameters.

\section{Preliminaries} \label{sec:prelims}


Consider a mixture of $k$ spherical Gaussians $\calG$ in $\R^d$ that has parameters $\set{(w_j, \mu_j,\sigma_j): j \in [k]}$. The $j$th component has mean $\mu_j$ and covariance $\sigma_j^2/2\pi \cdot I_{d \times d}$. 
For $\mu \in \R^d, \sigma \in \R_+$, let $g_{\mu, \sigma}: \R^d \rightarrow \R_+$ represent the p.d.f. of a spherical Gaussian centered at $\mu$ and with covariance $\sigma^2/(2\pi) \cdot I_{d \times d}$. We will use $f$ to represent the p.d.f.\ of the mixture of Gaussians $\calG$, and $g_j$ to represent the p.d.f.\ of the $j$th Gaussian component.

\begin{definition}[Standard mixtures of Gaussians]\label{def:uniform}
A \emph{standard} mixture of $k$ Gaussians with means $\mu_1, \dots, \mu_k \in \R^d$ is a mixture of $k$ spherical Gaussians $\set{(\tfrac{1}{k}, \mu_j, 1): j \in [k]}$.
\end{definition}

A standard mixture is just a uniform mixture of spherical Gaussians with all covariances $\sigma^2=1/(2 \pi)$. 
Before we proceed, we define the following notion of parameter ``distance'' between mixtures of Gaussians:

\anote{This definition may not be the right one -- maybe account for $d$ factor?. Also it is better if it is $\sigma_j^2 - (\sigma'_j)^2$ as opposed to $\sigma_j - \sigma'_j$. }
\begin{definition}[Parameter distance]\label{def:paramdist}
Given two mixtures of Gaussians in $\R^d$, $\calG=\set{(w_j, \mu_j,\sigma_j): j \in [k]}$ and $\calG'=\set{(w'_j, \mu'_j,\sigma_j'): j \in [k]}$, define
\[ 
\dparam\left(\calG, \calG' \right)= 
\min_{\pi \in \text{Perm}_k} \sum_{j=1}^k \frac{\abs{w_j - w_{\pi(j)}}}{\min\set{w_j, w_{\pi(j)}}} + 
\sum_{j=1}^k \frac{\norm{\mu_j - \mu'_{\pi(j)}}_2}{\min\set{\sigma_j, \sigma'_{\pi(j)}}}+
\sum_{j=1}^k \frac{\abs{\sigma_j - {\sigma'_{\pi(j)}}}}{\min\set{\sigma_j, \sigma'_{\pi(j)}}} \; .
\]
For \emph{standard} mixtures, the definition simplifies to 
\[ 
\dparam\left((\mu_1,\dots,\mu_k), (\mu'_1,\dots,\mu'_k) \right)= 
\min_{\pi \in \text{Perm}_k} \sum_{j=1}^k \norm{\mu_j - \mu'_{\pi(j)}}_2 \; .
\]
\end{definition}
Note that this definition is invariant to scaling the variances (for convenience). We note that parameter distance is not a metric, but it is just a convenient way of measure closeness of parameters between two distributions. 

The distance between two individual Gaussian components can also be measured in terms of the total variation distance between the components~\cite{MV10}. 
For instance, in the case of standard spherical Gaussians, a parameter distance of $c\sqrt{\log k}$ corresponds to a total variation distance of $k^{-O(c^2)}$.

\begin{definition}[$\rho$-bounded mixtures]
For $\rho \ge 1$, 
a mixture of spherical Gaussians $\calG=\set{(w_j, \mu_j,\sigma_j)}_{j=1}^k$ in $\R^d$ is called \emph{$\rho$-bounded} if 
for each $j \in [k]$, $\norm{\mu_j}_2 \le \rho$ and $\tfrac{1}{\rho} \le \sigma_j \le \rho$.
In particular, a \emph{standard} mixture is $\rho$-bounded if for each $j \in [k]$, $\norm{\mu_j}_2 \le \rho$. 
\end{definition}
Also, for a given mixture of $k$ spherical gaussians $\calG=\set{(w_j, \mu_j,\sigma_j): j \in [k]}$, we will denote $\wmin= \min_{j \in [k]} w_j$,  $\smax=\max_{j \in [k]} \sigma_j$ and $\smin=\min_{j \in [k]} \sigma_j$.

In the above notation the bound $\rho$ can be thought of as a sufficiently large polynomial in $k$, since we are aiming for bounds that are polynomial in $k$. Since we can always scale the points by an arbitrary factor without affecting the performance of the algorithm, we can think of $\rho$ as the (multiplicative) range of values taken by the parameters $\set{\mu_i, \sigma_i: i \in [k]}$. Since we want separation bounds independent of $k$, we will denote individual aspect ratios for variances and weights given by $\rhos=\max_{i \in [k]} \sigma_i / \min_{i \in [k]} \sigma_i$, and $\rhow = \max_{i \in [k] } w_i / \min_{i \in [k]} w_i$.


Finally, we list some of the conventions used in this paper. We will denote by $N(0,\sigma^2)$ a normal random variable with mean $0$ and variance $\sigma^2$. 
For $x \in \R$ generated according to $N(0,\sigma^2)$, let $\phit(t)$ denote the probability that $x >t$, and let $\iphit{y}$ 
denote the quantile $t$ at which $\phit(t)\le y$. 
For any function $f:\R^d \rightarrow \R$, $f'$ will denote the first derivative (or gradient) of the function, and $f''$ will denote the second derivative (or Hessian).
We define $\norm{f}_{1,S}=\int_S \abs{f(x)} \dx$ to be the $L_1$ norm of $f$ restricted to the set $S$. Typically, we will use indices $i,j$ to represent one of the $k$ components of the mixture, and we will use $r$ (and $s$) for coordinates. For a vector $x \in \R^d$, we will use $x(r)$ denote the $r$th coordinate. 
Finally, we will use {\em w.h.p.} in statements about the success of algorithms to represent probability at least $1-\gamma$ where $\gamma = (d+k)^{-\Omega(1)}$.

\paragraph{Norms.} For any $p\ge 1$, given a matrix $M \in \R^{d \times d}$, we define 
the matrix norm
$$\norm{M}_{p \rightarrow p}= \max_{x \in \R^d: \norm{x}_p=1} \norm{Mx}_p.$$
\onote{8/25: removing:
$$\norm{M}_p =  \Paren{\sum_{i,j=1}^d \abs{M(i,j)}^p}^{1/p}.$$
}

\subsection{Notation and Preliminaries about Newton's method}\label{sec:newtonprelims}
%

Consider a system of $m$ non-linear equations in variables $u_1, u_2, \dots, u_m$:
$$\forall j \in [m], f_j(u_1, \dots, u_m) = b_j. $$

Let $F'=J(u) \in \R^{m \times m}$ be the Jacobian of the system given by the non-linear functional $f:\R^{m} \rightarrow \R^{m}$, where the $(j,i)^{th}$ entry of $J$ is the partial derivative $\frac{\partial f_j (u)}{\partial u_i}$ is evaluated at $u$. Newton's method starts with an initial point $u^{(0)}$, and updates the solution using the iteration:
$$u^{(t+1)} = u^{(t)} + \left(J(u^{(t)})\right)^{-1} \left( b_j- f(u^{(t)})\right) . $$

Standard results shows quadratic convergence of the Newton method for general normed spaces~\cite{AtkinsonHall}. We restrict our attention in the restricted setting where both the range and domain of $f$ is $\R^m$, equipped with an appropriate norm $\norm{\cdot}$ to measure convergence.
\begin{theorem}[Theorem 5.4.1 in \cite{AtkinsonHall}]\label{thm:newton}
Assume $u^* \in \R^m$ is a solution to the equation $f(y)=b$ where $f: \R^m \rightarrow \R^m$ and the inverse Jacobian $J^{-1}$ exists in a neighborhood $N=\set{u: \norm{u- u^*} \le \norm{u^{(0)} - u^*}}$, %
and $F':\R^m \rightarrow \R^{m \times m}$ is locally $L$-Lipschitz continuous in the neighborhood $N$ i.e.,
$ \forall u, v \in N, ~~ \norm{F'(u) - F'(v)} \le L \norm{u - v}$.
Then we have $\norm{u^{(t+1)} - u^* } \le L \cdot \norm{J(u^{(t)})^{-1}} \cdot \norm{u^{(t)} - u^*}^2$.
\end{theorem}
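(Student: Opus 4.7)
The plan is to follow the standard fundamental-theorem-of-calculus derivation of quadratic convergence. First I would subtract $u^*$ from both sides of the Newton update and use $b = f(u^*)$ to write
$$u^{(t+1)} - u^* = J(u^{(t)})^{-1}\Bigl[J(u^{(t)})(u^{(t)} - u^*) - \bigl(f(u^{(t)}) - f(u^*)\bigr)\Bigr].$$
The bracketed expression measures how far the first-order linearization of $f$ at $u^{(t)}$ is from the true secant $f(u^{(t)}) - f(u^*)$, and its smallness is what will drive quadratic contraction.

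Next I would expand both summands uniformly as integrals along the line segment from $u^*$ to $u^{(t)}$. By the fundamental theorem of calculus,
$$f(u^{(t)}) - f(u^*) = \int_0^1 F'\bigl(u^* + s(u^{(t)} - u^*)\bigr)(u^{(t)} - u^*)\, ds,$$
and trivially $J(u^{(t)})(u^{(t)} - u^*) = \int_0^1 F'(u^{(t)})(u^{(t)} - u^*)\, ds$. Subtracting inside the bracket and pulling the common vector $(u^{(t)} - u^*)$ out of the matrix-valued integral gives
$$u^{(t+1)} - u^* = J(u^{(t)})^{-1}\left(\int_0^1 \bigl[F'(u^{(t)}) - F'\bigl(u^* + s(u^{(t)} - u^*)\bigr)\bigr]\, ds\right)(u^{(t)} - u^*).$$

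Now I would apply the $L$-Lipschitz hypothesis on $F'$. Since $N$ is a ball centered at $u^*$ that contains $u^{(t)}$, the segment $\{u^* + s(u^{(t)} - u^*) : s \in [0,1]\}$ lies entirely in $N$, so pointwise $\|F'(u^{(t)}) - F'(u^* + s(u^{(t)} - u^*))\| \le L(1-s)\|u^{(t)} - u^*\|$. Pushing the operator norm through the integral, using submultiplicativity with $\|J(u^{(t)})^{-1}\|$, and evaluating $\int_0^1 (1-s)\,ds = 1/2$ yields $\|u^{(t+1)} - u^*\| \le \tfrac{L}{2}\,\|J(u^{(t)})^{-1}\|\,\|u^{(t)} - u^*\|^2$, which implies the stated bound.

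The argument is a textbook calculation once one has in mind the fundamental-theorem-of-calculus identity for $f(u^{(t)}) - f(u^*)$; there is no serious obstacle. The only point to check is that the entire line segment used in the integral lies inside $N$ so that the Lipschitz bound applies pointwise, and this is immediate from $N$ being a ball around $u^*$ that contains $u^{(t)}$.
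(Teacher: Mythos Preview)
The paper does not actually prove this statement: it is quoted verbatim as Theorem~5.4.1 from \cite{AtkinsonHall}, and in the appendix (proof of Theorem~\ref{thm:newton:errors}) the authors explicitly invoke it as ``the standard analysis of the Newton method'' with an author note that they ``did not write down this proof since it's repetitive.'' Your argument is exactly that standard textbook derivation and is correct; you even recover the sharper constant $L/2$ before relaxing to the stated bound.
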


In particular, for Newton's method to work, $\norm{u^{0} - u^*} \le (L \max_{u \in \calN} \norm{J(u)^{-1}})^{-1}$ will guarantee convergence.
A statement of the robust convergence of Newton's method in the presence of estimates is given in Theorem~\ref{thm:newton:errors} and Corollary~\ref{corr:newton:errors}.


We want to learn each of the $k$ sets of parameters up to good accuracy; hence we will measure the error in $\ell_\infty$ norm. 
To upper bound 
$\norm{J^{-1}}_{\infty \to \infty}$, we will use \emph{diagonal dominance} properties of the matrix $J$. Note that $\norm{A}_{\infty \to \infty}$ is just the maximum $\ell_1$ norm of the rows of $A$. The following lemma bound $\norm{A^{-1}}_{\infty \to \infty}$ for a diagonally dominant matrix $A$.

\begin{lemma}[\cite{Varah}]\label{lem:diagdominant}
Consider any square matrix $A$ of size $n \times n$ satisfying
\[
\forall i \in [n]~~ a_{ii} - \sum_{j \ne i} |a_{ij}| \ge \alpha. 
\]
Then, $\norm{A^{-1}}_{\infty \to \infty} \le 1/\alpha.$\onote{8/7: I think the proof is very easy, so perhaps we can include it}
\end{lemma}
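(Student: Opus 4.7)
The plan is to reduce the operator-norm bound to a coordinate-wise lower bound on $Ax$. Since $\norm{A^{-1}}_{\infty \to \infty} = \sup_{y \ne 0} \norm{A^{-1}y}_\infty / \norm{y}_\infty$, substituting $x = A^{-1}y$ shows the claim is equivalent to the inequality $\norm{Ax}_\infty \ge \alpha \norm{x}_\infty$ holding for every $x \in \R^n$. Invertibility of $A$ will drop out as a byproduct, since this estimate forces the null space of $A$ to be trivial whenever $\alpha > 0$ (the $\alpha = 0$ case is vacuous since the bound $1/\alpha$ is then infinite). So really the only task is to prove this coordinate-wise lower bound.

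To establish it, I would fix an arbitrary nonzero $x \in \R^n$ and let $i^* \in [n]$ be an index at which $\abs{x_{i^*}} = \norm{x}_\infty$. I would then examine the $i^*$-th coordinate of $Ax$ and apply the triangle inequality to separate the diagonal term from the off-diagonal terms:
$$
\abs{(Ax)_{i^*}} \ge \abs{a_{i^*i^*}} \cdot \abs{x_{i^*}} - \sum_{j \ne i^*} \abs{a_{i^*j}} \cdot \abs{x_j}.
$$
The hypothesis gives $a_{i^*i^*} \ge \alpha + \sum_{j \ne i^*} \abs{a_{i^*j}} > 0$, so in particular $\abs{a_{i^*i^*}} = a_{i^*i^*}$. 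Using $\abs{x_j} \le \abs{x_{i^*}} = \norm{x}_\infty$ on every off-diagonal term, the right-hand side is lower bounded by $\bigl(a_{i^*i^*} - \sum_{j \ne i^*} \abs{a_{i^*j}}\bigr) \norm{x}_\infty \ge \alpha \norm{x}_\infty$. Since $\norm{Ax}_\infty \ge \abs{(Ax)_{i^*}}$, this completes the argument.

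There is no substantive obstacle here; the proof is a one-step application of the triangle inequality together with the characterization of $\norm{\cdot}_{\infty \to \infty}$ as the maximum absolute row sum of a matrix. The only minor things to verify are the sign of $a_{i^*i^*}$ (immediate from the hypothesis when $\alpha > 0$) and that the derived coordinate-wise bound genuinely implies invertibility of $A$, so that $A^{-1}$ exists and its operator norm is well-defined.
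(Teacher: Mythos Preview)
Your proof is correct and is the standard argument for Varah's bound. The paper does not actually include a proof of this lemma; it merely cites \cite{Varah} (the author's inline note ``I think the proof is very easy, so perhaps we can include it'' was never acted on). Your argument is precisely the short proof one would supply: pick the coordinate where $|x_i|$ is maximal, apply the triangle inequality to the $i$-th row of $Ax$, and invoke the diagonal-dominance hypothesis. Nothing is missing.
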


%



\section{Lower Bounds with \texorpdfstring{$O(\sqrt{\log k})$}{O(sqrt(log k))} Separation}

Here we show a sample complexity lower bound for learning standard mixtures of $k$ spherical Gaussians even when the separation is of the order of $\sqrt{\log k}$. In fact, this lower bound will also hold for a \emph{random} mixture of Gaussians in $d \le c \cdot \log k$ dimensions (for sufficiently small constant $c$) with high probability.\footnote{In particular, this rules out polynomial-time smoothed analysis guarantees of the kind shown for $d=k^{\Omega(1)}$ in \cite{BCMV,ABGRV14}.} 

\begin{theorem} \label{thm:lowerbound:random}
For any large enough $C$ there exist $c, c_2>0$, such that the following holds for all $k \ge C^8$. 
Let $\calD$ be the distribution over 
standard mixtures of $k$ spherical Gaussians 
obtained by picking each of the $k$ means independently and uniformly from a ball of radius $\sqrt{d}$ around the origin in $d=c \log k$ dimensions.
Let $\set{\mu_1, \mu_2, \dots, \mu_k}$ be a mixture chosen according to $\calD$.
Then with probability at least $1-2/k$ there exists another standard mixture of $k$ spherical Gaussians with means $\set{\tmu_1, \tmu_2, \dots, \tmu_k}$ 
such that both mixtures are $\sqrt{d}$ bounded and well separated, i.e.,
\[
\forall i,j \in [k], i \ne j: \norm{\mu_i - \mu_j}\ge c_2\sqrt{\log k} \text{~~and~~} \norm{\tmu_i - \tmu_j}\ge c_2\sqrt{\log k},
\]
and their p.d.f.s satisfy 
\begin{equation}
\norm{f-\tsf}_1 \le k^{-C}
\end{equation}
even though their parameter distance is at least $c_2 \sqrt{\log k}$. 
Moreover,  we can take $c=1/(4 \log C)$ and $c_2=C^{-24}$.
\end{theorem}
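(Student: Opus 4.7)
The plan is to prove Theorem~\ref{thm:lowerbound:random} via a pigeonhole argument in the space of ``mean moments'' $M_r = \frac{1}{k}\sum_{j=1}^k \mu_j^{\otimes r}$ for $1 \le r \le R$, essentially following the outline in Section~\ref{sec:techniques}. Two ingredients drive the proof: (a) closeness of the first $R = O(C)$ mean moments implies $L_1$-closeness of the two mixtures (by the forthcoming Lemma~\ref{lem:matchingmoments} and Lemma~\ref{lem:ell2toell1}), and (b) the dimension of first-$R$ mean-moment space is only $\sum_{r \le R} d^r \le d^{R+1} = (\log k)^{O(C)}$, which is tiny compared to the $kd = \Theta(k \log k)$ parameters describing a mixture. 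The matched ``twin'' mixture will be produced by pigeonholing a net of well-separated configurations into a discretization of mean-moment space.

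First I establish the geometric properties of a random $\calG \sim \calD$. Boundedness is immediate from the definition of $\calD$. For separation, a volume-ratio bound gives that for each fixed pair $i \ne j$,
\[
\Pr\bigl[\|\mu_i - \mu_j\|_2 \le c_2\sqrt{\log k}\bigr] \le \Bigl(\tfrac{c_2\sqrt{\log k}}{\sqrt{d}}\Bigr)^d = k^{c \log(c_2/\sqrt{c})},
\]
and with $c = 1/(4\log C)$, $c_2 = C^{-24}$, and $k \ge C^8$ the exponent is at most $-3$. A union bound over the $\binom{k}{2}$ pairs yields the separation property with probability at least $1 - 1/k$.

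Second, I carry out the packing argument. Discretize the $\sqrt{d}$-bounded mean-moment space into cells of tensor-norm diameter $\eta = k^{-\Theta(C)}$, producing at most $N_{\text{cells}} = \exp\bigl((\log k)^{O(C)}\bigr)$ cells. Let $\calN$ be a $\delta$-net (with $\delta$ an inverse polynomial in $k$) of well-separated, $\sqrt{d}$-bounded tuples modulo permutation; standard volume bounds combined with Step~1 give $|\calN| \ge \tfrac{1}{2}(\sqrt{d}/\delta)^{kd}/k! \gg N_{\text{cells}}$. Applying Lemma~\ref{lem:packing} to $\calN$, all but a $1/k$ fraction of configurations $\calG \in \calN$ share a cell with a \emph{different} configuration $\tilde\calG \in \calN$; by ingredient (a), the corresponding mixtures satisfy $\|f - \tilde f\|_1 \le k^{-C}$, and by construction both are $\sqrt{d}$-bounded and well-separated.

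Finally, I transfer this to the random model and verify the parameter-distance lower bound. A random $\calG \sim \calD$ rounds to a net configuration at parameter cost $O(k\delta)$, which is negligible compared to $c_2\sqrt{\log k}$. For the matched twin $\tilde\calG$ one must rule out that $\dparam(\calG, \tilde\calG) < c_2\sqrt{\log k}$: for any fixed $\tilde\calG$ and random $\calG$, the probability of small parameter distance under a \emph{fixed} permutation $\pi$ forces each $\|\mu_j - \tilde\mu_{\pi(j)}\|_2 \le c_2\sqrt{\log k}$ and is therefore at most $\bigl(c_2\sqrt{\log k}/\sqrt{d}\bigr)^{kd} \le k^{-3k}$; a union bound over the $k!$ permutations and the $\le N_{\text{cells}}$ candidate twins leaves a failure probability $o(1)$. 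The main technical obstacle is the calibration in the packing step: one must simultaneously ensure (i) $|\calN| \gg N_{\text{cells}}$, (ii) $\eta$ is small enough that Lemma~\ref{lem:matchingmoments} with $R = O(C)$ yields $\|f-\tilde f\|_1 \le k^{-C}$, and (iii) the matched twin lies in $\calN$ (so is automatically well-separated) and is far from $\calG$ in parameter distance under every one of the $k!$ permutations.
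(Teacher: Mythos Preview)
Your overall strategy---pigeonhole in mean-moment space, followed by Lemmas~\ref{lem:matchingmoments} and~\ref{lem:ell2toell1}---is exactly the paper's route via Theorem~\ref{thm:ddim}. The separation argument in Step~1 is also essentially Claim~\ref{clm:lbhelper}. However, Step~3 has a real gap, and the paper handles this part quite differently.

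\textbf{The gap.} In Step~3 you assert that there are ``$\le N_{\text{cells}}$ candidate twins'' and then run a union bound treating $\tilde\calG$ as fixed while $\calG$ is random. But the twin $\tilde\calG$ is \emph{determined by} $\calG$: it is whatever second net element happens to share the moment-cell of the rounded $\calG_0$. A priori the twin ranges over all of $\calN$, whose size is exponential in $k$, not over $N_{\text{cells}}$ cells. One can try to repair this by fixing, in advance, one canonical net element per nonempty cell; then the union bound over $N_{\text{cells}}$ canonical elements does go through numerically. But this only yields a valid twin if the cell of $\calG_0$ contains a net element \emph{other than} $\calG_0$ itself---equivalently, if $\calG_0$ is not the canonical element of its cell. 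You never establish this. (It can be salvaged: the same union bound that keeps $\calG$ far from every canonical element also forces $\calG_0\ne$ canonical, since $\calG_0$ is $\delta$-close to $\calG$. But that observation is missing, and with it the argument is circular as written.) You also need to argue that $\calG$ and $\calG_0$ land in the same moment-cell, which requires a Lipschitz bound on the moment map against the $R=\Theta(C\log k)$ moments---note $R$ is $\Theta(C\log k)$, not $O(C)$ as you wrote.

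\textbf{How the paper avoids all of this.} The paper does not use a net of configurations at all. Instead it samples a \emph{single} set $\calX$ of $(1/\gamma)^{d/3}$ i.i.d.\ uniform points in the ball, and takes $\calF$ to be all $\binom{|\calX|}{k}$ mixtures formed from $k$-subsets of $\calX$. The key observation is that choosing $|\calX|$ i.i.d.\ points and then a uniform $k$-subset produces exactly the distribution $\calD$; so it suffices to show that a uniform element of $\calF$ has the desired twin. Theorem~\ref{thm:ddim} applied to $\calF$ gives the close-in-TV twin $\tilde\calG\in\calF$ for a $1-1/k$ fraction of $\calF$. The parameter-distance lower bound and the well-separation of $\tilde\calG$ then come \emph{for free}: with probability $1-1/k$ over $\calX$ all points of $\calX$ are $\gamma\sqrt d$-separated, so any two distinct $k$-subsets of $\calX$ differ in at least one mean and hence have $\dparam\ge\gamma\sqrt d$. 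There is no rounding, no dependence between $\calG$ and $\tilde\calG$ to disentangle, and no separate union bound over permutations.
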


\noindent {\em Remark.} 
In Theorem~\ref{thm:lowerbound:random}, there is a trade-off between getting a smaller statistical distance $\eps=k^{-C}$, and a larger separation between the means in the Gaussian mixture. 
When $C=\omega(1)$, 
with $c_1,c=o(1)$ we see that $\norm{f-\tsf}_1 \le k^{-\omega(1)}$ when the separation is $o(\sqrt{\log k}) \sigma$. 
On the other hand, we can also set $C=k^{\eps'}$ (for some small constant $\eps'>0$) to get lower bounds for mixtures of spherical Gaussians in $d=1$ dimension with $\norm{f - \tsf}_1 = \exp(-k^{\Omega(1)})$ and separation $1/k^{O(1)}$ between the means. 
We note that in this setting of parameters, our lower bound is nearly
identical to that in~\cite{MV10}. Namely, they achieve
statistical distance $\norm{f - \tsf}_1 = \exp(-\Omega(k))$ (which is better than ours) 
with a similar separation of $k^{-O(1)}$ in one dimension. 
One possible advantage of our bound is that it holds with a random choice of means,
unlike their careful choice of means.

\subsection{Proof of Theorem~\ref{thm:lowerbound:random}}

The key to the proof of Theorem~\ref{thm:lowerbound:random} is the following pigeonhole statement, 
which can be viewed as a bound on the covering number (or equivalently, the 
metric entropy) of the set of Gaussian mixtures. 

\anote{9/29: $\log 3d$ to $\log 5d$}
\begin{theorem} \label{thm:ddim}
Suppose we are given a collection $\calF$ of standard mixtures of spherical Gaussians in $d$ dimensions that are $\rho=\sqrt{d}$ bounded, i.e., $\norm{\mu_j} \le \sqrt{d}$ for all $j\in[k]$. There are universal constants $c_0, c_1 \ge 1$, such that for any $\eta>0, \eps \le \exp(-c_1 d)$, if
\begin{equation}
  \abs{\calF} > \frac{1}{\eta} \exp\left( c_0 \Big(\frac{\log (1/\eps)}{d} \Big)^{d} \cdot \log(1/\epsilon) \log (5d) \right), \label{eq:family:condition}
\end{equation}
then for at least $(1-\eta)$ fraction of the mixtures $\set{\mu_1, \mu_2, \dots, \mu_k}$ from $\calF$, there is another mixture $\set{\tmu_1, \tmu_2, \dots, \tmu_k}$ from $\calF$ with p.d.f.\ $\tsf$ such that $\norm{f - \tsf}_1 \le \epsilon$. Moreover, we can take $c_0=8\pi e$ and $c_1=36$.
\end{theorem}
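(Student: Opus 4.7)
The plan is to reduce the problem to a packing estimate on the ``mean moments'' attached to standard mixtures. To each standard mixture with means $\mu_1, \dots, \mu_k$ I would associate the tuple of symmetric tensors $(M_1, \dots, M_R)$ with $M_r = \frac{1}{k}\sum_{j=1}^k \mu_j^{\otimes r}$, choosing $R$ of order $\log(1/\eps)$. Since the Fourier transform factorizes as $\widehat f(\xi) = e^{-\pi\|\xi\|_2^2}\cdot\frac{1}{k}\sum_j e^{2\pi i \langle\mu_j,\xi\rangle}$, expanding the exponentials in a Taylor series shows that $\widehat f - \widehat{\tilde f}$ is determined to leading order by the tensors $M_r - \tilde M_r$, while the tail of the expansion is killed by the Gaussian envelope together with the assumption $\|\mu_j\|_2\le\sqrt d$. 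Thus agreement of the first $R = \Theta(\log(1/\eps))$ mean moments in injective norm yields an $L^2$ bound on $f-\tilde f$ via Lemma~\ref{lem:matchingmoments}, and this is upgraded to an $L^1$ bound of $\eps$ by the Gaussian-envelope argument of Lemma~\ref{lem:ell2toell1}.

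Next, I would carry out the packing step (Lemma~\ref{lem:packing}) to bound the number of distinguishable discretized mean-moment tuples arising from a $\sqrt d$-bounded standard mixture. The tensor $M_r$ is a symmetric $r$-tensor in $d$ variables, determined by $\binom{d+r-1}{r}\le (2eR/d)^d$ coefficients in the monomial basis; each coefficient is bounded in absolute value by roughly $d^{r/2}$ because $\|\mu_j\|_2\le\sqrt d$. Discretizing each coefficient to an accuracy $\delta=\poly(\eps/R)$ that is sufficient to apply the moment-matching lemma, and then summing over $r=1,\dots,R$, yields a net of standard mixtures of cardinality at most
\[
\exp\!\Bigl(\sum_{r=1}^R \binom{d+r-1}{r}\log\bigl(d^{r/2}/\delta\bigr)\Bigr)\;\le\;\exp\!\bigl(c_0\,(R/d)^d\,R\,\log(5d)\bigr).
\]
The hypothesis $\eps\le \exp(-c_1 d)$ ensures $R\ge d$, which is what lets us replace the binomial coefficient by $(2eR/d)^d$; tracking constants through the two lemmas and the discretization is what eventually pins down $c_0=8\pi e$ and $c_1=36$.

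Having set up both ingredients, I would finish with a direct pigeonhole. Partition $\calF$ into equivalence classes, where two mixtures are equivalent iff their discretized mean-moment tuples coincide. By the packing step there are at most $\exp(c_0(R/d)^d R\log 5d)$ such classes, and by the Fourier/moment-matching step any two mixtures in the same class are $\eps$-close in $L^1$. A mixture without an $\eps$-close partner in $\calF$ must therefore be alone in its class, so the number of such ``bad'' mixtures is at most the number of classes. Substituting $R=\Theta(\log(1/\eps))$ and the assumed lower bound on $|\calF|$ from \eqref{eq:family:condition} gives that at most an $\eta$-fraction of $\calF$ is bad, which is the theorem.

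The main obstacle is the quantitative tuning of the packing step: one must simultaneously choose $R$, the discretization scale $\delta$, and the right notion of ``matching moments'' (in injective tensor norm, not coordinate-wise) so that mixtures falling in the same discretized cell actually satisfy the hypothesis of Lemma~\ref{lem:matchingmoments}. Converting coordinate-wise closeness of the tensor coefficients to closeness in injective norm costs a factor depending on the number of coefficients and the range $d^{r/2}$, and balancing these losses against the Taylor tail bound from the Gaussian factor is where the stated constants $c_0$ and $c_1$ are ultimately determined; I expect essentially all of the technical labor to sit inside verifying this balance.
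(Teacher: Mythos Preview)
Your proposal is correct and follows the same three-step architecture as the paper: associate to each mixture its mean-moment vector $(M_1,\dots,M_R)$ with $R=\Theta(\log(1/\eps))$, use a packing/pigeonhole argument on these vectors (Lemma~\ref{lem:packing}), then invoke Lemma~\ref{lem:matchingmoments} and Lemma~\ref{lem:ell2toell1} to turn moment closeness into $L^1$ closeness.

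The one place you diverge is the implementation of the packing step. You discretize each tensor coefficient in the monomial basis and then convert coordinate-wise closeness to injective-norm closeness, anticipating (correctly) that this conversion is where the technical labor sits. The paper sidesteps this entirely: it works directly in the norm $\symnormmax{\cdot}=\max_r\symnorm{M_r}$ on the $D$-dimensional space of moment vectors, where $D=\binom{d+R}{R}\le(2eR/d)^d$, observes that every mixture maps to a point of norm at most $d^{R/2}$, and then applies the elementary normed-space packing bound (Claim~\ref{clm:packing}) that any set of $\delta$-separated points in a $D$-dimensional ball of radius $\Delta$ has size at most $(1+2\Delta/\delta)^D$. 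This directly gives the cell count $\exp\big((2eR/d)^d\cdot R\log(5d)\big)$ with $\delta=(d+1)^{-R/4}$, without ever touching coordinates or needing a coordinate-to-injective conversion. Your route reaches the same bound (the $d^{r/2}$ conversion loss is absorbed into the $\log(5d)$ factor), but the paper's version is shorter and is what makes the constants $c_0=8\pi e$, $c_1=36$ fall out cleanly.
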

\begin{remark}
Notice that $k$ plays no role in the statement above. In fact, the proof
also holds for mixtures with arbitrary number of components and arbitrary weights. 
\end{remark}

\begin{claim} \label{clm:lbhelper}
Let $x_1,\ldots, x_N$ be chosen independently and uniformly from the ball of radius $r$ in $\R^d$. Then
for any $0<\gamma<1$, with probability at least $1-N^2 \gamma^d$, we have that for all
$i \neq j$, $\|x_i - x_j\| \ge \gamma r$. 
\end{claim}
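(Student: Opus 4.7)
The plan is to use a standard volume-ratio argument followed by a union bound over pairs. The intuition is that for any fixed point $x_i$, the probability that an independent uniform sample $x_j$ from the radius-$r$ ball lands within distance $\gamma r$ of $x_i$ is controlled by the ratio of the volume of a ball of radius $\gamma r$ to the volume of the ball of radius $r$, which is exactly $\gamma^d$.

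More concretely, first I would fix a pair $i \neq j$ and condition on $x_i$. The event $\|x_j - x_i\| < \gamma r$ requires $x_j$ to lie in $B(x_i, \gamma r)$, and since $x_j$ is uniform on $B(0,r)$, we have
\[
\Pr\bigl[\|x_j - x_i\| < \gamma r \,\big|\, x_i\bigr] \;\le\; \frac{\Vol\bigl(B(x_i,\gamma r) \cap B(0,r)\bigr)}{\Vol(B(0,r))} \;\le\; \frac{\Vol(B(x_i,\gamma r))}{\Vol(B(0,r))} \;=\; \gamma^d,
\]
where the last equality uses that the volume of a Euclidean ball scales as the $d$-th power of the radius. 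Removing the conditioning gives the same bound on the unconditional probability.

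Next I would take a union bound over all $\binom{N}{2} < N^2/2 \le N^2$ unordered pairs $\{i,j\}$ to conclude that
\[
\Pr\bigl[\exists i \neq j:\ \|x_i - x_j\| < \gamma r\bigr] \;\le\; \binom{N}{2}\gamma^d \;\le\; N^2 \gamma^d,
\]
so the complementary event (all pairwise distances are at least $\gamma r$) holds with probability at least $1 - N^2\gamma^d$, as claimed.

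There is no real obstacle here; the only minor subtlety is observing that the ball $B(x_i,\gamma r)$ might extend outside $B(0,r)$, but since we are intersecting with the support of $x_j$, this only helps us and the bound $\gamma^d$ is still valid. The argument never uses $\gamma < 1$ in an essential way, but that hypothesis ensures the bound is nontrivial.
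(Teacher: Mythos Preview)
Your proof is correct and follows the same approach as the paper's: bound the probability for a single pair by the volume ratio $\gamma^d$, then union bound over pairs. The paper's version is terser (and in fact contains a small typo, writing $\ge$ where $<$ is meant), but the argument is identical.
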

\begin{proof}
For any fixed $i \neq j$, the probability that $\|x_i - x_j\| \ge \gamma r$ is at most $\gamma^d$,
because the volume of a ball of radius $\gamma r$ is $\gamma^d$ times that of a ball of radius $r$.
The claim now follows by a union bound. 
\end{proof}

\begin{proof}[Proof of Theorem~\ref{thm:lowerbound:random}]
Set $\gamma := 2^{-6/c}$, and consider
the following probabilistic procedure. 
We first let $\calX$ be a set of $(1/\gamma)^{d/3}$ points chosen independently and uniformly from the ball of radius $\sqrt{d}$. 
We then output a mixture chosen uniformly from the collection $\calF$,
defined as the collection of all standard mixtures of spherical 
Gaussians obtained by selecting $k$ distinct means from $\calX$. 
Observe that the output of this procedure is distributed according to $\calD$.
Our goal is therefore to prove that with probability at least $1-2/k$, 
the output of the procedure satisfies the property in the theorem. 

First, by Claim~\ref{clm:lbhelper}, with probability at least $1-\gamma^{d/3} \ge 1-1/k$,
any two points in  $\calX$ are at distance at least $\gamma \sqrt{d}$. 
It follows that in this case, the means in any mixture in $\calF$ are at least $\gamma \sqrt{d}$ apart,
and also that any two distinct mixtures in $\calF$ have a parameter distance 
of at least $\gamma \sqrt{d}$ since they must differ in at least 
one of the means. Note that $\gamma=C^{-24}$ for our choice of $c, \gamma$. 

\anote{8/29: changed $\log(3d)$ to $\log(5d)$ in the following expressions. }
To complete the proof, we notice that by our choice of parameters, and denoting $\eps = k^{-C}$,
\begin{align*}
 \abs{\calF} = 
\binom{\abs{\calX}}{k} \ge 
\parens[\Big]{\frac{1}{\gamma}}^{dk/3} \cdot k^{-k} 
= k^k 
\ge 
k \cdot \exp\left( c_0 \Big(\frac{\log (1/\eps)}{d} \Big)^{d} \cdot \log(1/\epsilon) \log (5d) \right)
\; .
\end{align*}
The last inequality follows since for our choice of $\eps=k^{-C}$, 
$c=\frac{1}{4\log C}$ and $C$ is large enough with $C \ge c_0$, 
so that
$$\left(\frac{\log (1/\eps)}{d}\right)^d= k^{c \log(C/c)} < \sqrt{k}, ~~\text{ and }~ c_0 \log(1/\eps) \log(5d) \le c_0 C \log k \log( 5c \log k) < \sqrt{k}.$$
Hence applying Theorem~\ref{thm:ddim} to $\calF$, for at least $1-1/k$ fraction of the mixtures in $\calF$, there is another mixture in $\calF$ 
that is $\eps$ close in total variation distance. We conclude that with probability at least $1-2/k$, a random mixture in $\calF$ satisfies all the required properties, as desired. 
\end{proof}

\subsection{Proof of Theorem~\ref{thm:ddim}}

It will be convenient to represent the p.d.f.\ $f(x)$ of the standard mixture of spherical Gaussians with means $\mu_1, \mu_2, \dots, \mu_k$ as a convolution 
of a standard mean zero Gaussian with a sum of delta functions centered at $\mu_1, \mu_2, \dots, \mu_k$,
$$f(x)=\left(\frac{1}{k} \sum_{j=1}^k \delta(x-\mu_j) \right) * e^{-\pi \norm{x}_2^2  }.$$

Instead of considering the moments of the mixture of Gaussians, we will consider moments of just the corresponding mixture of 
delta functions at the means. We will call them ``mean moments,'' and we will use them as a proxy for the actual moments of the distribution.
$$(M_1, \dots, M_R)~\text{where }~ \forall 1\le r\le R:  M_r= \frac{1}{k} \sum_{j=1}^k \mu_j^{\otimes r}.$$

To prove Theorem~\ref{thm:ddim} we will use three main steps. 
Lemma~\ref{lem:packing} will show using the pigeonhole principle that for any large enough collection of Gaussian mixtures $\calF$, most Gaussians mixtures in the family have other mixtures which approximately match in their first $R=O(\log (1/\eps))$ mean moments. 
This closeness in
moments will be measured using the \emph{symmetric injective tensor norm} defined 
for an order-$\ell$ tensor $T \in \R^{d^\ell}$ as
\[
\symnorm{T}=\max_{\substack{y \in \R^d \\ \norm{y}=1}} \abs{\iprod{T, y^{\otimes \ell}}}.
\]
Next, Lemma~\ref{lem:matchingmoments} shows that the two distributions that are close in the first $R$ mean moments are also close in the $L_2$ distance.
This translates to small statistical distance between the two distributions using Lemma~\ref{lem:ell2toell1}.

We will use the following standard packing claim. 

\begin{claim}\label{clm:packing}
Let $\|\cdot\|$ be an arbitrary norm on $\R^D$.
If $x_1,\ldots,x_N \in \R^D$ are such that $\|x_i\| \le \Delta$ for all $i$, and for all $i \neq j$,
$\|x_i - x_j\| > \delta$, then $N \le (1+2\Delta/\delta)^D$. In particular, if $x_1,\ldots,x_N \in \R^D$
are such that $\|x_i\| \le \Delta$ for all $i$, then for all but $(1+2\Delta/\delta)^D$ of the indices $i \in [N]$, there exists a $j \neq i$ such that $\|x_i - x_j\| \le \delta$.
\end{claim}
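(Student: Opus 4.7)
The plan is to prove the first part via a standard volume-packing argument in the norm $\|\cdot\|$ and then derive the ``in particular'' part by applying the first part to the subset of bad indices.

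For the first part, I will consider the balls $B_i = \{y \in \R^D : \|y - x_i\| \le \delta/2\}$ of radius $\delta/2$ around each $x_i$ in the given norm. The $\delta$-separation hypothesis $\|x_i - x_j\| > \delta$ for $i \ne j$ implies that these balls are pairwise disjoint (otherwise, a common point would give $\|x_i - x_j\| \le \delta$ by the triangle inequality). Moreover, each $B_i$ is contained in the larger ball $B(0, \Delta + \delta/2)$ of radius $\Delta + \delta/2$, since $\|x_i\| \le \Delta$. Since all balls in the norm $\|\cdot\|$ are translates/dilates of the unit ball, their (Lebesgue) volumes in $\R^D$ scale as the $D$-th power of the radius. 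Comparing volumes,
\[
N \cdot (\delta/2)^D \cdot \Vol(B(0,1)) \;\le\; (\Delta + \delta/2)^D \cdot \Vol(B(0,1)),
\]
which rearranges to $N \le (1 + 2\Delta/\delta)^D$, as desired.

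For the ``in particular'' part, I will let $S \subseteq [N]$ denote the set of indices $i$ such that no $j \neq i$ satisfies $\|x_i - x_j\| \le \delta$; equivalently, $\|x_i - x_j\| > \delta$ for all $j \in [N]$ with $j \ne i$. Restricting to the subfamily $\{x_i : i \in S\}$, the points are pairwise $\delta$-separated and still lie in the ball of radius $\Delta$, so the first part applied to this subfamily gives $|S| \le (1 + 2\Delta/\delta)^D$. All indices outside $S$ are by definition within distance $\delta$ of some other $x_j$, which is exactly the claimed conclusion.

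There is no real obstacle here, since this is a textbook volume-packing argument; the only minor point to be careful about is that the volume comparison uses the norm's unit ball in $\R^D$ (which has positive, finite Lebesgue volume because any norm on $\R^D$ is equivalent to the Euclidean norm), so the scaling by $(\delta/2)^D$ and $(\Delta+\delta/2)^D$ is valid for an arbitrary norm. I would not bother reproducing any $L^p$-specific calculations, and I would phrase the proof entirely in terms of the given norm.
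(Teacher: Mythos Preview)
Your proof is correct and follows essentially the same volume-packing argument as the paper: the paper also places disjoint translates $x_i + (\delta/2)K$ of the norm's unit ball $K$ inside $(\Delta+\delta/2)K$ and compares volumes. For the ``in particular'' part the paper phrases it as ``take a maximal $\delta$-separated subset,'' whereas you directly bound the set $S$ of isolated indices; both are immediate and equivalent.
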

\begin{proof}
Let $K$ be the unit ball of the norm $\|\cdot\|$. Then by assumption, the sets $x_i + \delta K /2$ for $i \in [N]$ are disjoint. But since they are all contained in $(\Delta+\delta/2)K$, 
\[
N \le \frac{\mathrm{Vol}((\Delta+\delta/2)K)}{\mathrm{Vol}(\delta K/2)} = (1+2\Delta/\delta)^D \; .
\]
The ``in particular'' part follows by taking a maximal set of $\delta$-separated points. 
\end{proof}

\begin{lemma}\label{lem:packing}
Suppose we are given a set $\calF$ of standard mixtures of spherical Gaussians in $d$ dimensions with means of length at most $\sqrt{d}$. Then for any integer $R \ge d$,
if $\abs{\calF} > \frac{1}{\eta} \cdot \exp\big( (2eR/d)^d R \log (5d) \big)$, it holds that for at least $(1-\eta)$ fraction of the mixtures $\set{\mu_1, \mu_2, \dots, \mu_k}$  in $\calF$, there is another mixture 
$\set{\tilde\mu_1, \tilde\mu_2, \dots, \tilde\mu_k}$ in $\calF$ satisfying that for $r=1,\ldots,R$,
\anote{9/29: changed to $(d+1)^{-R/4}$. was: $d^{-R/4}$}
\begin{equation}
\symnorm[\Big]{\frac{1}{k} \sum_{j=1}^k \mu_j^{\otimes r}- \frac{1}{k} \sum_{j=1}^k (\tmu_j)^{\otimes r}} \le (d+1)^{-R/4}.
\end{equation}
\end{lemma}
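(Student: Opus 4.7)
The plan is to associate with each mixture its tuple of mean moments, view these tuples as points in a finite-dimensional normed space, and apply the volumetric packing bound from Claim~\ref{clm:packing}.

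First I would define, for each mixture $\calG = \{\mu_1,\dots,\mu_k\} \in \calF$, the mean-moment tuple $M(\calG) = (M_1,\dots,M_R)$ where $M_r = \tfrac{1}{k}\sum_{j=1}^{k}\mu_j^{\otimes r}$. Each $M_r$ lives in the space of symmetric order-$r$ tensors over $\R^d$, which is a real vector space of dimension $\binom{d+r-1}{r}$. Hence $M(\calG)$ lies in the product space of total dimension
\[
D \;=\; \sum_{r=1}^R \binom{d+r-1}{r} \;=\; \binom{d+R}{R}-1 \;=\; \binom{d+R}{d}-1,
\]
and, using $R \ge d$ together with the standard estimate $\binom{n}{k}\le (en/k)^k$,
\[
D \;\le\; \bigl(e(d+R)/d\bigr)^d \;\le\; (2eR/d)^{d}.
\]
I would equip the product space with the norm $\|(T_1,\dots,T_R)\| := \max_{r\in[R]} \symnorm{T_r}$, which is a genuine norm because $\symnorm{\cdot}$ is a norm on each symmetric-tensor component.

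Next I would bound the diameter. Since $\|\mu_j\|_2 \le \sqrt d$, for every unit vector $y\in\R^d$ we have $|\langle \mu_j^{\otimes r}, y^{\otimes r}\rangle| = |\langle \mu_j, y\rangle|^r \le d^{r/2}$, so $\symnorm{M_r}\le d^{r/2}$ by averaging. Thus every $M(\calG)$ lies in the ball of radius $\Delta := d^{R/2}$ around the origin in the product norm. Setting $\delta := (d+1)^{-R/4}$, I would then apply Claim~\ref{clm:packing} to the $|\calF|$ points $\{M(\calG):\calG\in\calF\}$: the number of points that have \emph{no} other point within distance $\delta$ is at most $(1+2\Delta/\delta)^{D}$. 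Using $2\Delta/\delta \le 2(d+1)^{3R/4}$ one gets $\log(1+2\Delta/\delta) \le R\log(5d)$, so the number of isolated mixtures is at most
\[
\exp\!\bigl(D\cdot R\log(5d)\bigr) \;\le\; \exp\!\Bigl((2eR/d)^{d}\, R\log(5d)\Bigr).
\]
By the hypothesis on $|\calF|$, this count is at most $\eta|\calF|$, so a $(1-\eta)$ fraction of mixtures have a partner $\calG'\in\calF$ with $\|M(\calG)-M(\calG')\|\le \delta$. By the definition of the product norm, this means $\symnorm{M_r-M'_r}\le\delta = (d+1)^{-R/4}$ simultaneously for all $r\in[R]$, giving the conclusion.

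The main issue to handle carefully is verifying the dimension bound $D\le(2eR/d)^{d}$ with the assumption $R\ge d$, and confirming that $\log(1+2\Delta/\delta)\le R\log(5d)$ holds with the right constants; both are routine once the setup is in place. Everything else (that $\symnorm{\cdot}$ is a norm, the product norm bounds the individual coordinates, and Claim~\ref{clm:packing} applies in this finite-dimensional normed space) is essentially automatic.
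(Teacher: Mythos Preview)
Your proposal is correct and follows essentially the same approach as the paper: map each mixture to its tuple of mean moments in the direct sum of symmetric tensor spaces, bound the dimension by $(2eR/d)^d$ using $R\ge d$, bound the radius by $d^{R/2}$, and apply Claim~\ref{clm:packing} with $\delta=(d+1)^{-R/4}$. The only cosmetic differences are that you compute $D=\binom{d+R}{R}-1$ (slightly sharper than the paper's $\binom{d+R}{R}$) and you do not extract the extra factor $3/4$ in the exponent that the paper obtains, but neither affects the stated conclusion.
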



\begin{proof}
With any choice of means $\mu_1,\mu_2, \dots, \mu_k \in \R^d$ we can associate a vector of moments 
$\psi(\mu_1,\mu_2, \dots, \mu_k) = (M_1, \dots, M_R)$ where for $r=1,\ldots,R$, 
\[
M_r= \frac{1}{k} \sum_{j=1}^k \mu_j^{\otimes r}\; .
\]
Notice that the image of $\psi$ lies in a direct sum of symmetric subspaces 
whose dimension is $D=\binom{d+R}{R}$ (i.e., the number of ways of distributing $R$ identical balls into $(d+1)$ different bins).
Since $R\ge d$, 
\begin{equation} 
D = \binom{d+R}{R}\le \left(\frac{e(d+R)}{d}\right)^d \le \left(\frac{2eR}{d}\right)^d.\label{eq:deff}
\end{equation}
We define a norm on these vectors $\psi(\mu_1, \dots, \mu_k)$ in terms of the maximum injective norm of the mean moments,
\begin{align}
\symnormmax{\psi(\mu_1, \dots, \mu_k)} &= \max_{r \in [R]} \symnorm[\Big]{\frac{1}{k} \sum_{j=1}^{k} \mu_j^{\otimes r}} \; .
\end{align}
Since each of our means has length at most $\sqrt{d}$, we have that
\begin{align}
\symnormmax{\psi(\mu_1, \dots, \mu_k)} &\le \max_{r \in [R]} \symnorm[\Big]{\frac{1}{k} \sum_{j=1}^{k} \mu_j^{\otimes r}}
 \le \max_{r\in [R]} \max_{\norm{\mu} \le \sqrt{d} } \symnorm{\mu^{\otimes r}}
 \le d^{R/2}. \label{eq:symnormmax}
\end{align}
Using Claim~\ref{clm:packing}, if $\abs{\calF}> N/ \eta$ where 
\begin{align*}
N &= \Paren{1+ 2d^{R/2}/(d+1)^{-R/4}}^{D} \le \Paren{1+2(d+1)^{3R/4}}^{D} \le \exp\left(\frac{3}{4}(2e R/d)^d \cdot R \log (5d) \right),
\end{align*}
\anote{9/29: changed to $(d+1)^{-R/4}$. was: $d^{-R/4}$}
we have that for at least $1-\eta$ fraction of the Gaussian mixtures in $\calF$, there is another mixture from $\calF$ which is $(d+1)^{-R/4}$ close, as required.\anote{9/29: changed d to d+1}
\end{proof}


Next we show that the closeness in moments implies closeness in the $L_2$ distance. 
This follows from fairly standard Fourier analytic techniques. We will first show that if the mean moments 
are close, then the low-order Fourier coefficients are close. This will then imply that the Fourier spectrum of the corresponding Gaussian mixtures $f$ and $\tsf$ are close. 

\begin{lemma}\label{lem:matchingmoments}
Suppose $f(x),\tsf(x)$ are the p.d.f.\ of $\calG, \tcalG$ which are both standard mixtures of $k$ Gaussians in $d$ dimensions with means $\set{\mu_j : j\in [k]}$ and $\set{\tmu_j: j \in [k]}$ respectively that are both $\rho = \sqrt{d}$ bounded.
There exist universal constants $c_1, \cmoments\ge 1$, such that for every $\eps \le \exp(-c_1 d)$ if the following holds for $R = \cmoments \log(1/\epsilon)$:
\begin{equation}\label{eq:momentserror:ddim}
\forall 1\le r\le R, ~ \frac{1}{k} \symnorm[\Big]{\sum_{j=1}^k \mu_j^{\otimes r}- \sum_{j=1}^k (\tmu_j)^{\otimes r}} \le \eps_r := \eps \parens[\Big]{\frac{r}{ 8 \pi e \sqrt{\log(1/\eps)}}}^r,
\end{equation}
then $\norm{f-\tsf}_2 \le \epsilon$.
\end{lemma}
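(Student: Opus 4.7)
The strategy is to pass to the Fourier domain and exploit the fact that matching the first $R=\Theta(\log(1/\eps))$ mean moments exactly controls the low-order Taylor coefficients of the characteristic function factor that appears in $\ft{f}$. The Gaussian factor $e^{-\pi\snorm{\xi}}$ that is automatically present in the Fourier transform of a standard mixture will then damp out the tail of the Taylor expansion, provided $\eps$ is small enough compared to $\exp(-d)$.

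First I would write $f$ as the convolution $\big(\tfrac{1}{k}\sum_j \delta_{\mu_j}\big)*e^{-\pi\snorm{x}}$, so that by Plancherel/Parseval, $\norm{f-\tsf}_2 = \bignorm{\ft{f}-\ft{\tsf}}_2$ where
\[
\ft{f}(\xi)-\ft{\tsf}(\xi) \;=\; e^{-\pi\snorm{\xi}}\cdot\frac{1}{k}\Bigparen{\sum_{j=1}^k e^{-2\pi i \iprod{\mu_j,\xi}} - \sum_{j=1}^k e^{-2\pi i \iprod{\tmu_j,\xi}} }.
\]
Taylor expanding the exponentials and using $\iprod{\mu_j,\xi}^r = \iprod{\mu_j^{\otimes r},\xi^{\otimes r}}$, the bracketed quantity becomes $\sum_{r\ge 1}\tfrac{(-2\pi i)^r}{r!}\iprod{M_r-\tilde M_r,\xi^{\otimes r}}$, where $M_r,\tilde M_r$ are the mean moments. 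The integrand $\abs{\ft{f}-\ft{\tsf}}^2$ is then bounded termwise using $\abs{\iprod{T,\xi^{\otimes r}}}\le\symnorm{T}\snorm{\xi}^{r/2}$ (valid for symmetric $T$) together with Stirling's lower bound $r!\ge(r/e)^r$.

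Next I would split the series at $R=\cmoments\log(1/\eps)$. For $r\le R$, the hypothesis gives $\symnorm{M_r-\tilde M_r}\le \eps_r$, and plugging in the precise choice of $\eps_r$ together with $(2\pi)^r/r!\le(2\pi e/r)^r$ yields a per-term bound of the form $\eps\cdot\big(\norm{\xi}/(4\sqrt{\log(1/\eps)})\big)^r$. For $r>R$, the crude bound $\norm{\mu_j},\norm{\tmu_j}\le\sqrt{d}$ gives $\symnorm{M_r-\tilde M_r}\le 2d^{r/2}$, leading to a per-term bound of $2\cdot\big(2\pi e\sqrt d\norm{\xi}/r\big)^r$. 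I would then split the integration over $\R^d$ into a low-frequency region $\norm{\xi}\le T$ and a high-frequency region $\norm{\xi}>T$ for a threshold $T=\Theta(\sqrt{\log(1/\eps)})$. On the low-frequency side, the low-order sum is geometric with ratio at most $1/2$ and thus $O(\eps)$, while the high-order sum is also geometric with ratio $2\pi e\sqrt d\, T/R$, which is $\ll 1$ once $\cmoments$ is chosen large enough and $\eps\le\exp(-c_1 d)$ for sufficiently large $c_1$; its sum is therefore dominated by its leading term $(2\pi e\sqrt d\,T/R)^R\le \eps$. On the high-frequency side, the Gaussian factor $e^{-2\pi\snorm{\xi}}$ beats any polynomial in $\norm{\xi}$ of fixed degree $2R$ once $\norm{\xi}>T\gg\sqrt{R}$, so the contribution to the $L_2$ integral is at most $\eps^{\Omega(1)}$. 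Combining everything and taking the square root yields $\norm{f-\tsf}_2\le\eps$.

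The main obstacle is the balance performed in the last step: the quantitative choice of $R$, of $T$, and of the constants $\cmoments,c_1$ must simultaneously make the low-order geometric series $O(\eps)$, ensure that the high-order Taylor tail is geometric with small ratio in the low-frequency region (which is exactly why we need $\eps\le \exp(-c_1 d)$, so that $T=\Theta(\sqrt{\log(1/\eps)})\gg\sqrt d$), and guarantee that the Gaussian dominates beyond $T$. Once those constants are pinned down, everything else is standard Fourier bookkeeping.
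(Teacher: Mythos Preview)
Your overall plan—pass to Fourier, use Parseval, Taylor expand the characteristic-function factor $\hat h(\xi)=\tfrac{1}{k}\sum_j(e^{-2\pi i\iprod{\mu_j,\xi}}-e^{-2\pi i\iprod{\tmu_j,\xi}})$, and split into low and high frequencies at a threshold $T=\Theta(\sqrt{\log(1/\eps)})$—is exactly the paper's approach. The low-frequency analysis you sketch is fine and matches the paper.

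There is, however, a genuine gap in your high-frequency step. You write that ``the Gaussian factor $e^{-2\pi\snorm{\xi}}$ beats any polynomial in $\norm{\xi}$ of fixed degree $2R$ once $\norm{\xi}>T\gg\sqrt{R}$.'' But with $T=\Theta(\sqrt{\log(1/\eps)})$ and $R=\cmoments\log(1/\eps)$, we have $T=\Theta(\sqrt{R})$, not $T\gg\sqrt{R}$. In this regime the $2R$-th moment of the Gaussian is not negligible: the integrand $\norm{\xi}^{2R}e^{-2\pi\snorm{\xi}}$ peaks at $\snorm{\xi}\approx R/(2\pi)$, which lies inside (or near) the high-frequency region, so the integral is of order $(R/(2\pi e))^R$ and the coefficient $\eps^2/(2\tau)^{2R}$ does not kill it for the stated constants. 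You also do not say what controls the high-order ($r>R$) Taylor terms on $\norm{\xi}>T$; there the ratio $2\pi e\sqrt d\,\norm{\xi}/r$ can exceed $1$ and the series does not sum geometrically.

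The fix the paper uses is a one-liner you are missing: on $\norm{\xi}>\tau$ simply use the trivial bound $\abs{\hat h(\xi)}\le 2$ (each term $e^{-2\pi i\iprod{\cdot,\xi}}$ is unimodular), so $\int_{\norm{\xi}>\tau}\abs{\hat g}^2\le 4\int_{\norm{\xi}>\tau}e^{-2\pi\snorm{\xi}}\,d\xi\le \eps^2/4$ by a Gaussian tail bound (this is where $\eps\le\exp(-c_1 d)$ enters). The paper also organizes the low-frequency side slightly more cleanly than your split at $R$: it shows that the bound $\symnorm{M_r-\tilde M_r}\le\eps_r$ in fact holds for \emph{all} $r\ge 1$ (for $r>R$ this follows from $\norm{\mu_j}\le\sqrt d$ and $2d\le\log(1/\eps)$), and then sums a single geometric series to get $\abs{\hat h(\xi)}\le\eps/2$ uniformly on $\norm{\xi}\le\tau$.
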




\begin{proof}
We first show that if the moments are very close, the Fourier transform of the two distributions is very close. This translates to a bound on the $L_2$ distance by Parseval's identity.
%

Let $g,h: \R^d \rightarrow \R$ be defined by
\begin{equation}
h(x)=\frac{1}{k} \parens[\Big]{\sum_{j=1}^k \delta_{\mu_j}(x)-\sum_{j=1}^k \delta_{\tmu_j}(x)}, \quad \;  g(x)=f(x)-\tsf(x)=h(x)*e^{-\pi \norm{x}^2 }.
\end{equation}

Since the Fourier transform of a convolution is the product of the Fourier transforms, we have 
\begin{align}
\forall \zeta \in \R^d,~ \ft{f}(\zeta)& =
\frac{1}{k} \parens[\Big]{\sum_{j=1}^k e^{-2 \pi \im \iprod{\zeta,\mu_j}}} \cdot e^{-\pi \norm{\zeta}^2}&\\
\ft{h}(\zeta)&= \frac{1}{k} \parens[\Big]{\sum_{j=1}^k e^{-2\pi \im \iprod{\mu_j,\zeta}} - \sum_{j=1}^k  e^{-2\pi \im \iprod{\tmu_j, \zeta}}} , ~\qquad~ \ft{g}(\zeta) = \ft{h}(\zeta) \cdot e^{-\pi \norm{\zeta}^2}.
\end{align}

We will now show that $\int_{\R^d} \abs{\ft{g}(\zeta)}^2 d \zeta \le \epsilon^2$. 
We first note that the higher order Fourier coefficients of $g$ do not contribute much to the Fourier mass of $g$ because of the Gaussian tails. 
Let $\tau^2 = 4 \log(1/\eps)$. Since 
$\tau^2 \ge (d+ 2\sqrt{d \log (16/\eps^2))} + 2 \log (16/\eps^2))/(2\pi)$, 
using Lemma~\ref{lem:lengthconc}
\begin{equation}
  \int_{\norm{\zeta} > \tau} \abs{\ft{g}(\zeta)}^2 d \zeta = \int_{\norm{\zeta} > \tau} \abs{\ft{h}(\zeta)}^2 e^{-2 \pi \norm{\zeta}^2 } d \zeta
\le \int_{\norm{\zeta} > \tau }4 e^{-2 \pi \norm{\zeta}^2 } d \zeta
\le \frac{\eps^2}{4} \; . \label{eq:largerthantau}
\end{equation}
Now we upper bound $\abs{\ft{h}(\zeta)}$ for $\norm{\zeta} <\tau$. 
\begin{align*}
\abs{\ft{h}(\zeta)}&= 
    \frac{1}{k} \Abs{\sum_{j=1}^k \parens{e^{-2\pi \im \iprod{\mu_j,\zeta}} - e^{-2\pi \im \iprod{\tmu_j,\zeta}}}} =  
    \frac{1}{k} \Abs{\sum_{j=1}^k \sum_{r=1}^\infty \frac{(-2\pi \im)^r}{r!}\left(\iprod{\mu_j,\zeta}^r-\iprod{\tmu_j,\zeta}^r \right)} \\
& \le \sum_{r=1}^{\infty} \frac{(2\pi)^r \norm{\zeta}^r }{r!}\cdot \frac{1}{k} \cdot \symnorm[\Big]{\sum_{j=1}^k \mu_j^{\otimes r}- \sum_{j=1}^k (\tmu_j)^{\otimes r}}.
\end{align*}
We claim that the injective norm above is at most $k \eps_r$ for all $r \ge 1$. 
For $r \le R$, this follows immediately from the assumption in~\eqref{eq:momentserror:ddim}. For $r \ge R$, we use the fact that the means are $\rho=\sqrt{d}$ bounded,
\begin{align*}
\frac{1}{k} \symnorm[\Big]{\sum_{j=1}^k \mu_j^{\otimes r}- \sum_{j=1}^k(\tmu_j)^{\otimes r}}
& \le 2 \max_{\mu \in \set{\mu_j, \tmu_j}_{j=1}^k} \norm{\mu}_2^r \le 2d^{r/2} \le 2^{-r/2+1} (2d)^{r/2} \\
&\le \eps \cdot \parens[\Big]{\frac{R^2}{(8 \pi e)^2 \log(1/\eps)}}^{r/2} \le \eps \parens[\Big]{\frac{r}{8 \pi e \sqrt{\log(1/\eps)}}}^{r} = \eps_r,
\end{align*}
where the last line follows since $R\ge 16 \pi e \log(1/\eps)$ and $2d \le \log(1/\eps)$.
Hence, for $\norm{\zeta} \le \tau$, we have 
\begin{align*}
\abs{\ft{h}(\zeta)} & \le \sum_{r=1}^{\infty} \frac{(2\pi)^r \norm{\zeta}^r }{r!}\cdot \eps_r \le \sum_r \frac{(2\pi \norm{\zeta})^r}{ \sqrt{2 \pi r} (r/e)^r}\cdot \eps \parens[\Big]{\frac{r}{ 8 \pi e \sqrt{\log(1/\eps)}}}^r \\
&\le \eps \sum_{r \ge 1} \frac{1}{\sqrt{2 \pi r}} \parens[\Big]{\frac{2 \pi e \norm{\zeta}}{r} \cdot \frac{r}{ 8 \pi e \sqrt{\log(1/\eps)}}}^r \le \eps \sum_{r \ge 1} \frac{1}{\sqrt{2 \pi r}} \left(\frac{\norm{\zeta}}{2 \tau} \right)^r \le \frac{\eps}{2},
\end{align*}
since $ \norm{\zeta} \le \tau$. 
%
%
%
Finally, using this bound along with \eqref{eq:largerthantau} we have
\begin{align*}
\int_{\R^d} \Abs{\ft{g}(\zeta)}^2 d \zeta 
& \le \frac{\eps^2}{4} \int_{\norm{\zeta} \le \tau}e^{-2\pi \norm{\zeta}^2 }d \zeta + \frac{\eps^2}{4}
 \le \frac{\eps^2}{4}  + \frac{\eps^2}{4} \le \eps^2. 
\end{align*}
Hence, by Parseval's identity, the lemma follows.

\end{proof}

\newcommand{\mumax}{\left(\max\norm{\mu}\right)}

The following lemma shows how to go from $L_2$ distance to $L_1$ distance using the Cauchy-Schwartz lemma. Here we use the fact that all the means have length at most $\sqrt{d}$. Hence, we can focus on a ball of radius at most $O(\sqrt{\log(1/\eps)})$, since both $f,\tsf$ have negligible mass outside this ball.

\begin{lemma}\label{lem:ell2toell1}
In the notation above, suppose the p.d.f.s $f,\tsf$ of two standard mixtures of Gaussians in $d$ dimensions that are $\sqrt{d}$-bounded (means having length $\le \sqrt{d}$) satisfy $\norm{f-\tsf}_2 \le \eps$, for some $\eps \le  \exp(-6 d)$.
Then,
$\norm{f - \tsf}_1 \le 2\sqrt{\eps}$.
\end{lemma}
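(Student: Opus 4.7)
The strategy is the standard Cauchy--Schwarz trick: we pick a ball $B = B(0,R) \subset \R^d$ of radius $R$ with two competing properties, and then split $\|f-\tsf\|_1$ into the contribution from $B$ and from $B^c$.

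First, I would choose $R = \Theta(\sqrt{d})$ so that simultaneously
\begin{enumerate}
\item[(i)] $\Vol(B) \le 1/\eps$, and
\item[(ii)] both $\int_{B^c} f \le \tfrac{1}{2}\sqrt{\eps}$ and $\int_{B^c} \tsf \le \tfrac{1}{2}\sqrt{\eps}$.
\end{enumerate}
For (i), Stirling gives $\Vol(B(0,R)) \le \bigl(2\pi e R^2/d\bigr)^{d/2}/\sqrt{\pi d}$, so $R = C\sqrt{d}$ satisfies (i) whenever $(2\pi e C^2)^{d/2} \le 1/\eps$. Under the assumption $\eps \le \exp(-6d)$, any constant $C$ with $2\pi e C^2 \le e^{12}$ works; say $C = 10$ is safe. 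For (ii), since every mean has $\norm{\mu_j} \le \sqrt{d}$ and each component is a spherical Gaussian of covariance $\tfrac{1}{2\pi} I$, the tail bound (Lemma~\ref{lem:lengthconc}, invoked exactly as in the proof of Lemma~\ref{lem:matchingmoments} with target probability $\sqrt{\eps}/2$) shows that $R \ge \sqrt{d} + \bigl(\sqrt{d} + \sqrt{2\log(2/\sqrt{\eps})}\bigr)/\sqrt{2\pi}$ suffices. Because $\log(1/\eps) \ge 6d$, this lower bound is at most a small constant times $\sqrt{d}$, so (i) and (ii) are simultaneously achievable.

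Second, inside the ball I would apply Cauchy--Schwarz to get
\[
\int_{B} \abs{f(x)-\tsf(x)} \dx \le \sqrt{\Vol(B)} \cdot \Paren{\int_B (f-\tsf)^2 \dx}^{1/2} \le \sqrt{\Vol(B)} \cdot \norm{f-\tsf}_2 \le \sqrt{1/\eps} \cdot \eps = \sqrt{\eps}.
\]
Outside the ball I would use the triangle inequality and (ii):
\[
\int_{B^c} \abs{f(x)-\tsf(x)} \dx \le \int_{B^c} f + \int_{B^c} \tsf \le \sqrt{\eps}.
\]
Adding these yields $\norm{f-\tsf}_1 \le 2\sqrt{\eps}$, as required.

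The only nontrivial point is the compatibility of (i) and (ii), which is precisely where the hypothesis $\eps \le \exp(-6d)$ is used: the volume budget $(1/\eps)^{1/d} \ge e^6$ must be large enough to accommodate a ball of radius proportional to $\sqrt{d}$ so that Gaussian tails are killed. I do not expect any other step to cause trouble, since Cauchy--Schwarz and the union bound over the two tails are entirely routine.
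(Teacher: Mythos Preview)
Your overall structure (Cauchy--Schwarz inside a ball, Gaussian tails outside) is exactly right and is what the paper does. However, there is a genuine gap in your choice of radius.

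You assert that because $\log(1/\eps)\ge 6d$, the lower bound $R \ge \sqrt{d} + \bigl(\sqrt{d}+\sqrt{2\log(2/\sqrt\eps)}\bigr)/\sqrt{2\pi}$ needed for (ii) is ``at most a small constant times $\sqrt d$''. This is false: the hypothesis $\eps\le\exp(-6d)$ is only a \emph{lower} bound on $\log(1/\eps)$, not an upper bound. If, say, $\eps=\exp(-d^2)$, then $\sqrt{\log(1/\eps)}=d$, and the tail requirement forces $R$ of order $d$, not $\sqrt d$. With your fixed $R=10\sqrt d$, the tail mass outside $B$ is of order $\exp(-\Theta(d))$, which is much larger than $\sqrt\eps$ in this regime, so (ii) fails.

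The fix is exactly what the paper does: take the radius to scale with $\sqrt{\log(1/\eps)}$ rather than $\sqrt d$. Concretely, the paper sets $\gamma\le 4\sqrt{\log(2/\eps)}$, so that the tail mass is at most $\eps$ (stronger than the $\sqrt\eps$ you aimed for). The volume bound then becomes $\Vol(S)\le\bigl(32\pi e\,\log(2/\eps)/d\bigr)^{d/2}$, and one checks this is at most $1/\eps$ by setting $\alpha=\log(1/\eps)/d\ge 6$ and verifying $32\pi e(\alpha+1)\le 2^{2\alpha}$. So the role of the hypothesis $\eps\le\exp(-6d)$ is to guarantee that $\alpha\ge 6$ is large enough for this last inequality, not to cap the radius at $O(\sqrt d)$.
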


\begin{proof}
The means are all length at most $\sqrt{d}$, and $\eps< 2^{-d}$.
Let us define as before
$$\tau^2 = \frac{1}{2\pi}\parens[\big]{d+2 \sqrt{d \log(2/\eps)}+ 2 \log(2/\eps)}, ~\text{ and }~\gamma = \mumax+\tau \le 2 (\sqrt{d}+ \sqrt{\log (2/\eps)}) \le 4 \sqrt{\log(2/\eps)}.$$ 
Let $g=f-\tsf$ and let $S=\set{x: \norm{x} \le \gamma}$.
Using Gaussian concentration in Lemma~\ref{lem:lengthconc}, we see that both $f, \tsf$ have negligible mass of at most $\eps/2$ each outside $S$.
Hence, using Cauchy-Schwartz inequality and $\norm{f - \tsf}_2 \le \eps$,
 $$ \int_{\R^d} \abs{g(x)} \dx \le  \int_S \abs{g(x)}\dx + \eps \le \sqrt{\int_S g(x)^2 \dx} \cdot \sqrt{\int_S \dx} + \eps
 \le \epsilon \sqrt{\Vol(S)} + \eps.$$
Since $S$ is a Euclidean ball of radius $\gamma$, by Stirling's approximation (Fact~\ref{fact:stirling}), the volume is
\begin{align*}
   \Vol(S) = \frac{\pi^{d/2} \gamma^d}{(d/2)!}
	\le \left( \frac{2 \pi e  \gamma^2}{d} \right)^{d/2}  \le \left( \frac{32 \pi e \log(2/\eps)}{d} \right)^{d/2} \le 2^{\log(1/\eps)} = \frac{1}{\eps},
\end{align*}
where the third inequality follows by raising both sides to the power $2/d$ and using the fact that
for $\alpha \ge 6$, $32 \pi e (\alpha+1) \le 2^{2 \alpha}$. 
This concludes the proof.
%

\end{proof}


\anote{8/29: Made change to Lemma 3.6 to make it $(d+1)^{-R/4}$ instead of $d^{-R/4}$, and propagated changes. }
\begin{proof}[Proof of Theorem~\ref{thm:ddim}]
The proof follows by a straightforward combination of Lemmas~\ref{lem:packing},~\ref{lem:matchingmoments}, and~\ref{lem:ell2toell1}. 
As stated earlier, we will choose $R= \cmoments \log(1/\epsilon)$, and constants $\cmoments =8 \pi e$, 
$c_1 = 36$. 
First, by Lemma~\ref{lem:packing} we have that for at least $(1-\eta)$ fraction of the Gaussian mixtures in $\calF$, there is another mixture in the collection whose first $R=\cmoments \log(1/\eps)$ mean moments differ by at most $d^{-R/4}$ in symmetric injective tensor norm. To use Lemma~\ref{lem:matchingmoments} with $\eps' = \eps^2/4$, we see that 
\onote{8/24: this didn't really fix the problem: the problem is that Lemma 3.6 gives very weak guarantees for $d=1$; this is obviously stupid, and perhaps we can just start this proof by saying that we assume wlog $d\ge 2$; I would recommend reverting below to $d^{-R/4}$ otherwise it doesn't match the expression we derived above from Lemma 3.6}
\anote{8/29: I'm not sure what's the best way to handle this. Do you see a simple explanation why it is w.l.o.g?
I implemented another change instead: changed Lemma 3.6 statement and related statements from $d^{-R/4}$ to $(d+1)^{-R/4}$.}
\begin{align*}
\min_{r \in [R]} \eps_r &= \frac{\eps^2}{4} \min_{r \in [R]}\parens[\Big]{\frac{r}{ 8 \pi e \sqrt{\log(4/\eps^2)}}}^r \\ 
&\ge \frac{\eps^2}{4}\cdot 2^{-8 \pi e \sqrt{2\log(2/\eps)}} 
\ge 2^{- 2 \log(2/\eps) - 8 \pi e \sqrt{2\log(2/\eps)}} \ge 2^{-2\pi e \log(1/\eps)}\ge 2^{-R/4} \ge (d+1)^{-R/4},
\end{align*}
as required, 
where for the first inequality we used that $2^{-\alpha} < 1/\alpha$ for $\alpha>0$, and the second inequality uses $\log(1/\eps)\ge c_1 = 36$. We complete the proof by applying Lemma~\ref{lem:matchingmoments} (with $\eps$ in Lemma~\ref{lem:matchingmoments} taking the value $\eps^2/4$) and Lemma~\ref{lem:ell2toell1}. 
\end{proof}

%



\section{Iterative Algorithms for \texorpdfstring{$\min\set{\Omega(\sqrt{\log k}),\sqrt{d}}$}{min(Omega(sqrt(log k)),sqrt(d))} Separation} \label{sec:amplify}
We now give a new iterative algorithm that estimates the means of a mixture of $k$ spherical Gaussians up to arbitrary accuracy $\delta>0$ in $\poly(d,k,\log(1/\delta))$ time when the means have separation of order $\Omega(\sqrt{\log k})$ or $\Omega(\sqrt{d})$, when given coarse initializers. In all the bounds that follow, the most interesting setting of parameters is when $1/\delta$ is arbitrarily small compared to $k,d$ (e.g., $1/\wmin \le \poly(k)$ and $\delta= k^{-\omega(1)}$, or when $d=O(1)$ and $\delta =o(1)$). 
For sake of exposition, we will restrict our attention to the case when the standard deviations $\sigma_i$, and weights $w_i$ are known for all $i \in [k]$. We believe that similar techniques can be used to handle unknown $\sigma_i, w_i$ as well (see Remark~\ref{remark:unknown}).
We also note that the results of this section are interesting even in the case of uniform mixtures,
i.e., $w_i=1/k$ and $\sigma_i=1$ for all $i \in [k]$.

We will assume that we are given initializers that are inverse polynomially close in parameter distance. These initializers $\mu^{(0)}_1, \mu^{(0)}_2, \dots, \mu^{(0)}_k$ will be used to set up an ``approximate'' system of non-linear equations that has sufficient ``diagonal dominance'' properties, and then use the Newton method with the same initializers to solve it. In what follows, $\rhow$ and $\rhos$ denote the aspect ratio for the weights and variances respectively as defined in Section~\ref{sec:prelims}.

\begin{theorem}\label{thm:boxiterative}
  There exist universal constants $c, c_0>0$ such that the following holds.
  Suppose we are given samples from a mixture of $k$ spherical Gaussians $\calG$ having parameters $\set{(w_j, \mu_j,\sigma_j): j \in [k]}$, where the weights and variances are known, satisfying  
  \begin{equation}\label{eq:box:separation-condition}
  \forall i \ne j \in [k], ~ \norm{\mu_i - \mu_j}_2 \ge c(\sigma_i+\sigma_j) \cdot \min\set{\sqrt{d}+\sqrt{\log(\rho_w \rhos)}, \sqrt{\log(\rhos/\wmin)}} 
  \end{equation}
and suppose we are given initializers $\mu^{(0)}_1,\mu^{(0)}_2 , \dots, \mu^{(0)}_k$ satisfying
\begin{equation}\label{eq:iterative:initializer}
\forall j \in [k], ~  \frac{1}{\sigma_j} \norm{\mu^{(0)}_j  - \mu_j}_2 \le \frac{c_0}{\min\set{d,k}^{5/2}}.
\end{equation}
Then for any $\delta>0$, there is an iterative algorithm that runs in $\poly(\rho,d,1/\wmin,1/\delta)$ time (and samples), and after $T=O(\log\log(d/\delta))$ iterations recovers $\set{\mu_j : j \in [k]}$ up to  $\delta$ relative error w.h.p. i.e., finds $\set{\mu^{(T)}_j: j \in [k] }$ such that $\forall j \in [k]$, we have $\norm{\mu^{(T)}_j  - \mu_j}_2/\sigma_j  \le \delta $.
\end{theorem}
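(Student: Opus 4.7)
The plan is to formulate a system of non-linear equations $F(\vx) = b$ whose solution encodes the true normalized means $\vx^* = (\mu_1/\sigma_1,\ldots,\mu_k/\sigma_k)$, prove a uniform diagonal-dominance estimate for the Jacobian $J = F'$ on a neighborhood of $\vx^*$, and then invoke a robust form of Newton's method to obtain quadratic convergence starting from the initializers $z_j := \mu^{(0)}_j$.

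First, I would use the initializers to define the regions $S_j$ of Definition~\ref{def:box:Sj} and estimate the statistics $u_j = \int_{y\in S_j}(y-z_j)f(y)\dy$ to accuracy $\delta/\poly(k,d)$ from $\poly(\rho,d,1/\wmin,1/\delta)$ samples via Chernoff bounds. After rescaling to make the equations dimensionless, the true parameters satisfy $F(\vx^*) = b$ with $b_j = u_j/w_j$ and
\[
F_j(\vx) \;=\; \frac{1}{w_j\sigma_j}\sum_{i=1}^k w_i \int_{y\in S_j} (y-z_j)\, g_{\sigma_i \vx_i,\sigma_i}(y)\dy,
\]
so the $(j,i)$ block of $J(\vx)$ is exactly the Gaussian-weighted correlation matrix displayed in the overview. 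The core technical step is to show $\norm{J(\vx)^{-1}}_{\infty\to\infty} = O(1)$ uniformly for $\vx$ in the neighborhood $\calN = \set{\vx : \norm{\vx - \vx^*}_\infty \le c_0 \min\set{d,k}^{-5/2}}$. Viewed as a block matrix, the diagonal block $J_{jj}(\vx)$ is a perturbation of a well-conditioned $d\times d$ matrix of operator norm $\Theta(1)$, since $S_j$ captures almost all of component $j$'s mass and that mass is concentrated near $\mu_j$; each off-diagonal block $J_{ji}(\vx)$ measures the leakage of component $i$ into $S_j$.

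In the high-dimensional regime, the separation $\Omega(\sqrt{\log(\rhos/\wmin)})$ together with Gaussian tail bounds gives a per-block estimate that is strong enough that summing over the $k-1$ off-diagonal blocks in each row still leaves a constant gap to invoke Lemma~\ref{lem:diagdominant}. In the low-dimensional regime with only $\Omega(\sqrt{d})$ separation this per-block bound fails -- a single off-diagonal block can already be $\Theta_d(1)$ times the diagonal -- and instead I would invoke the packing-based estimate of Lemma~\ref{lem:smallleakage}: group the off-diagonal components by their distance from $z_j$ into shells of radius $r,2r,3r,\ldots$, bound the number of means per shell using that separation $\Omega(\sqrt{d})$ allows at most $O((r/\sqrt{d})^d)$ centers in a ball of radius $r$, multiply by the Gaussian weight $e^{-\Omega(r^2/\sigma_i^2)}$, and sum a rapidly convergent geometric series whose total is a small constant fraction of the diagonal. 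Either way, Lemma~\ref{lem:diagdominant} yields the desired uniform bound on $\norm{J(\vx)^{-1}}_{\infty\to\infty}$.

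With diagonal dominance in hand, I would bound the local Lipschitz constant $L$ of $J$ on $\calN$ by a second differentiation, which is routine: each entry of $F''$ is a Gaussian-weighted integral against a quadratic polynomial, giving $L = \poly(d,k,\rhos)$. Because the initializer satisfies $\norm{\vx^{(0)} - \vx^*}_\infty \le c_0 \min\set{d,k}^{-5/2}$, choosing $c_0$ small enough ensures $L \cdot \norm{J^{-1}}_{\infty\to\infty} \cdot \norm{\vx^{(0)}-\vx^*}_\infty \le 1/2$, so the robust Newton guarantee (Theorem~\ref{thm:newton:errors} and Corollary~\ref{corr:newton:errors}) applies with the empirical $(b,J)$ in place of the exact ones, yielding the quadratic contraction $\norm{\vx^{(t+1)}-\vx^*}_\infty = O(\norm{\vx^{(t)} - \vx^*}_\infty^2) + O(\delta)$. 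Iterating $T = O(\log\log(d/\delta))$ times drives the error down to $\delta$, and rescaling by $\sigma_j$ produces $\norm{\mu^{(T)}_j - \mu_j}_2/\sigma_j \le \delta$ as required. The main obstacle is the low-dimensional diagonal-dominance step: Gaussian tail decay alone is insufficient there, and one must simultaneously exploit the packing structure of $\Omega(\sqrt{d})$-separated means and the precise construction of $S_j$ to control the cumulative leakage.
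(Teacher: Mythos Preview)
Your proposal is essentially the paper's own proof: the same system $F(\vx)=b$ built from the regions $S_j$, the same block diagonal-dominance argument with two regimes (Gaussian tails for $\sqrt{\log k}$ separation, the packing estimate of Lemma~\ref{lem:smallleakage} for $\sqrt{d}$ separation), and the same appeal to the robust Newton guarantee.

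One point deserves more care than ``routine'': you write $L=\poly(d,k,\rhos)$, but the theorem's initializer radius is exactly $c_0/\min\{d,k\}^{5/2}$ with $c_0$ a \emph{universal} constant, so a generic polynomial bound on $L$ does not close the loop. The paper first reduces to $d\le k$ via PCA (Theorem~\ref{thm:pca}) and then shows the sharp estimate $L\le c' d^{5/2}$ (Lemma~\ref{lem:gradLipschitz}); the exponent $5/2$ is what makes $L\cdot\norm{J^{-1}}_{\infty\to\infty}\cdot\eps_0\le 1/2$ hold for a fixed $c_0$, and it comes from summing $d^2$ second-derivative entries each carrying a $\sqrt{d}$ factor from $\max_{y\in S_j}\norm{y-z_j}_2/\sigma_j$. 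Without the PCA step and this specific bound, your Newton contraction condition is not verified as stated.
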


For standard mixtures, \eqref{eq:box:separation-condition} corresponds to a separation of order $\min\set{\sqrt{\log k}, \sqrt{d}}$. 

Firstly, we will assume without loss of generality that $d \le k$, since otherwise we can use a PCA-based dimension-reduction result 
due to Vempala and Wang~\cite{VW04}.  

 \begin{theorem}\label{thm:pca}
Let $\set{(w_i, \mu_i, \sigma_i): i \in [k]}$ be a mixture of $k$ spherical Gaussians that is $\rho$-bounded, and let $\wmin$ be the smallest mixing weight. Let $\mu'_1, \mu'_2, \dots, \mu'_k$ be the projections onto the subspace spanned by the top $k$ singular vectors of sample matrix $X \in \R^{d \times N}$. For any $\eps>0$, with $N \le \poly(d, \rho, \wmin^{-1},\eps^{-1})$ samples we have with high probability
 $$ \forall i \in [k],~ \norm{\mu_i - \mu'_i}_2 \le \eps.$$ 
 \end{theorem}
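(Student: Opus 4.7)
The plan is to prove Theorem~\ref{thm:pca} by a standard spectral / PCA argument: the key observation is that for a mixture of \emph{spherical} Gaussians, the second-moment matrix decomposes as a low-rank signal plus an isotropic noise term, so the top-$k$ eigenspace of the population moment matrix contains the span of the means exactly. One then transfers this to the empirical estimator via matrix concentration and a perturbation bound for invariant subspaces.

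More concretely, let $V=\mathrm{span}(\mu_1,\ldots,\mu_k)$, of dimension $r\le k$, and write the population second-moment matrix as
\[
M \;=\; \E[xx^\top] \;=\; A \;+\; \bar\sigma^2\, I_{d\times d}, \qquad A = \sum_{i=1}^k w_i\, \mu_i\mu_i^\top,\qquad \bar\sigma^2=\sum_i w_i \sigma_i^2/(2\pi).
\]
The matrix $A$ is PSD with column space exactly $V$, so after subtracting the scalar $\bar\sigma^2 I$, the top-$k$ eigenspace $W$ of $M$ contains $V$, and the spectral gap between $\lambda_r(M)$ and $\lambda_{r+1}(M)$ equals $\lambda_r(A)>0$. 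This gap can be lower-bounded in terms of $\rho,\wmin$ (using that each $\mu_i\mu_i^\top$ is $\rho$-bounded and the weights are at least $\wmin$); alternatively, to avoid any geometric non-degeneracy assumption, one can restrict the argument to the top-$r$ eigenspace, which equals $V$, and then observe that the top-$k$ subspace of $M$ (or its empirical counterpart) automatically contains this $r$-dimensional subspace together with $(k-r)$ irrelevant noise directions, since those contribute only the isotropic term.

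Next I would pass from the population to the sample via a quantitative matrix concentration bound. Letting $\hat M=\tfrac{1}{N}XX^\top$, a truncation argument (samples lie in a ball of radius $O(\rho+\sqrt{d\log N}\cdot \smax)$ w.h.p.) combined with the matrix Bernstein inequality yields
\[
\Pr\bigl[\|\hat M - M\|_{\mathrm{op}} > \eps_0\bigr] \;\le\; d\,\exp\!\bigl(-\Omega(N \eps_0^2 / \mathrm{poly}(d,\rho))\bigr),
\]
so $N=\poly(d,\rho,\wmin^{-1},\eps^{-1})$ suffices to make $\eps_0$ arbitrarily small. Then Wedin's $\sin\Theta$ theorem gives $\|(I-P_{\hat W})P_V\|_{\mathrm{op}} \le \eps_0/\mathrm{gap}$, and since $\mu_i\in V$ we get
\[
\|\mu_i-\mu'_i\| \;=\; \|(I-P_{\hat W})\mu_i\| \;=\; \|(I-P_{\hat W})P_V \mu_i\| \;\le\; \frac{\eps_0}{\mathrm{gap}}\cdot\|\mu_i\| \;\le\; \rho\cdot\frac{\eps_0}{\mathrm{gap}},
\]
which can be driven below $\eps$ by taking $\eps_0\le\eps\cdot\mathrm{gap}/\rho$ and correspondingly enlarging $N$.

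The main (modest) obstacle is handling the case where $A$ has small but nonzero eigenvalues, so that the literal top-$k$ empirical subspace may ``split'' across the signal/noise boundary; the cleanest fix is to apply Wedin only at the \emph{true} spectral gap between signal and noise (i.e., between the last nonzero eigenvalue of $A$ and $0$), and then note that any $k$-dimensional subspace lying within $\eps_0/\mathrm{gap}$ of such a signal-containing subspace still captures the means to within the required accuracy. All other steps---moment identity, matrix Bernstein, Wedin, and projection---are standard and contribute only polynomial factors in $d,\rho,1/\wmin,1/\eps$, giving the claimed sample complexity.
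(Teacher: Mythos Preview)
Your overall framework matches the paper's: the second-moment decomposition $\E[xx^\top]=A+\bar\sigma^2 I$ with $A=\sum_i w_i\mu_i\mu_i^\top$, matrix concentration for $\hat M - M$, and a Davis--Kahan/Wedin perturbation bound. However, there is a genuine gap in the argument, precisely at the point you flag as a ``modest obstacle.''

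You assert that the spectral gap $\lambda_r(A)$ (the smallest nonzero eigenvalue of $A$) ``can be lower-bounded in terms of $\rho,\wmin$.'' This is false: the means can be in nearly degenerate position (e.g., two almost-parallel means), making $\lambda_r(A)$ arbitrarily small while $\rho$ and $\wmin$ stay fixed. Your proposed fix---applying Wedin at ``the true spectral gap between the last nonzero eigenvalue of $A$ and $0$''---is exactly the gap $\lambda_r(A)$ that may vanish, so the bound $\eps_0/\mathrm{gap}$ is uncontrolled and the argument does not close.

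The paper's resolution is different and is the missing idea. It does \emph{not} try to recover $V=\mathrm{span}(\mu_i)$ exactly. Instead it picks a threshold $\delta$ (chosen as $\wmin\eps^4/(2\rho^2)$), defines $r\le k$ as the smallest index with $\lambda_{r+1}(A)<\delta$, and lets $U$ be the population top-$r$ eigenspace. Two things now hold by construction:
\begin{itemize}
\item There is a manufactured gap of size at least $\delta$ between the eigenvalues of the population matrix on $U$ (all $\ge \bar\sigma^2+\delta$) and the eigenvalues of the empirical matrix on the bottom $(d-k)$ eigenspace $\tilde V$ (all $\le \bar\sigma^2+\eta$ by Weyl). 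This gives $\|U\tilde V\|\le \eta/(\delta-\eta)$.
\item The means are only \emph{approximately} in $U$, but the error is controlled: from $w_i\,U^\perp\mu_i\mu_i^\top U^\perp \preceq U^\perp A\,U^\perp$ one gets $\|U^\perp\mu_i\|_2^2\le \lambda_{r+1}/w_i<\delta/\wmin$.
\end{itemize}
The final bound then combines both pieces:
\[
\|\tilde V\mu_i\|_2^2 = \langle \mu_i, U\tilde V\mu_i\rangle + \langle U^\perp\mu_i,\tilde V\mu_i\rangle \le \|U\tilde V\|\,\|\mu_i\|^2 + \|U^\perp\mu_i\|\,\|\mu_i\|,
\]
and both terms are $\poly(\delta,\rho,1/\wmin)$ small. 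The point is that you must trade off: accept that the means lie only approximately in the top space, in exchange for a gap you control. Your sketch never makes this trade and therefore cannot bound $\|\mu_i-\mu'_i\|$ uniformly in the geometry of the means.
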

The above theorem is essentially Corollary 3 in~\cite{VW04}.
In~\cite{VW04}, however, they take the subspace spanned by the top $\max\{k, \log d\}$ singular vectors, 
most likely due to an artifact of their analysis. 
We give a different self-contained proof in Appendix~\ref{app:pca}. 
We will abuse notation, and use $\set{\mu_i: i \in [k]}$ to refer to the means in the dimension reduced space.
Note that after dimension reduction, the means are still well-separated for $c'>(c-1)$ i.e.,

\begin{equation}
\forall i,j \in [k],~ \norm{\mu_i - \mu_j} \ge c' (\sigma_i + \sigma_j)\min\set{\sqrt{d}+\sqrt{\log(\rho_w \rhos)}, \sqrt{\log(\rhos/\wmin)}}.
\end{equation}


%

\subsection{Description of the Non-linear Equations and Iterative Algorithm}

For each component $j \in [k]$ in $\calG$, we first define a region $S_j$ around $z_j$ as follows. 
We will show in Lemmas~\ref{lem:box:samecomponent} and~\ref{lem:box:othercomponentstotal} that the total probability mass in $S_j$ from other components is smaller than the probability mass from the component $j$.

\begin{definition}[Region $S_j$]\label{def:box:Sj}
Given initializers $z_1, z_2, \dots, z_k \in \R^d$, define $\ejl$ as the unit vector along $z_\ell - z_j$, and let
\begin{equation}
S_j = \set{x \in \R^d: \forall \ell \in [k] ~ \Abs{\iprod{x-z_j, \ejl}} \le 4 \sqrt{\log (\rhos/\wmin)} \sigma_j, ~\text{and}~\norm{x-z_j}_2 \le 4 (\sqrt{d}+\sqrt{\log(\rhos \rhow)}) \sigma_j}.
\end{equation}
\end{definition}

Based on those regions, we define the function $F:\R^{kd} \to \R^{kd}$ by
\begin{equation}
F_j(\vx) :=
   \frac{1}{w_j \sigma_j} 
	 \sum_{i=1}^k w_i \int_{y \in S_j} (y - z_j) g_{\sigma_i \vx_i, \sigma_i}(y) \dy ,
	\label{eq:box:mu}
\end{equation}
for $j=1,\ldots,k$, where $\vx = (\vx_1,\ldots,\vx_k) \in \R^{kd}$. 
We also define $\vx^* = (\mu_i /\sigma_i)_{i=1}^k$ as the intended solution. 
(Notice that for convenience, our variables correspond to ``normalized means'' $\mu / \sigma$ instead 
of with the means $\mu$ directly.) 
The system of non-linear equations is then simply
\begin{equation}\label{eq:systemofequations}
   F(\vx) = b
\end{equation}
where $b=F(\vx^*)$. This is a system of $kd$ equations in $kd$ unknowns. 
Obviously $\vx^*$ is a solution of the system, and if we could find
it we would be able to recover all the means, as desired. 

Our algorithm basically just applies Newton's method to solve~\eqref{eq:systemofequations}
with initializers given by $\vx^{(0)}_i = \mu^{(0)}_i/\sigma_i$ for $i \in [k]$. 
To recall, Newton's method uses the iterative update
\[
\vx^{(t+1)}=\vx^{(t)} - (F'(\vx^{(t)}))^{-1} (b - F(\vx^{(t)}) ), 
\]
where $F'(\vx^{(t)})$ is the first derivative matrix (Jacobian) of $F$ evaluated at $\vx^{(t)}$.
One issue, however, is that we are not given $b = F(\vx^*)$. 
Nevertheless, as we will show in Lemma~\ref{lem:box:sampling1}, 
we can easily estimate it to within any desired accuracy using 
a Monte Carlo approach based on the
given samples (from the Gaussian mixture corresponding to $\vx^*$). 
A related issue is that we do not have a closed-form expression for $F$ and $F'$,
but again, we can easily approximate their evaluation at any point
$\vx$ and to within any desired accuracy using a Monte Carlo approach (by generating
samples from the Gaussian mixture corresponding to $\vx$). 
The algorithm is given below in detail.

\vspace{10pt}

\def\compactify{\itemsep=0pt \topsep=0pt \partopsep=0pt \parsep=0pt}

\rule{0pt}{12pt}
\hrule height 0.8pt
\rule{0pt}{1pt}
\hrule height 0.4pt
\rule{0pt}{6pt}

\noindent \textbf{Iterative Algorithm for Amplifying Accuracy of Parameter Estimation}

\medskip

\noindent \textbf{Input:} Estimation accuracy $\delta>0$, $N$ samples 
$y^{(1)},\ldots,y^{(N)} \in \R^d$ 
from a mixture of well-separated Gaussians $\calG$ 
with parameters $\set{(w_j, \mu_j, \sigma_j): j \in [k]}$,
the weights $w_j$ and variances $\sigma_j^2$,
as well as initializers $\mu^{(0)}_i$ for each $i \in [k]$ such that $\norm{\mu^{(0)}_i - \mu_i}_\infty \le \eps_0 \sigma_i$. \onote{8/24: why is it now in the $\ell_2$ norm? do we ever address it?}\anote{8/28:I moved this conversion from $\ell_2$ to $\ell_\infty$ in the Proof of Theorem.}

\noindent \textbf{Parameters:} 
Set $T=C\log\log(d/\delta)$, for some sufficiently large constant $C>0$,
$\eps_0 = \basin$ where $c_0>0$ is a sufficiently small constant, 
and $\eta_1, \eta_2, \eta_3= \delta\wmin/(c' \sqrt{d}\rhos)$ where $c'>0$ is a sufficiently
small constant.
\anote{8/28: modified $\eta$ valued. was:$\eta_1, \eta_2, \eta_3= \delta/(8c' d^2 \rhos^2 k^{8})$}

\noindent \textbf{Output:} Estimates $\parens{\mu^{(T)}_i: i \in [k]}$ for each component $i \in [k]$ such that $\norm{\mu^{(T)}_i - \mu_i}_\infty \le \delta \sigma_i$.

\begin{enumerate} \compactify
\item If $\delta\ge \eps_0$, then we just output $\mu^{(T)}_i=\mu^{(0)}_i$ for each $i \in [k]$.
\item Set $\vx^{(0)}_i= \tfrac{1}{\sigma_i} \mu^{(0)}_i$ for each $i \in [k]$. 
\item 
Obtain using Lemma~\ref{lem:box:sampling1} an estimate $\tb$ of $b$  
up to accuracy $\eta_1$ (in $\ell_\infty$ norm). 
In more detail, define the empirical average 
\[
\forall j \in [k], \tb_j = \frac{1}{w_j \sigma_j N}\sum_{\ell \in [N]} \I[y^{(\ell)} \in S_j] \Paren{y^{(\ell)}-z_j} .
\]
	(This is the only place the given samples are used)

\item For $t=1$ to $T=O(\log \log (dk/\delta))$ steps do the following:
	\begin{enumerate}
			\item Obtain using Lemma~\ref{lem:box:sampling1} an estimate $\tF(\vx^{(t)})$ of $F(\vx^{(t)})$ at $\vx^{(t)}$ up to accuracy $\eta_2$ (in $\ell_\infty$ norm). 
		\item Obtain using Lemma~\ref{lem:box:sampling2} an estimate $\widetilde{F'}(\vx^{(t)})$ of $F'(\vx^{(t)})=\grad_{\vx} F(\vx^{(t)})$ up to accuracy $\eta_3$ (in $\infty \to \infty$ operator norm).
		\item Update $\vx^{(t+1)}=\vx^{(t)} - \Paren{\widetilde{F'}(\vx^{(t)})}^{-1} \Paren{\tilde{b} - \tF(\vx^{(t)}) }.$
	\end{enumerate}
\item Output $\mu^{(T)}_i=\sigma_i \vx^{(T)}_i$ for each $i \in [k]$.

\end{enumerate}

\hrule height 0.4pt
\rule{0pt}{1pt}
\hrule height 0.8pt
\rule{0pt}{12pt}


The proof of the two approximation lemmas below is based on a rather standard Chernoff argument; 
see Section~\ref{sec:samplingerrors}.
 
\begin{lemma} [Estimating $F_j(\vx)$] \label{lem:box:sampling1}
Suppose samples $y^{(1)}, y^{(2)}, \dots, y^{(N)}$ are generated from a mixture of $k$ spherical Gaussians with parameters $\{(w_i, \sigma_i\vx_i, \sigma_i)\}_{i \in [k]}$ in $d$ dimensions, and $\max\set{\norm{\vx_i}_\infty,1}\le \rho/\sigma_i~ \forall i \in [k]$. 
There exists a constant $C>0$ such that for any $\eta>0$ the following holds for $N\ge C \rho^3 \log (\tfrac{dk}{\gamma})/(\eta^2 \wmin)$ samples: with probability at least $1-\gamma$, for each $j \in [k]$ the empirical estimate for $F_j(\vx)$ has error at most $\eta$ i.e.,
\begin{equation}\label{eq:box:errorbound1}
\bignorm{ F_j(\vx) - \frac{1}{w_j \sigma_j N}\sum_{\ell \in [N]} \I[y^{(\ell)} \in S_j] \Paren{y^{(\ell)}-z_j} }_\infty < \eta,
\end{equation}
where $S_j$ is defined in Definition~\ref{def:box:Sj}.
\end{lemma}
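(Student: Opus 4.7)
The plan is to recognise the empirical quantity $\tb_j$ as a Monte Carlo estimator of $F_j(\vx)$ and to apply standard concentration coordinate-wise. First I would observe that if $y$ is drawn from the mixture density $f(y)=\sum_{i=1}^k w_i g_{\sigma_i \vx_i,\sigma_i}(y)$, then by linearity of expectation and the definition~\eqref{eq:box:mu},
\begin{equation*}
w_j \sigma_j \cdot F_j(\vx) \;=\; \sum_{i=1}^k w_i \int_{S_j} (y-z_j)\, g_{\sigma_i \vx_i,\sigma_i}(y)\, \dy \;=\; \E_{y \sim f}\bigl[\I[y \in S_j](y-z_j)\bigr].
\end{equation*}
Thus $\tb_j$ is exactly the empirical mean over $N$ i.i.d.\ samples of the random vector $X := \I[y \in S_j](y-z_j)/(w_j \sigma_j)$, whose population mean is $F_j(\vx)$.

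Next I would control $X$ entry-wise using Definition~\ref{def:box:Sj}. The set $S_j$ is contained in a Euclidean ball of radius $4(\sqrt d + \sqrt{\log(\rhos \rhow)})\sigma_j$ around $z_j$, so whenever the indicator fires we have $\norm{y-z_j}_\infty \le \norm{y-z_j}_2 \le O\bigl((\sqrt d + \sqrt{\log\rho})\,\sigma_j\bigr)$. Combined with $w_j \sigma_j \ge \wmin/\rho$ (from $w_j \ge \wmin$ and $\sigma_j \ge 1/\rho$), this yields a worst-case bound $\norm{X}_\infty \le O(\rho^{3/2}/\wmin)$ after absorbing the logarithmic and $\sqrt d$ factors into $\rho$. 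A matching variance bound (and slightly sharper in $\wmin$) follows because each coordinate of $X$ vanishes outside $S_j$.

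The concentration step is then immediate: for a fixed component $j \in [k]$ and coordinate $r \in [d]$, applying Bernstein's (or Hoeffding's) inequality to the bounded i.i.d.\ scalars $\{X_\ell(r)\}_{\ell=1}^N$ yields
\begin{equation*}
\Pr\Bigl[\bigl|\tb_j(r) - F_j(\vx)(r)\bigr| > \eta\Bigr] \;\le\; 2\exp\!\Bigl(-\Omega\bigl(N \eta^2 \wmin/\rho^3\bigr)\Bigr).
\end{equation*}
Substituting $N \ge C\rho^3\log(dk/\gamma)/(\eta^2 \wmin)$ for a sufficiently large absolute constant $C$ and taking a union bound over the $dk$ pairs $(j,r)$ drives the total failure probability below $\gamma$, which is exactly the $\ell_\infty$ bound~\eqref{eq:box:errorbound1} asserted in the lemma.

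The argument has no genuine obstacle; it is a routine Chernoff-style concentration once the estimator is identified as an empirical mean. The one point worth emphasising is that the bound on $\norm{X}_\infty$ comes purely from the explicit diameter of $S_j$ in Definition~\ref{def:box:Sj}, so no assumption about the separation of the components, about bias from the cross terms $i \ne j$, or about the quality of the initializers $z_j$ is required at this stage — those properties are used only later, when $\tb_j$ and $F_j(\vx)$ are related to the actual parameters of $\calG$.
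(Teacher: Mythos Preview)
Your proposal is correct and follows essentially the same strategy as the paper: identify the empirical quantity as an unbiased Monte Carlo estimator of $F_j(\vx)$, apply scalar concentration coordinate-wise, and union-bound over the $dk$ coordinates. The paper itself flags this as ``a rather standard Chernoff argument.''

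There is one small technical difference worth noting. The paper routes the concentration through an auxiliary sub-Gaussian lemma (Lemma~\ref{lem:app:box:samplingerror}) and, in the proof of Lemma~\ref{lem:box:sampling1}, splits $y-z_j = (y-\sigma_i\vx_i) + (\sigma_i\vx_i - z_j)$ before invoking that lemma on each piece. You instead observe directly that the indicator $\I[y\in S_j]$ forces $\norm{y-z_j}_2$ to lie below the explicit radius in Definition~\ref{def:box:Sj}, so each summand is \emph{almost surely bounded} and Hoeffding/Bernstein applies with no decomposition and no sub-Gaussian tail estimate. Your route is marginally more elementary and makes transparent that nothing about the mixture (separation, initializer quality, cross-component leakage) is used here---only the geometry of $S_j$. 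The paper's route would generalise more readily to regions $S_j$ without a hard $\ell_2$ cap, but for the region actually defined both arguments are equivalent in strength and your bookkeeping is cleaner.
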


\onote{8/25: moved this here:}
We now state a similar lemma for the Jacobian $F':\R^{k\cdot d} \rightarrow \R^{k\cdot d}$,
which by an easy calculation is given for all $i,j \in [k]$ by
\onote{8/24: should we avoid notation $J$ and instead always write $F'$?}\anote{8/28:modified the algorithm and newton section to do this.}
\onote{8/25: added missing factor 2... need to fix everywhere else too!}\anote{8/28: I think I propagated it fully.}
\[
\grad_{\vx_i} F_j (\vx) =
   \frac{2 \pi w_i}{w_j \sigma_j \sigma_i}  
	 \int_{y \in S_j} (y - z_j) (y- \sigma_i \vx_i)^T g_{\sigma_i \vx_i, \sigma_i}(y) \dy  \nonumber 
\]
i.e., for all $r, r_0 \in [d]$, 
\begin{align}
\frac{\partial F_{j, r_0}(\vx)}{\partial \vx_i(r)}
&= \frac{2 \pi w_i}{w_j \sigma_j \sigma_i} 
   \int_{y \in S_j} (y(r_0) - z_j(r_0)) (y(r_1)- \sigma_i \vx_i(r_1)) g_{\sigma_i \vx_i, \sigma_i}(y) \dy \label{eq:box:grad}
\end{align}

\begin{lemma} [Estimating $\grad_{\vx_i} F_j(\vx)$] \label{lem:box:sampling2}
Suppose we are given the parameters of a mixture of $k$ spherical Gaussians $\{(w_i, \sigma_i\vx_i, \sigma_i)\}_{i \in [k]}$ in $d$ dimensions, with $\max\set{\norm{\vx_i}_\infty,1}\le \rho/\sigma_i$ for each $i \in [k]$, and region $S_j$ is defined as in Definition~\ref{def:box:Sj}.
There exists a constant $C>0$ such that for any $\eta>0$ the following holds for $N\ge C \rho^4 d^2 k^2 \log (\tfrac{dk}{\gamma})/(\eta^2 \wmin)$. Given $N$ samples $y^{(1)}, y^{(2)}, \dots, y^{(N)}$ generated from spherical Gaussian with mean $\sigma_i \vx_i$ and variance $\sigma_i^2/(2\pi)$, we have with probability at least $1-\gamma$, for each $j \in [k]$ the following empirical estimate for $\grad_{\vx_i} F_j$ has error at most $\eta$ i.e.,
\begin{equation}\label{eq:box:errorbound2}
\bignorm{ \grad_{\vx_i} F_j(\vx) - \widetilde{\grad_{\vx_i}F_j}(\vx) }_{\infty \to \infty}< \frac{\eta}{k}, ~\text{ where } \widetilde{\grad_{\vx_i}F_j}(\vx):=\frac{2 \pi w_i}{w_j \sigma_i \sigma_j N}\sum_{\ell \in [N]} \I[y^{(\ell)} \in S_j] \Paren{y^{(\ell)}-z_j} \Paren{y^{(\ell)}-\sigma_i \vx_i}^T.
\end{equation}
Furthermore, we have that the estimated second derivative $\widetilde{F'}(\vx)=(\widetilde{\grad_{\vx_i}F_j}(\vx): i, j \in [k])$ satisfies $\bignorm{F'(\vx) - \widetilde{F'}(\vx)}_{\infty \to \infty} \le \eta$.
\end{lemma}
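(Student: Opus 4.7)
The plan is to establish Lemma~\ref{lem:box:sampling2} via a standard unbiased-estimator + Hoeffding argument, analogous to Lemma~\ref{lem:box:sampling1} but with a more complicated random variable (the outer product $(y - z_j)(y - \sigma_i \vx_i)^T$ rather than just $y - z_j$). First, note that by the definition~\eqref{eq:box:grad}, the matrix $\grad_{\vx_i} F_j(\vx)$ is exactly $\frac{2 \pi w_i}{w_j \sigma_j \sigma_i} \E_{y \sim g_{\sigma_i \vx_i, \sigma_i}}\!\bigl[\I[y \in S_j](y - z_j)(y - \sigma_i \vx_i)^T\bigr]$, so the estimator $\widetilde{\grad_{\vx_i}F_j}(\vx)$ in the statement is unbiased.

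Next I would bound each summand entry-wise, using the fact that the indicator forces $y^{(\ell)}$ to lie in the region $S_j$ from Definition~\ref{def:box:Sj}. There, we have $|y^{(\ell)}(r_0) - z_j(r_0)| \le 4\sqrt{\log(\rhos/\wmin)}\,\sigma_j$ (and the $\ell_2$ bound $4(\sqrt{d}+\sqrt{\log(\rho_w\rhos)})\sigma_j$), and the second factor can be split as $y^{(\ell)}(r) - \sigma_i \vx_i(r) = (y^{(\ell)}(r) - z_j(r)) + (z_j(r) - \sigma_i \vx_i(r))$, where the first term is again bounded by the $S_j$-constraint, and the second is a deterministic number bounded by $O(\rho)$ using the hypothesis $\max\{\norm{\vx_i}_\infty,1\} \le \rho/\sigma_i$ and $\norm{z_j}_\infty \le \rho$. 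Combined with the prefactor $\frac{2 \pi w_i}{w_j \sigma_j \sigma_i}$, each scalar random variable $\I[y^{(\ell)}\in S_j]\cdot(y^{(\ell)}(r_0)-z_j(r_0))(y^{(\ell)}(r)-\sigma_i\vx_i(r)) \cdot \frac{2 \pi w_i}{w_j \sigma_j \sigma_i}$ is bounded in absolute value by some $M = \poly(\rho, \sqrt{d+\log(\rho/\wmin)})/\wmin$.

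Then I would apply Hoeffding's inequality to each of the $d^2$ scalar entries of the $d \times d$ matrix $\grad_{\vx_i}F_j(\vx) - \widetilde{\grad_{\vx_i}F_j}(\vx)$: with $N \ge C M^2 \log(k^2 d^2/\gamma)/(\eta/(kd))^2$, each entry has error at most $\eta/(kd)$ with probability $\ge 1 - \gamma/(k^2 d^2)$. The choice of $N$ in the statement, $N \ge C \rho^4 d^2 k^2 \log(dk/\gamma)/(\eta^2 \wmin)$, is calibrated for exactly this, absorbing polylogarithmic factors in $\rho,1/\wmin$ into the polynomial overhead. A union bound over all $k^2 d^2$ entries of the full Jacobian then controls every block simultaneously with failure probability $\le \gamma$. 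Since the $\|\cdot\|_{\infty \to \infty}$ norm of a matrix is its maximum row $\ell_1$ norm, summing $d$ per-entry errors of size $\eta/(kd)$ yields $\|\grad_{\vx_i}F_j(\vx) - \widetilde{\grad_{\vx_i}F_j}(\vx)\|_{\infty \to \infty} \le \eta/k$, which is the first claim.

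Finally, for the full Jacobian bound, I would observe that each row of $F'(\vx)$ is indexed by a pair $(j,r_0)$ and concatenates the corresponding row from each of the $k$ blocks $\grad_{\vx_i} F_j$ for $i \in [k]$, so the $\ell_1$ norm of any row of $F'(\vx) - \widetilde{F'}(\vx)$ is at most $\sum_{i=1}^k \|\grad_{\vx_i}F_j(\vx) - \widetilde{\grad_{\vx_i}F_j}(\vx)\|_{\infty \to \infty} \le k \cdot \eta/k = \eta$. The only mildly delicate step is tracking the deterministic bounds carefully so that the per-sample bound $M$ only costs the stated $\poly(\rho, 1/\wmin)$ in $N$; this is essentially routine, since all polynomial-in-$\log$ factors (from $S_j$'s diameter) are absorbed into the $\rho^4$ and $d^2$ factors in the sample complexity.
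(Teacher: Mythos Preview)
Your proposal is correct and follows essentially the same structure as the paper's proof: verify unbiasedness, establish entrywise concentration, sum $d$ entries for the $\|\cdot\|_{\infty\to\infty}$ block bound, then sum the $k$ blocks for the full Jacobian. The only minor differences are (i) the paper splits the \emph{first} factor as $(y-z_j)=(\sigma_i\vx_i-z_j)+(y-\sigma_i\vx_i)$ and invokes the sub-Gaussian concentration of Lemma~\ref{lem:app:box:samplingerror}, whereas you split the second factor and use the hard bound from the indicator $\I[y\in S_j]$ together with Hoeffding; and (ii) your claimed coordinate bound $|y^{(\ell)}(r_0)-z_j(r_0)|\le 4\sqrt{\log(\rhos/\wmin)}\,\sigma_j$ does not follow from the linear constraints in Definition~\ref{def:box:Sj} (those are along the directions $\ejl$, not coordinate axes)---you should instead use the $\ell_2$ constraint $\|y-z_j\|_2\le 4(\sqrt{d}+\sqrt{\log(\rhow\rhos)})\sigma_j$, which you already mention parenthetically and which suffices.
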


\onote{8/25: the lemma gives us estimate in $\infty\to\infty$ norm of \emph{each block} separately
even though Appendix B requires an estimate of the error in the entire matrix; do we address 
this somewhere?} \anote{8/28: addressed it now.}

\begin{remark}\label{remark:unknown}
Although for simplicity we focus here on the case that only the means are unknown, 
we believe that our approach can be extended to the case that the weights and variances 
are also unknown, and only coarse estimates for them are given. 
In order to handle that case, we need to collect more statistics about the given
samples, in addition to just the mean in each region. 
Namely, using the samples restricted to $S_j$, we estimate the total probability mass in each $S_j$, i.e., 
$b^{(w)}_j= \E_{y \leftarrow \calG} \I [y \in S_j]$, and the average squared Euclidean norm 
$b^{(\sigma)}_j=\tfrac{1}{d}\E_{y \leftarrow \calG} \I[y \in S_j] \norm{y}_2^2$. 
We now modify~\eqref{eq:systemofequations} by adding new unknowns (for the weights and variances), 
as well as new equations for $b^{(\sigma)}_j$ and $b^{(w)}_j$. 
This corresponds to $k(d+2)$ non-linear equations with $k(d+2)$ unknowns.
\end{remark}

\subsection{Convergence Analysis using the Newton method}

We will now analyze the convergence of the Newton algorithm. 
We want each parameter $\vx^{(T)}_i \in \R^d$ to be close to $\vx^{*}_i$ in an appropriate norm (e.g., $\ell_2$ or $\ell_\infty$). Hence, we will measure the convergence and error of $\vx=\parens{\vx_i: i \in [k]}$ to be measured in $\ell_\infty$ norm.

\begin{definition}[Neighborhood]\label{def:box:neigh}
  Consider a mixture of Gaussians with parameters $\parens{(\mu_i, \sigma_i, w_i): i \in [k]}$, and let $\parens{\vx_i: i \in [k]} \in \R^{kd}$ be the corresponding parameters of the non-linear system $F(\vx)=b$. The neighborhood set
  $$\calN = \set{\parens{\vx_i: i \in [k]} \in \R^{kd} ~\mid~ \forall i \in [k], \norm{\vx_i - \vx^*_i}_\infty < \eps_0=\basin}, $$
  is the set of values of the variables that are close to the true values of the variables given by $\vx^*_i=\tfrac{\mu_i}{\sigma_i}~ \forall i \in [k] $, and $c_0>0$ is an appropriately large universal constant given in Theorem~\ref{thm:boxiterative}.
\end{definition}

We will now show the convergence of the Newton method by showing diagonal dominance properties of the non-linear system given in Lemma~\ref{lem:diagdominant}. This diagonal dominance arises from the separation between the means of the components. Lemmas~\ref{lem:box:samecomponent} and ~\ref{lem:box:othercomponentstotal} show that most of the probability mass from $j$th component around $\mu_j$ is confined to $S_j$, while the other components of $\calG$ are far enough from $z_j$, that they do not contribute much $\ell_1$ mass in total to $S_j$.

The following lemma lower bounds the contribution to region $S_j$ from the $j$th component.

\begin{lemma} \label{lem:box:samecomponent}
In the notation of Theorem~\ref{thm:boxiterative}, for all $j \in [k]$, we have
\begin{align}
& \int_{S_j} g_{\mu_j, \sigma_j}(y) \dy \ge 1- \frac{1}{8 \pi d}.\label{eq:box:sc:0}\\
\forall r_1 \in [d], ~& \frac{1}{\sigma_j} \Abs{\int_{S_j} \left( y(r_1) - \mu_j(r_1) \right) g_{\mu_j, \sigma_j}(y)  \dy } \le  \frac{1}{8 \pi d} .\label{eq:box:sc:1}\\
\forall r_1, r_2 \in [d], ~& \frac{1}{\sigma_j^2}\Abs{\int_{S_j} \left( y(r_1) - \mu_j(r_1) \right) \left( y(r_2) - \mu_j(r_2) \right) g_{\mu_j, \sigma_j}(y) \dy  - \frac{\sigma_j^2}{2 \pi} I[r_1=r_2]}  \le \frac{1}{8 \pi d} \label{eq:box:sc:2}.
\end{align}
\end{lemma}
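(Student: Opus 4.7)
\medskip

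\noindent\textbf{Proof proposal.} The overall strategy is to show that $S_j$ is ``large enough'' to contain almost all of the mass of the $j$th Gaussian component centered at $\mu_j$, and then exploit the near-symmetry of $S_j$ about $\mu_j$ to obtain the two moment bounds. Writing $\int_{S_j} = \int_{\R^d} - \int_{S_j^c}$, the integrals over $\R^d$ are exact (they give $1$, $0$, and $(\sigma_j^2/2\pi)\,I[r_1{=}r_2]$ respectively), so it suffices to bound the tail contributions coming from $S_j^c$. The key quantitative input is that the slab half-widths ($4\sqrt{\log(\rho_\sigma/\wmin)}\sigma_j$) and ball radius ($4(\sqrt d+\sqrt{\log(\rho_\sigma\rho_w)})\sigma_j$) defining $S_j$ are much larger than $\sigma_j/\sqrt{2\pi}$, and that $z_j$ is extremely close to $\mu_j$ (by~\eqref{eq:iterative:initializer} we have $\|z_j-\mu_j\|_2\le \basin\cdot \sigma_j$), so the region $S_j$ centered at $z_j$ is essentially the same as a slightly smaller region centered at $\mu_j$.

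For~\eqref{eq:box:sc:0}, I would estimate $\Pr_{y\sim g_{\mu_j,\sigma_j}}[y\notin S_j]$ by a union bound over the $k$ slab constraints and the one ball constraint. For each direction $\ejl$, the random variable $\langle y-\mu_j, \ejl\rangle$ is Gaussian with mean $0$ and variance $\sigma_j^2/(2\pi)$, so using $|\langle \mu_j-z_j,\ejl\rangle|\le \basin\sigma_j \ll \sigma_j$, the slab failure probability is at most $2\phit(3\sqrt{\log(\rho_\sigma/\wmin)}\sigma_j)\le 2(\wmin/\rho_\sigma)^{\Omega(1)}$, so the union bound over $k\le \rho_\sigma/\wmin$ slabs is at most $(\wmin/\rho_\sigma)^{\Omega(1)}$. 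The ball failure probability is controlled by Lemma~\ref{lem:lengthconc} and is at most $\exp(-\Omega(d+\log(\rho_\sigma\rho_w)))$. Both terms are comfortably at most $1/(16\pi d)$, proving~\eqref{eq:box:sc:0}.

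For~\eqref{eq:box:sc:1} and~\eqref{eq:box:sc:2}, I would use
\[
\int_{S_j}(y(r_1)-\mu_j(r_1))\,g_{\mu_j,\sigma_j}(y)\dy = -\int_{S_j^c}(y(r_1)-\mu_j(r_1))\,g_{\mu_j,\sigma_j}(y)\dy,
\]
and likewise for the quadratic integrand after subtracting off $\int_{\R^d}(y(r_1)-\mu_j(r_1))(y(r_2)-\mu_j(r_2))g_{\mu_j,\sigma_j}(y)\dy = \tfrac{\sigma_j^2}{2\pi}I[r_1{=}r_2]$. Each tail integral can then be bounded by Cauchy--Schwarz:
\[
\bigabs{\int_{S_j^c}(y(r_1)-\mu_j(r_1))\,g_{\mu_j,\sigma_j}(y)\dy}\le \sqrt{\Pr[y\notin S_j]}\cdot\sqrt{\E_{y\sim g_{\mu_j,\sigma_j}}\bracks{(y(r_1)-\mu_j(r_1))^2}}=\sqrt{\Pr[y\notin S_j]}\cdot\tfrac{\sigma_j}{\sqrt{2\pi}},
\]
and analogously the quadratic tail is at most $\sqrt{\Pr[y\notin S_j]}$ times the Gaussian fourth moment $O(\sigma_j^2)$. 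Since the tail probability is at most $\exp(-\Omega(d+\log(\rho_\sigma/\wmin)))$, which is far smaller than $1/(8\pi d)^2$, both~\eqref{eq:box:sc:1} and~\eqref{eq:box:sc:2} follow after dividing by $\sigma_j$ and $\sigma_j^2$ respectively.

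The main obstacle, and the only place where care is needed, is handling the offset between $z_j$ and $\mu_j$ in the slab constraints. Specifically, the slab is defined as $|\langle y-z_j,\ejl\rangle|\le 4\sqrt{\log(\rho_\sigma/\wmin)}\sigma_j$, but after shifting to $\mu_j$ one gets the slightly asymmetric slab $|\langle y-\mu_j,\ejl\rangle + \langle \mu_j-z_j,\ejl\rangle|\le 4\sqrt{\log(\rho_\sigma/\wmin)}\sigma_j$. Since $\|\mu_j-z_j\|_2\le \basin\sigma_j$ is negligible compared to $\sqrt{\log(\rho_\sigma/\wmin)}\sigma_j$, this only costs a constant factor in the slab half-width and does not affect the conclusions. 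The same negligible shift appears for the ball constraint and is absorbed into the $4$ factor in front.
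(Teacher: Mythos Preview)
Your proposal is correct and follows the same overall skeleton as the paper: reduce to tail integrals over $S_j^c$, and control $S_j^c$ via a union bound over the $k$ slab constraints plus the ball constraint, absorbing the $z_j\to\mu_j$ shift into the constant in front of the half-widths. For~\eqref{eq:box:sc:0} the two arguments are essentially identical.

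The genuine difference is in how you handle~\eqref{eq:box:sc:1} and~\eqref{eq:box:sc:2}. The paper proves a dedicated half-space truncated-moment lemma (Lemma~\ref{lem:box:helper}): for any unit vector $\ev_0$ and $t\ge 3$, the integrals of $1$, $|y(r_1)|$, and $|y(r_1)y(r_2)|$ over $\{y:\iprod{y,\ev_0}\ge t\sigma\}$ against the Gaussian density are each at most $\tfrac{1}{16\pi}e^{-t^2}\sigma^q$. It then applies this per slab constraint (and a separate high-dimensional truncated-moment bound, Lemma~\ref{lem:app:truncmoments:highd}, for the ball constraint) and sums. Your route instead bounds the whole tail integral in one shot by Cauchy--Schwarz against the full Gaussian moments, reducing everything to $\sqrt{\Pr[y\notin S_j]}$. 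This is cleaner and avoids the auxiliary lemma, but it requires the tail probability to be $O(1/d^2)$ rather than the $O(1/d)$ that~\eqref{eq:box:sc:0} records. That sharper bound is indeed available (each slab tail is $(\wmin/\rhos)^{\Omega(1)}$ with a large exponent, and the ball tail is $e^{-\Omega(d)}$), but you should state it explicitly rather than just pointing to~\eqref{eq:box:sc:0}, since the paper only writes down the weaker $1/(8\pi d)$ there. The paper's per-constraint approach trades this extra care for an additional lemma; your approach trades the lemma for a slightly more careful tail-probability accounting.
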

The proof of the above lemma follows from concentration bounds for multivariate Gaussians. 
The following lemma upper bounds the contribution in $S_j$ from the other components. 

\begin{lemma}[Leakage from other components]\label{lem:box:othercomponentstotal}
In the notation of Theorem~\ref{thm:boxiterative}, and for a component $j \in [k]$ in $\calG$, 
the contribution in $S_j$ from the other components is small, namely, for all $j \in [k]$ 
\begin{align}
& \sum_{i\in [k], i \ne j} w_i \int_{S_j} g_{\mu_i, \sigma_i}(y) \dy <  \frac{w_j}{16 \pi d}.\label{eq:box:other0}\\
 \forall r_1 \in [d], ~& \sum_{i\in [k], i \ne j} \frac{w_i}{\sigma_i} \cdot\frac{\norm{\mu_i-\mu_j}_2}{\sigma_i} \int_{S_j} \Abs{y(r_1) - \mu_i(r_1)} g_{\mu_i, \sigma_i}(y) \dy <  \frac{w_j}{16 \pi d \rhos}.\label{eq:box:other1}\\
\forall r_1, r_2 \in [d], ~& \sum_{i\in [k], i \ne j} \frac{w_i}{\sigma_i \sigma_j} \int_{S_j} \Abs{y(r_1) - \mu_i(r_1)} \Abs{y(r_2) - \mu_i(r_2)} g_{\mu_i, \sigma_i}(y) \dy < \frac{w_j}{16 \pi d \rhos}. \label{eq:box:other2}
\end{align}
\anote{8/28:Modified the second equation replacing $\max\set{1,\frac{\norm{\mu_i-\mu_j}_2}{\sigma_i}}$ by just the expression, since always larger than 1.}
\end{lemma}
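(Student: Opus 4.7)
The strategy is to reduce all three bounds to a single exponential tail estimate on
\[
P_{ij} := \int_{S_j} g_{\mu_i,\sigma_i}(y)\,\dy
\]
for each $i\ne j$, and then sum this estimate over $i\ne j$. For the moment integrands in \eqref{eq:box:other1}--\eqref{eq:box:other2}, I first replace $|y(r)-\mu_i(r)|$ by the uniform diameter bound $\|y-\mu_i\|_2\le \|\mu_i-z_j\|_2+R\le 2\|\mu_i-\mu_j\|_2$, valid for every $y\in S_j$ once $c$ is a sufficiently large absolute constant; here $R=4(\sqrt d+\sqrt{\log(\rhos\rhow)})\sigma_j$ is the Euclidean radius built into $S_j$ and I use $\|z_j-\mu_j\|_2\le\eps_0\sigma_j$ from \eqref{eq:iterative:initializer}. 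After this uniform-diameter step, the left-hand side of each of \eqref{eq:box:other0}--\eqref{eq:box:other2} is upper bounded, up to absolute constants, by $\sum_{i\ne j}w_i\,Q_{ij}\,P_{ij}$, where $Q_{ij}$ is at most a polynomial in $t_{ij}:=\|\mu_i-\mu_j\|_2/\sigma_i$; this polynomial is easily absorbed by the exponential decay of $P_{ij}$ that I establish below.

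To estimate $P_{ij}$, observe that under $y\sim g_{\mu_i,\sigma_i}$ the projection $\langle y-\mu_i,\widehat e_{ji}\rangle$ is a centered one-dimensional Gaussian of standard deviation $\sigma_i/\sqrt{2\pi}$. Using $\|z_i-\mu_i\|_2,\|z_j-\mu_j\|_2\le\eps_0\sigma$ and a Cauchy--Schwarz bound on the projection of $\mu_i-z_j$ along $z_i-z_j$, one gets $\langle \mu_i-z_j,\widehat e_{ji}\rangle\ge \tfrac34\|\mu_i-\mu_j\|_2$, and the band constraint $|\langle y-z_j,\widehat e_{ji}\rangle|\le 4\sqrt{\log(\rhos/\wmin)}\sigma_j$ defining $S_j$ then forces $|\langle y-\mu_i,\widehat e_{ji}\rangle|\ge \tfrac14\|\mu_i-\mu_j\|_2$ for every $y\in S_j$. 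The one-dimensional Gaussian tail yields $P_{ij}\le 2\exp(-\pi t_{ij}^2/16)$; the same argument with the Euclidean ball constraint of $S_j$ replacing the band and the $d$-dimensional norm concentration (Lemma~\ref{lem:lengthconc}) replacing $\tilde\Phi$ gives $P_{ij}\le 2\exp(-\pi t_{ij}^2/16+d/2)$. In the ``log-separation'' regime $t_{ij}^2\ge c^2\log(\rhos/\wmin)$, so $P_{ij}\le (\wmin/\rhos)^{\pi c^2/16}$; in the ``$\sqrt d$-separation'' regime $t_{ij}^2\ge c^2(d+\log(\rhos\rhow))$, so $P_{ij}\le (\rhos\rhow)^{-\pi c^2/16}\,e^{-\Omega(c^2 d)}$.

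Summing over $i\ne j$ splits into two cases. In the log-separation regime, $P_{ij}\le(\wmin/\rhos)^{\pi c^2/16}$ is already below $\wmin/(16\pi dk\rhos)$ for $c$ a large absolute constant (using $1/\wmin\ge k$ since $w_{\max}\ge 1/k$), so a naive union bound over the $k$ components, with the polynomial $Q_{ij}$ absorbed into the exponential, delivers \eqref{eq:box:other0}--\eqref{eq:box:other2}. In the $\sqrt d$-separation regime the per-component tail is not small enough for a $k$-union, so I partition the remaining components into dyadic shells $A_s:=\{i\ne j: s\le\|\mu_i-\mu_j\|_2/(\sigma_i+\sigma_j)<2s\}$ with $s\ge c(\sqrt d+\sqrt{\log\rhos\rhow})$, bound $|A_s|\le (O(s))^d$ by a Euclidean packing applied to the ball of radius $2s(\sigma_i+\sigma_j)$ around $\mu_j$ using the pairwise separation, and combine with the shell-wise $P_{ij}$. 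The exponential decay $e^{-\Omega(s^2)}$ beats $|A_s|\cdot s^{O(1)}$, so the dyadic sum is controlled by the innermost shell of size $\exp(-\Omega(c^2(d+\log\rhos\rhow)))$, which sits below $w_j/(16\pi d\rhos)$ for $c$ sufficiently large. The main technical obstacle is this packing step: the heterogeneous variances $\sigma_i$ force a careful definition of the shells (close components in fact satisfy $\sigma_i\ll\sigma_j$ thanks to the separation condition, which is what makes the packing count independent of $\rhos$), and a single absolute $c$ must simultaneously overcome the $1/\wmin\le k\rhow$ slack and the $k$-union, while the uniform-diameter reduction and the log-separation case are standard Gaussian-tail computations.
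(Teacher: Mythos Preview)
Your two-regime split mirrors the paper, but there are two substantive gaps.

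\textbf{Log-separation regime: the uniform diameter bound is false.} You claim $\|y-\mu_i\|_2\le 2\|\mu_i-\mu_j\|_2$ for all $y\in S_j$ ``once $c$ is a sufficiently large absolute constant.'' This needs $R=4(\sqrt d+\sqrt{\log(\rhos\rhow)})\sigma_j\lesssim\|\mu_i-\mu_j\|_2$, but in this regime the guaranteed separation is only $c\sqrt{\log(\rhos/\wmin)}(\sigma_i+\sigma_j)$ while $R$ scales with $\sqrt d$. With $d=k$, uniform weights and variances, $R\asymp\sqrt k\,\sigma_j$ whereas $\|\mu_i-\mu_j\|_2$ may be only $c\sqrt{\log k}\,\sigma_j$; no absolute $c$ closes this gap. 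The paper never pulls $|y(r)-\mu_i(r)|$ out of the integral. Instead it proves a truncated-moment bound (Lemma~\ref{lem:box:helper}): for any unit $\widehat v_0$ and $t\ge 3$,
\[
\frac{1}{\sigma^{d}}\int_{\langle y,\widehat v_0\rangle\ge t\sigma}|y(r_1)y(r_2)|\,e^{-\pi\|y\|^2/\sigma^2}\,dy\;\le\;\tfrac{1}{16\pi}e^{-t^2}\sigma^2,
\]
and applies it with $y\mapsto y-\mu_i$ and the halfspace coming from the band constraint along $\widehat e_{ji}$; the moment factor is absorbed inside the integral, so no diameter control on $S_j$ is needed. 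Your approach can be repaired by writing $\|y-\mu_i\|\le\|\mu_i-\mu_j\|+R$ and absorbing the resulting $\mathrm{poly}(d)$ factor into $(\wmin/\rhos)^{\Omega(c^2)}$ via $d\le k\le 1/\wmin$, but the inequality you stated is wrong.

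\textbf{$\sqrt d$-separation regime: the packing count is not justified.} Your shells $A_s$ are indexed by the \emph{normalized} distance $\|\mu_i-\mu_j\|/(\sigma_i+\sigma_j)$, so different $i\in A_s$ sit at very different \emph{physical} distances from $\mu_j$ when the $\sigma_i$ vary; there is no single ``ball of radius $2s(\sigma_i+\sigma_j)$'' to pack, and the claim that close components must have $\sigma_i\ll\sigma_j$ is false (take $\sigma_i=\sigma_j$). The paper avoids counting components altogether. It proves a \emph{pointwise} comparison (Lemma~\ref{lem:smallleakage}):
\[
\sum_{i\ne j}w_i\Big(\tfrac{\|x^*-\mu_i\|}{\sigma_i}\Big)^m g_{\mu_i,\sigma_i}(x^*)\;\le\;e^{-C^2 d/8}\,\rhos^{-m}\,w_j\, g_{\mu_j,\sigma_j}(x^*)
\]
for every $x^*$ with $\|x^*-\mu_j\|\le 4\sqrt d\,\sigma_j$, and then simply integrates over $S_j$. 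That pointwise bound is obtained by first bucketing the $\sigma_i$ into dyadic ranges, and within each bucket using the key inequality $e^{-\pi r_i^2/2}\le\int_{B_i}e^{-\pi\|y\|^2}dy$ for the disjoint balls $B_i=\{\|y-\mu_i\|\le\sqrt d\}$ (Lemma~\ref{lem:leakage:uniform}); the sum over $i$ is then bounded by one Gaussian tail integral, with no component count appearing. If you want to salvage your shell approach, you would at minimum need to bucket by variance first so that each shell becomes a genuine Euclidean annulus with a well-defined packing.
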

The above lemma is the more technical of the two, and it is crucial in showing diagonal dominance of the Jacobian. The proof of the above lemma is very different for separation of order $\sqrt{\log k}$ and $\sqrt{d}$ -- hence we will handle this separately in Sections~\ref{sec:separation-k} and \ref{sec:separation-d} respectively. 

We now show the convergence of Newton algorithm assuming the above two lemmas. Theorem~\ref{thm:boxiterative} follows in a straightforward manner from the guarantees of the Newton algorithm. We mainly need to show that $\norm{F'} \norm{F''} \eps_0 < 1/2$. 

We now prove that the function $(F'(\vx))^{-1}$ has bounded operator norm using the diagonal dominance properties of $F$.
\begin{lemma} \label{lem:box:gradinv}
For any point $\vx \in \calN$, the operator $F': \R^{d\cdot k} \rightarrow \R^{d \cdot k}$ satisfies   $\norm{(F'(\vx))^{-1}}_{\infty \to \infty} \le 4$. 
\end{lemma}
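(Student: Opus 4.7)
The plan is to apply the diagonal dominance criterion of Lemma~\ref{lem:diagdominant} to the $kd \times kd$ matrix $F'(\vx)$. I will show that in every row indexed by $(j,r_0)$, the diagonal entry exceeds the sum of absolute values of the off-diagonal entries by at least $\alpha = 1/4$, which will immediately yield $\|(F'(\vx))^{-1}\|_{\infty \to \infty} \le 1/\alpha = 4$. Throughout, write $\tmu_i := \sigma_i \vx_i$; for $\vx \in \calN$ we have $\|\tmu_i - \mu_i\|_\infty \le \eps_0 \sigma_i$, and hence $\|\tmu_j - z_j\|_\infty \le 2 \eps_0 \sigma_j$ by the initializer assumption~\eqref{eq:iterative:initializer}. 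Although Lemmas~\ref{lem:box:samecomponent} and~\ref{lem:box:othercomponentstotal} are stated for the true means $\mu_i$, they continue to hold when $\mu_i$ is replaced by any $\tmu_i$ within $\eps_0 \sigma_i$ of $\mu_i$, since $S_j$ is defined via $z_j$ and the underlying Gaussian concentration estimates are robust to such perturbations.

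For the diagonal and same-block terms, by~\eqref{eq:box:grad} the diagonal entry equals $\frac{2\pi}{\sigma_j^2} \int_{S_j} (y(r_0) - z_j(r_0))(y(r_0) - \tmu_j(r_0)) g_{\tmu_j, \sigma_j}(y) \dy$. Splitting $y(r_0) - z_j(r_0) = (y(r_0) - \tmu_j(r_0)) + (\tmu_j(r_0) - z_j(r_0))$, the contribution of the first piece is within $1/(4d)$ of $1$ by Lemma~\ref{lem:box:samecomponent}~\eqref{eq:box:sc:2}, while the contribution of the second piece is at most $\eps_0/(2d)$ by Lemma~\ref{lem:box:samecomponent}~\eqref{eq:box:sc:1}. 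The same decomposition applied to the same-block off-diagonal entries (with $r \neq r_0$), now using the $r_1 \neq r_2$ case of~\eqref{eq:box:sc:2}, yields $|\partial F_{j,r_0}/\partial \vx_j(r)| \le 1/(4d) + \eps_0/(2d)$. Summing over the $d-1$ values $r \ne r_0$ contributes at most $\tfrac{1}{4} + \tfrac{\eps_0}{2}$ to the off-diagonal mass of the row.

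For a cross-block entry with $i \neq j$, I will decompose $y(r_0) - z_j(r_0) = (y(r_0) - \tmu_i(r_0)) + (\tmu_i(r_0) - z_j(r_0))$ and bound the two resulting contributions separately. The first, summed over $i \neq j$ and $r \in [d]$, is at most $\sum_{r \in [d]} \frac{2\pi}{w_j} \cdot \frac{w_j}{16 \pi d \rhos} = \frac{1}{8\rhos} \le \frac{1}{8}$ via~\eqref{eq:box:other2}. For the second, note that $|\tmu_i(r_0) - z_j(r_0)| \le 2 \|\mu_i - \mu_j\|_2$, since the perturbations $\tmu_i - \mu_i$ and $z_j - \mu_j$ are both of order $\eps_0 \sigma$ and dominated by the pairwise separation; combined with $1/(\sigma_j \sigma_i) \le \rhos/\sigma_i^2$, this matches the form of~\eqref{eq:box:other1} and gives $\sum_{r \in [d]} \frac{4\pi \rhos}{w_j} \cdot \frac{w_j}{16\pi d \rhos} = \frac{1}{4}$. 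Hence the total cross-block mass is at most $3/8$, and the total off-diagonal mass is at most $\tfrac{1}{4} + \tfrac{\eps_0}{2} + \tfrac{3}{8} < \tfrac{3}{4}$ for sufficiently small $\eps_0$, while the diagonal is at least $1 - o(1)$. Diagonal dominance with $\alpha \ge 1/4$ then follows, and Lemma~\ref{lem:diagdominant} delivers the claim. The most delicate step is tracking the $\rhos$ factors in the cross-block bound, and this is precisely why the right-hand side of Lemma~\ref{lem:box:othercomponentstotal} is normalized with a $1/\rhos$ factor.
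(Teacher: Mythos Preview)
Your proposal is correct and follows essentially the same route as the paper's proof. Both arguments split $y(r_0)-z_j(r_0)$ as $(y(r_0)-\tmu_i(r_0))+(\tmu_i(r_0)-z_j(r_0))$, apply Lemmas~\ref{lem:box:samecomponent} and~\ref{lem:box:othercomponentstotal} with the perturbed means $\tmu_i=\sigma_i\vx_i$ (noting that the separation condition persists under the $\eps_0$-perturbation), and conclude via the diagonal-dominance bound of Lemma~\ref{lem:diagdominant}. The only difference is bookkeeping: you obtain cross-block row mass $\le 3/8$ whereas the paper records $\le 1/4$, but you compensate with a tighter same-block bound ($\le 1/4+O(\eps_0)$ versus the paper's $\le 1/2$), so both land at the same gap $\alpha\ge 1/4$ and hence $\norm{(F'(\vx))^{-1}}_{\infty\to\infty}\le 4$.
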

\begin{proof}
We will divide the matrix $F'$ into $k \times k=k^2$ blocks of size $d \times d$ each, and show that the matrix satisfies the required diagonal dominance property. Let us first consider the diagonal blocks i.e., we have from \eqref{eq:box:grad} for each $j \in [k]$
\begin{align*}
\grad_{\vx_j} F_j(\vx) &= \frac{2\pi}{\sigma_j^2}\int_{y \in S_j} (y - z_j) (y - \sigma_j \vx_j)^T g_{\sigma_j \vx_j, \sigma_j}(y) \dy\\
&= \frac{2\pi}{\sigma_j^2} \int_{y \in S_j} (y - \sigma_j \vx_j) (y - \sigma_j \vx_j)^T g_{\sigma_j \vx_j, \sigma_j}(y) \dy + \frac{2\pi (\sigma_j \vx_j - z_j)}{\sigma_j^2} \int_{y \in S_j} (y - \sigma_j \vx_j)^T g_{\sigma_j \vx_j, \sigma_j}(y) \dy
\end{align*}
Consider a mixture of Gaussians where the $j$th component has mean $\sigma_j \vx_j$ and standard deviation $\sigma_j/\sqrt{2 \pi}$. 
\onote{8/7 note to self: I found the following confusing}
It satisfies the required separation conditions since $\norm{\sigma_i \vx_i - \mu_i}_2 \le \eps_0 \sigma_i$. Applying Lemma~\ref{lem:box:samecomponent},
\begin{align*}
\grad_{\vx_j} F_j(\vx) &= I_{d \times d}  - \frac{2\pi}{\sigma_j^2} \int_{y \notin S_j} (y - \sigma_j \vx_j) (y - \sigma_j \vx_j)^T g_{\sigma_j \vx_j, \sigma_j}(y) \dy\\
&\qquad\qquad - \frac{2\pi (\sigma_j \vx_j - z_j)}{\sigma_j} \int_{y \notin S_j} \frac{(y - \sigma_j \vx_j)^T}{\sigma_j} g_{\sigma_j \vx_j, \sigma_j}(y) \dy \\
& = I_{d \times d} - E,
\end{align*}
where $E \in \R^{d \times d}$ satisfies from \eqref{eq:box:sc:2} and \eqref{eq:box:sc:1} that
\begin{align}
\forall r_1, r_2 \in [d],~~|E(r_1, r_2)|&\le \frac{1}{4 d}+ \eps_0 \cdot \frac{1}{4 d} \le \frac{1}{2 d} \nonumber\\
\forall r_1 \in [d],~~\sum_{r_2 \in [d]}|E(r_1, r_2)| &\le \frac{1}{2}.
\label{eq:box:diagonalentries}
\end{align}

\noindent Using a similar calculation, we see that for the off-diagonal blocks $M_{ji} = \grad_{\vx_i} F_j(\vx)$,
\begin{align*}
M_{ji} = \grad_{\vx_i} F_j(\vx) &= \frac{2\pi w_i}{w_j \sigma_i \sigma_j} \int_{y \in S_j} (y - z_j) (y - \sigma_i \vx_i)^T g_{\sigma_i \vx_i, \sigma_i}(y) \dy \\
&= \frac{2\pi w_i}{w_j} \int_{y \in S_j} \frac{(y - \sigma_i \vx_i) (y - \sigma_i \vx_i)^T}{\sigma_i \sigma_j} g_{\sigma_i \vx_i, \sigma_i}(y) \dy \\
&\quad ~~+  \frac{2\pi w_i(\sigma_i \vx_i - z_j)}{w_j \sigma_j} \int_{y \in S_j} \frac{(y - \sigma_i \vx_i)^T}{\sigma_i} g_{\sigma_i \vx_i, \sigma_i}(y) \dy 
\end{align*}
Also, $\norm{\sigma_i \vx_i - z_j}_\infty \le \norm{\mu_i - \mu_j}_\infty + \sigma_i+ \sigma_j  \le 2 \norm{\mu_i - \mu_j}_2$. 
For each $r_1, r_2 \in [d]$
\begin{align*}
\Abs{M_{ji}(r_1, r_2)}&\le \frac{2\pi w_i \rhos}{w_j \sigma^2_i} \int_{y \in S_j} \Abs{y(r_1) - \sigma_i \vx_i(r_1)} \Abs{y(r_2) - \sigma_i \vx_i(r_2)} g_{\sigma_i \vx_i, \sigma_i}(y) \dy \\
&\quad +  \frac{4 \pi w_i \rhos}{w_j} \frac{\norm{\mu_i - \mu_j}_2}{\sigma_i} \int_{y \in S_j} \frac{\Abs{y(r_2) - \sigma_i \vx_i(r_2)}}{\sigma_i} g_{\sigma_i \vx_i, \sigma_i}(y) \dy.
\end{align*}
Summing over all $i \ne j$, and using the bounds in Lemma~\ref{lem:box:othercomponentstotal},
\begin{align}
\sum_{i \in [k], i \ne j} \Abs{M_{ji}(r_1, r_2)}&\le \frac{1}{4 d} \nonumber\\
\sum_{r_2=1}^d \sum_{i \ne j} \abs{M_{ji}(r_1, r_2)} &\le \frac{1}{4}  \label{eq:box:offdiagonals}
\end{align}

\noindent Hence using Lemma~\ref{lem:diagdominant} for diagonally dominant matrices, we get from \eqref{eq:box:diagonalentries} and \eqref{eq:box:offdiagonals}
$$\norm{(F'(\vx))^{-1}}_{\infty \to \infty} \le \frac{1}{1 - \tfrac{1}{2}-\tfrac{1}{4}} \le 4.$$
\end{proof}

We now prove that the function $F'(\vx)$ is locally Lipschitz by upper bounding the second derivative operator.
\anote{8/31: Rewrote statement in terms of being locally Lipschitz.}
\begin{lemma} \label{lem:gradLipschitz}
There exists a universal constant $c'>0$ such that the derivative $F'$ is locally $L$-Lipschitz in the neighborhood $\calN$ for $L \le c' d^{5/2}$. 
\end{lemma}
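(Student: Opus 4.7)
My plan is to bound $L$ by controlling the second-derivative tensor of $F$. Since convergence is in the $\ell_\infty$ norm, the relevant $\infty \to \infty$ Lipschitz constant for $F'$ is controlled (by a row-wise mean value argument) by
\[
L \le \sup_{\vz \in \calN} \max_{(j, r_0)} \sum_{(i, r_1)} \sum_{(i', r')} \left| \frac{\partial^2 F_{j, r_0}}{\partial \vx_i(r_1) \partial \vx_{i'}(r')}(\vz) \right|.
\]
The key structural simplification is that $F_j(\vx)$ depends on each block $\vx_i$ only through the single density $g_{\sigma_i \vx_i, \sigma_i}$, so all mixed second partials across distinct blocks ($i \ne i'$) vanish. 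For $i = i'$, differentiating~\eqref{eq:box:grad} and using $\grad_\mu g_{\mu, \sigma}(y) = (2 \pi/\sigma^2)(y - \mu) g_{\mu, \sigma}(y)$ yields an explicit expression of the form $\frac{2\pi w_i}{w_j \sigma_j}\bigl[-\delta_{r_1, r'} A_i + (2\pi/\sigma_i^2) B_i(r_1, r')\bigr]$, where $A_i$ and $B_i$ are integrals over $S_j$ against $g_i := g_{\sigma_i \vx_i, \sigma_i}$ of, respectively, a single linear factor $(y(r_0) - z_j(r_0))$ and a product of three linear factors centered appropriately at $z_j$ and $\sigma_i \vx_i$. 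It then suffices to bound $\sum_i \sum_{r_1, r'}$ of the absolute value of this expression.

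For the diagonal block $i = j$, I will use Lemma~\ref{lem:box:samecomponent}: the Gaussian $g_j$ is centered within $O(\eps_0 \sigma_j)$ of $z_j$, and $S_j$ contains all but an $O(1/d)$ fraction of its mass. Splitting $y - z_j = (y - \sigma_j \vx_j) + (\sigma_j \vx_j - z_j)$ and invoking standard Gaussian moment estimates (with the full-space Gaussian as a proxy and the truncation error handled by the lemma), I expect to get $|A_j| = O(\eps_0 \sigma_j)$ and $\sum_{r_1, r'} |B_j(r_1, r')| = O(d^2 \sigma_j^3)$, where the latter uses $\sum_{r_1} |y(r_1) - \sigma_j \vx_j(r_1)| \le \sqrt{d}\,\|y - \sigma_j \vx_j\|_2$ together with $\E \|y - \sigma_j \vx_j\|_2^2 = O(d\sigma_j^2)$. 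After the $1/\sigma_j^3$ prefactor, this yields a diagonal contribution of $O(d^2)$.

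For the off-diagonal blocks $i \ne j$, I will use Lemma~\ref{lem:box:othercomponentstotal}, which precisely quantifies the weighted leakage of other components' densities into $S_j$, including exactly the second-moment form that arises in $B_i$. The plan is to pull out $|y(r_0) - z_j(r_0)| \le O(\sqrt{d}) \sigma_j$ (from the radius constraint in Definition~\ref{def:box:Sj}) as a scalar factor, rewrite $y - \sigma_i \vx_i = (y - \mu_i) + O(\eps_0 \sigma_i)$ to match the form in~\eqref{eq:box:other2}, and then absorb the mismatch between $1/\sigma_i^2$ and $1/(\sigma_i \sigma_j)$ using the factor $\rhos$ that appears on the right-hand side of~\eqref{eq:box:other2}. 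This should yield an off-diagonal contribution of $O(d^{3/2}) + O(\sqrt{d}) = O(d^{3/2})$, where the $O(\sqrt{d})$ piece comes from the $A_i$ term via~\eqref{eq:box:other0}. Combined with the diagonal estimate this gives $L = O(d^2) \le c' d^{5/2}$, as claimed.

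The main obstacle I anticipate is careful bookkeeping across the several scale parameters ($\sigma_i$, $\rhos$, $\rhow$, $\wmin$), together with verifying that the perturbation $\|\sigma_i \vx_i - \mu_i\|_\infty \le \eps_0 \sigma_i$ for $\vx \in \calN$ is negligible at every step, so that integrals against $g_{\sigma_i \vx_i, \sigma_i}$ are reliably controlled by the corresponding integrals against $g_{\mu_i, \sigma_i}$ for which Lemma~\ref{lem:box:othercomponentstotal} is actually stated. A minor subtlety is that $S_j$ is not perfectly symmetric about $\sigma_j \vx_j$, producing nonzero third central moments for $g_j$, but those are controlled by Gaussian tail bounds and contribute only lower-order terms; this is where the slack between my estimate $O(d^2)$ and the stated bound $O(d^{5/2})$ absorbs the logarithmic and $\rho$-dependent factors hidden in the $O(\cdot)$ notation.
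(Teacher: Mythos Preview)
Your proposal is correct and follows essentially the same approach as the paper: bound $\|F''\|_{\infty,\infty\to\infty}$ by noting that cross-block second partials vanish, then control the diagonal block $i=j$ via Gaussian moment estimates and the off-diagonal blocks via Lemma~\ref{lem:box:othercomponentstotal}, exactly as the paper does. The one tactical difference is that the paper bounds $|y(r_0)-z_j(r_0)|\le \|y-z_j\|_2=O(\sqrt d)\sigma_j$ pointwise on $S_j$ and then sums the remaining quadratic over $r_1,r_2$, giving $O(d^{5/2})$ for the diagonal, whereas you integrate the linear factor using coordinate independence and obtain the slightly sharper $O(d^2)$; either way the stated bound $L\le c'd^{5/2}$ follows. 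One small over-claim: $|A_j|=O(\eps_0\sigma_j)$ is too optimistic---the first central moment over $S_j$ is only $O(\sigma_j/d)$ by Lemma~\ref{lem:box:samecomponent}, not $O(\eps_0\sigma_j)$---but since this term contributes $d\cdot|A_j|/\sigma_j=O(1)$ it is harmless.
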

\begin{proof}
We proceed by showing that the operator norm of $F'':\R^{dk \times dk} \rightarrow \R^{dk}$ is upper bounded by $L$ at any point $\vx \in \calN$. 
  We first write down expressions for $F''$, and then prove that the operator norm of $F''$ is bounded. We first observe that for each $j, i_1, i_2 \in [k]$,
  $$ \forall r_0, r_1, r_2 \in [d], ~\frac{\partial^2 F_{j,r_0}(\vx)}{\partial \vx_{i_1}(r_1) \partial \vx_{i_2}(r_2)} =0 \text{ if } i_1 \ne i_2.$$
Hence the second derivatives are non-zero only for diagonal blocks $i_1 = i_2$. For each $j, i \in [k]$ the second derivatives for $\forall r_0, r_1, r_2 \in [d]$ are given by
\begin{align*}
\frac{\partial^2 F_{j,r_0}(\vx)}{\partial \vx_{i}(r_1) \partial \vx_{i}(r_2)}
&= \frac{4 \pi w_i}{w_j \sigma_j} \int_{y \in S_j} (y(r_0) - z_j(r_0)) \parens[\Big]{\frac{\parens{y(r_1)- \sigma_i \vx_i(r_1)} \parens{y(r_2)- \sigma_i \vx_i(r_2)}}{\sigma_i^2}  - \I[r_1 = r_2]} g_{\sigma_i \vx_i, \sigma_i}(y) \dy \\
\Abs{\frac{\partial^2 F_{j,r_0}(\vx)}{\partial \vx_{i}(r_1) \partial \vx_{i}(r_2)}}
&\le \frac{4 \pi w_i}{w_j} \max_{y \in S_j} \frac{\norm{y - z_j}_2}{\sigma_j} \cdot \int_{y \in S_j}  \parens[\Big]{ \Abs{\frac{\parens{y(r_1)- \sigma_i \vx_i(r_1)} \parens{y(r_2)- \sigma_i \vx_i(r_2)}}{\sigma_i^2}}+1 } g_{\sigma_i \vx_i, \sigma_i}(y) \dy.\\
&\le \frac{8 \pi w_i \sqrt{d}}{w_j} \int_{y \in S_j} \parens[\Big]{ \Abs{\frac{\parens{y(r_1)- \sigma_i \vx_i(r_1)} \parens{y(r_2)- \sigma_i \vx_i(r_2)}}{\sigma_i^2}}+1 } g_{\sigma_i \vx_i, \sigma_i}(y) \dy.
\end{align*}
A simple bound on the operator norm of $F'': \R^{kd} \times \R^{kd} \rightarrow \R^{kd}$ is given by summing over all $i \in [k]$ and $r_1, r_2 \in [d]$. Consider a mixture of Gaussians where the $j$th component has mean $\sigma_j \vx_j$ and standard deviation $\sigma_j/\sqrt{2\pi}$. It satisfies the required separation conditions since $\norm{\sigma_i \vx_i - \mu_i}_2 \le \eps_0 \sigma_i$. Applying Lemma~\ref{lem:box:othercomponentstotal} we get
\begin{align*}
\norm{F''(\vx)}_{\infty, \infty \to \infty} &\le \sum_{i \in [k]} \sum_{r_1, r_2 \in [d]} \Abs{\frac{\partial^2 F_{j,r_0}(\vx)}{\partial \vx_{i}(r_1) \partial \vx_{i}(r_2)}}\\
\sum_{i \ne j} \sum_{r_1, r_2 \in [d]} \Abs{\frac{\partial^2 F_{j,r_0}(\vx)}{\partial \vx_{i}(r_1) \partial \vx_{i}(r_2)}} 
&\le \frac{16 \pi d^{5/2}}{w_j}\cdot \frac{w_j}{16 \pi d} \le d^{3/2}, ~~\text{and }\\
\sum_{r_1, r_2 \in [d]} \Abs{\frac{\partial^2 F_{j,r_0}(\vx)}{\partial \vx_{j}(r_1) \partial \vx_{j}(r_2)}} 
&\le 8 \pi d^{5/2} \int_{y \in \R^d}  \Abs{\frac{\parens{y(r_1)- \sigma_j \vx_j(r_1)} \parens{y(r_2)- \sigma_j \vx_j(r_2)}}{\sigma_j^2}} g_{\sigma_j \vx_j, \sigma_j}(y) \\
&~\quad~+ 8 \pi d^{5/2} \int_{y \in \R^d} g_{\sigma_j \vx_j, \sigma_j}(y) \dy\\
&\le 8 \pi d^{5/2} \parens[\Big]{\frac{1}{\pi^2}+1} \le 15 \pi d^{5/2}\\
\forall \vx \in \calN,~ \norm{F''(\vx)}_{\infty, \infty \to \infty}& \le 16 \pi d^{5/2}. 
\end{align*}
Hence, applying Lemma~\ref{lem:meanvalue} with $F'$ in the open set $K=\calN$, the lemma follows.   
\end{proof}

\begin{proof}[Proof of Theorem~\ref{thm:boxiterative}]
We first note that we have from \eqref{eq:iterative:initializer},
 for each $i \in [k]$ that $\norm{\vx^*_i - \vx^{(0)}_i}_\infty \le \norm{\vx^*_i - \vx^{(0)}_i}_2 \le \norm{\mu_i-\mu^{(0)}_i}_2/\sigma_i \le \eps_0$. We will use the algorithm to find $\norm{\vx^{(T)}-\vx^*}_\infty < \delta/\sqrt{d}$ (use the algorithm with $\delta':=\delta/\sqrt{d}$). 
The proof follows in a straightforward manner from Corollary~\ref{corr:newton:errors}. \anote{8/31:Changed to reflect local Lipschitz.} Lemma~\ref{lem:gradLipschitz} shows that $F'$ is locally $L$-Lipschitz for $L=c' d^{5/2}$. Together with
Lemma~\ref{lem:box:gradinv} we have $\eps_0 L \norm{(F')^{-1}}_{\infty \rightarrow \infty}  \le 1/2$ for our choice of $\eps_0$. Also from Lemma~\ref{lem:box:sampling1},  by using $N=O(\delta^{-2} \rho^6 k^{6}\wmin^{-3})$ samples (and $d \le k$), $\eta_1+ \eta_2 \le \frac{\delta}{4\sqrt{d}\norm{(F')^{-1}}_{\infty \rightarrow \infty}}$. It is easy to check that $B=\max_y \norm{F'(y)}\le \frac{\max_i \sigma_i}{\wmin \min_i \sigma_j} \le 4\rho_\sigma/\wmin$. Hence, similarly from Lemma~\ref{lem:box:sampling2} $\eta_3 \le \frac{\delta}{4\sqrt{d}B \norm{(F')^{-1}}_{\infty \rightarrow \infty}^2}$ as required. Hence, from Corollary~\ref{corr:newton:errors}, $T=O(\log\log(d/\delta))$ iterations of the Newton's method suffices. Further, each iteration mainly involves inverting a $dk \times dk$ matrix which is polynomial time in $d,k$. 
\end{proof}

\subsection{Bounding Leakage from Own Component: Proof of Lemma~\ref{lem:box:samecomponent}}

\begin{lemma} \label{lem:box:helper}
Let $\ev_0 \in \R^d$ be a unit vector. Then for any coordinates $r_1, r_2 \in [d]$ we have for any $t\ge 3$
\begin{align}
\frac{1}{\sigma^d} \int_{y: \iprod{y, \ev_0}\ge t \sigma} \exp(- \pi \norm{y}_2^2 / \sigma^2) \dy  &< \tfrac{1}{16 \pi} \exp(- t^2) \label{eq:box:helper0}\\
\frac{1}{\sigma^d} \int_{y: \iprod{y, \ev_0}\ge t \sigma} \abs{y(r_1)} \exp(- \pi \norm{y}_2^2  / \sigma^2) \dy  &< \tfrac{1}{16 \pi} \exp(- t^2) \sigma \label{eq:box:helper1}\\
\frac{1}{\sigma^d} \int_{y: \iprod{y, \ev_0}\ge t \sigma} \Abs{y(r_1) y(r_2)} \exp(- \pi \norm{y}_2^2  / \sigma^2) \dy  &< \tfrac{1}{16 \pi} \exp(- t^2) \sigma^2 \label{eq:box:helper2}
\end{align}
\end{lemma}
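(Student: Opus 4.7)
The plan is to exploit the rotational and coordinate-wise factorization of the spherical Gaussian density $\sigma^{-d}\exp(-\pi\|y\|_2^2/\sigma^2)$ (which, per coordinate, is a centered Gaussian with variance $\sigma^2/(2\pi)$). By rotational invariance of the density and of the Lebesgue measure, I can apply an orthogonal change of variables that sends $\ev_0$ to $e_1$; this transforms the half-space $\{\iprod{y,\ev_0}\ge t\sigma\}$ into $\{y(1)\ge t\sigma\}$, and changes the coordinates $y(r_1),y(r_2)$ into some fixed unit-vector linear combinations. For a clean presentation, I would first handle the axis-aligned case $\ev_0=e_1$ and then note that the general case follows since $|y(r_\ell)|$ under rotation becomes a convex combination (in absolute value) of $|y(\cdot)|$'s that can be bounded using the triangle inequality and the symmetry $y(r)\mapsto -y(r)$.

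Once reduced to $\ev_0=e_1$, the density factorizes as a product across coordinates, and each of the three inequalities becomes a one- or two-dimensional computation. For \eqref{eq:box:helper0} I would use the one-dimensional Gaussian tail (Mills ratio): substituting $u = y(1)\sqrt{2\pi}/\sigma$ shows the quantity equals $\Pr[Z\ge t\sqrt{2\pi}]$ for standard normal $Z$, which is at most $\tfrac{1}{2\pi t}\exp(-\pi t^2)$. For \eqref{eq:box:helper1} and \eqref{eq:box:helper2} I would split into cases according to whether the special coordinates $r_1,r_2$ coincide with the direction $1$: if $r_\ell\neq 1$, use independence to factor the integral as (tail in coordinate 1)$\times \mathbb{E}|y(r_\ell)|$ or $\times\mathbb{E}|y(r_1)y(r_2)|$; if $r_\ell=1$, reduce to the explicit closed-form integrals $\int_{t\sigma}^\infty y\sigma^{-1}e^{-\pi y^2/\sigma^2}\dy=\tfrac{\sigma}{2\pi}e^{-\pi t^2}$ and, by integration by parts, $\int_{t\sigma}^\infty y^2\sigma^{-1}e^{-\pi y^2/\sigma^2}\dy\le \sigma^2(\tfrac{t}{2\pi}+\tfrac{1}{2\pi t})e^{-\pi t^2}$.

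In every case the resulting bound has the form $p(t)\exp(-\pi t^2)\sigma^{\ell}$ for some polynomial $p(t)$ of degree at most $1$ (times a constant absorbable into $1/(16\pi)$). The final step is then to use the hypothesis $t\ge 3$ to convert $\exp(-\pi t^2)$ into $\exp(-t^2)$: writing $\exp(-\pi t^2)=\exp(-t^2)\exp(-(\pi-1)t^2)$, and noting $(\pi-1)t^2\ge 9(\pi-1)>19$, the slack $\exp(-(\pi-1)t^2)$ is far smaller than any constant, in particular it dominates the polynomial factor $p(t)$ and still leaves a margin fitting inside $\tfrac{1}{16\pi}$. I expect the main (and only real) obstacle to be bookkeeping the cases $r_\ell=1$ vs.\ $r_\ell\neq 1$ carefully after the rotation, to make sure one does not lose factors of $\sqrt{d}$ from decomposing a coordinate in the rotated basis; using the triangle inequality $|e_r^\top Q y|\le \sum_s |Q_{rs}||y(s)|$ together with $\sum_s|Q_{rs}|\le \sqrt{d}$ would be wasteful, so I would instead argue directly in the rotated coordinate system and observe that the bounds above are invariant under the choice of coordinate axes aligned with $\ev_0$.
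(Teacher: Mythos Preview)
Your approach is correct and matches the paper's in substance: both exploit rotational invariance to reduce to the (at most three-dimensional) span of $\ev_0, e_{r_1}, e_{r_2}$, then apply one-dimensional Gaussian tail and moment bounds, with the slack $\exp(-(\pi-1)t^2)$ at $t\ge 3$ absorbing all constants. The paper's presentation is a bit more direct than your rotate-then-fix plan: instead of first sending $\ev_0 \to e_1$ and case-splitting on whether $r_\ell=1$, it Gram--Schmidt orthogonalizes $\ev_0, e_{r_1}, e_{r_2}$ to obtain independent one-dimensional Gaussians $g_0,g_1,g_2$, expresses $y(r_\ell)$ as a unit-norm linear combination of these, and bounds $|y(r_1)y(r_2)| \le 3(g_0^2+g_1^2+g_2^2)$ uniformly, avoiding the case analysis altogether. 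Your final paragraph (``choose coordinate axes aligned with $\ev_0$'') is exactly this Gram--Schmidt step, so you end up in the same place; just note that your earlier phrase ``convex combination'' is a slip---the rotated coefficients have unit $\ell_2$ norm, not $\ell_1$ norm---which is precisely the reason the naive triangle inequality would cost $\sqrt d$, as you yourself point out.
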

\begin{proof}
The first equation~\eqref{eq:box:helper0} follows easily from the rotational invariance of a spherical Gaussian. Suppose $g_0=\iprod{y, \ev_0}$, then $g_0 \sim N(0,\sigma^2/(2\pi))$. Hence
$$\frac{1}{\sigma^d} \int_{y: \iprod{y, \ev_0}\ge t \sigma} \exp(- \pi \norm{y}_2^2/\sigma^2) \dy = 
  \frac{1}{\sigma}\int_{g_0 \ge t \sigma} \exp(- \pi g_0^2 / \sigma^2) ~d{g_0} 
	~\le \Phi_{0,1}\left(\sqrt{2 \pi } t \right) < \exp(- t^2) .$$

We will now prove \eqref{eq:box:helper2}; the proof of \eqref{eq:box:helper1} follows the same argument. We first observe that using the rotational invariance, it suffices to consider the span of $\ev_0, \er{1}, \er{2}$. 

Consider the case $r_1 \ne r_2$. Let $\ev_1, \ev_2$ be the unit vectors along $\er{1} - \iprod{\er{1}, \ev_0} \ev_0$ and $\er{2} - \iprod{\er{2}, \ev_0} \ev_0 - \iprod{\er{2}, \ev_1} \ev_1$ respectively. If $g_\ell =\iprod{y,\ev_\ell}$ for $\ell=0,1,2$, then $\set{g_\ell}$ are mutually independent and distributed as $N(0, \sigma^2/(2\pi))$. Let $\alpha_1=\iprod{\ev_0, \er{1}}$, $\alpha_2=\iprod{\ev_0, \er{2}}$ and $\beta_2 = \iprod{\ev_1, \er{2}}$. Then, 
\begin{align*}
y(r_1)= \alpha_1 g_0 + \sqrt{1 - \alpha_1^2}\cdot g_1, &~\text{ and } y(r_2)= \alpha_2 g_0 + \beta_2 g_1 + \sqrt{1 - \alpha_1^2 - \beta_2^2} \cdot g_2, \\
\abs{y(r_1)} \le \abs{g_0}+\abs{g_1}, \text{ and }\abs{y(r_1) y(r_2)} &\le (\abs{g_0}+\abs{g_1})(\abs{g_0}+\abs{g_1}+\abs{g_2}) \le 3( \abs{g_0}^2 + \abs{g_1}^2+\abs{g_2}^2).
\end{align*}
Hence, 
\begin{align*}
&\frac{1}{\sigma^d}\int_{y: \iprod{y, \ev_0}\ge t \sigma} \Abs{y(r_1)y(r_2)} \exp(- \pi \norm{y}_2^2) \dy \le \frac{3}{\sigma^d}\int_{y: \iprod{y, \ev_0}\ge t \sigma} \parens[\big]{\abs{g_0}^2+\abs{g_1}^2+\abs{g_2}^2} \exp(- \pi \norm{y}_2^2) \dy  \\
&~\le \frac{3}{\sigma^3} \int_{g_0 > t \sigma} \int_{g_1} \int_{g_2} (\abs{g_0}^2+ \abs{g_1}^2+\abs{g_2}^2) \cdot \exp(-\pi g_2^2) ~d{g_2} \cdot \exp(-\pi g_1^2) ~d{g_1} \cdot \exp(-\pi g_0^2) ~d{g_0}\\
&~\le \frac{3}{\sigma}
\int_{g_0 > t \sigma} \abs{g_0}^2 \exp(-\pi g_0^2) ~d{g_0}+ \frac{6}{\sigma} \int_{g_1} \abs{g_1}^2 \exp(-\pi g_1^2) ~d{g_1} \cdot \frac{1}{\sigma}\int_{g_0 > t \sigma} \exp(-\pi g_0^2) ~d{g_0}~~(\text{by symmetry} )\\
&~\le 3\exp(-2t^2) \sigma^2+ 6 \sigma^2 \exp(-2t^2)  \le 9 \sigma^2 \exp(-2t^2),
\end{align*}
 where the last line follows from truncated moments of a normal variable with variance $\sigma^2/(2\pi)$ (Lemma~\ref{lem:app:truncmoments}). 
Using the fact that $\exp(t^2) > 9\cdot 16 \pi$ for $t\ge 3$, the lemma follows.  The proof for $r_1=r_2$, and for \eqref{eq:box:helper1} follows from an identical argument. 
\end{proof}

\begin{proof}[Proof of Lemma~\ref{lem:box:samecomponent}]
We will consider the loss from the region outside of $S_j$. 
The region $S_j$ is defined by $k$ linear inequalities, one for each of the components $\ell \in [k]$, and a constraint that points lie in an $\ell_2$ ball around $z_j$. Consider $y$ drawn from $g_{\mu_j, \sigma_j}$. 
Since $\norm{z_j - \mu_j}_2 \le \sigma_j$ and Lemma~\ref{lem:lengthconc} (applied with $t=6d$),
\onote{8/7: why does the first inequality hold? (sorry if it's obvious, I'm still digesting this part)}\anote{8/8: Added some explanation. Does this clarify things?}
$$\Pr[\norm{y-z_j}_2 \ge 4\sqrt{d} \sigma_j] \le \Pr[\norm{y-\mu_j}_2 \ge 3\sqrt{d} \sigma_j] \le \exp(-6d) \le \frac{1}{16\pi d}.$$
 When $y$ is drawn from $g_{\mu_j, \sigma_j}$, from Lemma~\ref{lem:box:helper}, each of the $k$ linear constraints are satisfied with high probability i.e., for a fixed $\ell \in [k]$,
\begin{equation}\label{eq:subtract}
\Pr\Big[\Abs{\iprod{y-z_j, \ejl}} > 4 \sigma_j \sqrt{\log (1/ \wmin)} \Big] \le \Pr\Big[\Abs{\iprod{y-\mu_j, \ejl}} > 3 \sigma_j \sqrt{\log (1/ \wmin)} \Big] \le \frac{\wmin^2}{16 \pi} \le \frac{1}{k} \cdot \frac{\wmin}{16 \pi},
\end{equation}
where the last step follows since $\wmin \le 1/k$. Hence, performing a union bound over the $k$ components and the $\ell_2$ constraint, and using $d \le k$, we get
$$\int_{S_j} g_{\mu_j, \sigma_j}(y) \dy \ge 1- \frac{1}{16 \pi d} - \frac{1}{16 \pi d} \ge 1 - \frac{1}{8 \pi d}.$$

To prove \eqref{eq:box:sc:1}, $\forall r_1 \in [d]$ we have
\begin{align*}
\int_{S_j} \left( y(r_1) - \mu_j(r_1) \right) g_{\mu_j, \sigma_j}(y)  \dy &= \int_{\R^d} \left( y(r_1) - \mu_j(r_1) \right) g_{\mu_j, \sigma_j}(y)\dy - \int_{y \notin S_j} \left( y(r_1) - \mu_j(r_1) \right) g_{\mu_j, \sigma_j}(y) \dy \\
&=0 - \int_{y \notin S_j} \left( y(r_1) - \mu_j(r_1) \right) g_{\mu_j, \sigma_j}(y) \dy
\end{align*}
Any point $y \notin S_j$ implies that one of the $k$ linear constraints $\iprod{y-\mu_j, \eji} > 3 \sqrt{\log(1/\wmin)}\sigma_j$, or $\norm{y - \mu_j}_2 > 3 \sqrt{d} \sigma_j$ is true. Again, for the $\ell_2$ ball constraint, from Lemma~\ref{lem:app:truncmoments:highd} with $q \le 2$, 
$$\Abs{\int_{\norm{y - \mu_j}_2 > 3 \sqrt{d} \sigma_j} \frac{\left( y(r_1) - \mu_j(r_1) \right)}{\sigma_j}\cdot g_{\mu_j, \sigma_j}(y)  \dy } \le \Abs{\int_{\norm{y - \mu_j}_2 > 3 \sqrt{d} \sigma_j} \frac{\norm{y-\mu_j}_2}{\sigma_j} \cdot g_{\mu_j, \sigma_j}(y)  \dy } \le \frac{1}{16 \pi d}.$$
Let $Z_{ji}=\set{y: \iprod{y-\mu_j, \eji} > 3 \sigma_j \sqrt{\log(\rhos/\wmin)}}$ be the set $y$ that do not satisfy the constraint along $\eji$. Since $\norm{z_j -\mu_j}_2 \le \sigma_j$, we have that the total loss from those $y \in Z_{ji}$  is 
\begin{align*}
\Abs{\int_{y \in Z_{ji}} \left( y(r_1) - \mu_j(r_1) \right) g_{\mu_j, \sigma_j}(y) \dy }
&\le \int_{y \in Z_{ji}} \Abs{y(r_1) - \mu_j(r_1)} g_{\mu_j, \sigma_j}(y) \dy\\
&\le \sigma_j \Paren{\frac{\wmin^2}{16 \pi \rhos^2}},
\end{align*}
by applying Lemma~\ref{lem:box:helper} with the Gaussian $(y-\mu_j)$, and $t=3 \sqrt{\log(\rhos/\wmin)}$. Hence, we have

\begin{align*}
\Abs{\int_{S_j} \left( y(r_1) - \mu_j(r_1) \right) g_{\mu_j, \sigma_j}(y) \dy} &\le \Abs{\int_{y \notin S_j} \left( y(r_1) - \mu_j(r_1) \right) g_{\mu_j, \sigma_j}(y) \dy} \\
&\le \sigma_j \Paren{\frac{\wmin^2}{16 \pi \rhos^2}}k + \frac{\sigma_j}{16 \pi d}\le \sigma_j \parens[\Big]{\frac{1}{8 \pi d}},
\end{align*}
since $\wmin \le 1/k$. 
The proof of the last equation follows in an identical fashion. From \eqref{eq:box:helper2} of Lemma~\ref{lem:box:helper}, we have
$$ \forall r_1, r_2 \in [d], ~ \frac{1}{\sigma_j^2} \Abs{\int_{y \notin S_j} \parens[\big]{ y(r_1) - \mu_j(r_1)} \parens[\big]{ y(r_2) - \mu_j(r_2)} g_{\mu_j, \sigma_j}(y) \dy} \le \frac{1}{8 \pi d}.$$
Further,  
\begin{align*}
\forall r_1, r_2 \in [d], ~ \int_{\R^d} \left( y(r_1) - \mu_j(r_1) \right) \left( y(r_2) - \mu_j(r_2) \right) g_{\mu_j, \sigma_j}(y) \dy  = \frac{\sigma_j^2}{2 \pi} I[r_1=r_2] \\
\text{ Hence, }~\Abs{\frac{1}{\sigma_j^2}\int_{S_j} \left( y(r_1) - \mu_j(r_1) \right) \left( y(r_2) - \mu_j(r_2) \right) g_{\mu_j, \sigma_j}(y) \dy - \frac{1}{2\pi} I[r_1=r_2]}\le \frac{1}{8 \pi d}. 
\end{align*}

\end{proof}

\subsection{Bounding Leakage from Other Components: Proof of Lemma~\ref{lem:box:othercomponentstotal}}

Since our algorithm works when we either have separation of order $\sqrt{\log k}$ or separation of order $\sqrt{d}$, we have two different proofs for Lemma~\ref{lem:box:othercomponentstotal} depending on whether $\sqrt{\log(1/\wmin)} \le \sqrt{d}$ or not\footnote{More accurately whether $\sqrt{\log(\rhos/\wmin)} \le \sqrt{d}+\sqrt{\log(\rho_w \rhos)}$, where $\rhow \ge 1/\wmin$.}.

\subsubsection{Separation of Order \texorpdfstring{$\sqrt{\log k}$}{sqrt(log k)}} \label{sec:separation-k}

In this case, we have a separation of 
\[\forall i \ne j \in [k], ~ \norm{\mu_i - \mu_j}_2 \ge c(\sigma_i+\sigma_j) \sqrt{\log(\rhos/\wmin)}  .\]

Let $\eji$ be the unit vector along $z_i - z_j$. We will now show that every point $y \in S_j$, is far from $z_i$ along the direction $\eji$.
\begin{align*}
\Iprod{z_i - z_j, \eji} &= \norm{z_j - z_i}_2 \ge \norm{\mu_i - \mu_j}_2 - \sigma_i - \sigma_j \ge 5 \sqrt{\log(\rhos/\wmin)} (\sigma_i + \sigma_j) + \tfrac{1}{2} \norm{\mu_i - \mu_j}_2
\end{align*}
Since $y \in S_j$, $\abs{\iprod{y - z_j, \eji}} \le 4 \sigma_j\sqrt{\log(\rhos/\wmin)}$, and
\begin{align*}
\Iprod{z_i - y, \eji} &\ge \Iprod{z_i - z_j, \eji} - \abs{\Iprod{y - z_j, \eji}} \ge 5 \sigma_i \sqrt{\log(\rhos/\wmin)} +\tfrac{1}{2} \norm{\mu_i - \mu_j}_2\\
\Iprod{\mu_i - y, \eji} & \ge \Iprod{z_i - y, \eji} - \norm{\mu_i - z_i}_2 \ge 4 \sigma_i \sqrt{\log(\rhos/\wmin)} +\tfrac{1}{2} \norm{\mu_i - \mu_j}_2
\end{align*}

We will now use Lemma~\ref{lem:box:helper} for each component $i \in [k]$ with mean zero Gaussian $y-\mu_i$ and $t^2=16 \log(\rhos/\wmin)+\tfrac{1}{8 \pi \sigma_i^2}\norm{\mu_j - \mu_i}^2_2$. We first prove \eqref{eq:box:other0}; using equation~\eqref{eq:box:helper0} with Gaussian $(y-\mu_i)$ of variance $\sigma_i^2/(2\pi)$,
\begin{align*}
w_i \int_{S_j} g_{\mu_i, \sigma_i}(y) \dy & \le w_i \exp(-t^2) < w_i \cdot \frac{\wmin^2}{16\pi } \exp\parens[\Big]{-\frac{\norm{\mu_i - \mu_j}_2^2}{16\sigma_i^2}}< \frac{\wmin^2}{16\pi }.\\
\text{ Hence, } \sum_{i\in [k], i \ne j} w_i \int_{S_j} g_{\mu_i, \sigma_i}(y) \dy &<  \frac{w_j}{16 \pi d}.
\end{align*}

For \eqref{eq:box:other1}, we use \eqref{eq:box:helper1} similarly with $(y-\mu_i)$ and variance $\sigma_i^2/(2\pi)$ and $t$ as above:
\begin{align*}
w_i \int_{S_j} \frac{\Abs{y(r_1) - \mu_i(r_1)}}{\sigma_i} g_{\mu_i, \sigma_i}(y) \dy & \le w_i \cdot \frac{\wmin^2}{16\pi \rhos} \exp\parens[\Big]{-\frac{\norm{\mu_i - \mu_j}_2^2}{16\sigma_i^2}}.\\
\sum_{i\in [k], i \ne j} \frac{w_i}{\sigma_i} \cdot\frac{\norm{\mu_i-\mu_j}_2}{\sigma_i} \int_{S_j} \Abs{y(r_1) - \mu_i(r_1)} g_{\mu_i, \sigma_i}(y) \dy &<  \frac{w_j}{16 \pi d \rhos}.
\end{align*}

The proof of \eqref{eq:box:other2} follows in an identical fashion from \eqref{eq:box:helper2} for $r_1, r_2 \in [d]$
\begin{align*}
\frac{w_i}{\sigma_i \sigma_j} \int_{S_j} \Abs{y(r_1) - \mu_i(r_1)} \Abs{y(r_2) - \mu_i(r_2)} g_{\mu_i, \sigma_i}(y) \dy 
& \le \frac{w_i \sigma_i}{\sigma_j} \cdot \frac{\wmin^2}{16\pi \rhos^2}\le \frac{w_j w_i}{16 \pi d \rhos}\\
\sum_{i\in [k], i \ne j} \frac{w_i}{\sigma_i \sigma_j} \int_{S_j} \Abs{y(r_1) - \mu_i(r_1)} \Abs{y(r_2) - \mu_i(r_2)} g_{\mu_i, \sigma_i}(y) \dy &< \frac{w_j}{16 \pi d \rhos}. 
\end{align*}

\subsubsection{Separation of Order \texorpdfstring{$\sqrt{d}$}{sqrt(d)}} \label{sec:separation-d}

The proof of Lemma~\ref{lem:box:othercomponentstotal} uses the following useful lemma that shows that for any point within $\sigma_j \sqrt{d}$ distance of $\mu_j$, the total probability mass from the other Gaussian components is negligible. Note that in the following lemma, $k$ can be arbitrary large in terms of $d$.

\begin{lemma} \label{lem:smallleakage}
There exists a universal constant $C_0>1$ such that for any $C \ge C_0$, if 
$\calG$ is a mixture of $k$ spherical Gaussians $\set{(w_j, \mu_j,\sigma_j): j \in [k]}$ in $\R^d$ satisfying
\begin{equation}\label{eq:leakage:condition}
\forall i \ne j \in [k], ~ \norm{\mu_i - \mu_j}_2 \ge C (\sigma_i+\sigma_j) \Paren{\sqrt{d}+\sqrt{\log(\rho_w \rhos)} },
\end{equation}
then for every component $j \in [k]$ and $ \forall x^*: \norm{x^*-\mu_j}\le  4\sigma_j \sqrt{d}$ and $\forall 0 \le m \le 2$,
\begin{align}
\sum_{i \in [k], i \ne j} w_i g_{\mu_i, \sigma_i} (x^*) &< \exp(-C^2 d/8) \cdot w_j g_j(x). \label{eq:leak1}\\
\sum_{i \in [k], i \ne j} w_i \left(\frac{\norm{x-\mu_i}^m}{\sigma_i^m}\right) \cdot g_{\mu_i, \sigma_i}(x^*)  &\le  \frac{\exp(-C^2 d/8)}{\rhos^m} \cdot w_j g_j(x^*), \label{eq:leak2}
\end{align}
where $\rhos = \max_i \sigma_i / (\min_i \sigma_i)$ and $\rho_w=\max_i w_i / (\min_i w_i)$.  
\end{lemma}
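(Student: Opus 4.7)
\textbf{Proof plan for Lemma~\ref{lem:smallleakage}.}

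My plan is to bound, for each $i\neq j$ separately, the ratio $w_i g_{\mu_i,\sigma_i}(x^*)/(w_j g_j(x^*))$ and then sum over $i$. First I would use the triangle inequality with the separation condition~\eqref{eq:leakage:condition} and $\|x^*-\mu_j\|\le 4\sigma_j\sqrt d$ to get
\[
\|x^*-\mu_i\|\ge \|\mu_i-\mu_j\|-\|x^*-\mu_j\|\ge \tfrac{C}{2}(\sigma_i+\sigma_j)\bigl(\sqrt d+\sqrt{\log(\rho_w\rho_\sigma)}\bigr),
\]
which holds whenever $C\ge C_0$ for a large enough absolute $C_0$, since the subtracted $4\sigma_j\sqrt d$ is dominated by $C\sigma_j\sqrt d$. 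In particular, $\|x^*-\mu_i\|^2/\sigma_i^2\ge (C/2)^2\bigl(1+\sigma_j/\sigma_i\bigr)^2\bigl(d+\log(\rho_w\rho_\sigma)\bigr)$, using $(\sqrt a+\sqrt b)^2\ge a+b$.

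Next I would estimate the density ratio. Writing $t:=\sigma_j/\sigma_i$ (possibly $<1$ or $>1$, but always in $[1/\rho_\sigma,\rho_\sigma]$), and using $\|x^*-\mu_j\|^2/\sigma_j^2\le 16 d$,
\[
\frac{w_i g_{\mu_i,\sigma_i}(x^*)}{w_j g_j(x^*)}
\;\le\; \rho_w\, t^{\,d}\,\exp\!\Bigl(-\pi(C/2)^2(1+t)^2\bigl(d+\log(\rho_w\rho_\sigma)\bigr)+16\pi d\Bigr).
\]
The crucial observation is that when $t\ge 1$ (so $\sigma_j^d/\sigma_i^d$ is potentially as large as $\rho_\sigma^d$), the exponent contains $-\pi(C/2)^2 t^2 d$, which dominates $d\log t$ for every $t\ge 1$ once $C\ge C_0$; symmetrically, for $t<1$ the factor $t^d$ is at most $1$ and the exponent contains $-\pi(C/2)^2 d$. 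In either case I can absorb $\log\rho_w$ and the $16\pi d$ slack into the negative exponent and conclude
\[
\frac{w_i g_{\mu_i,\sigma_i}(x^*)}{w_j g_j(x^*)}\le \exp(-C^2 d/4)\cdot (\rho_w\rho_\sigma)^{-\pi(C/2)^2/2}.
\]
Summing over the at most $k\le \rho_w/\wmin\le \rho_w\rho_\sigma$ indices $i\neq j$ kills the factor of $k$ using the second decay factor, yielding~\eqref{eq:leak1} with room to spare.

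For the moment bound~\eqref{eq:leak2}, I would insert the factor $r_i^m:=(\|x^*-\mu_i\|/\sigma_i)^m$ with $m\le 2$ and use the elementary inequality $r^m\le \exp(\pi r^2/2)$ valid for all $r\ge 0$ (any constant less than $\pi$ works). Thus
\[
r_i^m\exp\!\bigl(-\pi r_i^2\bigr)\le \exp\!\bigl(-\pi r_i^2/2\bigr),
\]
so the moment version only halves the decay rate in the exponent; the identical case analysis yields a bound of $\exp(-C^2 d/8)\cdot(\rho_w\rho_\sigma)^{-\pi(C/2)^2/4}$, and the extra $\rho_\sigma^{-\pi(C/2)^2/4}$ easily dominates the required $1/\rho_\sigma^m$ since $m\le 2$. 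Summing over $i\neq j$ again gives~\eqref{eq:leak2}.

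The only delicate step is the case $\sigma_i\ll\sigma_j$ in paragraph 2, where the prefactor $\sigma_j^d/\sigma_i^d=\rho_\sigma^d$ is exponentially large in $d$. The key is that the separation hypothesis scales with $\sigma_i+\sigma_j$, which in this regime is essentially $\sigma_j$; dividing by $\sigma_i^2$ inside the Gaussian exponent produces the quadratic amplification $(\sigma_j/\sigma_i)^2$ that beats the linear-in-$d$ factor $d\log(\sigma_j/\sigma_i)$ coming from the normalization. Once $C$ is chosen large enough to beat the constant $16\pi$ from the ball radius and to overcome $d\log t$ uniformly in $t\ge 1$, everything reduces to routine manipulation of exponents.
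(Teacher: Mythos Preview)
Your approach has a genuine gap in the summation step. You claim that ``summing over the at most $k\le \rho_w/\wmin\le \rho_w\rho_\sigma$ indices $i\neq j$ kills the factor of $k$,'' but this inequality is false: in the uniform case $w_i=1/k$, $\sigma_i=1$ for all $i$, we have $\rho_w=\rho_\sigma=1$ while $k$ is unconstrained. More generally there is no bound on $k$ in terms of $\rho_w$ and $\rho_\sigma$ alone. The separation hypothesis~\eqref{eq:leakage:condition} involves $\sqrt{\log(\rho_w\rho_\sigma)}$, \emph{not} $\sqrt{\log k}$, so your per-term bound of roughly $\exp(-C^2 d/4)\cdot(\rho_w\rho_\sigma)^{-\Omega(C^2)}$ gives nothing when summed over $k$ terms in the uniform setting: the sum is $(k-1)\exp(-C^2 d/4)$, which need not be small.

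This is exactly the difficulty the paper's proof is built around. Instead of bounding each term and summing, the paper (via Lemma~\ref{lem:leakage:uniform}) uses a \emph{packing argument}: after bucketing components by variance, it places disjoint balls $B_i$ of radius $\sqrt d$ around each $\mu_i$, relates $\exp(-\pi r_i^2/2)$ to $\int_{B_i}\exp(-\pi\|y\|^2)\,dy$, and then uses disjointness to bound the entire sum by a single Gaussian tail integral over $\{\|y\|\ge C(\sqrt d+\sqrt{\log\rho_w})\}$. That integral is at most $\exp(-C^2 d/4)$ irrespective of how many balls (i.e., how many components) there are. Your per-term analysis of the density ratio is fine as far as it goes, but you are missing this packing idea, which is the heart of the lemma.
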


We now prove Lemma~\ref{lem:box:othercomponentstotal} under separation $\norm{\mu_i - \mu_j}_2 \ge c(\sigma_i+\sigma_j)(\sqrt{d}+ \sqrt{\log(\rhos \rhow)})$ assuming the above lemma. 
\begin{proof}[Proof for Lemma~\ref{lem:box:othercomponentstotal}]
Our proof will follow by applying Lemma~\ref{lem:smallleakage}, and integrating over all $y \in S_j$. 
For any point $y \in S_j$, $\norm{y - \mu_j}_2 \le 3 \sigma_j (\sqrt{d}+ \sqrt{\log(\rhow \rhos)})$, and the separation of the means satisfies \eqref{eq:leakage:condition}. To prove \eqref{eq:box:sc:0} we get from \eqref{eq:leak1}, 
\begin{align*}
\sum_{i\in [k], i \ne j} w_i g_{\mu_i, \sigma_i}(y)  &\le w_j \exp(-C^2 d/8)\cdot g_{\mu_j, \sigma_j}(y).\\
\text{Integrating over } S_j, ~\sum_{i\in [k], i \ne j} w_i \int_{S_j} g_{\mu_i, \sigma_i}(y) \dy &\le w_j \exp(-C^2 d/8) \int_{\R^d} g_{\mu_j, \sigma_j}(y) \dy \le \frac{w_j}{16 \pi d}.
\end{align*}
To prove \eqref{eq:box:sc:2}, we get from \eqref{eq:leak2} that for each $y \in S_j$ and $r_1, r_2 \in [d]$,
\begin{align*}
\sum_{i\in [k]\setminus \set{j}} \frac{w_i}{\sigma_i \sigma_j} \Abs{y(r_1) - \mu_i(r_1)} \Abs{y(r_2) - \mu_i(r_2)} g_{\mu_i, \sigma_i}(y)  &\le 
\sum_{i \in [k]\setminus \set{j}}\frac{w_i \rhos}{w_j} \frac{\norm{y - \mu_i}_2^2}{\sigma_i^2} g_{\mu_i, \sigma_i}(y) \\
&\le  \frac{w_j \exp(-C^2 d/8)}{\rhos}\cdot g_{\mu_j, \sigma_j}(y). 
\end{align*}
Integrating over all $y \in S_j$,
$$\sum_{i\in [k], i \ne j} \frac{w_i}{\sigma_i \sigma_j} \int_{S_j} \Abs{y(r_1) - \mu_i(r_1)} \Abs{y(r_2) - \mu_i(r_2)} g_{\mu_i, \sigma_i}(y) \dy \le \frac{w_j}{\rhos} \exp(-C^2 d/8) \cdot \int_{S_j} g_{\mu_j, \sigma_j}(y) \le \frac{w_j}{16 \pi d \rhos}$$
%

To prove \eqref{eq:box:sc:1}, we first note that for each $y \in S_j$, $\norm{\mu_i - \mu_j}_2 \le \norm{\mu_i - y}+(\sqrt{d}+ \sqrt{\log \rho_w})\sigma_j \le 2 \norm{\mu_i - y}$. Hence, by applying \eqref{eq:leak2} in Lemma~\ref{lem:smallleakage}, the following inequality holds:
\begin{equation}
\sum_{i \in [k], i \ne j} \frac{w_i \norm{\mu_i - \mu_j}_2 \norm{y - \mu_i}_2}{\sigma_i \sigma_j} g_{\mu_i, \sigma_i}(y)  \le  2\sum_{i \in [k], i \ne j} w_i \parens[\Big]{\frac{\norm{y-\mu_i}}{\sigma_i}}^2 \cdot g_{\mu_i, \sigma_i}(y) \le 2\exp(-C^2 d/8)\rhos^{-2} \cdot w_j g_j(y). \label{eq:leak3}
\end{equation}
Hence \eqref{eq:box:sc:1} follows from a similar argument as before.
\end{proof}

The proof of Lemma~\ref{lem:smallleakage} proceeds by using a packing-style argument.  Roughly speaking, in the uniform case when all the variances are roughly equal, the separation condition can be used to establish an upper bound on the number of other Gaussian means that are at a distance of $r$ from a certain mean. We now present the proof for the case when the variances are roughly equal, and this will also be important for the general case.

\begin{lemma} \label{lem:leakage:uniform}
In the notation of Lemma~\ref{lem:smallleakage}, let us denote for convenience, the $j$th component by $(w_0, \mu_0, \sigma_0)$.
Then, the total probability mass at any point  $x^* $ s.t. $\norm{x^*-\mu_0}\le 4\sqrt{d} \sigma_0$, from components $(w_1, \mu_1, \sigma_1), \dots, (w_i, \mu_i, \sigma_i), \dots, (w_{k'},\mu_{k'},\sigma_{k'})$
satisfying $ \forall i \in [k'], ~ \sigma_0 \le \sigma_i \le 2 \sigma_0$,  and $\norm{\mu_i - \mu_0}_2 \ge C \Paren{ \sqrt{d}+\sqrt{\log \rho_w} }$ is given by
\begin{equation}
\sum_{i \in [k']} w_i g_i(x^*) < w_0 \sigma_0^{-d} \exp(-C^2 d/4). \label{eq:leak}
\end{equation}
Furthermore, we have $\sum_{i \in [k']} w_i g_i^{1/2} (x^*) < w_0 \sigma_0^{-d} \exp(-C^2 d/4).$
\end{lemma}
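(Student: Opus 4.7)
The plan is a shell-based packing argument exploiting both the variance regularity $\sigma_0 \le \sigma_i \le 2\sigma_0$ and the mean separation. First, for each component $i$ I would establish the pointwise bound
\[
g_i(x^*) \le \sigma_i^{-d}\exp\Paren{-\tfrac{\pi \|x^*-\mu_i\|^2}{\sigma_i^2}} \le \sigma_0^{-d}\exp\Paren{-\tfrac{\pi \|x^*-\mu_i\|^2}{4\sigma_0^2}}.
\]
Using $\|x^*-\mu_0\| \le 4\sqrt d\,\sigma_0$ together with the separation $\|\mu_i-\mu_0\| \ge C\sigma_0(\sqrt d + \sqrt{\log\rho_w})$, the triangle inequality gives $\|x^*-\mu_i\| \ge (1-4/C)\|\mu_i-\mu_0\| \ge \tfrac78\|\mu_i-\mu_0\|$ once $C_0$ is large enough, so the Gaussian exponent in $g_i(x^*)$ is at least $(\pi\cdot 49/256)\,C^2(d+\log\rho_w)$. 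The crucial arithmetic fact is $\pi\cdot 49/256 > 1/4$, which is what eventually lets us meet the target exponent $-C^2d/4$.

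Next I would bound the number of means at each distance scale by a packing count. Although Lemma~\ref{lem:leakage:uniform} states separation only from $\mu_0$, it is invoked inside Lemma~\ref{lem:smallleakage} whose hypothesis $\|\mu_i-\mu_j\| \ge C(\sigma_i+\sigma_j)(\sqrt d + \sqrt{\log(\rho_w\rhos)})$ supplies the pairwise bound $\|\mu_i-\mu_j\| \ge 2C\sigma_0(\sqrt d + \sqrt{\log\rho_w})$ among the components under consideration. Applying Claim~\ref{clm:packing} to $\ell_2$-balls of half the separation radius, the count $N_r := |\{i\in[k'] : \|x^*-\mu_i\|/\sigma_0 \in [r,r+1]\}|$ satisfies $N_r \le (4r/(C(\sqrt d + \sqrt{\log\rho_w})))^d$. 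Decomposing by shells with $r_0 := (C/2)(\sqrt d + \sqrt{\log\rho_w})$ and noting that $r^d e^{-\pi r^2/4}$ is decreasing for $r \ge r_0 \gg \sqrt d$, the sum
\[
\sum_i g_i(x^*) \le \sigma_0^{-d}\sum_{r \ge r_0} N_r \exp(-\pi r^2/4)
\]
is dominated by the leading shell $r=r_0$ up to a geometric factor, giving an overall bound $O(1)\cdot\sigma_0^{-d}\exp(-(\pi\cdot 49/256)C^2 d)\,\rho_w^{-(\pi\cdot 49/256)C^2}$.

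Finally, using $w_i \le w_{\max} = \rho_w w_{\min} \le \rho_w w_0$ I would obtain
\[
\sum_{i\in[k']} w_i g_i(x^*) \le \rho_w w_0 \cdot O(1)\cdot \sigma_0^{-d}\exp(-(\pi\cdot 49/256)C^2 d)\,\rho_w^{-(\pi\cdot 49/256)C^2} \le w_0\sigma_0^{-d}\exp(-C^2 d/4),
\]
the last inequality holding for $C_0$ a sufficiently large absolute constant, since the positive gap $\pi\cdot 49/256 - 1/4 > 0$ simultaneously absorbs the $d\log O(1)$ overhead from packing and the $\rho_w^{1-\Omega(C^2)} \le 1$ factor from the weight conversion. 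The main obstacle I expect is exactly this constant bookkeeping: the variance inflation (a factor $1/4$ in the exponent), the triangle-inequality loss (a factor $(7/8)^2$), the $d\log O(1)$ polynomial overhead from packing, and the $+1$ in the $\rho_w$-exponent from $w_i \le \rho_w w_0$ all have to fit underneath the $\pi/4$ Gaussian decay bought by the separation. The ``furthermore'' statement for $g_i^{1/2}$ follows from the identical shell decomposition: replacing $g_i$ by $g_i^{1/2}$ only halves the Gaussian exponent, and the effective constant $\pi\cdot 49/512$ still exceeds $1/4$, so the same argument goes through after a slight further enlargement of $C_0$.
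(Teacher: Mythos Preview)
Your shell-decomposition argument is correct and gets the job done, but the paper takes a slicker route that avoids the explicit shell bookkeeping. After the same normalization ($\mu_0=0$, $\sigma_0=1$), the paper places a ball $B_i$ of radius $\sqrt d$ around each mean $\mu_i$; pairwise separation makes these balls disjoint, and the triangle inequality forces every $y\in B_i$ to satisfy $\|y\|\ge r_i/\sqrt2$ where $r_i=\|x^*-\mu_i\|$. Since $\mathrm{Vol}(B_i)\ge 1$, this gives the pointwise-to-integral comparison
\[
\exp(-\pi r_i^2/2)\le \int_{B_i}\exp(-\pi\|y\|^2)\,dy,
\]
and summing over disjoint balls converts $\sum_i w_i\exp(-\pi r_i^2/2)$ into a \emph{single} Gaussian tail integral over $\|y\|\ge C(\sqrt d+\sqrt{\log\rho_w})$, which is bounded directly via Lemma~\ref{lem:lengthconc}. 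So where you count shells and sum a geometric-like series, the paper absorbs the entire packing argument into one integral; both need the pairwise separation you (correctly) pulled in from Lemma~\ref{lem:smallleakage}. Your approach has the advantage of being more elementary and making the constant tracking explicit; the paper's approach is cleaner and sidesteps the term-by-term ratio analysis entirely.
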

\begin{proof}
  We can assume without loss of generality that $\mu_0=0, \sigma_0=1$.
  Suppose $r_i = \norm{\mu_i - x}_2$. Considering $1 \le \sigma_i \le 2$ and $g_i(x)=\sigma_i^{-d} \exp(-\pi \norm{x- \mu_i}_2^2 / \sigma_i^2)$,
  the lemma will follow, if we prove the following inequality for any $C \ge C_0 \ge 16$:

  \begin{equation}
    \sum_{i \in [k']} w_i \exp\left(-\tfrac{\pi}{2} \cdot  r_i^2 \right) < w_0 \exp(-C^2 d/4), 
  \end{equation}

  Let $B_i=\set{y: \norm{y-\mu_i}_2 \le \sqrt{d}}$. From the separation conditions, we have that
  \begin{itemize}
  \item the balls $B_1,\ldots,B_{k'}$ are pairwise disjoint, and
  \item the balls are far from the origin i.e, $ \forall y \in B_i, \norm{y} \ge C (\sqrt{d}+\sqrt{\log \rho_w})$.
  \end{itemize}

  We first relate the p.d.f. value $g_i(x)$ to the Gaussian measure of a ball of radius $\sqrt{d}$ around $\mu_i$. The volume of the ball $\Vol(B_i) \ge 1$ and
  for every $y \in B_i$, the separation conditions and triangle inequality imply that $\norm{y} \ge \norm{\mu_i-x^*} - \norm{y - \mu_i} - \norm{x^*}  \ge r_i / \sqrt{2}$.
  Hence,
  \begin{equation}\label{eq:relationtogaussianmeasure}  \exp\left(-\frac{\pi r_i^2}{2} \right) \le \int_{y \in B_i} \exp\left(-\pi \norm{y}^2 \right) \dy.
  \end{equation}
  By using the disjointness of balls $B_i$, we get that
  \begin{align*}
  \sum_{i \in [k']} w_i \exp\left(-\frac{\pi r_i^2 }{2} \right) &\le
  \sum_{i \in [k']} w_i \int_{y \in B_i} \exp\left(-\pi \norm{y}^2 \right) \dy  \le w_{\max} \sum_{i \in [k']} \int_{y \in B_i} \exp\left(-\pi \norm{y}^2 \right) \dy\\
   &\le w_{\max} \int_{\substack{y: \norm{y} \ge C \parens{\sqrt{d} + \sqrt{\log \rho_w }}}} \exp\left( -\pi\norm{y}^2 \right) \dy \\
   &\le \rhow w_0 \cdot \frac{1}{\rhow}  \exp(-C^2 d/4) \le w_0 \exp(-C^2 d/4), 
   \end{align*}
where the last line follows since a Gaussian random variable in $d$ dimensions with mean $0$ and unit variance in each direction has measure at most $\exp(-s^2/ 2)$ outside a ball of radius $(\sqrt{d}+ s)$ (see Lemma~\ref{lem:lengthconc}).   
\end{proof}

We now proceed to the proof of Lemma~\ref{lem:smallleakage}.
\begin{proof}[Proof of Lemma~\ref{lem:smallleakage}]
We may again assume without loss of generality that $\mu_j =0$ and $\sigma_j=1$ (by shifting the origin and scaling). Hence $\norm{x^*}_2 \le \sqrt{d}/\sqrt{\pi}$.
We will divide the components $i \in [k] \setminus \set{j}$ depending on their standard deviation $\sigma_i$ into buckets $I_0, I_1, \dots, I_{s}$ where $s \le \lceil \log(\max_i \sigma_i / \sigma_j) \rceil$ as follows:
$$ I_0= \set{i \in [k] \setminus \set{j}: \sigma_i \le \sigma_j },~\quad \forall q \in [s] ~~ I_q = \set{i \in [k]: 2^{q-1} \sigma_j < \sigma_i \le 2^{q} \sigma_j}.$$

Let us first consider the components in the bucket $I_0$, and suppose we scale so that $\sigma_j=1$. As before let $r_i=\norm{x^*-\mu_i}_2$. We first note that if $\sigma_i \le \sigma_j$, since $\norm{x^*-\mu_i}_2 \ge C \sqrt{d}(\sigma_i +\sigma_j)$, a simple calculation shows that $g_{\mu_i, \sigma_i} (x^*) \le g_{\mu_i, \sigma_j}(x^*) $. Hence, by applying Lemma~\ref{lem:leakage:uniform} to $I_0$ with a uniform variance of $\sigma_j^2 =1$ for all Gaussians, we see that
\begin{equation} \sum_{i \in I_0} w_i g_{\mu_i, \sigma_i} (x^*) \le \sum_{i \in I_0} w_i g_{\mu_i, \sigma_j} (x^*)  \le w_j \exp(-C^2 d/4). \label{eq:bucket0} \end{equation}
Consider any bucket $I_q$. We will scale the points down by $2^{q-1} \sigma_j$ so that $y' = y/ (2^{q -1} \sigma_j)$, and let $\forall i \in I_q$ $r'_i=r_i/ (2^{q-1} \sigma_j)$.
Again from Lemma~\ref{lem:leakage:uniform} we have that

\begin{align}
\sum_{i \in I_q} w_i \exp(-\pi r_i^2/ \sigma_i^2) \le \sum_{i \in I_q} w_i \exp(- \pi {r'_i}^2/ 4) &\le w_j \exp(-C^2 d/4).\\
\text{Hence, } ~\sum_{i \in I_q} w_i g_i (x) \le \frac{1}{2^{(q-1)d} \sigma^d_j} \sum_{i \in I_q} w_i \exp(-\pi r_i^2 / \sigma_i^2) &\le w_j 2^{-(q-1)d} \exp(-C^2 d/4). \label{eq:bucketq}
\end{align}

\noindent Hence, by summing up over all buckets (using \eqref{eq:bucket0} and \eqref{eq:bucketq}),
the total contribution $\sum_{i \in [k]\setminus \set{j}} w_i g_i(x) \le 4w_j \exp(-C^d/4) \le \exp(-C^2 d/8) \cdot w_j g_j(x)$.

The final equation~\eqref{eq:leak2} follows from separation conditions, since $\norm{x-\mu_i} > C \sqrt{d} \sigma_i$, hence $\exp\parens{\pi \norm{x-\mu_i}^2/2 \sigma_i^2}> \norm{x - \mu_i}^m / \sigma_i^m$ for some sufficiently large constant $C=C(m) \ge 1$. Hence, by using an identical argument with the furthermore part of Lemma~\ref{lem:leakage:uniform}, it follows. 

\end{proof}

\subsection{Sampling Errors}\label{sec:samplingerrors}

\begin{lemma}[Error estimates for Gaussians]\label{lem:app:box:samplingerror}
Let $S \subset \set{x \in \R^d: \norm{x}_2 \le \rho}$ be any region, and suppose samples $x^{(1)}, x^{(2)}, \dots, x^{(N)}$ are generated from a mixture of $k$ spherical Gaussians with parameters $\{(w_i, \mu_i, \sigma_i)\}_{i \in [k]}$ in $d$ dimensions, with $\forall i, ~ \norm{\mu_i}, \sigma_i \le \rho$. There exists a constant $C>0$ such that for any $\eps>0$, with $N\ge C \left(\eps^{-2} \rho^2 \log d \log(1/\gamma) \right)$ samples, we have for all $r, r' \in [d]$ with probability at least $1-\gamma$
\begin{align}\label{eq:sample:errorbound1}
\Abs{\int_{x \in S} \sum_{i \in [k]} (x(r)-\mu_i(r)) g_{\mu_i, \sigma_i}(x) \dx -\frac{1}{N}\sum_{\ell \in [N]} \Paren{x^{(\ell)}(r)-\mu_i(r)} \I[x^{(\ell)} \in S] } &< \eps.\\
\Big\lvert \int_{x \in S} \sum_{i \in [k]} (x(r) - \mu_i(r)) (x(r')-\mu_i(r'))^T g_{\mu_i, \sigma_i}(x) \dx ~\qquad~~& \nonumber\\
\qquad-\frac{1}{N}{\sum_{\ell \in [N]} \Paren{x^{(\ell)}(r)-\mu_i(r)}\Paren{x^{(\ell)}(r')-\mu_i(r')}^T \I[x^{(\ell)} \in S] } \Big\rvert &< \eps. \label{eq:sample:errorbound2}
\end{align}
\end{lemma}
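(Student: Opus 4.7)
The plan is a direct coordinate-wise Hoeffding concentration argument together with a union bound over coordinates. Fix a coordinate $r \in [d]$ and consider the scalar random variable $Y^{(\ell)}_r := \I[x^{(\ell)} \in S](x^{(\ell)}(r) - c_r)$ for the appropriate centering $c_r$ (in the downstream applications in Lemmas~\ref{lem:box:sampling1} and \ref{lem:box:sampling2}, $c_r$ is the corresponding coordinate of $z_j$ or $\sigma_i \vx_i$; the ``$\sum_i$ against $g_{\mu_i,\sigma_i}$'' structure on the left-hand side is the usual decomposition of $\E_{x\sim \calG}[\cdot]$ via the law of total expectation over component assignment, so the LHS of \eqref{eq:sample:errorbound1} is precisely $\E Y^{(\ell)}_r$ after summing over components). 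Since $S \subseteq \{\|x\|_2 \le \rho\}$ and all means satisfy $\|\mu_i\|_2 \le \rho$, each $Y^{(\ell)}_r$ is deterministically bounded by $2\rho$ in absolute value, so all the hypotheses of Hoeffding are in place.

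Next I would invoke Hoeffding's inequality on the $N$ independent copies $Y^{(1)}_r, \dots, Y^{(N)}_r$ to get deviation at most $\eps$ with probability at least $1 - 2\exp(-c N \eps^2 / \rho^2)$ for an absolute constant $c > 0$. A union bound over the $d$ coordinates contributes the $\log d$ factor, and setting the failure probability to $\gamma$ contributes the $\log(1/\gamma)$ factor; this gives the stated $N = \Theta(\rho^2 \eps^{-2} \log d \log(1/\gamma))$ for \eqref{eq:sample:errorbound1}. The quadratic estimate \eqref{eq:sample:errorbound2} follows by the same template applied to $Z^{(\ell)}_{r,r'} := \I[x^{(\ell)} \in S] (x^{(\ell)}(r) - c_r)(x^{(\ell)}(r') - c'_{r'})$, which is deterministically bounded on $S$, with the union bound now over the $d^2$ pairs $(r,r')$, contributing only a constant-factor change to the $\log d$ term. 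If one wants to match the sharp $\rho^2$ dependence in the sample complexity (rather than the naive $\rho^4$ one would get from applying Hoeffding to a quantity bounded by $O(\rho^2)$), one replaces Hoeffding by Bernstein and uses the observation that under a spherical Gaussian most of the mass has coordinate deviations of order $\sigma_i$ rather than $\rho$, so the per-sample variance of $Z^{(\ell)}_{r,r'}$ is $O(\rho^2)$ rather than $O(\rho^4)$.

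There is no serious technical obstacle: the argument is a textbook Chernoff/Hoeffding bound for bounded scalar statistics, and the only minor care required is verifying the correct centering term in each invocation and, if one wants the sharp $\rho^2$ scaling rather than $\rho^4$, using a Bernstein refinement in place of Hoeffding in the quadratic case.
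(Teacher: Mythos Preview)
Your proposal is correct and follows essentially the same approach as the paper: fix a coordinate, observe the per-sample statistic is (sub-Gaussian or) bounded, apply a Chernoff-type bound, and union-bound over coordinates. The only cosmetic difference is that you exploit the deterministic boundedness coming from $S\subseteq\{\|x\|\le\rho\}$ to invoke Hoeffding directly, whereas the paper phrases the same step via sub-Gaussian tails; your observation that the quadratic case naively yields $\rho^4$ unless one uses a Bernstein-type refinement is a valid sharpening that the paper glosses over with ``a similar proof also works.''
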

\begin{proof}

Fix an element $r \in [d]$. Each term $\ell \in [N]$ in the sum corresponds to an i.i.d random variable $Z_\ell=\Paren{x^{(\ell)}(r)-\mu_i(r)} \I[x^{(\ell)} \in S]$. We are interested in the deviation of the sum $Z=\frac{1}{N}\sum_{\ell \in [N]} Z_\ell$.

Firstly, $\E[Z]= \int_{x \in S} \sum_{i \in [k]} (x(r)-\mu_i(r)) g_{\mu_i, \sigma_i}(x) \dx$. Further, each of the i.i.d r.v.s has value
$\abs{Z_\ell - \E Z_\ell} \le \abs{x^{(\ell)}(r) - \mu_i(r)} + \rho$. Hence, $\abs{Z_\ell} > (2\rho+t \max_i \sigma_i)$ with probability $O\left(\exp(-t^2/2)\right)$. Hence, by using standard sub-gaussian tail inequalities, we get
$$\Pr[\abs{Z-\E{Z}} > \eps] < \exp\left(-\frac{\eps^2 N}{(2 \rho+\max_i \sigma_i)^2}\right)$$
Hence, to union bound over all $d$ events $N=O\left(\eps^{-2} \rho^2 \log d \log(1/\gamma) \right)$ suffices.

A similar proof also works for the second equation \eqref{eq:sample:errorbound2}.
\end{proof}

%
%
%
%

\begin{proof}[Proof of Lemma~\ref{lem:box:sampling1}]
Let $Z=\frac{1}{w_j \sigma_j N}\sum_{\ell \in [N]} \I[y^{(\ell)} \in S_j] \Paren{y^{(\ell)}-z_j}$. We see that
$$\E[Z]= \frac{1}{w_j \sigma_j} \sum_{i=1}^k w_i \int_{y \in \R^d} \cI_j(y) (y - z_j) g_{\sigma_i \vx_i, \sigma_i}(y) \dy =F_j(\vx).$$
Further, we can write $Z=Z_1 + Z_2$, where
\begin{align*}
Z_1&=\frac{1}{w_j \sigma_j N}\sum_{\ell \in [N]} \I[y^{(\ell)} \in S_j] \Paren{y^{(\ell)}-\sigma_i \vx_i} ,\\
Z_2&=\frac{1}{w_j \sigma_j N} \Paren{\sigma_i \vx_i -z_j} \sum_{\ell \in [N]} \I[y^{(\ell)} \in S_j].
\end{align*}

By applying Lemma~\ref{lem:app:box:samplingerror}, we get
$$\norm{Z- \E[Z]}_{\infty} \le \norm{Z_1 - \E Z_1}_\infty + \norm{Z_2 - \E Z_2}_\infty \le \frac{\eta}{2} + \frac{\eta}{2} = \eta.$$
\end{proof}

\begin{proof} [Proof of Lemma~\ref{lem:box:sampling2}]
Let $Z=\frac{w_i}{w_j \sigma_i \sigma_j N}\sum_{\ell \in [N]} \I[y^{(\ell)} \in S_j] \Paren{y^{(\ell)}-z_j} \Paren{y^{(\ell)}-\sigma_i \vx_i}^T $, where the samples are drawn just from the spherical Gaussian with mean $\sigma_i \vx_i$ and variance $\sigma_i^2$. Hence,
$$\E[Z]= \frac{w_i}{w_j \sigma_j \sigma_i}  \int_{y \in \R^d} \cI_j(y) (y - z_j) (y- \sigma_i \vx_i)^T g_{\sigma_i \vx_i, \sigma_i}(y) \dy =F_j(\vx).$$
Also $Z  = Z_1+Z_2$   where
\begin{align*}
Z_1 &= \frac{w_i}{w_j \sigma_i \sigma_j N} \sum_{\ell \in [N]} \I[y^{(\ell)} \in S_j] \Paren{\sigma_i \vx_i-z_j} \Paren{y^{(\ell)}-\sigma_i \vx_i}^T \text{ and } \\
Z_2&=\frac{w_i}{w_j \sigma_i \sigma_j N} \sum_{\ell \in [N]} \I[y^{(\ell)} \in S_j]  \Paren{y^{(\ell)}-\sigma_i \vx_i} \Paren{y^{(\ell)}-\sigma_i \vx_i}^T.
\end{align*}
The $\norm{Z-\E Z}_{\infty \to \infty}$ is the sum of the absolute values of the $d$ entries in a row. 
From Lemma~\ref{lem:app:box:samplingerror},
$$ \norm{Z - \E Z}_{\infty \to \infty}= \norm{Z_1 - \E Z_1}_{\infty \to \infty}+ \norm{Z_2 - \E Z_2}_{\infty \to \infty} \le \frac{\eta}{2dk}\cdot d + \frac{\eta}{2dk} \cdot d= \eta/k. $$
Finally, we have from upper bound of the error in the individual blocks that
$$\bignorm{F'(\vx) - \widetilde{F'}(\vx)}_{\infty \to \infty} \le \max_{j \in [k]} \sum_{i \in [k]} \bignorm{ \grad_{\vx_i} F_j(\vx) - \widetilde{\grad_{\vx_i}F_j}(\vx) }_{\infty \to \infty} \le \eta .$$
\end{proof}


\section{Sample Complexity Upper Bounds with \texorpdfstring{$\Omega(\sqrt{\log k})$}{Omega(sqrt(log k))} Separation}
\label{sec:oldub}

We now show that a mean separation of order $\Omega(\sqrt{\log k})$ suffices to learn the model parameters up to arbitrary accuracy $\delta>0$, with $\poly(d,k,\log(1/\delta))$ samples. In all the bounds that follow, the interesting settings of parameters are when $\rho, 1/\wmin \le \poly(k)$.
We note that these upper bounds are interesting even in the case of uniform mixtures: $w_i, \sigma_i$ being equal across the components.

For sake of exposition, we will restrict our attention to the case when the standard deviations $\sigma_i$, and weights $w_i$ are known for all $i \in [k]$. We believe that similar techniques can also see used to handle unknown $\sigma_i, w_i$ as well (see Remark~\ref{remark:unknown}). 
In what follows $\rhos$ corresponds to the aspect ratio of the covariances i.e., $\rhos=\max_{i \in [k]} \sigma_i / \min_{i \in [k]} \sigma_i$. 

\begin{theorem}[Same as Theorem~\ref{thm:informal:betterupperbounds}]\label{thm:betterupperbounds}
There exists a universal constant $c>0$ such that suppose we are given samples from a mixture of spherical Gaussians $\calG=\set{\parens{w_i, \mu_i , \sigma_i}: i \in [k]}$ (with known weights and variances) that are $\rho$-bounded and the means are well-separated i.e.
\begin{equation}\label{eq:box:wellsep}
\forall i, j \in [k], i \ne j:~ \norm{\mu_i - \mu_j}_2 \ge  c \sqrt{\log (\rhos/\wmin)} (\sigma_i + \sigma_j),
\end{equation}
there is an algorithm that for any $\delta>0$, uses $\poly(k,d,\rho,\log(1/\wmin), 1/\delta)$ samples and recovers with high probability the means up to  $\delta$ relative error i.e., finds $\set{\mu'_i: i \in [k] }$ such that $\dparam\Big(\calG, \set{(w_i, \mu'_i, \sigma_i): i \in [k]} \Big)\le \delta$.
\end{theorem}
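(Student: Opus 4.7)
The plan is to combine the iterative refinement algorithm of Theorem~\ref{thm:boxiterative} with a sample-efficient (but computationally expensive) initialization procedure that produces coarse estimates $\tmu_1,\dots,\tmu_k$ satisfying the precondition of Theorem~\ref{thm:boxiterative}, namely $\|\tmu_j-\mu_j\|_2/\sigma_j \le c_0/\min(d,k)^{5/2} =: \delta_0$ for every $j$ (up to permutation). Since the hypothesis~\eqref{eq:box:wellsep} is exactly the right-hand branch of the minimum in the separation condition of Theorem~\ref{thm:boxiterative}, once such initializers are in hand, Theorem~\ref{thm:boxiterative} refines them to accuracy $\delta$ using $\poly(k,d,\rho,\log(1/\wmin),1/\delta)$ additional samples. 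The theorem statement bounds only samples, not time, so the initialization phase is allowed to be brute force.

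For the initialization, I would build a $\delta_0 \smin$-net $\calN$ of candidate mean tuples inside the $\rho$-ball of $\R^d$ for each of the $k$ components, of size $|\calN| \le (\rho/(\delta_0\smin))^{O(kd)}$, so $\log|\calN| = \poly(k,d,\log\rho,\log(1/\wmin))$. Over $\calN$ I would run a Scheffé/Yatracos-type tournament between pairs of candidate mixtures: for each pair $(\calG_a,\calG_b)$, compare the empirical mass of the Scheffé set $\{x : f_a(x) > f_b(x)\}$ against the corresponding model probabilities, and use majority voting to pick a winner. Using $N = O(\log|\calN|/\eta^2) = \poly(k,d,\rho,\log(1/\wmin),1/\eta)$ samples, standard tournament analysis guarantees that the winner $\tcalG$ has $\|\tsf - f\|_1$ within a constant factor of the best attainable in $\calN$, which in turn is $O(\eta)$ by density smoothness, provided we choose an appropriate $\eta$ below.

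The crucial technical step is an \emph{identifiability gap}: I must show that under~\eqref{eq:box:wellsep}, any $\rho$-bounded mixture $\tcalG$ (with the given weights and variances) satisfying $\min_{\pi} \max_j \|\mu_j - \tmu_{\pi(j)}\|_2/\sigma_j \ge \delta_0$ has $\|f-\tsf\|_1 \ge \eta$ for some $\eta \ge 1/\poly(k,d,\rho,1/\wmin)$. I would prove this using the sets $S_j$ of Definition~\ref{def:box:Sj} built around the true means. By Lemmas~\ref{lem:box:samecomponent} and~\ref{lem:box:othercomponentstotal}, $S_j$ has $f$-mass essentially $w_j$ and $f$-conditional mean essentially $\mu_j$. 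Fix the minimizing permutation $\pi$ and let $j^*$ attain the max. There are two cases. If $\tmu_{\pi(j^*)}$ is within $O(\sqrt{\log(\rhos/\wmin)})\sigma_{j^*}$ of $\mu_{j^*}$, then by the same leakage lemmas applied to $\tcalG$, the conditional mean of $\tsf$ on $S_{j^*}$ sits within $O(\wmin/(16\pi d))$ of $\tmu_{\pi(j^*)}$; since $\|\mu_{j^*}-\tmu_{\pi(j^*)}\|_2 \ge \delta_0 \sigma_{j^*}$, this means $\int_{S_{j^*}}\|f-\tsf\| \ge \Omega(w_{j^*}\delta_0)$. Otherwise $\tmu_{\pi(j^*)}$ is far from $\mu_{j^*}$, and hence $\tsf$ has very small mass on $S_{j^*}$ while $f$ has mass $\approx w_{j^*}$, again yielding $\Omega(w_{j^*})$ gap on $S_{j^*}$. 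Either way, $\|f-\tsf\|_1 \ge \Omega(\wmin \delta_0) \ge 1/\poly(k,d,1/\wmin)$, as required.

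The main obstacle is this identifiability gap: verifying the two cases carefully and, in particular, handling the situation where the ``far'' component $\tmu_{\pi(j^*)}$ is replaced in $S_{j^*}$ by another $\tmu_{\pi(j')}$ whose Gaussian accidentally covers the missing mass. The separation condition on $\tcalG$, however, forces at most one $\tmu_{\pi(j)}$ to sit close to any $\mu_{j^*}$, so at most one component of $\tcalG$ can mimic component $j^*$ of $\calG$ on $S_{j^*}$, and the gap argument goes through. With this gap established and $\eta$ set accordingly in the tournament, the winner $\tcalG$ satisfies $\dparam(\calG,\tcalG) \le 2\delta_0$ by contrapositive, yielding valid initializers for Theorem~\ref{thm:boxiterative}.
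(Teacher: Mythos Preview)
Your proposal is essentially the same two-phase strategy the paper uses: obtain coarse initializers via a net-plus-tournament argument backed by a quantitative identifiability gap (the paper's Proposition~\ref{prop:upperbounds} and Lemma~\ref{lem:samplealg}, packaged as Theorem~\ref{thm:upperbounds}), then feed those initializers into Theorem~\ref{thm:boxiterative}. The one substantive difference is in the identifiability gap itself: you reuse the regions $S_j$ from Definition~\ref{def:box:Sj} and the leakage lemmas of Section~\ref{sec:amplify}, whereas the paper builds a \emph{different} family of regions (Definition~\ref{def:Sj}) whose defining half-spaces point toward the means of the \emph{candidate} mixture $\tcalG$, and then runs a careful case analysis (Lemmas~\ref{lem:diff:means}--\ref{lem:statdiff}) on the unmatched component of smallest variance. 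Your first-moment argument on $S_{j^*}$ is morally fine in the known-weights-and-variances setting, but be aware that Lemmas~\ref{lem:box:samecomponent} and~\ref{lem:box:othercomponentstotal} as stated apply to a single well-separated mixture, so to invoke them for $\tcalG$ you must (i) restrict the net to well-separated tuples (the paper does this explicitly) and (ii) check that the region $S_{j^*}$, which is centered at $\mu_{j^*}$ rather than $\tmu_{\pi(j^*)}$, still isolates component $\pi(j^*)$ of $\tcalG$; the paper's alternative $S_j$ is designed precisely to sidestep this asymmetry.
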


Such results are commonly referred to as {\em polynomial identifiability} or {\em robust identifiability} results. 
We can again assume as in Section~\ref{sec:amplify} that without loss of generality that $d \le k$ due to the following dimension-reduction technique using PCA~\cite{VW04}. Theorem~\ref{thm:betterupperbounds} follows in a straightforward manner by combining the iterative algorithm, with initializers given by the following theorem. 

\begin{theorem}[Initializers Using Polynomial Samples] \label{thm:upperbounds}
For any constant $c\ge 10$, suppose we are given samples from a mixture of spherical Gaussians $\calG=\set{\parens{w_i, \mu_i , \sigma_i}: i \in [k]}$ that are $\rho$-bounded and the means are well-separated i.e.
\begin{equation} \label{eq:well-separated}
\forall i, j \in [k], i \ne j:~ \norm{\mu_i - \mu_j}_2 \ge 4c \sqrt{\log (\rhos/\wmin)} (\sigma_i + \sigma_j).
\end{equation}
There is an algorithm that uses $\poly(k^{c},d,\rho)$ samples and with high probability learns the parameters of $\calG$ up to $k^{-c}$ accuracy, i.e., finds another mixture of spherical Gaussians $\tcalG$ that has parameter distance $\dparam(\calG, \tcalG)\le k^{-c}$.   
\end{theorem}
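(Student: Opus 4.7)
My strategy is an exhaustive net-search algorithm paired with a polynomial identifiability result for well-separated Gaussian mixtures. Time is exponential in $k$, but the sample complexity is $\poly(k^c,d,\rho)$.

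First, applying Theorem~\ref{thm:pca} with accuracy $k^{-10c}$ and $\poly(k^c,d,\rho)$ samples, I may assume $d \le k$ (projecting onto the top-$k$ singular subspace of the sample matrix loses only $k^{-10c}$ per mean, which is absorbed into the final bound). Next, I build an $\alpha$-net $\calN$ of the radius-$\rho$ ball in $\R^d$ with $\alpha = k^{-10c}$, so $|\calN| \le (3\rho/\alpha)^d \le k^{O(ck)}$. For every unordered $k$-subset $T \subseteq \calN$, form the candidate mixture $\tcalG_T$ using the \emph{known} weights and variances, and run a Scheffe-type tournament among all candidates using $\poly(k^c,d,\rho)$ samples. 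The standard Scheffe guarantee produces a candidate $\tcalG^{*}$ with
\[
\norm{\calG - \tcalG^{*}}_{TV} \le 3 \min_{T} \norm{\calG - \tcalG_T}_{TV} + k^{-5c}.
\]
Since $\calG$ is $\alpha$-close coordinatewise to some candidate (and a Gaussian mixture's TV distance is Lipschitz in the means by $\poly(\rho,\wmin^{-1})$), the minimum on the right-hand side is $\le k^{-5c}$, so $\norm{\calG - \tcalG^{*}}_{TV} \le k^{-4c}$.

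The heart of the proof is then a polynomial identifiability lemma: \emph{if $\calG$ and $\tcalG$ are both $\rho$-bounded and satisfy the separation condition \eqref{eq:well-separated}, and $\norm{\calG - \tcalG}_{TV} \le \eps$, then $\dparam(\calG,\tcalG) \le \poly(k,\rho,\log \wmin^{-1}) \cdot \eps$.} Applying this with $\eps = k^{-4c}$ yields $\dparam(\calG, \tcalG^{*}) \le k^{-c}$. To prove the lemma, for each true mean $\mu_i$ I would consider the ball $B_i = \set{x : \norm{x-\mu_i}_2 \le 3\sigma_i\sqrt{d+\log(\rhos/\wmin)}}$. By Gaussian concentration together with Lemma~\ref{lem:smallleakage} (whose hypotheses hold because of \eqref{eq:well-separated}), the mass of $\calG$ in $B_i$ equals $w_i \pm \wmin/\poly(k)$, with only component $i$ contributing non-negligibly. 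The TV hypothesis forces $\tcalG(B_i) = w_i \pm O(\eps)$. Applying the same leakage bound to $\tcalG$, any component $j$ of $\tcalG$ with $\tmu_j$ outside a radius $O(\sigma_i \sqrt{\log(\rhos/\wmin)})$ ball around $\mu_i$ contributes at most $\wmin^{\Omega(c^{2})}$ mass to $B_i$. Hence some $\tmu_{\pi(i)}$ must already lie in this coarse window around $\mu_i$; this defines the desired matching $\pi$.

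The main obstacle is promoting the coarse correspondence $\pi$ to a \emph{linear}-in-$\eps$ parameter bound $\norm{\tmu_{\pi(i)} - \mu_i}_2 \le \poly(k,\rho)\cdot \eps$. The idea is to now match \emph{first moments} on a smaller ball $B_i^{\circ}$ of radius $O(\sigma_i\sqrt{\log(\rhos/\wmin)})$ around $\mu_i$: on $B_i^{\circ}$ only components $i$ (of $\calG$) and $\pi(i)$ (of $\tcalG$) contribute non-negligibly (again by Lemma~\ref{lem:smallleakage}), and the TV hypothesis makes the two restricted first moments agree up to $\poly(\rho)\cdot\eps$. For a single spherical Gaussian of known variance, the first moment restricted to a fixed bounded set is a $\poly(\rho,\log\wmin^{-1})$-Lipschitz and nondegenerate function of the mean on the region where the coarse correspondence has placed $\tmu_{\pi(i)}$, which inverts the moment map and yields the claimed quantitative bound. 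Putting the two steps together and combining with the Scheffe output completes the proof.
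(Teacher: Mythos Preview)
Your high-level strategy (parameter net, Scheff\'e tournament, then an identifiability lemma turning small TV distance into small parameter distance) matches the paper's approach exactly. The gap is in the identifiability step.

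You invoke Lemma~\ref{lem:smallleakage} on balls $B_i$ of radius $\Theta(\sigma_i\sqrt{d+\log(\rhos/\wmin)})$, claiming its hypotheses follow from \eqref{eq:well-separated}. They do not. After PCA you only have $d\le k$, and the separation in \eqref{eq:well-separated} is $\Theta(\sqrt{\log(\rhos/\wmin)})\approx\Theta(\sqrt{\log k})$, whereas Lemma~\ref{lem:smallleakage} requires separation $\Theta(\sqrt{d}+\sqrt{\log(\rhow\rhos)})$. When $d$ is of order $k$, a ball of radius $\Theta(\sigma_i\sqrt{d})$ around $\mu_i$ can have radius $\Theta(\sigma_i\sqrt{k})$, which is far larger than the $\Theta(\sqrt{\log k})$ inter-mean separation; other means $\mu_j$ (and the candidate means $\tmu_\ell$) can lie \emph{inside} $B_i$, so neither the leakage bound nor the ``coarse correspondence'' step goes through. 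The first-moment refinement on $B_i^\circ$ inherits the same problem.

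The paper's fix (Proposition~\ref{prop:upperbounds}) is to replace balls by the half-space-defined regions $S_j$ of Definition~\ref{def:Sj}: one linear constraint per competing component, along the direction $\ejl$ joining $\mu_j$ to $\mu_\ell^*$. This makes each leakage bound a one-dimensional tail estimate (Lemma~\ref{lem:othercomponents}), so the required separation is $\Theta(\sqrt{\log(\rhos/\wmin)})$ independent of $d$. With these regions the paper proves a threshold-type identifiability (parameter distance $\ge k^{-(c-1)}$ implies TV $\ge c'/(dk^{2c}\rhos^2)$), which is weaker than your linear-in-$\eps$ claim but exactly what the tournament argument needs. Two smaller points: the paper nets over all parameters (means, weights, variances), not just means, since Theorem~\ref{thm:upperbounds} does not assume the latter are known; and the net is restricted to well-separated mixtures so that the identifiability proposition applies to the tournament output.
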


The key difference between Theorem~\ref{thm:betterupperbounds} and Theorem~\ref{thm:upperbounds} is that 
 in the former, the parameter estimation accuracy is {\em independent} of 
the separation (the constant $c>0$ does not depend on $\delta$). 
If $\rho_s=k^{O(1)}$ and $\wmin \ge k^{-O(1)}$, then we need means $\mu_j, \mu_\ell$ 
to be separated by $\Omega(\sqrt{\log k}) (\sigma_j+\sigma_\ell)$ to get reasonable 
estimates of the parameters with $\poly(k)$ samples. While the algorithm is sample 
efficient, it takes time that is $(\rho/\wmin)^{O(ck^2)}$ time since it runs over 
all possible settings of the $O(k)$ parameters in $d \le k$ dimensions. Note that 
the above theorem holds even with unknown variances and weights that are unequal. 

We can use Theorem~\ref{thm:upperbounds} as a black-box to obtain a mixture $\calG^{(0)}$ such that $\dparam(\calG,\calG^{(0)}) \le k^{-c}$ whose parameters will serve as initializers for the iterative algorithm in Section~\ref{sec:amplify}.
Theorem~\ref{thm:upperbounds} follows by exhibiting a lower bound on the statistical distance between any two sufficiently separated spherical Gaussian mixtures which have non-negligible parameter distance. 

\begin{proposition}\label{prop:upperbounds}
Consider any two spherical Gaussian mixtures $\calG, \calG^*$ in $\R^d$ as in Theorem~\ref{thm:upperbounds}, with their corresponding p.d.fs being $f$ and $f^*$ respectively. There is a universal constant $c'>0$, such that for any $c \ge 5$, if the parameter distance $\dparam(\calG, \calG^*) \ge \frac{1}{k^{c-1}}$, and $\calG$ is well-separated:
$$ \forall i, j \in [k], i \ne j:~ \norm{\mu_i - \mu_j}_2 \ge 4c \sqrt{\log (\rho_s/\wmin)} (\sigma_i + \sigma_j),$$
and both mixtures have minimum weight $\wmin \ge 1/k^{c}$, then
\begin{equation}
\norm{f-f^*}_1 = \int_{\R^d} \Abs{f(x) - f^*(x)} \dx \ge \frac{c'}{dk^{2c} \rho_s^2},
\end{equation}
where $\rho_s = \max\Big\{\frac{\max_{j \in [k]} \sigma^*_j}{\min_{j \in [k]} \sigma_j} , \frac{\max_{j \in [k]} \sigma_j}{\min_{j \in [k]} \sigma^*_j}\Big\}$.
\end{proposition}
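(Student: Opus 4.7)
The plan is to argue by contrapositive: assuming $\|f - f^*\|_1 < c'/(dk^{2c}\rho_s^2)$ for a sufficiently small universal constant $c'>0$, I will construct a bijection between the components of $\calG$ and $\calG^*$ under which the total parameter distance is strictly less than $1/k^{c-1}$, contradicting the hypothesis. The core idea is to use the separation of $\calG$ to localize the comparison of $f$ and $f^*$ to $k$ regions $S_1, \dots, S_k$ built around the means of $\calG$ (analogous to Definition~\ref{def:box:Sj} with $z_j := \mu_j$), and then to pin down the parameters of $\calG^*$ on each region by matching only its mass, first moment, and trace of second moment.

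On each $S_j$, the separation condition~\eqref{eq:well-separated} together with Lemmas~\ref{lem:box:samecomponent}--\ref{lem:box:othercomponentstotal} implies that $f$ restricted to $S_j$ agrees with $w_j g_j$ up to $L_1$ error $w_j/\poly(k)$, and that the low-order moments of $w_j g_j$ on $S_j$ differ from those of the full Gaussian $w_j g_j$ on $\R^d$ by a similar amount. I would then compute three local statistics of $f^*$ on each $S_j$: the mass $m^*_j = \int_{S_j} f^*\dx$, the first moment $\nu^*_j = \int_{S_j} (x-\mu_j) f^*(x)\dx$, and $\tau^*_j = \int_{S_j} \|x-\mu_j\|^2 f^*(x)\dx$. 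Since $\|f - f^*\|_{1,S_j} \le \|f-f^*\|_1$ and $S_j$ has diameter $O(\rho \sqrt{\log(\rho_s/\wmin)})$, each of these three statistics of $f^*$ lies within $O(\rho^2 \log(\rho_s/\wmin) \cdot \|f-f^*\|_1)$ of the corresponding statistic $(w_j,\, 0,\, w_j d \sigma_j^2 / (2\pi))$ of $w_j g_j$.

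Next, I would assign each component $i \in [k]$ of $\calG^*$ to the region $S_{\phi(i)}$ containing its mean $\mu^*_i$, setting $T_j := \phi^{-1}(j)$. Any $\mu^*_i$ lying outside every $S_j$ would place $f^*$-mass of order $w^*_i \ge k^{-c}$ in a region where $f$ is negligible, violating the assumed bound on $\|f - f^*\|_1$ once $c'$ is chosen small enough; so $\phi$ is well-defined on all of $[k]$. By the separation of $\calG$ (arguing analogously to Lemma~\ref{lem:box:othercomponentstotal}), any $\calG^*$-component whose mean lies outside $S_j$ contributes at most $\exp(-\Omega(\log(\rho_s/\wmin)))$ mass to $S_j$, so after absorbing these negligible tails we obtain $\sum_{i \in T_j} w^*_i \approx w_j$, $\sum_{i \in T_j} w^*_i(\mu^*_i - \mu_j) \approx 0$, and $\sum_{i \in T_j} w^*_i\bigl(\|\mu^*_i - \mu_j\|^2 + d (\sigma^*_i)^2/(2\pi)\bigr) \approx w_j d \sigma_j^2/(2\pi)$, each within the same tolerance.

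The main obstacle is turning these three aggregate identities into a tight per-component matching. First, $T_j$ must be nonempty for every $j$, since otherwise $m^*_j$ would be exponentially small rather than $\approx w_j \ge k^{-c}$; combined with $\sum_j |T_j| = k$, this forces $|T_j| = 1$ throughout, making $\phi$ a bijection. Writing $T_j = \set{i_j}$, the mass and first-moment identities directly yield $|w^*_{i_j} - w_j|$ and $w^*_{i_j}\|\mu^*_{i_j} - \mu_j\|$ of order $O(\rho^2 \log(\rho_s/\wmin)/(d k^{2c}\rho_s^2))$, and plugging into the second-moment identity pins down $|(\sigma^*_{i_j})^2 - \sigma_j^2|$ to the same order. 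Using $\wmin \ge k^{-c}$ and the bounded-ratio hypotheses $\rho, \rho_s \le k^{O(1)}$, each normalized per-pair discrepancy entering $\dparam$ can be made at most $\alpha/k^c$ for an arbitrarily small constant $\alpha > 0$ by choosing $c'$ sufficiently small. Summing over the $k$ matched pairs yields $\dparam(\calG, \calG^*) \le \alpha k \cdot k^{-c} < 1/k^{c-1}$, the desired contradiction.
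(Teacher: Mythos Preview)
Your contrapositive-plus-moment-matching approach is a genuinely different route from the paper's, but it has a real gap at the step where you replace the $S_j$-restricted moments of $f^*$ by the \emph{full-space} moments of the single component $i_j$, writing e.g.\ $\tau^*_j\approx w^*_{i_j}\bigl(\|\mu^*_{i_j}-\mu_j\|^2+d(\sigma^*_{i_j})^2/(2\pi)\bigr)$. This identification needs component $i_j$ to have negligible first- and second-moment mass outside $S_j$. But $S_j$ has radius $O\bigl(\sqrt{\log(\rho_s/\wmin)}\bigr)\sigma_j$ while nothing you establish a priori rules out $\sigma^*_{i_j}\approx\rho_s\sigma_j$; in that regime the second-moment tail $\int_{\R^d\setminus S_j}\|x-\mu_j\|^2 g^*_{i_j}$ is of order $d(\sigma^*_{i_j})^2$, which swamps the $O(d\sigma_j^2/k^c)$ precision you need to pin down $\sigma^*_{i_j}$. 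Your pigeonhole can at best show $\int_{S_j}g^*_{i_j}\approx 1$, but that only controls the moment tails up to a factor $\sqrt{1-\int_{S_j}g^*_{i_j}}\cdot\rho_s^2$, which is not small enough. Two secondary issues: the leakage estimates you invoke (Lemmas~\ref{lem:box:samecomponent}--\ref{lem:box:othercomponentstotal}) only give $O(1/d)$ errors, whereas you need errors $\ll\wmin$; and the proposition does not assume $\rho,\rho_s\le k^{O(1)}$, so you cannot simply absorb those factors in the last step.

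The paper sidesteps all of this with two devices. First, it matches every $\calG$/$\calG^*$ pair that is already $k^{-c}$-close, then among the unmatched components picks the one of \emph{smallest variance}, say $(w_j,\mu_j,\sigma_j)$, which forces $\sigma_j\le\sigma^*_\ell$ for every candidate rival---exactly the troublesome case above disappears. Second, rather than matching moments, it lower-bounds $\|f-f^*\|_{1,S}$ on a concrete sub-slab $S\subset S_j$ by a direct single-Gaussian comparison (Lemmas~\ref{lem:diff:means}--\ref{lem:diff:variances}), after first showing at most one $\mu^*_\ell$ can be near $\mu_j$ (Lemma~\ref{lem:centermapping}, using the separation of $\calG^*$) and that all far-away $\calG^*$ components contribute negligibly on $S_j$ (Lemma~\ref{lem:othercomponents}, using a region $S_j$ from Definition~\ref{def:Sj} whose half-spaces point toward the $\mu^*_\ell$'s and which gives the much stronger leakage bound $(\wmin/\rho_s)^{4c}$ of Lemma~\ref{lem:Sj}).
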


In the above proposition, $\rho_s \le \rho^2$ is a simple upper bound since we will only search for all parameters of magnitude at most $\rho$. But it can be much smaller if we have a better knowledge of the range of $\set{\sigma_j: j \in [k]}$. We note that in very recent independent work, Diakonikolas et al. established a similar statement about mixtures of Gaussians where the components have small overlap (see Appendix B in ~\cite{DKS16}). We first see how the above proposition implies Theorem~\ref{thm:upperbounds}.

%
%

\subsection{Proposition~\ref{prop:upperbounds} to Theorem~\ref{thm:upperbounds}}

The following simple lemma gives a sample-efficient algorithm to find a distribution from a net of distributions $\calT$ that is close to the given distribution. This tournament-based argument is a commonly used technique in learning distributions.

\begin{lemma}\label{lem:samplealg}
Suppose $\calT$ is a set of probability distributions over $\calX$, and we are given $m$ samples from a distribution $D$, which is $\delta$ close to some distribution $D'' \in \calT$ in statistical distance, i.e., $\norm{D-D''}_{TV} \le \delta$. Then there is an algorithm that uses $m=O(\delta^{-2} \log\abs{\calT})$ samples from $D$ and with probability at least $1-1/\abs{\calT}$ finds a distribution $D^*$ such that $\norm{D-D^*}_{TV}\le 4 \delta$.
\end{lemma}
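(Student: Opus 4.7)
The plan is to run a Scheffé-style tournament (minimum distance estimate) over the net $\calT$. For every ordered pair $D_i, D_j \in \calT$, define the Scheffé set
\[
A_{ij} = \set{x \in \calX : D_i(x) > D_j(x)} \; ,
\]
so that $d_{TV}(D_i, D_j) = D_i(A_{ij}) - D_j(A_{ij})$. Given $m = O(\delta^{-2}\log \abs{\calT})$ i.i.d.\ samples from $D$, form the empirical measure $\widehat{D}$ and set $\widehat{p}_{ij} = \widehat{D}(A_{ij})$. By Hoeffding's inequality applied to each bounded $[0,1]$-valued random variable $\I[x \in A_{ij}]$, together with a union bound over the at most $\abs{\calT}^2$ pairs, we get that with probability at least $1 - 1/\abs{\calT}$,
\[
\forall i,j \in [\abs{\calT}], \quad \abs{\widehat{p}_{ij} - D(A_{ij})} \le \delta.
\]
The algorithm outputs $D^* \in \calT$ minimizing the scoring function $r(D_i) := \max_{j} \abs{\widehat{p}_{ij} - D_i(A_{ij})}$.

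For the analysis, the key observation is that the hypothesized good distribution $D'' \in \calT$ has small score. Indeed, for every $j$,
\[
\abs{\widehat{p}_{D''j} - D''(A_{D''j})} \le \abs{\widehat{p}_{D''j} - D(A_{D''j})} + \abs{D(A_{D''j}) - D''(A_{D''j})} \le \delta + d_{TV}(D,D'') \le 2\delta,
\]
so $r(D'') \le 2\delta$ and hence $r(D^*) \le 2\delta$ as well. Taking the Scheffé set $A = A_{D^*,D''}$, we split
\[
d_{TV}(D^*, D'') = D^*(A) - D''(A) \le \abs{D^*(A) - \widehat{p}_{D^*D''}} + \abs{\widehat{p}_{D^*D''} - D''(A)} \le r(D^*) + (\delta + d_{TV}(D,D'')) \le 4\delta,
\]
and then by the triangle inequality $d_{TV}(D, D^*) \le d_{TV}(D, D'') + d_{TV}(D'', D^*) \le 5\delta$, which can be tightened to $4\delta$ by bounding $d_{TV}(D,D^*)$ directly along the Scheffé set $A_{D,D^*}$ (no term of $d_{TV}(D,D'')$ is then incurred twice).

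There is no significant technical obstacle here; this is a routine tournament argument, and the only care required is in the union bound and in chaining the triangle inequality to absorb the $\delta$-slack from both the estimation error and the guarantee $d_{TV}(D,D'')\le \delta$. The number of samples $m = O(\delta^{-2}\log\abs{\calT})$ is dictated precisely by the Hoeffding tail bound with a union bound over $\abs{\calT}^2$ Scheffé sets. Note also that we never need to compute $d_{TV}$ to any other distribution exactly—only empirical probabilities of the explicit sets $A_{ij}$, which are well-defined once the densities of distributions in $\calT$ are known.
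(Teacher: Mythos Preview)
Your approach is essentially the same Scheffé tournament as the paper's (the paper outputs the first $D_i$ passing a threshold test rather than the score minimizer, but this is cosmetic). However, as written your constants do not give $4\delta$: with empirical accuracy $\delta$ you correctly derive $r(D^*)\le 2\delta$, $d_{TV}(D^*,D'')\le 4\delta$, and hence only $d_{TV}(D,D^*)\le 5\delta$. The claimed ``tightening'' via the set $A_{D,D^*}$ does not work, because $D\notin\calT$ and so $A_{D,D^*}$ is not among the sets on which your Hoeffding/union-bound guarantee holds; you cannot invoke any estimate $\widehat p$ for it. The fix is exactly what the paper does: take the empirical accuracy to be $\delta/2$ (still $O(\delta^{-2}\log\abs{\calT})$ samples), which gives $r(D'')\le 3\delta/2$, hence $d_{TV}(D^*,D'')\le 3\delta$ and finally $d_{TV}(D,D^*)\le 4\delta$.
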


\begin{proof}
Let $\calT=\set{D_1, D_2, \dots, D_{|\calT|}}$ be the set of distributions over $\calX$.
For any $i \neq j$ let $A_{ij} \subset \calX$ be such that
\begin{equation}
\label{eq:defofaij}
   \Pr_{x \leftarrow D_i} [ x \in A_{ij}] - \Pr_{x \leftarrow D_j} [ x \in A_{ij}] = \norm{D_i - D_j}_{TV} \; .
\end{equation}

The algorithm is as follows.
\begin{itemize}
\item Use $m=O( \delta^{-2} \log \abs{\calT})$ samples to obtain estimates $p_{ij}$ satisfying
with probability at least $1-1/\abs{\calT}$ that
\begin{equation}
\label{eq:estimatespij}
\forall i,j,~ \abs{p_{ij}-\Pr_{x \leftarrow D} [ x \in A_{ij}]} \le \delta/2 \; .
\end{equation}

\item Output the first distribution $D_i \in \calT$ that satisfies
\begin{equation}
\label{eq:testofdi}
 \forall j \in \{1,\ldots,|\calT|\},~ \len[\big]{p_{ij} - \Pr_{x \leftarrow D_i} [ x \in A_{ij}]} \le 3\delta/2 \; .
\end{equation}
\end{itemize}

First, notice that by~\eqref{eq:estimatespij} and the assumption that
$\norm{D-D''}_{TV} \le \delta$, $D''$ satisfies the test
in~\eqref{eq:testofdi}.
We next observe by the definition of $A_{ij}$ in~\eqref{eq:defofaij}
that for any $i \neq j$ such that
both $D_i$ and $D_j$ pass the test, we must have $\norm{D_i-D_j}_{TV} \le 3\delta$.
This implies that the output of the algorithm must be within statistical
distance $4\delta$ of $D$, as desired.
\end{proof}

We now prove Proposition~\ref{thm:upperbounds} assuming the above Proposition~\ref{prop:upperbounds}. This follows in a straightforward manner by using the algorithm from Lemma~\ref{lem:samplealg} where $\calT$ is chosen to be a net over all possible configuration of means, variances and weights. We give the proof below for completeness.

\begin{proof}[Proof of Proposition~\ref{thm:upperbounds}]

Set $\delta := c'/(8 \rho^8 d k^{c+1})$, where $c'$ is given in Proposition~\ref{prop:upperbounds}, and $\eps=\delta/(6k d\rho)$.
We first pick $\calT$ to be distributions given by an $\eps$-net over the parameter space, and the algorithm will only output one of the mixtures in the net. Each Gaussian mixture has a standard deviation $\sigma_j \in \R$ and $k$ components each of which has a mean in $\R^d$, and weight in $[0,1]$. Further, all the means, variances are $\rho$ bounded. Hence, considering a net $\calT$ of size
$M \le \left(\rho/(\wmin \eps)\right)^{(d+2)k}$ of $\rho$-bounded well-separated mixtures with weights at least $\wmin$, we can ensure that every Gaussian mixture is $\eps$ close to some $\rho$ bounded mixture in the net in parameter distance $\dparam$. This corresponds to a total variation distance distance of at most $\delta = 6 k \sqrt{d}\rho \eps$  as well, using Lemma~\ref{app:kl}. Hence, there is a mixture of well-separated $\rho$-bounded spherical Gaussian mixture $\calG'' \in \calT$ such that $\norm{\calG'' - \calG}_{TV} \le 6k \sqrt{d}\rho \eps= \delta$.

By using Lemma~\ref{lem:samplealg}, we can use $m=O(\delta^{-2} \log\abs{\calT})=O(k^{2c+3} (d+2)^3 \rho^{16} \log(\rho k d/ \wmin))$ and with high probability, find some well-separated $\rho$-bounded mixture of spherical Gaussians $\calG^*$ such that $\norm{\calG- \calG^*}_{TV} \le 4\delta$. Proposition~\ref{prop:upperbounds} implies that the parameters of $\calG^*$ satisfy $\dparam(\calG, \calG^*) \le k^{-c}$.

\end{proof}

\subsection{Proof of Proposition~\ref{prop:upperbounds}}

To show Proposition~\ref{prop:upperbounds}, we will consider any two mixtures of well-separated Gaussians, and show that the statistical distance is at least inverse polynomial in $k$. This argument becomes particularly tricky when the different components can have different values of $\sigma_i$ e.g., instances where one component of $\calG$ with large $\sigma_i$ is covered by multiple components from $\calG^*$ with small $\sigma^*_j$ values.

For each component $j \in [k]$ in $\calG$, we define the region $S_j$ around $\mu_j$, where we hope to show a statistical distance.

\begin{definition}[Region $S_j$]\label{def:Sj}
In the notation above, for the component of $\calG$ centered around $\mu_j$, define $\ejl$ as the unit vector along $\mu^*_\ell - \mu_j$.
\begin{equation}
S_j = \set{x \in \R^d: \forall \ell \in [k] ~ \Abs{\iprod{x-\mu_j, \ejl}} \le 2c \sqrt{\log (\rhos/\wmin)} \sigma_j }.
\end{equation}

\end{definition}

The following lemma shows that most of the probability mass from $j$th component around $\mu_j$ is confined to $S_j$.
\begin{lemma} \label{lem:Sj}
\begin{equation}
\forall j \in [k], \int_{S_j} g_{\mu_j, \sigma}(x) \dx \ge 1- \left(\frac{\wmin}{\rhos}\right)^{4c}.
\end{equation}
\end{lemma}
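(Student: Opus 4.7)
The plan is to bound the complementary probability $\Pr_{x \leftarrow g_{\mu_j,\sigma_j}}[x \notin S_j]$ by a simple union bound over the $k$ half-space constraints defining $S_j$, each controlled by a one-dimensional Gaussian tail estimate.

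First I would observe that $S_j$ is a symmetric slab region: the complement is the union, over $\ell \in [k]$, of the event $B_\ell = \{x : |\iprod{x - \mu_j, \ejl}| > 2c \sqrt{\log(\rhos/\wmin)}\, \sigma_j\}$, where each $\ejl$ is a unit vector. Since $g_{\mu_j,\sigma_j}$ is spherically symmetric about $\mu_j$ with component variance $\sigma_j^2/(2\pi)$, the scalar projection $\iprod{x - \mu_j, \ejl}$ is a one-dimensional centered Gaussian with variance $\sigma_j^2/(2\pi)$, regardless of $\ell$.

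Next I would invoke the Gaussian tail bound already developed in Lemma~\ref{lem:box:helper} (equation~\eqref{eq:box:helper0}) with threshold $t = 2c\sqrt{\log(\rhos/\wmin)}$, applied to the Gaussian $(x - \mu_j)$. For $c \ge 5$ and $\rhos/\wmin \ge 1$, the threshold $t \ge 3$ so the lemma applies, giving
\[
\Pr_{x \leftarrow g_{\mu_j, \sigma_j}}[\iprod{x-\mu_j, \ejl} > 2c\sqrt{\log(\rhos/\wmin)}\sigma_j] \le \tfrac{1}{16\pi}\exp(-4c^2 \log(\rhos/\wmin)) = \tfrac{1}{16\pi}(\wmin/\rhos)^{4c^2}.
\]
Accounting for the symmetric lower tail doubles this, and a union bound over all $\ell \in [k]$ gives a total failure probability of at most $\tfrac{k}{8\pi}(\wmin/\rhos)^{4c^2}$.

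Finally, I would convert $4c^2$ to the claimed exponent $4c$ by absorbing the prefactor $k/(8\pi)$ into the slack $(\rhos/\wmin)^{4c^2 - 4c}$. Since $\wmin \le 1/k$ and $\rhos \ge 1$, we have $\rhos/\wmin \ge k$, so $(\rhos/\wmin)^{4c^2-4c} \ge k^{4c(c-1)} \ge k/(8\pi)$ for $c \ge 5$, yielding $\Pr[x \notin S_j] \le (\wmin/\rhos)^{4c}$, which is the desired bound. There is no real obstacle here; the only mild subtlety is making sure the constant $c \ge 5$ (which is given in Proposition~\ref{prop:upperbounds}) is large enough to swallow both the factor $1/(16\pi)$ and the union-bound factor $k$, which it is.
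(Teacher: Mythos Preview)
Your proposal is correct and follows essentially the same approach as the paper: both apply a union bound over the $k$ slab constraints, bound each one by a one-dimensional Gaussian tail to get an exponent quadratic in $c$, and then absorb the factor $k$ into the slack between that quadratic exponent and the claimed $4c$ using $\wmin \le 1/k$. The only cosmetic difference is that you cite Lemma~\ref{lem:box:helper} for the tail bound while the paper invokes $\Phi_{0,1}$ directly, and you get $(\wmin/\rhos)^{4c^2}$ versus the paper's $(\wmin/\rhos)^{2c^2}$, but either suffices.
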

\begin{proof}
The region $S_j$ is defined by $k$ equations, one for each of the components $\ell$ in $\calG^*$. When $x$ is drawn from the spherical Gaussian $g_{\mu_j, \sigma_j}$
$$\Pr_{x \leftarrow G_{\mu_j, \sigma_j}}\Big[\Abs{\iprod{x-\mu_j, \ejl}} > 2c \sigma_j \sqrt{\log (\rhos/ \wmin)} \Big] \le \Phi_{0,1}\left(2c \log(\rhos/\wmin)\right) \le \left(\frac{\wmin}{\rhos}\right)^{2c^2}\le \frac{1}{k} \left(\frac{\wmin}{\rhos}\right)^{4c},$$
where the last step follows since $c \ge 5$ and $\wmin \le 1/k$. Hence, performing a union bound over the $k$ components in $\calG^*$ completes the proof.
\end{proof}

The following lemma shows that components of $\calG^*$ that are far from $\mu_j$ do not contribute much $\ell_1$ mass to $S_j$.

\begin{lemma}\label{lem:othercomponents}
For a component $j \in [k]$ in $\calG$, 
and let $\ell \in [k]$ be a component of $\calG^*$ such that $\norm{\mu_j - \mu^*_\ell}_2 \ge 2c \sqrt{\log(\rhos/\wmin)} (\sigma_j+\sigma^*_\ell)$.
Then the total probability mass from the $\ell$th component of $\calG^*$ is negligible, i.e.,
\begin{equation}
 \int_{S_j} w^*_\ell g_{\mu^*_\ell, \sigma^*_\ell}(x) \dx < \ctails w^*_\ell \left(\frac{\wmin}{\rhos}\right)^{2c^2}.
\end{equation}
\end{lemma}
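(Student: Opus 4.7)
The plan is to exploit the single constraint in the definition of $S_j$ corresponding to the direction $\ejl$ from $\mu_j$ toward $\mu^*_\ell$; this alone forces the region $S_j$ to be well-separated from $\mu^*_\ell$ along that direction, which is all we need because $g_{\mu^*_\ell, \sigma^*_\ell}$ is spherical. The other $k-1$ linear constraints are only used to establish that $S_j$ is nonempty and captures the $j$th component (Lemma~\ref{lem:Sj}); they play no role here.

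First, I would upper bound the contribution of the $\ell$th component to $S_j$ by the Gaussian mass of the slab obtained by keeping only the $\ejl$ constraint, namely
\[
\int_{S_j} g_{\mu^*_\ell, \sigma^*_\ell}(x) \dx \;\le\; \int_{x\,:\,|\iprod{x-\mu_j, \ejl}| \le 2c \sqrt{\log(\rhos/\wmin)}\, \sigma_j} g_{\mu^*_\ell, \sigma^*_\ell}(x) \dx \; .
\]
Next, since $\ejl$ is the unit vector pointing from $\mu_j$ to $\mu^*_\ell$, we have $\iprod{\mu^*_\ell - \mu_j, \ejl} = \norm{\mu^*_\ell - \mu_j}_2$, and so for any $x$ in the slab above, the triangle inequality yields
\[
\Iprod{x - \mu^*_\ell, \ejl} \;\le\; 2c\sqrt{\log(\rhos/\wmin)}\, \sigma_j - \norm{\mu^*_\ell - \mu_j}_2 \;\le\; -2c \sqrt{\log(\rhos/\wmin)}\, \sigma^*_\ell,
\]
using the hypothesized separation $\norm{\mu^*_\ell - \mu_j}_2 \ge 2c\sqrt{\log(\rhos/\wmin)}(\sigma_j + \sigma^*_\ell)$.

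Finally, since $\iprod{x - \mu^*_\ell, \ejl}$ is a one-dimensional mean-zero Gaussian with variance $(\sigma^*_\ell)^2/(2\pi)$ when $x \leftarrow g_{\mu^*_\ell, \sigma^*_\ell}$, I would apply the standard one-dimensional Gaussian tail bound (e.g.\ the same $\phitail{0,1}$ estimate used in the proof of Lemma~\ref{lem:Sj}, or equivalently \eqref{eq:box:helper0} of Lemma~\ref{lem:box:helper} with $t = 2c\sqrt{\log(\rhos/\wmin)}$) to conclude
\[
\int_{S_j} g_{\mu^*_\ell, \sigma^*_\ell}(x) \dx \;\le\; \Paren{\frac{\wmin}{\rhos}}^{2c^2}.
\]
Multiplying by $w^*_\ell$ yields the claim. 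There is no real obstacle here beyond bookkeeping constants; the only thing to be careful about is that the slab argument works precisely because we chose $\ejl$ to be aligned with the direction to $\mu^*_\ell$, so the full slab separation translates into a one-dimensional tail bound on the \emph{same} spherical Gaussian.
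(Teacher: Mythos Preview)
Your proposal is correct and follows essentially the same argument as the paper: relax $S_j$ to the single slab constraint in the direction $\ejl$, use the separation assumption to deduce that every point of the slab satisfies $\iprod{\mu^*_\ell - x,\ejl}\ge 2c\sqrt{\log(\rhos/\wmin)}\,\sigma^*_\ell$, and finish with the one-dimensional Gaussian tail bound. The paper's proof is organized identically (it phrases the middle step as ``every $x\in S_j$ is far from $\mu^*_\ell$ along $\ejl$'' and then integrates over the resulting half-space), so there is no substantive difference.
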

\begin{proof}
Let $\ejl$ be the unit vector along $\mu^*_\ell - \mu_j$. We will now show that every point $x \in S_j$, is far from $\mu^*_\ell$ along the direction $e_{jl}$:
\begin{align*}
x \in S_j  \implies &~~ \Abs{\Iprod{x - \mu_j, \ejl}} \le 2c \sqrt{\log(\rhos/\wmin)} \cdot \sigma_j\\
\norm{\mu_j - \mu^*_\ell}_2 \ge 2c \sqrt{\log(\rhos/\wmin)} (\sigma_j+\sigma^*_\ell) \implies &~~  \Iprod{\mu^*_\ell - \mu_j, \ejl} \ge 2c \sqrt{\log(\rhos/\wmin)} (\sigma_j + \sigma^*_\ell)\\
\text{Hence } \forall x \in S_j&~~ \Iprod{\mu^*_\ell - x, \ejl} \ge 2c \sqrt{\log(\rhos/\wmin)} \cdot \sigma^*_\ell
\end{align*}

\noindent Hence the $\ell_1$ contribution from $g_{\mu^*_\ell, \sigma^*}$ restricted to $S_j$ is bounded as
\begin{align*}
\int_{S_j} w^*_\ell g_{\mu^*_\ell, \sigma^*_\ell}(x) \dx &\le \int_{\substack{x: \Iprod{\mu^*_\ell - x, \ejl} \ge 2c \sqrt{\log(\rhos/\wmin)} \sigma^*_\ell}} w^*_\ell g_{\mu^*_\ell, \sigma^*_\ell}(x) ~\dx \\
&\le w^*_\ell \Phi_{0,1}\left(2c \log(\rhos/\wmin)\right) \le w^*_\ell \left(\frac{\wmin}{\rhos}\right)^{2c^2},
\end{align*}
as required.
\end{proof}

The following lemma shows that there is at most one component of $\calG^*$ that is close to the component $(w_j, \mu_j, \sigma_j^2 I)$ in $\calG$.

\begin{lemma} [Mapping between centers] \label{lem:centermapping}
Suppose we are given two spherical mixtures of Gaussians $\calG$, $\calG^*$ as in Proposition~\ref{prop:upperbounds}.
Then for every $j \in [k]$ there is at most one $\ell \in [k]$ such that
\begin{equation}\label{eq:centers}
\norm{\mu_j - \mu^*_{\ell}}_2 \le 4c \sqrt{\log(\rhos/\wmin)} \sigma^*_{\ell}.
\end{equation}
\end{lemma}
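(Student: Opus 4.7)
The plan is to prove this by contradiction via a direct triangle-inequality argument, using that $\calG^*$ (like $\calG$) is required to be well-separated under the hypothesis ``as in Theorem~\ref{thm:upperbounds}''.

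Suppose, for contradiction, that there exist two distinct indices $\ell_1, \ell_2 \in [k]$ with $\ell_1 \ne \ell_2$ such that
\[
\norm{\mu_j - \mu^*_{\ell_1}}_2 \le 4c \sqrt{\log(\rhos/\wmin)} \cdot \sigma^*_{\ell_1}, \qquad \norm{\mu_j - \mu^*_{\ell_2}}_2 \le 4c \sqrt{\log(\rhos/\wmin)} \cdot \sigma^*_{\ell_2}.
\]
Applying the triangle inequality yields
\[
\norm{\mu^*_{\ell_1} - \mu^*_{\ell_2}}_2 \le \norm{\mu_j - \mu^*_{\ell_1}}_2 + \norm{\mu_j - \mu^*_{\ell_2}}_2 \le 4c \sqrt{\log(\rhos/\wmin)} \bigl( \sigma^*_{\ell_1} + \sigma^*_{\ell_2} \bigr).
\]
On the other hand, since $\calG^*$ is required to be well-separated under the assumption of Theorem~\ref{thm:upperbounds} (its parameters come from the net of well-separated mixtures used in the proof of Proposition~\ref{thm:upperbounds}), the means of $\calG^*$ must themselves satisfy
\[
\norm{\mu^*_{\ell_1} - \mu^*_{\ell_2}}_2 \;\ge\; 4c \sqrt{\log(\rhos/\wmin)} \bigl( \sigma^*_{\ell_1} + \sigma^*_{\ell_2} \bigr),
\]
giving equality throughout. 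Since equality in the triangle inequality would force $\mu_j$ to lie exactly on the segment between $\mu^*_{\ell_1}$ and $\mu^*_{\ell_2}$ with both distances saturated simultaneously, and the separation condition in the theorem is a strict open condition (or, equivalently, we may as well take the separation constant of $\calG^*$ to be slightly larger than $4c$, e.g., replace $4c$ by $4c+1$ in the well-separation of $\calG^*$ in the net construction), this is impossible. We conclude that at most one $\ell \in [k]$ can satisfy~\eqref{eq:centers}.

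The only delicate point is making sure the constants line up: the separation enforced on $\calG^*$ by the net must be at least as large as the proximity coefficient $4c$ used in~\eqref{eq:centers}. This is not really an obstacle, because the net in the proof of Theorem~\ref{thm:upperbounds} is built from well-separated mixtures using the same quantitative condition as $\calG$, and one may take the constant in the net's separation to be strictly larger than $4c$ without affecting any other bound in the argument; the triangle-inequality step then produces a genuine contradiction. No other ingredients (such as statistical-distance estimates or properties of the regions $S_j$) are needed for this lemma.
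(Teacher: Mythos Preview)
Your proof is correct and follows essentially the same approach as the paper: assume two indices $\ell_1,\ell_2$ satisfy \eqref{eq:centers}, apply the triangle inequality, and contradict the well-separatedness of $\calG^*$. The paper's proof is a two-sentence version of exactly this argument. Your extra discussion about the equality case and tweaking the constant in the net is more careful than the paper, which simply says ``contradicts the assumption''; you are right that the well-separatedness of $\calG^*$ is being used (it is implicit in ``as in Theorem~\ref{thm:upperbounds}'' and made explicit later in the proof of Proposition~\ref{prop:upperbounds}), but the constants issue is not something the paper worries about.
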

\begin{proof}
This follows from triangle inequality. 
Suppose for contradiction that there are two centers corresponding to indices $\ell_r$ with $r=1,2$, that satisfy \eqref{eq:centers}. By using triangle inequality this shows that $\norm{\mu^*_{\ell_1} - \mu^*_{\ell_2}} \le 4c \sqrt{\log(\rhos /\wmin)} (\sigma^*_{\ell_1}+\sigma^*_{\ell_2})$ which contradicts the assumption.
\end{proof}

We now proceed to the proof of the main proposition (Proposition~\ref{prop:upperbounds}) of this section.
We will try to match up components in $\calG, \calG^*$ that are very close to each other in parameter distance and remove them from their respective mixtures. Then we will consider among unmatched components the one with the {\em smallest variance}. Suppose $(w_j, \mu_j, \sigma_j)$ were this component, we will show a significant statistical distance in the region $S_j$ around $\mu_j$.

The following lemma considers two components, $G_j= (w_j, \mu_j, \sigma_j)$ from $\calG$ , and $G^*_j=(w^*_j, \mu^*_j, \sigma^*_j)$ from $\calG^*$ that have a non-negligible difference in parameters (we use the same index $j$ for convenience, since this is without loss of generality). This lemma shows that if $\sigma_j \le \sigma^*_j$, then there is some region $S \subset S_j$ where the component $G_j$ has significantly larger probability mass than $G^*_j$. We note that it is crucial for our purposes that we obtain non-negligible statistical distance in a region around $S_j$. Suppose $f_j$ and $f^*_j$ are the p.d.f.s of the two components (with weights), it is easier to lower bound $\norm{f_j-f^*_j}_1$ (e.g. Lemma 38 in \cite{MV10}). However, this does not translate to a corresponding lower bound restricted to region $S$ i.e., $\norm{f_j-f^*_j}_{1,S}$ since $f_j$ and $f^*_j$ do not represent distributions (e.g. $\norm{f_j}_1=w_j < 1$).

\newcommand{\csd}{c_1}

\begin{lemma} \label{lem:statdiff}
For some universal constant $\csd>0$, suppose we are given two spherical Gaussian components with parameters $(w_j,\mu_j, \sigma_j)$ and $(w^*_j,\mu^*_j, \sigma^*_j)$ that are $\rhos$-bounded satisfying
\begin{equation} \sigma_j \le \sigma^*_j \text{ and } \frac{\norm{\mu_{j}- \mu^*_{j}}_2}{\sigma_j} + \frac{\abs{\sigma_j - {\sigma^*_j}}}{\sigma_j} + \abs{w_j - w^*_j} \ge \gamma.  \label{eq:tvsep:cond}
\end{equation}
Then, there exists a set $S \subset S_j$ such that 
\begin{equation} \label{eq:tvseparation}
 \int_{S} \Abs{w_j g_{\mu_j, \sigma_j}(x)- w^*_j g_{\mu^*_j, \sigma^*_j}(x)} \dx > \frac{\csd \gamma^2}{d\rho_s^2}.
\end{equation}

\end{lemma}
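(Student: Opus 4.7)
The plan is to split on which of the three terms in~\eqref{eq:tvsep:cond} dominates, and in each case to construct an explicit witness set $S\subset S_j$ on which $f_1:=w_j g_{\mu_j,\sigma_j}$ and $f_2:=w^*_j g_{\mu^*_j,\sigma^*_j}$ differ substantially. Write $\Delta_\mu:=\norm{\mu_j-\mu^*_j}_2/\sigma_j$, $\Delta_\sigma:=(\sigma^*_j-\sigma_j)/\sigma_j\ge 0$ (using $\sigma_j\le\sigma^*_j$), and $\Delta_w:=\abs{w_j-w^*_j}$, so $\Delta_\mu+\Delta_\sigma+\Delta_w\ge\gamma$. The identity
\[
f_1-f_2 \;=\; (w_j-w^*_j)\,g_{\mu_j,\sigma_j} \;+\; w^*_j\,\bigl(g_{\mu_j,\sigma_j}-g_{\mu^*_j,\sigma^*_j}\bigr)
\]
will guide the choice of $S$ in each case so that exactly one of the two summands dominates.

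In the mean-gap case ($\Delta_\mu$ is largest), I would take $S$ to be a thin slab inside $S_j$ on the far side of $\mu_j$ from $\mu^*_j$, at offset $\Theta(\sigma_j)$ from $\mu_j$ in the direction $e:=(\mu_j-\mu^*_j)/\norm{\mu_j-\mu^*_j}_2$. On $S$ the difference of Gaussian exponents is bounded below by $\Omega(\Delta_\mu/\rho_s)$, so after accounting for the $(\sigma^*_j)^{-d}$-versus-$\sigma_j^{-d}$ normalization one obtains a pointwise lower bound on $g_{\mu_j,\sigma_j}-g_{\mu^*_j,\sigma^*_j}$; integrating this bound over the slab (whose Gaussian mass is $\Omega(1/\sqrt{d})$) delivers the target up to constants. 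In the variance-gap case ($\Delta_\sigma$ largest), since $\sigma^*_j\ge\sigma_j$ the set $\{x : g_{\mu_j,\sigma_j}(x) > g_{\mu^*_j,\sigma^*_j}(x)\}$ is approximately a ball around $\mu_j$ of the typical Gaussian radius $\Theta(\sigma_j\sqrt{d})$; intersected with $S_j$ it still captures $\Omega(1)$ of $g_{\mu_j,\sigma_j}$'s mass by Lemma~\ref{lem:Sj}, and the peak-height ratio $(\sigma^*_j/\sigma_j)^d \ge 1+\Omega(d\Delta_\sigma/\rho_s)$ yields the bound. In the weight-gap case, if the previous two cases do not apply (so $\Delta_\mu,\Delta_\sigma = o(\gamma/d)$), then on the typical Gaussian support inside $S_j$ the ratio $g_{\mu_j,\sigma_j}/g_{\mu^*_j,\sigma^*_j}$ is $1\pm o(\gamma)$ (the exponent differences are of order $\Delta_\mu\sqrt{d+\log(\rho_s/\wmin)} + d\Delta_\sigma$), so $\int_{S_j}\abs{f_1-f_2}\ge\Omega(\Delta_w\int_{S_j}g_{\mu_j,\sigma_j})\ge\Omega(\Delta_w)$ by Lemma~\ref{lem:Sj}.

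The hardest step is the variance-gap case. When $\rho_s$ is large and $\sigma^*_j\gg\sigma_j$, the witness ball where $g_{\mu_j,\sigma_j}>g_{\mu^*_j,\sigma^*_j}$ can partly leak outside $S_j$, and since $g_{\mu^*_j,\sigma^*_j}$ spreads its mass over a ball of radius $\sim\rho_s\sqrt{d}$ while $g_{\mu_j,\sigma_j}$ is confined to a ball of radius $\sim\sqrt{d}$, comparing their masses directly costs a factor $\rho_s^{-2}$ (essentially a $(\sigma_j/\sigma^*_j)^2$ penalty). This is precisely the origin of the $\rho_s^{-2}$ in the target bound; the $1/d$ factor similarly arises from the slab thickness in the mean-gap case. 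Calibrating the case-split thresholds so that in the weight-gap case the noise from nonzero $\Delta_\mu,\Delta_\sigma$ is genuinely dominated by $\Delta_w$ is somewhat delicate and relies again on Lemma~\ref{lem:Sj} to restrict attention to the typical Gaussian support inside $S_j$.
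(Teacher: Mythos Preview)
Your high-level plan (case split on which of the mean, variance, or weight discrepancy dominates) matches the paper's. But the paper's execution differs in a way that matters, and your single-region witness sets in the mean and variance cases have a genuine gap.

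\textbf{The gap.} In your mean-gap case you pick one slab on the far side of $\mu_j$ and argue that $g_{\mu_j,\sigma_j}-g_{\mu^*_j,\sigma^*_j}$ is pointwise large there. But the lemma asks about $w_j g_{\mu_j,\sigma_j}-w^*_j g_{\mu^*_j,\sigma^*_j}$, and the weights can compensate: if, say, $w^*_j/w_j$ happens to be close to the ratio $g_{\mu_j,\sigma_j}/g_{\mu^*_j,\sigma^*_j}$ on your slab, then $f_1-f_2$ is nearly zero there even though $g_1-g_2$ is large. Your decomposition $f_1-f_2=(w_j-w^*_j)g_1+w^*_j(g_1-g_2)$ does not save you, because the two summands can have opposite signs and comparable magnitude (the first is of order $\Delta_w$, the second of order $w^*_j\Delta_\mu$, and ``$\Delta_\mu$ is largest'' only gives $\Delta_w\le\Delta_\mu$, not $\Delta_w\ll w^*_j\Delta_\mu$). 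The same interference issue arises in your variance-gap ball.

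\textbf{How the paper handles this.} The paper's key device is to use a \emph{pair} of symmetric regions rather than one. For the mean case (Lemma~\ref{lem:diff:means}) it takes two strips $L_j,R_j\subset S_j$ that are reflections of each other through $\mu_j$; by symmetry $\int_{L_j}f_1=\int_{R_j}f_1$, but a direct calculation gives $\int_{R_j}f_2\ge(1+\Omega(\Delta_\mu\sigma_j^2/(\sigma^*_j)^2))\int_{L_j}f_2$. The weight $w^*_j$ cancels in this ratio, so the argument is completely insensitive to $\Delta_w$. For the variance case (Lemma~\ref{lem:diff:variances}) the same trick is applied with two concentric annuli $T_1,T_2$ near the typical radius $\sqrt{d}\,\sigma_j$; here the paper first reduces to $\mu_j=\mu^*_j$ by paying a small $L_1$ error (via Lemma~\ref{app:kl}), which is why the thresholds are cascaded rather than a simple ``largest wins'': one enters the mean case when $\Delta_\mu\gtrsim\gamma^2/d$, and only after ruling this out does one compare variances and then weights. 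The $1/d$ in the final bound comes from this calibration (the KL bound in Lemma~\ref{app:kl} carries a $\sqrt{d}$), not from a slab thickness as you suggest.

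In short: your case analysis and your weight-only case are fine, but for the mean and variance cases you need an argument that decouples from the weight. The paper's two-region symmetry trick is one clean way to do this; a single-region pointwise bound as you wrote it does not suffice.
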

Before we proceed, we present two lemmas which lower bound the statistical distance when the means of the components differ, or if the means are identical but the variances differ.

\begin{figure}
\centering

\begin{minipage}{0.4\textwidth}
\centering
    \includegraphics[width=1.2 \linewidth]{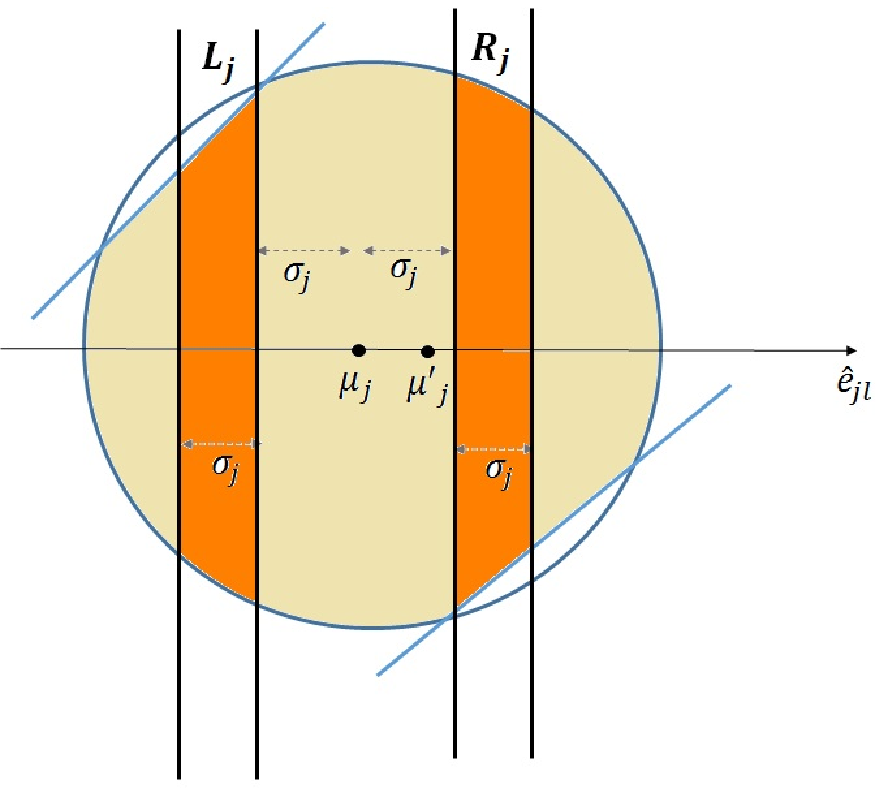}
  \end{minipage}%
 \begin{minipage}{0.5\textwidth}
 \centering
    \includegraphics[width= 0.8 \linewidth]{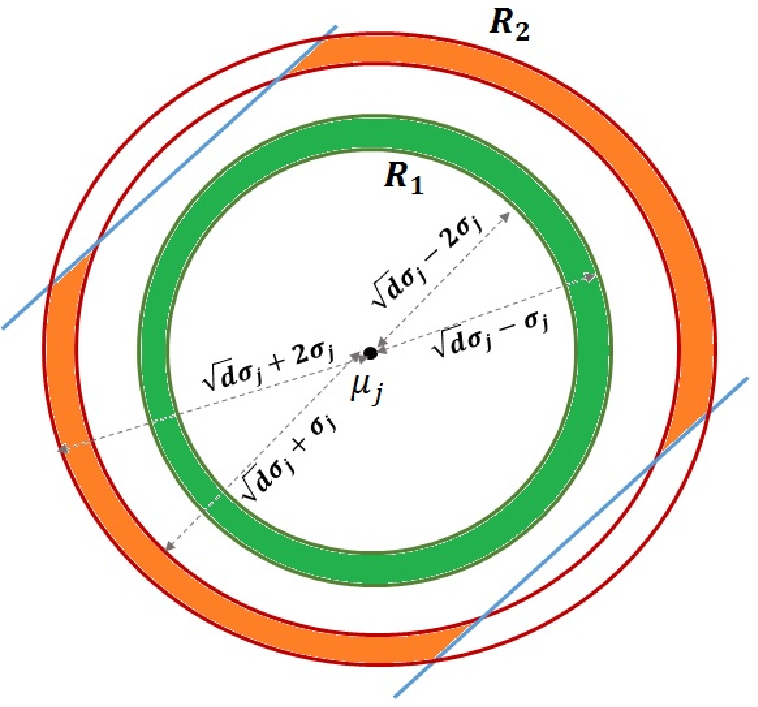}
  \end{minipage}

\caption{{\em Regions $S_j$ used in Lemma~\ref{lem:diff:means} and Lemma~\ref{lem:diff:variances}:} {\small Lemma~\ref{lem:diff:means} shows a statistical difference in one of the strips $L_j$ or $R_j$. Lemma~\ref{lem:diff:variances} shows a statistical difference in one of the annular strips $R_1$ or $R_2$}} \label{fig:regions}

\vspace{-10pt}
\end{figure}

%
%
%

\begin{lemma}[Statistical Distance from Separation of Means]\label{lem:diff:means}
In the notation of Lemma~\ref{lem:statdiff}, suppose $\norm{\mu_j - \mu^*_j}_2 \ge \gamma \sigma_j$.
Then, there exists a set $S \subset S_j$ such that
\begin{equation} \label{eq:diff:means}
 \int_{S} \Abs{w_j g_{\mu_j, \sigma_j}(x)- w^*_j g_{\mu^*_j, \sigma^*_j}(x)} \dx > \frac{\gamma \sigma^2_j}{8{(\sigma^*_j)}^2}> \frac{\gamma}{8 \rho_s^2}.
\end{equation}
\end{lemma}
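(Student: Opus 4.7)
The plan is to exploit the asymmetry between $g_{\mu_j,\sigma_j}$ and $g_{\mu^*_j,\sigma^*_j}$ along the direction from $\mu_j$ to $\mu^*_j$, by comparing their masses on two mirror-image strips inside $S_j$. After a translation we may assume $\mu_j = 0$; set $e = \ejj = (\mu^*_j - \mu_j)/D$ with $D := \|\mu^*_j - \mu_j\|_2 \ge \gamma \sigma_j$, so that $\mu^*_j = D e$. For $x \in \R^d$ write $y = \langle x, e\rangle$. Define two symmetric thin strips
\[
L_j = \{x \in S_j : y \in [-\sigma_j,\, -\sigma_j/2]\}, \qquad R_j = \{x \in S_j : y \in [\sigma_j/2,\, \sigma_j]\},
\]
as shown in the left panel of Figure~\ref{fig:regions}. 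We will take $S$ to be whichever of $L_j, R_j$ gives the larger contribution to the integral in~\eqref{eq:diff:means}.

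Observe that $S_j$ (Definition~\ref{def:Sj}) is defined via $|\langle x - \mu_j, \ejl \rangle | \le \cdots$ and is therefore symmetric under $x \mapsto -x$; moreover $g_{\mu_j, \sigma_j}$ is also symmetric under this reflection because $\mu_j = 0$. Since the reflection sends $L_j \leftrightarrow R_j$, we obtain $\int_{L_j} g_{\mu_j, \sigma_j}(x)\dx = \int_{R_j} g_{\mu_j, \sigma_j}(x)\dx$, so that $w_j g_{\mu_j,\sigma_j}$ contributes equally to both strips. The difference between the strips therefore comes entirely from $g_{\mu^*_j,\sigma^*_j}$, which is biased toward $R_j$ because $\mu^*_j = De$ lies on that side. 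Concretely, using the same change of variables,
\[
\int_{R_j} g_{\mu^*_j,\sigma^*_j}(x)\dx - \int_{L_j} g_{\mu^*_j,\sigma^*_j}(x)\dx
= \int_{R_j} \bigl(g_{\mu^*_j,\sigma^*_j}(x) - g_{\mu^*_j,\sigma^*_j}(-x)\bigr)\dx,
\]
and a direct expansion of the Gaussian formula gives
\[
g_{\mu^*_j,\sigma^*_j}(x) - g_{\mu^*_j,\sigma^*_j}(-x)
= \frac{2}{(\sigma^*_j)^d}\, e^{-\pi(\|x\|^2 + D^2)/(\sigma^*_j)^2}\, \sinh\!\left(\frac{2\pi y D}{(\sigma^*_j)^2}\right),
\]
which is nonnegative on $R_j$.

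The main quantitative step is to lower-bound this integral by a quantity of order $\gamma \sigma_j^2/(\sigma^*_j)^2$. Using $\sinh(t) \ge t$ for $t \ge 0$, the integrand is at least $\frac{4\pi y D}{(\sigma^*_j)^{d+2}} e^{-\pi(\|x\|^2+D^2)/(\sigma^*_j)^2}$; since $y \ge \sigma_j/2$ on $R_j$ and $D \ge \gamma \sigma_j$, the $y D$ factor already contributes $\gamma \sigma_j^2/2$. Standard Gaussian estimates on the remaining $\|x\|^2$-integral, combined with the fact that $R_j$ captures a constant fraction of the perpendicular mass of $g_{\mu^*_j,\sigma^*_j}$ (the width of $S_j$ perpendicular to $e$ is $\Omega(\sigma_j)$), yield $\int_{R_j}g_{\mu^*_j,\sigma^*_j} - \int_{L_j}g_{\mu^*_j,\sigma^*_j} \ge \gamma \sigma_j^2/(4(\sigma^*_j)^2)$ after absorbing constants. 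By the triangle inequality,
\[
\int_{L_j} \bigl|w_j g_{\mu_j,\sigma_j} - w^*_j g_{\mu^*_j,\sigma^*_j}\bigr|\dx + \int_{R_j} \bigl|w_j g_{\mu_j,\sigma_j} - w^*_j g_{\mu^*_j,\sigma^*_j}\bigr|\dx \;\ge\; w^*_j\bigl(\tsint_{R_j} g_{\mu^*_j,\sigma^*_j} - \tsint_{L_j} g_{\mu^*_j,\sigma^*_j}\bigr),
\]
so the larger of the two strip-integrals is at least half this, which (after invoking the implicit lower bound on $w^*_j$ from the $\rho_s$-bounded setup) gives the claimed $\gamma\sigma_j^2/(8(\sigma^*_j)^2)$.

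The main obstacle will be the regime where $D \gg \sigma^*_j$: there the exponential factor $e^{-\pi D^2/(\sigma^*_j)^2}$ makes the $\sinh$-based estimate useless. In that regime the argument must instead exploit that $g_{\mu^*_j,\sigma^*_j}$ is exponentially small throughout $L_j \cup R_j$ while $g_{\mu_j,\sigma_j}$ has constant mass on each strip, so that $|w_j g_{\mu_j,\sigma_j} - w^*_j g_{\mu^*_j,\sigma^*_j}|$ is essentially $w_j g_{\mu_j,\sigma_j}$ on those strips; since $D \ge \gamma\sigma_j$ and $D \gg \sigma^*_j$ forces $\gamma \sigma_j^2/(\sigma^*_j)^2 \le \gamma$, this direct bound comfortably dominates the target $\gamma\sigma_j^2/(8(\sigma^*_j)^2)$.
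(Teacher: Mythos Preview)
Your setup --- translate so $\mu_j=0$, pick two mirror-image strips $L_j,R_j\subset S_j$ along $e=\ejj$, use the reflection symmetry of $S_j$ and of $g_{\mu_j,\sigma_j}$ --- is exactly the paper's. The divergence is in how you extract the lower bound from the asymmetry of $g^*:=g_{\mu^*_j,\sigma^*_j}$, and that is where your argument breaks.

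The paper does \emph{not} try to lower bound $\int_{R_j} g^* - \int_{L_j} g^*$ directly. Instead it observes the pointwise ratio $g^*(-x)/g^*(x)=\exp(-2\pi\langle x,\mu^*_j\rangle/(\sigma^*_j)^2)\le e^{-\beta}$ uniformly for $x\in R_j$, with $\beta\asymp \gamma\sigma_j^2/(\sigma^*_j)^2$, hence $\int_{R_j} g^*\big/\int_{L_j} g^*\ge e^{\beta}$. Combining this with $A:=\int_{L_j} g_{\mu_j,\sigma_j}=\int_{R_j} g_{\mu_j,\sigma_j}\ge 1/8$ forces either $A/\!\int_{L_j} g^*$ or $\int_{R_j} g^*/A$ to exceed $1+\Omega(\beta)$, and in either case the $L_1$ gap on that strip is $\ge \Omega(\beta)\cdot A$. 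The crucial point is that the final lower bound is anchored to $A$, the mass of $g_{\mu_j,\sigma_j}$ on a strip tailored to $\sigma_j$, which is an absolute constant.

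Your $\sinh$ route instead produces a lower bound proportional to $e^{-\pi D^2/(\sigma^*_j)^2}\int_{R_j} g_{0,\sigma^*_j}$, and then a further factor of $w^*_j$ after the triangle inequality. Both factors are uncontrolled: (i) $\int_{R_j} g_{0,\sigma^*_j}$ can be arbitrarily small when $\sigma^*_j\gg\sigma_j$ (the strip has width $\sigma_j/2$ and $S_j$'s side-constraints are on the $\sigma_j$ scale), and your ``constant fraction of perpendicular mass'' claim fails in exactly this regime; (ii) there is no lower bound on $w^*_j$ coming from the ``$\rho_s$-bounded setup'' --- $\rho_s$ constrains variances, not weights. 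Your large-$D$ fallback does not rescue (i) or (ii). The repair is precisely the paper's ratio-and-dichotomy step, which never needs a lower bound on either $w^*_j$ or the mass of $g^*$ on the strips.
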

\begin{proof}
Without loss of generality we can assume that $\mu_j=0$ (by shifting the origin to $\mu_j$).
Let us consider two regions
\[R_j = S_j \cap \set{x: \iprod{x, \ejj} \in [\tfrac{1}{\sqrt{2 \pi}}\sigma_j, \tfrac{2}{\sqrt{2 \pi}}\sigma_j]} ~\text{ and }~ L_j = S_j \cap \set{x: \iprod{x, \ejj} \in [-\tfrac{2}{\sqrt{2 \pi}}\sigma_j-\tfrac{1}{\sqrt{2 \pi}}\sigma_j]} .\]
Due to the symmetry of $S_j$ (Definition~\ref{def:Sj}), it is easy to see that
\begin{equation}\int_{L_j} f(x) \dx= \int_{R_j} f(x) \dx \ge \frac{1}{8}. \label{eq:equalmass}
\end{equation}
On the other hand, we will now show that
\begin{equation}
 \int_{R_j} f^*(x) \dx > \left( 1+ \frac{ 2\gamma \sigma^2_j}{{\sigma^*_j}^2} \right) \int_{L_j} f^*(x) \dx \label{eq:diffmassfromdiffmeans}
\end{equation}

For any point $x$, let $x_{jl}= \iprod{x, \ejj}$. Further, $\mu^*_j= \norm{\mu^*_j} \ejj$.
We first note that $x \in R_j \Longleftrightarrow (-x) \in L_j$ due to the symmetric definition of $S_j$. Hence,
\begin{align*}
\int_{L_j} f^*(x) \dx &= \int_{R_j} \frac{f^*(-x)}{f^*(x)} f^*(x) \dx = \int_{R_j} \exp\Paren{\frac{-\pi}{{\sigma^*_j}^2} (\norm{-x - \mu^*_j}_2^2 - \norm{x - \mu^*_j}_2^2)} f^*(x) \dx \\
&= \int_{R_j} \exp\left( -2 \pi \iprod{x, \mu^*_j}/{\sigma^*_j}^2 \right) f^*(x) \dx = \int_{R_j} \exp\left( -2\pi \norm{\mu_j}_2 \iprod{x, \ejj}/{(\sigma^*_j)}^2 \right) f^*(x) \dx\\
& \le \exp(- \sqrt{2 \pi} \sigma^2_j \gamma/{\sigma^*_j}^2) \int_{R_j} f^*(x) \dx, ~~ \text{since }\iprod{x, \ejj} \ge \frac{\sigma_j}{\sqrt{2\pi}} \text{ for } x \in R_j.
\end{align*}
\noindent Hence from \eqref{eq:equalmass} and \eqref{eq:diffmassfromdiffmeans} either
\[ \frac{\int_{L_j} f(x) \dx}{\int_{L_j} f^*(x) \dx} > 1+ \frac{ \sqrt{2 \pi}\gamma \sigma^2_j}{{\sigma^*_j}^2} ~\text{ or }~ \frac{\int_{R_j} f^*(x) \dx}{\int_{R_j} f(x) \dx} >  1+ \frac{ \sqrt{2 \pi} \gamma \sigma^2_j}{{\sigma^*_j}^2}, \]
which completes the proof, since $\int_{L_j} f(x) \dx= \int_{R_j} f(x) \dx \ge 1/8$.
\end{proof}

\begin{lemma} [Statistical Distance from Different Variances] \label{lem:diff:variances}
In the notation of Lemma~\ref{lem:statdiff}, suppose $\mu_j =\mu^*_j$, but $\sigma_j=(1-\eta){\sigma^*_j}$.
Then, there exists a set $S \subset S_j$ such that
\begin{equation} \label{eq:diff:variances}
 \int_{S} \Abs{w_j g_{\mu_j, \sigma_j}(x)- w^*_j g_{\mu^*_j, \sigma^*_j}(x)} \dx > \min\set{c_2 \eta,1},
\end{equation}
where $c_2>0$ is a universal constant.
\end{lemma}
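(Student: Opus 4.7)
The plan is to reduce the problem to a one-dimensional question by projecting onto a single axis, where the total-variation distance between two centered 1D Gaussians with different variances is $\Theta(\min\{\eta,1\})$. Without loss of generality translate so $\mu_j=\mu_j^*=0$, and write $\sigma:=\sigma_j$, $\sigma^*:=\sigma_j^*=\sigma/(1-\eta)$.

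First, I would examine the marginals along $e_1$, which are centered 1D Gaussians $\tilde g,\tilde g^*$ with variances $\sigma^2/(2\pi)$ and $(\sigma^*)^2/(2\pi)$. Their density ratio equals $1$ at $|x_1|=r_*=\Theta(\sigma^*)$ and flips sign there. A direct CDF calculation --- in spirit the same symmetry-based computation used in the proof of Lemma~\ref{lem:diff:means} --- exhibits an inner interval $I_1\subseteq[-r_*,r_*]$ on which $\tilde g$ has excess mass $\Omega(\min\{\eta,1\})$, and an outer interval $I_2\subseteq[-3\sigma^*,3\sigma^*]\setminus[-r_*,r_*]$ on which $\tilde g^*$ has excess mass $\Omega(\min\{\eta,1\})$; these are the two annular strips $R_1,R_2$ in the figure once lifted back to $d$ dimensions.

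Next, I would lift the winning interval $I\in\{I_1,I_2\}$ to the $d$-dimensional slab $T=\{x:x_1\in I\}$ and take $S=T\cap S_j\subseteq S_j$. The key localization step is that both $g_{0,\sigma}$ and $g_{0,\sigma^*}$ put only negligible mass outside $S_j$: for $g_{0,\sigma}$ this is exactly Lemma~\ref{lem:Sj}, and for $g_{0,\sigma^*}$ the same tail-bound argument applies, since each defining half-space of $S_j$ lies at distance at least $2c(1-\eta)\sqrt{\log(\rhos/\wmin)}\,\sigma^*$ from the origin, still $\Omega(\sqrt{\log(\rhos/\wmin)})$ standard deviations of $g_{0,\sigma^*}$ once $c$ is a large enough universal constant. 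Consequently $\lvert\int_S g_{0,\sigma}-\int_S g_{0,\sigma^*}\rvert\geq c'\min\{\eta,1\}$, and the triangle inequality converts this to the desired lower bound on $\int_S\lvert w_j g_{0,\sigma}-w_j^* g_{0,\sigma^*}\rvert$: the case $w_j\approx w_j^*$ passes through with a $\min(w_j,w_j^*)$ factor absorbed into $c_2$, while if $|w_j-w_j^*|$ is already comparable to $\min\{\eta,1\}$, that contribution alone suffices.

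The main obstacle I foresee is the regime $\eta\to 1$ (where $\sigma\ll\sigma^*$), since the concentration of $g_{0,\sigma^*}$ inside $S_j$ becomes delicate there. The remedy is to use only the inner interval $I_1$: the narrower Gaussian $g_{0,\sigma}$ keeps a constant fraction of its mass in a $\sigma$-neighborhood of $0$ (well inside $S_j$), while the wider Gaussian $g_{0,\sigma^*}$ places only $O((\sigma/\sigma^*)^d)=O((1-\eta)^d)$ of its mass there, yielding a gap of $\Omega(1)$ and matching the $\min\{c_2\eta,1\}$ bound in the $\eta\to 1$ regime.
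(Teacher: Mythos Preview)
Your approach is sound and takes a genuinely different route from the paper. The paper exploits radial symmetry: it fixes two thin spherical annuli $T_1,T_2$ at radii $\approx\sqrt{d/(2\pi)}\,\sigma_j$ (where the $d$-dimensional Gaussian concentrates), notes each carries $\Theta(1)$ mass under $g_{0,\sigma_j}$ by $\chi^2$ concentration, and then argues by contradiction --- if both ratios $\int_{T_i}w_j g_{0,\sigma}\big/\int_{T_i}w_j^* g_{0,\sigma^*}$ were within $1\pm\eta$, the inner and outer annuli would force incompatible two-sided estimates for $w_j/(w_j^*(1-\eta)^d)$. You instead project to a single coordinate, invoke the standard $\Theta(\eta)$ TV gap between centered one-dimensional Gaussians of different scale, lift the witnessing interval to a slab, and then intersect with $S_j$.

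Each route carries a small cost. The paper's annuli are bounded sets of radius $O(\sqrt{d}\,\sigma_j)$ and therefore sit inside $S_j$ automatically once $\sqrt{d}\lesssim c\sqrt{\log(\rho_s/w_{\min})}$; your slab is unbounded in $d-1$ directions, so you must invoke Lemma~\ref{lem:Sj} (and its analogue for $g_{0,\sigma^*}$) to pass to $S=T\cap S_j$ --- you handle this correctly. Conversely, your one-dimensional reduction is more elementary and dimension-free, avoiding the $\chi^2$ bookkeeping entirely, and the paper's ratio argument cleanly folds the unknown weights $w_j,w_j^*$ into the contradiction rather than splitting into cases. Two small slips in your write-up: in the $\eta\to1$ paragraph the wider Gaussian places $O(1-\eta)$, not $O((1-\eta)^d)$, of its mass in an axis-aligned slab of width $O(\sigma)$, since only one coordinate is constrained --- still $o(1)$, so your conclusion survives; and ``absorbing $\min(w_j,w_j^*)$ into $c_2$'' is not literally possible since $c_2$ is meant to be universal --- though the paper's own proof is equally loose on exactly this point.
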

\begin{proof}
Without loss of generality, let us assume that $\mu_j = \mu^*_j=0$ (by shifting the origin), and $\sigma_j=1$ (by scaling).
Let $R_1 = [\tfrac{1}{\sqrt{2\pi}}(\sqrt{d} - 2) , \tfrac{1}{\sqrt{2\pi}}(\sqrt{d}-1)]$ and $R_2 = [\tfrac{1}{\sqrt{2\pi}}(\sqrt{d} +1), \tfrac{1}{\sqrt{2\pi}}(\sqrt{d}+2)]$ and consider two annular strips $T_1 = \set{x : \norm{x} \in R_1}$ and $T_2 = \set{x :  \norm{x} \in R_2}$.

First, we note that using standard facts about the $\chi^2(d)$ distribution,
for some appropriately chosen universal constant $c_2>0$ we have
\[\int_{T_1} f(x) \dx \ge  c_2~,~~ \text{ and }~~ \int_{T_2} f(x) \dx \ge  c_2 .\]

We will show that there is a significant statistical difference between $f, f^*$ in either $T_1$ or $T_2$. Let us assume for contradiction that
\begin{equation}\label{eq:diff:variances:contr}
(1-\eta) \le \frac{\int_{T_1} f(x) \dx}{\int_{T_1} f^*(x) \dx}  \le (1+\eta) ~~ \text{ and } ~~(1-\eta) \le \frac{\int_{T_2} f(x) \dx}{\int_{T_2} f^*(x) \dx}  \le (1+\eta).
\end{equation}

Let us consider the range of values that $f$ takes in $T_1$ and $T_2$. We
\begin{align}
\forall x \in T_1 ~&  \frac{w_j}{\sigma_j^d} \exp\left(-(\sqrt{d} - 1)^2/2 \right) \le f(x) \le \frac{w_j}{\sigma_j^d} \exp\left(-(\sqrt{d}-2)^2/2 \right)\\
\forall x \in T_2 ~&  \frac{w_j}{\sigma_j^d} \exp\left(-(\sqrt{d}+2)^2/2 \right) \le f(x) \le \frac{w_j}{\sigma_j^d} \exp\left(-(\sqrt{d}+1)^2/2 \right)  \\
\forall x_1 \in T_1, x_2 \in T_2,~ & \frac{f(x_1)}{f(x_2)} \ge \exp\left(-\tfrac{1}{2} \left((\sqrt{d}-1)^2) - (\sqrt{d} +1)^2 \right)\right) \ge \exp(2 \sqrt{d}) \ge e^2.   \label{eq:enoughgapforcontradiction}
\end{align}

Let $A(r)$ be the $d-1$ dimensional volume of $\set{x \in S_j: \norm{x}=r}$. From \eqref{eq:diff:variances:contr}, we have that
\begin{align*}
\frac{\int_{T_1} f(x) \dx}{\int_{T_1} f^*(x) \dx} &= \frac{w_j \int_{r \in R_1} \exp(- r^2/ 2) A(r) \dr}{w^*_j (1-\eta)^{d}\int_{r \in R_1} \exp(- (1-\eta )^2r^2/ 2) A(r) \dr} \\
&=\frac{w_j \int_{r \in R_1} \exp(- r^2/ 2) A(r) \dr}{w^*_j (1-\eta)^{d} \int_{r \in R_1} \exp\left((2\eta-\eta^2) r^2/2 \right) \cdot \exp(-r^2/ 2) A(r) \dr}
\\
&\ge \frac{w_j}{w^*_j (1-\eta)^{d}} \cdot \exp\left(-\eta (\sqrt{d} -1)^2/2 \right).\\
\text{Hence, }~ \frac{w_j}{w^*_j (1-\eta)^{d}} &\le (1+\eta) \exp\left(\eta (\sqrt{d} -1)^2/2 \right).
\end{align*}
\noindent Using a similar argument for $T_2$ we get
$$ \frac{w_j}{w^*_j (1-\eta)^{d}} \ge (1-\eta) \exp\left(\eta (\sqrt{d} +1)^2/2 \right). $$
Combining the two we get
\begin{align*}
(1+\eta) \exp\left(\eta (\sqrt{d} -1)^2/2 \right) &\ge (1-\eta) \exp\left(\eta (\sqrt{d} +1)^2/2 \right)\\
\exp\left(2 \eta \sqrt{d} \right)\le (1+\eta)^2 \le e^{2 \eta},
\end{align*}
which is a contradiction.
\end{proof}

\begin{proof}[Proof of Lemma~\ref{lem:statdiff}]
Let $f_j(x)=w_j g_{\mu_j, \sigma_j}(x), f^*_j(x)=w^*_j g_{\mu^*_j, \sigma^*_j}(x)$ be the p.d.f. of the two components. From \eqref{eq:tvsep:cond} we know that there is some non-negligible separation in the parameters. Hence either the weights, or means, or variances are separated by at least $\Omega(\gamma)$.
We will now consider three cases depending on whether there is non-negligible separation in the means, variances or weights (in that order).

Set $\gamma_1:=\gamma^2/(256d)$, $\gamma_2:=c_2\gamma_1/2$ where $c_2$ is the constant in Lemma~\ref{lem:diff:variances}. Suppose there is some non-negligible separation in the means i.e., $\norm{\mu_j - \mu^*_j} \ge \gamma_2 \sigma_j/\sqrt{2\pi}$.
Lemma~\ref{lem:diff:means} shows that there is a set $S \subset S_j$ that has
$$\norm{f-f^*}_{1,S} \ge \frac{\gamma_2 \sigma^2_j}{ 8 {\sigma^*_j}^2} \ge \frac{\csd \gamma^2}{d\rho_s^2}, $$
where $\csd= c_2/(512 \sqrt{2\pi})$.

Otherwise we have that $\norm{\mu_j - \mu^*_j} < \gamma_2 \sigma_j/\sqrt{2\pi}$. Let $g^*_j$ be the p.d.f. of the Gaussian component $(w^*_j, \mu_j, \sigma^*_j)$. From Lemma~\ref{app:kl} we know that $\norm{f^*_j-g^*_j}_1 \le \gamma_2$. Now, $f_j$ and $g^*_j$ are p.d.f. of components that have the same mean.

If $\sigma^*_j -\sigma_j> \gamma_1 \sigma_j$. From Lemma~\ref{lem:diff:variances} we see that
$$\norm{f-g^*}_{1,S} \ge c' \gamma_1  \implies \norm{f-f^*}_{1,S} \ge c' \gamma_1 - \gamma_2 \ge c' \gamma_1/2.$$

Finally if $ \sigma^*_j -\sigma_j \le \gamma_1 \sigma_j$, then one can bound the statistical distance over $S=S_j$ between two components with equal means and variances, but different weights:
\begin{align*}
\norm{f_j-f^*_j}_{1,S} &\ge \norm{f_j-g^*_j}_{1,S} - \norm{g^*_j-f^*_j}=  \int_{S} \Abs{w_j g_{\mu_j, \sigma_j}(x)- w^*_j g_{\mu_j, \sigma_\ell}(x)} \dx - \gamma_2 \\
&\ge \int_{S} \Abs{w_j g_{\mu_j, \sigma_j}(x)- w^*_j g_{\mu_j, \sigma_j}(x)} \dx - \int_S \Abs{w^*_j g_{\mu_j, \sigma_j}(x)- w^*_j g_{\mu_j, \sigma^*_j}(x)} \dx - \gamma_2  \\
&\ge \abs{w_j - w^*_j} \int_S g_{\mu_j, \sigma_j}(x) - 2\sqrt{d \gamma_1}  - \gamma_2 ~~\qquad \text{( from Lemma~\ref{app:kl})}\\
& \ge \frac{2\gamma}{3} \left(1- (\wmin/\rhos)^{4c} \right) - \frac{\gamma}{8} - \frac{\gamma}{8} \ge \gamma/4 ,
\end{align*}
where the last line follows from Lemma~\ref{lem:Sj}.
\end{proof}

We now complete the proof of the main proposition of this section, that lower bounds the statistical difference between two mixtures of well-separated Gaussians which differ in their parameters.
\begin{proof}[Proof of Proposition~\ref{prop:upperbounds}]

We follow the approach outlined earlier in the section. We first match up components from $(w_j, \mu_j, \sigma_j)\in \calG$ and $(w^*_j, \mu^*_j, \sigma^*_j) \in \calG^*$ that are at most $\gamma= k^{-c}$ close in parameter distance. From triangle inequality and well-separatedness of $\calG, \calG^*$, it is easy to see that one component from $\calG$ (or $\calG^*$) cannot be matched to more than one component from $\calG^*$ (or $\calG$ respectively). Let $\set{1,2,\dots,k'}$ be the indices of the components in $\calG$ and $\calG^*$ that are unmatched, for $\ell \in \set{k'+1,\dots,k}$, let the component $(w_\ell, \mu_\ell, \sigma_\ell)$ of $\calG$ be matched to the component $(w^*_\ell, \mu^*_\ell, \sigma^*_\ell)$ of $\calG^*$. Since the parameter distance $\dparam(\calG, \calG^*) \ge k^{-c+1}$, we have that $k' \ge 1$.

Consider among the unmatched components in both $\calG$ and $\calG^*$, the one with the smallest variance: let this component be $(w_j,\mu_j,\sigma_j)$ from $\calG$ without loss of generality. From Lemma~\ref{lem:centermapping}, we know that at most one component of $\calG^*$ satisfies \eqref{eq:centers}. Again, without loss of generality, let $(w^*_j,\mu^*_j,\sigma^*_j)$ be this component of $\calG^*$ (if it exists). Hence
$$ \forall \ell \in [k'], \ell \ne j, ~ \norm{\mu_j - \mu^*_\ell}_2 > 4c \sqrt{\log(\rhos/\wmin)} \sigma^*_\ell \ge 2c \sqrt{\log(\rhos/\wmin)} (\sigma_j+\sigma^*_\ell),$$
where the last inequality follows because $\sigma^2_j$ is the smallest variance.\footnote{This is in fact the only point where we use the fact that we picked the small variance Gaussian.}
Further, all the matched components $\ell \in \set{k'+1, \dots k}$ in $\calG^*$ are all far from $\mu_j$ since
\begin{align*}
\forall \ell \in \set{k'+1, \dots, k}, ~ \norm{\mu_j - \mu^*_\ell}_2 &\ge \norm{\mu_j - \mu_\ell}_2 - \norm{\mu_\ell - \mu^*_\ell}\\
& \ge 4c \sqrt{\log(\rhos/\wmin)} (\sigma_j+\sigma_\ell) - \gamma \\
&\ge 4c \sqrt{\log(\rhos/\wmin)} (\sigma_j+\sigma^*_\ell) - \gamma(1+4c\sqrt{\log(\rhos/\wmin)})\\
&\ge 2c \sqrt{\log(\rhos/\wmin)} (\sigma_j+\sigma_\ell),
\end{align*}
since $\gamma< 2/(5\rhos)$. From Lemma~\ref{lem:othercomponents}, there is negligible contribution from the rest of the components (using $c \ge 3$):
\begin{equation}\sum_{\ell \in [k]: \ell \ne j} w^*_\ell \int_{S_j} g_{\mu^*_\ell, \sigma^*_\ell}(x) \dx < \ctails \left(\frac{\wmin}{\rhos}\right)^{4c}. \label{eq:propupper2}
\end{equation}

\noindent From Lemma~\ref{lem:statdiff} there is a subset $S \subset S_j$ where there is significant statistical distance
$$ \int_{S} \Abs{w_j g_{\mu_j, \sigma_j}(x)- w^*_j g_{\mu^*_j, \sigma^*_j}(x)} \dx > \frac{\csd \gamma^2}{d\rho_s^2}.$$

\noindent Combining the last two equations, we have
\begin{align*}
\norm{f - f^*}_1 &\ge \int_{S}  \Abs{ \sum_{\ell} w_\ell g_{\mu_\ell, \sigma_\ell}(x) - \sum_{\ell} w^*_\ell  g_{\mu^*_\ell, \sigma^*_\ell}(x) } ~\dx\\
 & \ge \int_{S} \Abs{ w_j g_{\mu_j, \sigma_j}(x)- w^*_j g_{\mu^*_j, \sigma^*_j}(x) }  \dx - \sum_{\ell \ne j} \int_{S} w^*_\ell g_{\mu^*_\ell, \sigma^*_\ell}(x)\dx \\
 & \ge  \frac{\csd \gamma^2}{d\rho_s^{4}} - \ctails \left(\frac{1}{\rhos k}\right)^{4c} \ge \frac{c' \gamma^2}{d\rho_s^{2}}
\end{align*}
for some universal constant $c'>0$, since $\gamma \ge k^{-c}$.
\end{proof}


\section{Efficient Algorithms in Low Dimensions} \label{sec:algorithms}


In this section, we give a computationally efficient algorithm that works in $d=O(1)$ dimensions, and learns the mixture of $k$ spherical Gaussians even when the separation between centers is $O(\sigma)$. In comparison, previous algorithms need separation of the order of $\Omega(\sigma \sqrt{\log (k/\delta)})$. We prove the following theorem.
\begin{theorem}\label{thm:lowdims}
There exists universal constants $c>0$ such that the following holds. Suppose we are given samples from a mixture of spherical Gaussians $\calG=\set{\parens{w_j, \mu_j , \sigma_j}: j \in [k]}$, where the weights and covariances are known, such that $\norm{\mu_j} \le \rho~ \forall j \in [k]$ and
\begin{equation}
\forall i,j \in [k], i \ne j:~ \norm{\mu_i - \mu_j}_2 \ge c \Paren{\sqrt{d}+\sqrt{\log (\rhos \rhow)} }\cdot (\sigma_i + \sigma_{j}).
\end{equation}
For any $\delta>0$, there is an algorithm using time (and samples) $\poly\Paren{\wmin^{-1},\delta^{-1},\rho, \rho_\sigma}^{O(d)}$ that with high probability recovers the means up to $\delta$ accuracy i.e. finds for each $j \in [k]$, $\tmu_j$ such that 
$\norm{\tmu_j  - \mu_j}_2 \le \delta \sigma_j$.
\end{theorem}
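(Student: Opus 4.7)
The plan is to reduce Theorem~\ref{thm:lowdims} to producing, in time $\poly_d(k, 1/\wmin, \rho, \rhos)$, initializers that meet hypothesis~\eqref{eq:iterative:initializer} of Theorem~\ref{thm:boxiterative}; once such initializers are in hand, the iterative algorithm refines them to accuracy $\delta$ in $\poly(d,k,1/\delta)$ additional time. The key leverage is that the required initializer accuracy $\eps_0 = c_0/\min\{d,k\}^{5/2}$ is \emph{independent of $\delta$} and, when $d=O(1)$, is a fixed $1/\poly(k)$ quantity, so a brute-force grid search becomes affordable.

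Concretely, I would construct an $\tfrac{1}{2}\eps_0 \smin$-net $\mathcal{N}$ of the ball of radius $\rho$ in $\R^d$; since $d$ is constant, $|\mathcal{N}| = (\rho/(\eps_0\smin))^d = \poly(\rho,k,\rhos)$. For each $z \in \mathcal{N}$, use samples to estimate
\[
\widehat{p}(z) \;\approx\; \Pr_{y \leftarrow \calG}\bigl[\norm{y - z}_2 \le \smin\bigr].
\]
A Chernoff+union bound argument shows that $\poly(|\mathcal{N}|, 1/\wmin)$ samples suffice to estimate $\widehat{p}$ at every net point to within a small multiplicative error. Then run a greedy selection: pick $z_1 \in \mathcal{N}$ maximizing $\widehat{p}$, discard every net point within distance $\tfrac{c}{3}(\sqrt{d}+\sqrt{\log(\rhos\rhow)})\smax$ of $z_1$, and repeat $k$ times to produce $z_1,\dots,z_k$. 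These are then fed as initializers into Theorem~\ref{thm:boxiterative}.

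The main obstacle is certifying correctness of the greedy step, i.e.\ exhibiting a relabeling $\pi$ with $\norm{z_{\pi(j)} - \mu_j}_2 \le \eps_0 \sigma_j$ for all $j\in[k]$. For this I would invoke Lemma~\ref{lem:smallleakage}: the separation hypothesis of Theorem~\ref{thm:lowdims} is exactly the condition~\eqref{eq:leakage:condition} of that lemma, so at any $x$ within $4\sigma_j\sqrt{d}$ of $\mu_j$ the mass of $\calG$ is dominated (up to $\exp(-\Omega(d))$ leakage) by component $j$ alone. Consequently $\widehat{p}(z) = \Theta(w_j)$ for every $z$ with $\norm{z-\mu_j}_2 \le \smin$, while $\widehat{p}(z) = O(\wmin \cdot \exp(-\Omega(d)))$ for any $z$ in the bounded ball that is farther than $\tfrac{c}{3}(\sqrt{d}+\sqrt{\log(\rhos\rhow)})\smax$ from every true mean. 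Since the greedy deletion radius is strictly smaller than the inter-mean separation, erasing the neighborhood of $z_1$ cannot remove net points near any other $\mu_j$; iterating yields $k$ distinct net points, each $\eps_0\sigma_j$-close to a unique $\mu_j$. Applying Theorem~\ref{thm:boxiterative} with these initializers then produces $\tmu_j$ with $\norm{\tmu_j - \mu_j}_2 \le \delta \sigma_j$ in $O(\log\log(d/\delta))$ Newton iterations and $\poly(d,k,1/\delta,1/\wmin,\rho,\rhos)$ total time, completing the proof.
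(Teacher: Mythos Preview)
Your two-stage strategy---find coarse initializers by searching a net, then refine via Theorem~\ref{thm:boxiterative}---is exactly the paper's approach (see Theorem~\ref{thm:initializer} and Proposition~\ref{prop:initial}). The difference lies in the score function used to identify near-mean net points. The paper does not use probability-mass-in-a-ball plus greedy deletion; it instead tests each net point for being an \emph{approximate local maximum} of the density (Definition~\ref{def:approx-max}), using estimates of $f$, $f'$, and $f''$, and then groups the surviving points by single-linkage clustering.

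There is a concrete gap in your greedy step. Your claim that the deletion radius $\tfrac{c}{3}(\sqrt{d}+\sqrt{\log(\rhos\rhow)})\smax$ is strictly smaller than every inter-mean separation is false: two means with $\sigma_i=\sigma_j=\smin$ are only guaranteed to be $2c(\sqrt{d}+\sqrt{\log(\rhos\rhow)})\smin$ apart, so once $\rhos>6$ the deletion ball around $z_1$ can swallow the net points near another mean. Shrinking the radius to a multiple of $\smin$ does not fix this either: if $\sigma_j$ is large, non-deleted points at distance $\Theta(\sigma_j)$ from $\mu_j$ can still have $\widehat{p}$ exceeding the value near a mean $\mu_i$ with small $w_i(\smin/\sigma_i)^d$, so the greedy step re-selects component $j$. (This is tied to a second imprecision: $\widehat{p}(z)\approx w_j(\smin/\sigma_j)^d$ near $\mu_j$, not $\Theta(w_j)$ as you wrote; it is exactly this $\sigma_j$-dependence that makes cross-component comparisons delicate.) The paper's derivative-based test sidesteps the issue because the conditions on $\norm{f'(x)}_2/f(x)$ and $f''(x)$ already certify that $x$ lies within $\eps_0\sqrt{d}\smin$ of \emph{some} mean, uniformly over components (Lemmas~\ref{lem:initial:notmaxima} and~\ref{lem:initial:notcritical}); no deletion radius is needed, and single-linkage clustering then separates the surviving points using only the separation hypothesis.
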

In the above theorem, when both $\rho_w, \rho_\sigma=O(1)$ as in the case of uniform mixtures, this corresponds to a separation of order $\Omega(\sqrt{d})$.

The above theorem follows by applying the guarantees of the iterative algorithm (Theorem~\ref{thm:boxiterative}) along with a computationally efficient procedure that finds appropriate initializers. The following theorem shows how 
to find reasonable initializers for $\mu_j, \sigma_j, w_j$ for each of the $k$ components. 

\begin{theorem}\label{thm:initializer}
Let $c_0 \ge 2$ be any constant, and suppose $\eps_0 = \exp(-c_0 d)$.
There is an algorithm running in time $\Paren{\frac{\rho d}{\eps_0^3}}^{O(d)} \poly(1/\wmin)$ that given samples from a $\rho$-bounded mixture of $k$ spherical Gaussians $\calG=\set{(w_j, \mu_j,\sigma_j): j \in [k]}$ in $d$ dimensions satisfying
\begin{equation}
\forall i \ne j \in [k], ~ \norm{\mu_i - \mu_j}_2 \ge 4 c_0 \Paren{\sqrt{d}+\sqrt{\log (\rho_w \rho_\sigma)}}(\sigma_i+\sigma_j),
\end{equation}
can find with high probability $\set{(\tmu_j, \tsigma_j, \tw_j): j \in [k]}$ s.t.
$$\forall j \in [k], \norm{\tmu_j - \mu_j}_2 \le \eps_0 \sigma_j \sqrt{d}, ~ \abs{\tsigma_j -\sigma_j} \le \eps_0 \sigma_j, \text{ and } \abs{\tw_j - w_j} \le \eps_0 w_j.$$
\end{theorem}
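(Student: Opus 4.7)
The plan is a brute-force grid search that locates each component to within order $\sigma_j\sqrt{d}$ of its mean, followed by empirical-moment refinement on samples restricted to a well-chosen ball around each coarse estimate. The budget $(\rho d/\eps_0^3)^{O(d)}$ permits exhaustive enumeration over an $\eps$-net of the parameter space.

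First, I draw $N = \poly(\rho, d, \eps_0^{-1}, \wmin^{-1})$ samples and build an $\eps$-net $\Net$ of $\set{z \in \R^d : \norm{z}_2 \le \rho}$ of spacing $\eps = \sigma_{\min}/\sqrt{d}$ together with a multiplicative $\eps_0$-net $\Sigma$ of $[1/\rho,\rho]$; then $\abs{\Net}\cdot\abs{\Sigma} = (\rho^2\sqrt{d}/\eps_0)^{O(d)}$. For each $(z,\sigma) \in \Net \times \Sigma$ compute
\[
\hat p(z,\sigma) := \frac{1}{N}\,\#\set{\ell \in [N] : y_\ell \in B(z,\, 4\sigma\sqrt{d})}.
\]
By standard Gaussian concentration, every sample lies within $2\sigma_j\sqrt{d}$ of its own mean with high probability, so the grid pair $(z^{\star}_j, \sigma^{\star}_j)$ closest to $(\mu_j, \sigma_j)$ satisfies $\hat p(z^{\star}_j, \sigma^{\star}_j) \ge w_j - \eps_0$, whereas any $(z,\sigma)$ whose ball $B(z,4\sigma\sqrt{d})$ does not meet any concentration region $B(\mu_j,2\sigma_j\sqrt{d})$ has $\hat p(z,\sigma)$ exponentially small.

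Next, I greedily select pairs in decreasing order of $\hat p$, subject to $\hat p \ge \wmin/2$, discarding at each step all remaining $(z',\sigma')$ with $\norm{z - z'}_2 \le 2 c_0 (\sigma+\sigma')(\sqrt{d}+\sqrt{\log(\rhow\rhos)})$. The separation hypothesis together with the two-sided bound above guarantees that exactly $k$ pairs $(z_1,s_1),\ldots,(z_k,s_k)$ are selected, one per component, with $\norm{z_j-\mu_j}_2 \le \sigma_j\sqrt{d}$ and $s_j \in [\sigma_j/2,\, 2\sigma_j]$. For each $j$, I then compute the empirical weight, empirical mean, and empirical second moment of the samples in $B(z_j,\, 4 s_j\sqrt{d})$, and output these as $(\tw_j, \tmu_j, \tsigma_j)$. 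Applying Lemma~\ref{lem:smallleakage} with $C=4c_0$ (so $\exp(-C^2 d/8) \le \eps_0$ for $c_0$ a sufficiently large absolute constant), the total mass contributed to that ball by components $i \ne j$ is at most $\eps_0 w_j$, and the analogous bound holds for the $i\ne j$ contributions to the first and second moments. Combining with Chernoff concentration for the nearly pure component $j$ yields the claimed $\eps_0$-accuracy on weight, mean, and variance.

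The main obstacle is the first stage: ruling out spurious peaks of $\hat p$ at pairs $(z,\sigma)$ whose center $z$ sits in a gap between components' concentration shells, or whose scale $\sigma$ is substantially larger than every $\sigma_j$ (so that $B(z,4\sigma\sqrt d)$ engulfs several components at once). This reduces, via Lemma~\ref{lem:smallleakage} and a packing argument of the same flavor as those in Section~\ref{sec:amplify}, to bounding how many of the disjoint concentration balls $B(\mu_j, 2\sigma_j\sqrt{d})$ can simultaneously lie inside $B(z, 4\sigma\sqrt{d})$ under the separation assumption; the constant $4 c_0$ in the hypothesis is chosen precisely so that this packing bound is tight enough for the greedy decoding to succeed.
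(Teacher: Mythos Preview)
Your scheme has a genuine gap in the first stage. The greedy selection by decreasing $\hat p(z,\sigma)$ can be fooled by a pair $(z,\sigma)$ with $\sigma$ near the top of the grid $\Sigma \subset [1/\rho,\rho]$: if $\sigma \gg \max_j \sigma_j$, the ball $B(z,4\sigma\sqrt d)$ may engulf the concentration regions of several components at once, giving $\hat p(z,\sigma)$ close to the \emph{sum} of their weights and hence strictly larger than any single $w_j$. Concretely, take $k=2$ with $w_1=w_2=1/2$, $\sigma_1=\sigma_2=1$, means at minimum separation, and $\rho$ large; the pair with $z=(\mu_1+\mu_2)/2$ and $\sigma$ large enough to cover both components has $\hat p \approx 1$, while each single-component pair has $\hat p \approx 1/2$. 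Greedy picks the engulfing pair first, and your discarding rule (whose radius scales with $\sigma$) then eliminates both correct pairs, leaving only one survivor rather than two. A packing bound on how many components fit inside a large ball does not rescue this: it upper-bounds the count, but even two is already enough to break the decoding.

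The paper avoids this by a different first-stage criterion: instead of counting mass in balls of variable scale, it estimates the density $f$ and its first two derivatives on a fine net, and declares a point an \emph{approximate local maximum} when $f$ is large, $\|f'\|$ is small, and the Hessian $f''$ is sufficiently negative definite (Definition~\ref{def:approx-max}). The Hessian test is what rules out spurious locations without reference to any scale parameter: Lemma~\ref{lem:initial:notmaxima} shows that any $x$ with $\|x-\mu_j\|_2>\sigma_j\sqrt{d/\pi}$ for every $j$ has $\tr f''(x)>0$, so it can never be an approximate local maximum. Together with Lemma~\ref{lem:initial:notcritical} this forces every approximate local maximum to lie within $\eps_0\sigma_j\sqrt d$ of some $\mu_j$; single-linkage clustering then recovers the $k$ groups. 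The paper afterwards extracts $\sigma_j$ from the log-ratio $\log\bigl(f(\tmu_j)/f(y)\bigr)$ at a probe point $y$ (Lemma~\ref{lem:initial:sigma}), and $w_j$ from a normalized ball count (Lemma~\ref{lem:initial:w}). Your refinement stage via restricted empirical moments is reasonable in spirit and close to the paper's weight estimate, but it only becomes usable after the coarse localization succeeds, which your first stage as written does not guarantee.
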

Note that the above theorem also finds initializers for the weights and variances when they are unknown. Hence, Theorem~\ref{thm:lowdims} will also apply to the setting with unknown weights and variances if we get similar guarantees for Theorem~\ref{thm:boxiterative} (see Remark~\ref{remark:unknown}). 

To prove Theorem~\ref{thm:initializer}, we will first find estimates $\tmu_j$ for the means $\mu_j$ (Proposition~\ref{prop:initial}), and then obtain good estimates $\tsigma_j, \tw_j$ for each $j \in [k]$. Proposition~\ref{prop:initial} already suffices in the case of known weights and variances.

\begin{proposition}\label{prop:initial}
In the notation of Theorem~\ref{thm:initializer}, there is an algorithm running in $\Paren{\frac{\rho}{\eps_0^3 \wmin}}^{O(d)}$ time that finds w.h.p.
$\tmu_1, \dots, \tmu_k$ s.t. $\forall j \in [k], \norm{\tmu_j - \mu_j}_2 \le \eps_0 \sigma_j \sqrt{d}$.
\end{proposition}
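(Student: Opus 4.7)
The plan is to perform a grid search over both location and scale. I would construct a $(\eps_0 \sigma_{\min}\sqrt{d}/100)$-net $\calN$ of the ball $\set{x \in \R^d : \norm{x}_2 \le \rho}$, together with a logarithmic net $\calR = \set{(1+\eps_0)^t/\rho : t = 0, 1, \dots, O(\eps_0^{-1}\log \rho^2)}$ over candidate standard deviations. Since $\sigma_{\min} \ge 1/\rho$, the product net satisfies $|\calN \times \calR| = (\rho/\eps_0)^{O(d)}$. For each pair $(z, s) \in \calN \times \calR$, use $N = \poly(d, \log \rho, \log(1/\eps_0), 1/\wmin)$ samples to form the empirical estimate $\hat{p}(z, s)$ of $p(z, s) := \Pr_{Y \sim \calG}[Y \in B(z, 2s\sqrt{d})]$; a Chernoff bound combined with a union bound yields $\max_{(z,s)} \Abs{\hat{p}(z, s) - p(z, s)} \le \wmin/8$ with high probability. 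Declare a pair $(z, s)$ \emph{heavy} if $\hat{p}(z, s) \ge \wmin/4$.

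For the pair $(z_j^\star, s_j^\star)$ closest to $(\mu_j, \sigma_j)$ in the net, standard Gaussian concentration gives $\int_{B(z_j^\star, 2s_j^\star \sqrt{d})} g_{\mu_j, \sigma_j}(y) \dy \ge 1 - \exp(-\Omega(d))$, while Lemma~\ref{lem:smallleakage} bounds the total leakage from other components integrated over the same ball by $\exp(-\Omega(d)) w_j$; hence $p(z_j^\star, s_j^\star) \ge \wmin/2$ and this pair is heavy. Conversely, if $z$ is at distance exceeding $5s\sqrt{d}$ from every $\mu_i$ with $\sigma_i \in [s/2, 2s]$, then a case analysis on the ratio $\sigma_i/s$ (Gaussian concentration when $\sigma_i \le 2s$ so the Gaussian's bulk misses the ball, and the volume bound $\Vol(B(z, 2s\sqrt{d}))/\sigma_i^d = \exp(-\Omega(d))$ when $\sigma_i \gg s$) yields $p(z, s) \le \exp(-\Omega(d)) < \wmin/8$. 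Thus every heavy pair sits within $5s\sqrt{d}$ of some $\mu_j$ whose $\sigma_j$ is comparable to $s$. I would then cluster the heavy pairs via the rule $(z, s) \sim (z', s')$ iff $\norm{z - z'}_2 \le 20\max(s, s')\sqrt{d}$; the separation assumption prevents heavy pairs belonging to different true components from being linked, placing clusters in bijection with $\set{1, \dots, k}$.

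The main obstacle is that the grid-based argument only localizes each cluster's $z$ to within $O(\sigma_j\sqrt{d})$ of $\mu_j$, which is much coarser than the required accuracy $\eps_0 \sigma_j\sqrt{d}$. To refine, within each cluster pick any heavy pair $(z^\star, s^\star)$ and output $\tmu_j$ as the empirical mean of the samples lying in $B(z^\star, 10 s^\star\sqrt{d})$. The true conditional mean equals $\mu_j$ plus two bias terms: (i) the truncation bias from the asymmetric cut of the $j$th Gaussian, bounded by $\sigma_j \exp(-\Omega(d))$ via tail estimates, and (ii) the leakage bias from other components integrated over the ball, bounded by $\rho \exp(-\Omega(c_0^2 d))$ via Lemma~\ref{lem:smallleakage}. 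For $c_0 \ge 2$ together with the assumed separation $4c_0(\sqrt{d}+\sqrt{\log \rho_w\rhos})(\sigma_i+\sigma_j)$, both bias terms are at most $\eps_0 \sigma_j\sqrt{d}/3$, and the empirical fluctuation of the sample mean contributes at most $O(\sigma_j \sqrt{d/(N\wmin)})$, which is made $\le \eps_0 \sigma_j\sqrt{d}/3$ by taking $N = \poly(1/\eps_0, 1/\wmin)$. The overall running time, dominated by enumeration over $\calN \times \calR$ and the sample processing, is $(\rho/(\eps_0^3 \wmin))^{O(d)}$, as claimed.
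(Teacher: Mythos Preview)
Your approach is genuinely different from the paper's. The paper estimates the density $f$ and its first two derivatives on a fine net, defines an \emph{approximate local maximum} via conditions on $f(x)$, $\norm{f'(x)}_2$, and the Hessian $f''(x)$ (Definition~\ref{def:approx-max}), and shows via Lemmas~\ref{lem:initial:notmaxima}--\ref{lem:initial:goodstart} that the approximate local maxima are precisely the net points within $\eps_0\sigma_j\sqrt{d}$ of some $\mu_j$. Single-linkage clustering on this set then gives one representative per component directly at the target accuracy; no separate refinement step is needed, and no search over scales is performed.

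Your mass-in-balls approach has a real gap in the converse direction. You assert that if $(z,s)$ is heavy then $z$ lies within $5s\sqrt d$ of some $\mu_j$ with $\sigma_j\in[s/2,2s]$, via a case split on $\sigma_i/s$. But your hypothesis only places $z$ far from those $\mu_i$ with $\sigma_i\in[s/2,2s]$; nothing rules out a component with $\sigma_i\ll s$ centred near $z$. Such a component deposits essentially its entire weight $w_i\ge\wmin$ into $B(z,2s\sqrt d)$, so $(z,s)$ is heavy for \emph{every} $s\gtrsim\sigma_i$ in your scale net, up to $s\approx\rho$. Your linkage rule $\norm{z-z'}\le 20\max(s,s')\sqrt d$ then merges heavy pairs at the largest scales across distinct true means whenever $\norm{\mu_i-\mu_j}\lesssim\rho\sqrt d$, which the separation hypothesis (scaling with $\sigma_i+\sigma_j$, not with $\rho$) does not prevent; the claimed bijection between clusters and components therefore breaks. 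A repair would have to pin the scale down (for instance by taking, per spatial cluster, the smallest $s$ that is heavy), and even then your refinement ball reaches points up to roughly $30\sigma_j\sqrt d$ from $\mu_j$, well outside the $4\sigma_j\sqrt d$ region on which Lemma~\ref{lem:smallleakage} controls leakage, so the bias bound you quote does not follow from that lemma as stated.
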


We first start with expressions for the first derivative (gradient) and second derivative (Hessian) at a point $x$ in terms of the model parameters. 
\begin{align}
f'(x)= \grad f(x) &= - 2\pi \sum_{j=1}^k \frac{w_j}{\sigma_j}\cdot g_{\mu_j, \sigma_j}(x) \cdot \frac{(x - \mu_j)}{\sigma_j} \label{eq:pdf:grad}\\
f''(x)=\grad^2 f(x) &= 4\pi^2 \sum_{j=1}^k \frac{w_j}{\sigma_j^{2}}\cdot g_{\mu_j, \sigma_j}(x) \cdot \left( \frac{1}{\sigma_j^2}(x - \mu_j)^{\otimes 2} - \frac{1}{2\pi} I_{d \times d} \right) \label{eq:pdf:hessian}.
\end{align}
Note that using $\poly_d(k,\rho)$ samples, we have access to the p.d.f. $f(x)$ and its derivatives $f'=\grad f$, $f''=\grad^2 f$ up to $1/\poly_d(k)$ accuracy at any point $x$. 

The algorithm will consider a $\delta$-net of points, and find ``approximate local-maxima'' of the p.d.f., which are defined as follows.
\anote{Changed the derivative condition.}
\begin{definition}[Approximate Local Maximum]\label{def:approx-max}
Consider a mixture of $k$ spherical Gaussians in $\R^d$ with parameters $\set{(w_j,\mu_j, \sigma_j):j \in [k]}$ and p.d.f. $f$. Then $x \in \R^d$ is an approximate local-maxima iff 
\begin{align} \label{eq:approx-max}
&\text{(i) } f(x) \ge \frac{\wmin}{ 2 \smax^d} ,&~ \text{(ii) }\norm{f'(x)}_2 \le \frac{\pi \eps_0 \sqrt{d} \smin}{4\smax^2} f(x) ,\\
&\text{(iii) } f''(x)= \grad^2 f (x) \preceq -\frac{\pi}{2\smax^2} f(x) I_{d \times d} & .\nonumber
\end{align}
\end{definition}

To show the above proposition, we will show that all approximate local-maxima are close to one of the means $\mu_j$ (Lemma~\ref{lem:initial:notmaxima} and Lemma~\ref{lem:initial:notcritical}), and there is at least one such approximate local maxima near each mean $\mu_j$ (Lemma~\ref{lem:initial:goodstart}). Further, since the parameters are separated, this will allow us to pick all approximate local-maxima in a net, cluster them geometrically and pick one such point from each cluster to get good initializers for each mean $\mu_j$. These statements will also allow for some slack to tolerate estimation errors.

We start with some inequalities that use the separation between means to show the p.d.f. and the first few ``moments'' near one of the means $\mu_j$ is dominated by the the $j$th component. By applying Lemma~\ref{lem:smallleakage} we get that at any point $x \in \R^d$ such that $\norm{x-\mu_j}_2 \le \sigma_j \sqrt{d/\pi} $
\begin{align}
\sum_{i \ne j \in [k]}w_i g_{\mu_i, \sigma_i}(x) &< w_j g_{\mu_j,\sigma_j}(x) \exp(-c_0 d) \parens[\Big]{\frac{\smin}{\smax}}^2. \label{eq:initial:leak0}\\
\sum_{i \ne j \in [k]}w_i \cdot \frac{\norm{x-\mu_i}_2}{\sigma_i} \cdot g_{\mu_i, \sigma_i}(x) &< w_j g_{\mu_j,\sigma_j}(x) \exp(-c_0 d)\cdot \parens[\Big]{\frac{\smin}{\smax}}^4.\label{eq:initial:leak1}\\
\sum_{i \ne j \in [k]}w_i \cdot \frac{\norm{x-\mu_i}^2_2}{\sigma^2_i} \cdot g_{\mu_i, \sigma_i}(x) &< w_j g_{\mu_j,\sigma_j}(x) \exp(-c_0 d)\cdot\parens[\Big]{\frac{\smin}{\smax}}^4 \label{eq:initial:leak2}.
\end{align}

The following lemma shows that any point that is far from all of the means is {\em not} an approximate local maximum (does not satisfy condition (iii) of Def.~\ref{def:approx-max}).

\begin{lemma}\label{lem:initial:notmaxima}
In the notation of Proposition~\ref{prop:initial}, for any $x \in \R^d$ that satisfies
$$ \forall j \in [k] ~~\norm{x-\mu_j}_2 >  \sigma_j \sqrt{\frac{d}{\pi}}, \exists u \in \R^d \text{ s.t. }u^T H_{x} u >0, $$
where where $H_{x}=f''(x)$ represents the Hessian evaluated at $x$.
Hence, such a point $x$ is not an approximate local maxima. 
\end{lemma}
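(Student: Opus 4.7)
The plan is to show that under the hypothesis $\|x - \mu_j\|_2 > \sigma_j\sqrt{d/\pi}$ for every $j \in [k]$, the Hessian $H_x = f''(x)$ has strictly positive trace, which immediately implies that it has at least one positive eigenvalue, and hence there exists a unit vector $u$ (an eigenvector for that positive eigenvalue) with $u^T H_x u > 0$. This in turn contradicts condition (iii) of Definition~\ref{def:approx-max}, since $-\frac{\pi}{2\smax^2} f(x) I_{d\times d}$ is negative semidefinite (because $f(x) \geq 0$), so no point satisfying $u^T H_x u > 0$ can be an approximate local maximum.

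Concretely, I would start from the closed-form expression~\eqref{eq:pdf:hessian}:
\[
H_x \;=\; 4\pi^2 \sum_{j=1}^k \frac{w_j}{\sigma_j^{2}}\, g_{\mu_j, \sigma_j}(x) \left( \frac{(x - \mu_j)^{\otimes 2}}{\sigma_j^2} - \frac{1}{2\pi} I_{d \times d} \right),
\]
and take the trace of both sides. Since $\tr\bigl((x-\mu_j)^{\otimes 2}/\sigma_j^2\bigr) = \|x-\mu_j\|_2^2/\sigma_j^2$ and $\tr(I_{d\times d}) = d$, each summand contributes
\[
\frac{4\pi^2 w_j}{\sigma_j^{2}}\, g_{\mu_j,\sigma_j}(x)\left( \frac{\|x-\mu_j\|_2^2}{\sigma_j^2} - \frac{d}{2\pi} \right).
\]
By the assumed condition $\|x-\mu_j\|_2 > \sigma_j \sqrt{d/\pi}$, we have $\|x-\mu_j\|_2^2/\sigma_j^2 > d/\pi > d/(2\pi)$, so the bracketed scalar is strictly positive for every $j$. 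All of the prefactors $w_j, \sigma_j^{-2}, g_{\mu_j,\sigma_j}(x)$ are positive as well, so the trace is strictly positive.

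A positive trace means that at least one eigenvalue of the symmetric matrix $H_x$ is strictly positive; taking $u$ to be a corresponding eigenvector gives $u^T H_x u > 0$, which proves the stated inequality. Since condition (iii) of Definition~\ref{def:approx-max} would force $u^T H_x u \le -\frac{\pi}{2\smax^2} f(x)\|u\|_2^2 \le 0$, we conclude that $x$ cannot be an approximate local maximum. There is no real technical obstacle here beyond the elementary trace computation; the one thing worth noting is that this part of the argument does \emph{not} invoke the separation hypothesis or Lemma~\ref{lem:smallleakage} at all, since the trace of every summand is individually positive. The separation hypothesis will only be needed in the companion lemmas (Lemma~\ref{lem:initial:notcritical} and Lemma~\ref{lem:initial:goodstart}) to rule out near-critical points between clusters and to exhibit approximate maxima close to each $\mu_j$.
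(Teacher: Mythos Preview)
Your proof is correct and follows essentially the same approach as the paper: both compute $\tr(H_x)$ term-by-term, use the hypothesis $\|x-\mu_j\|_2^2 > \sigma_j^2 d/\pi$ to show each summand contributes positively, and conclude that some eigenvalue (hence some Rayleigh quotient) is positive. The paper phrases the last step via a random Gaussian vector $v$ (using $\E[v^T H_x v]/\E[\|v\|^2] = \tr(H_x)/d$) and records the slightly stronger quantitative bound $u^T H_x u \ge 2\pi f(x)\|u\|^2/\smax^2$, but your direct eigenvalue argument is equivalent and suffices for the lemma as stated.
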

\begin{proof}
Let $H_{x,j} = (x-\mu_j)^{\otimes 2} - \sigma_j^2 I/(2\pi)$.
Then  
$$\forall j \in [k],~\tr(H_{x,j})= \norm{x - \mu_j}^2 - \frac{d \sigma_j^2}{2\pi} >  \frac{d \sigma_j^2}{2\pi} ~~\text{ since } \frac{\norm{x-\mu_j}_2}{\sigma_j} > \sqrt{\frac{d}{\pi}}.$$ 
Let $v$ a random unit standard Gaussian vector drawn from $\calN(0,1)^d$.
\begin{align*}
\frac{\E[v^T H_{x} v]}{\E [\norm{v}_2^2]}&= \frac{\tr(H_{x})}{d}= \frac{4 \pi^2}{d} \sum_{j=1}^k \frac{w_j}{\sigma_j^{d+4}}\cdot \exp\Paren{- \frac{\pi \norm{x-\mu_j}^2}{\sigma_j^2}} \cdot \tr(H_j) \\
& >\frac{2 \pi}{d} \sum_{j=1}^k \frac{w_j}{\sigma_j^d}\exp\Paren{- \frac{\pi \norm{x-\mu_j}^2}{\sigma_j^2}} \cdot \frac{d}{\sigma_j^2} > 2\pi \cdot f(x) \cdot \frac{1}{\max_j \sigma_j^2}.
\end{align*}
Hence, there is a direction $v^T H_{x} v \ge 2 \pi f(x) \norm{v}_2^2/\smax^2$.
\end{proof}

The following lemma shows that we cannot have approximate local maxima (or more generally, critical points) whose distance from $\mu_j$ is between $[\eps_0 \sqrt{d} \smin,  \sqrt{d/\pi}\cdot \sigma_j]$. Hence, together with Lemma~\ref{lem:initial:notmaxima}, this shows that every approximate local maximum is within $\eps_0 \sqrt{d} \smin$ from one of the true means. 

\begin{lemma}\label{lem:initial:notcritical}
In the notation of Proposition~\ref{prop:initial}, for any $\eps_0 = 2\exp(-c_0 d)$ with $c_0 \ge 2$, suppose $x \in \R^d$ satisfies for some component $j \in [k]$, $\eps_0 \sqrt{d} \smin< \norm{x-\mu_j}_2 \le  \sqrt{d/\pi} \sigma_j$, then 
\[ \norm{f'(x)}_2=\norm{\grad f(x)}_2 > \frac{\eps_0 \sqrt{d} \smin}{4\sigma_j^2} \cdot f(x) \ge \frac{\eps_0 \sqrt{d} \smin}{4\smax^2} \cdot f(x) . \]
\end{lemma}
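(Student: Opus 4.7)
The plan is to decompose the gradient $f'(x)$ from \eqref{eq:pdf:grad} into the contribution of component $j$ and the sum of contributions from all other components, show that the $j$th component's contribution is the dominant one (thanks to the leakage bound \eqref{eq:initial:leak1}, which applies because $\|x-\mu_j\|_2 \le \sqrt{d/\pi}\,\sigma_j$), and then convert the resulting lower bound in terms of $w_j g_{\mu_j,\sigma_j}(x)$ into one in terms of $f(x)$ using \eqref{eq:initial:leak0}.

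Concretely, first I would write
\[
f'(x) = -2\pi \tfrac{w_j}{\sigma_j^2}g_{\mu_j,\sigma_j}(x)(x-\mu_j) \;-\; 2\pi \sum_{i\ne j}\tfrac{w_i}{\sigma_i^2} g_{\mu_i,\sigma_i}(x)(x-\mu_i),
\]
so by the triangle inequality,
\[
\norm{f'(x)}_2 \;\ge\; 2\pi \tfrac{w_j}{\sigma_j^2}g_{\mu_j,\sigma_j}(x)\,\norm{x-\mu_j}_2 \;-\; 2\pi \sum_{i\ne j}\tfrac{w_i}{\sigma_i^2} g_{\mu_i,\sigma_i}(x)\,\norm{x-\mu_i}_2.
\]
The hypothesis $\norm{x-\mu_j}_2 > \eps_0\sqrt{d}\,\smin$ lower-bounds the first term by $2\pi\, w_j\,\eps_0\sqrt{d}\,\smin\,g_{\mu_j,\sigma_j}(x)/\sigma_j^2$. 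For the second term, pulling out a factor $1/\sigma_i \le 1/\smin$ and applying \eqref{eq:initial:leak1} gives an upper bound of $2\pi\, w_j\,\smin^{-1}\,(\smin/\smax)^4\,e^{-c_0 d}\,g_{\mu_j,\sigma_j}(x) = \pi\, w_j\, \smin^3\smax^{-4}\,\eps_0\,g_{\mu_j,\sigma_j}(x)$ (using $e^{-c_0 d}\le \eps_0/2$).

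The key comparison is the ratio of the off-diagonal bound to the diagonal lower bound, which is
\[
\frac{\smin^2\,\sigma_j^2}{2\sqrt{d}\,\smax^4} \;\le\; \frac{1}{2\sqrt{d}} \;\le\; \tfrac{1}{2},
\]
since $\sigma_j\le \smax$ and $\smin\le\smax$. So the off-diagonal sum is absorbed by a constant factor of the diagonal, yielding
\[
\norm{f'(x)}_2 \;\ge\; \pi\,\tfrac{w_j\,\eps_0\sqrt{d}\,\smin}{\sigma_j^2}\,g_{\mu_j,\sigma_j}(x).
\]
Finally, \eqref{eq:initial:leak0} gives $f(x)\le w_j g_{\mu_j,\sigma_j}(x)\,(1+e^{-c_0 d}(\smin/\smax)^2)\le 2w_j g_{\mu_j,\sigma_j}(x)$, hence $w_j g_{\mu_j,\sigma_j}(x)\ge f(x)/2$, and substituting yields $\norm{f'(x)}_2 \ge \frac{\pi\,\eps_0\sqrt{d}\,\smin}{2\sigma_j^2}\,f(x) > \frac{\eps_0\sqrt{d}\,\smin}{4\sigma_j^2}\,f(x)$, as required.

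There is no real obstacle here: once the leakage estimates \eqref{eq:initial:leak0}--\eqref{eq:initial:leak1} are in hand (which was the non-trivial work done in Lemma~\ref{lem:smallleakage}), the proof is a direct triangle-inequality calculation. The only place to be careful is the bookkeeping of the aspect ratio $\smin/\smax$ factors in \eqref{eq:initial:leak1}, which are precisely what ensures that the $1/\smin$ lost when passing from the bound on $w_i\norm{x-\mu_i}_2/\sigma_i$ to the bound on $w_i\norm{x-\mu_i}_2/\sigma_i^2$ is harmless.
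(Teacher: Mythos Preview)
Your proposal is correct and follows essentially the same approach as the paper: decompose $f'(x)$ into the $j$th component and the rest via the triangle inequality, bound the off-diagonal contribution using \eqref{eq:initial:leak1} (after pulling out a $1/\sigma_i \le 1/\smin$), and then pass from $w_j g_{\mu_j,\sigma_j}(x)$ to $f(x)$ via \eqref{eq:initial:leak0}. Your bookkeeping of the $\smin/\smax$ powers is in fact a bit tidier than the paper's own write-up.
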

\begin{proof}
From \eqref{eq:pdf:grad}, the first derivative satisfies
\begin{align}\label{eq:notcritical:temp}
\norm{f'(x)}_2 &\ge \frac{2 \pi w_j \norm{x - \mu_j}_2}{\sigma_j^2} \cdot g_{\mu_j, \sigma_j}(x)- \sum_{i \ne j \in [k]}\frac{2\pi w_i}{\sigma_i}\cdot \frac{\norm{x-\mu_i}_2}{\sigma_i} \cdot g_{\mu_i, \sigma_i}(x).
\end{align}

\noindent Applying \eqref{eq:initial:leak1} with the given separation,
\begin{equation}\label{eq:notcritical:temp2}
\sum_{i \ne j \in [k]}\frac{w_i}{\sigma_i}\cdot \frac{\norm{x-\mu_i}_2}{\sigma_i} \cdot g_{\mu_i, \sigma_i}(x) < \frac{1}{\smin} \cdot w_j g_{\mu_j,\sigma_j}(x) \exp(-c_0 d)\cdot \parens[\Big]{\frac{\smin}{\smax}}^2 < \frac{ \eps_0 w_j \smin}{2\smax^2} g_{\mu_j,\sigma_j}(x).
\end{equation}

\noindent Further, $\norm{x - \mu_j}_2 \ge \eps_0 \sqrt{d} \sigma_j$. Using \eqref{eq:notcritical:temp} and \eqref{eq:notcritical:temp2}, 
\begin{align*}
\norm{f'(x)}_2 &\ge 2\pi w_j g_{\mu_j, \sigma_j}(x) \cdot \frac{\norm{x - \mu_j}_2^2}{\sigma_j^2} -  \sum_{i \ne j}^k 2\pi w_i g_{\mu_i, \sigma_i}(x) \cdot \frac{\norm{x - \mu_i}_2^2}{\sigma_i^2} \\
&\ge \frac{2\pi \norm{x - \mu_j}_2}{\sigma_j^2} \cdot w_j g_{\mu_j, \sigma_j}(x) - \frac{ \pi \eps_0 w_j \smin}{\smax^2} g_{\mu_j,\sigma_j}(x) ~~~\text{ (from \eqref{eq:notcritical:temp2})},\\
& \ge \frac{\pi \eps_0 \sqrt{d} \smin}{4\smax^2} \cdot f(x), 
\end{align*}
where the last inequality follows from \eqref{eq:initial:leak0} and using $\norm{x-\mu_j}_2 > \eps_0 \sqrt{d} \smin$.
\end{proof}

We now proceed to the proof of Lemma~\ref{lem:initial:goodstart}, which shows that any point that is sufficiently close to one of the component means is an approximate local maxima. This shows that in any $\delta$-net (for sufficiently small $\delta < \eps_0 \sqrt{d} \smin^2/\smax$) , there will be an approximate local maxima.

\begin{lemma}\label{lem:initial:goodstart}
In the notation of Proposition~\ref{prop:initial}, for any $\eps' \le \exp(-c_0 d)$ with $c_0 \ge 2$ and $\forall j \in [k]$, any point $x \in \R^d$ with $\norm{x - \mu_j}_2 \le \tfrac{\eps' \sqrt{d}}{32}\cdot \tfrac{\sigma_j^2}{\smax}$ is an approximate local maxima. In particular,
\begin{align} \label{eq:goodstart}
&\text{(i) } f(x) \ge \frac{3\wmin}{4 \smax^d} ,&~ \text{(ii) }\norm{f'(x)}_2 \le \frac{4\pi \eps' \sqrt{d}}{\smax} f(x) ,\\
&\text{(iii) } f''(x)= \grad^2 f (x) \preceq -\frac{\pi}{\sigma_j^2} f(x) I \preceq -\frac{3\pi}{4\smax^2} f(x) I  & .\nonumber
\end{align}
\end{lemma}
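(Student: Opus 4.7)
The plan is to prove the three conditions by showing that, under the hypothesis $\Delta := \|x-\mu_j\|_2 \le \tfrac{\eps'\sqrt d}{32}\cdot\tfrac{\sigma_j^2}{\smax}$, the $j$-th mixture component dominates the p.d.f., its gradient, and its Hessian, while the contributions from the other $k-1$ components are negligible by virtue of the leakage estimates \eqref{eq:initial:leak0}--\eqref{eq:initial:leak2} already proved. Note that $\Delta/\sigma_j \le \eps'\sqrt d/32$, and since $\eps' \le \exp(-c_0 d)$ with $c_0\ge 2$, the quantity $\Delta^2/\sigma_j^2$ is essentially zero.

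For (i), I would observe that $g_{\mu_j,\sigma_j}(x) = \sigma_j^{-d}\exp(-\pi\Delta^2/\sigma_j^2) \ge \smax^{-d}\exp(-\pi\eps'^2 d/1024) \ge 3/(4\smax^d)$, so $f(x)\ge w_j g_{\mu_j,\sigma_j}(x)\ge 3\wmin/(4\smax^d)$.

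For (ii), from \eqref{eq:pdf:grad} the $j$-th summand has norm at most $\tfrac{2\pi w_j\Delta}{\sigma_j^2}\,g_{\mu_j,\sigma_j}(x)\le \tfrac{\pi\eps'\sqrt d}{16\,\smax}\,f(x)$ by the choice of $\Delta$. The remaining sum over $i\ne j$ is bounded directly by \eqref{eq:initial:leak1} and yields a term of order $\exp(-c_0 d)/\smax\cdot f(x)$, which is easily absorbed. Together this gives $\|f'(x)\|_2\le \tfrac{4\pi\eps'\sqrt d}{\smax}\,f(x)$.

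The main obstacle is (iii), which requires careful bookkeeping. Using \eqref{eq:pdf:hessian}, I decompose $f''(x)$ into three pieces:
\begin{equation*}
f''(x) \;=\; -\tfrac{2\pi w_j}{\sigma_j^2}\,g_{\mu_j,\sigma_j}(x)\,I \;+\; \tfrac{4\pi^2 w_j}{\sigma_j^4}\,g_{\mu_j,\sigma_j}(x)\,(x-\mu_j)^{\otimes 2} \;+\; \sum_{i\neq j}\tfrac{4\pi^2 w_i}{\sigma_i^2}\,g_{\mu_i,\sigma_i}(x)\Bigl[\tfrac{(x-\mu_i)^{\otimes 2}}{\sigma_i^2}-\tfrac{I}{2\pi}\Bigr].
\end{equation*}
The first piece is the isotropic contraction we want. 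The rank-one correction has spectral norm at most $\tfrac{4\pi^2\Delta^2}{\sigma_j^4}\,w_j g_{\mu_j,\sigma_j}(x)\le \tfrac{\pi^2\eps'^2 d}{256\,\smax^2}\,f(x)$. The third piece splits into a ``variance'' part bounded via \eqref{eq:initial:leak2} by $O(\exp(-c_0 d)/\smax^2)\,f(x)$ and an ``identity'' part bounded via \eqref{eq:initial:leak0} by $O(\exp(-c_0 d)/\smax^2)\,f(x)$. Finally, \eqref{eq:initial:leak0} gives $w_j g_{\mu_j,\sigma_j}(x)\ge (1-\exp(-c_0 d))f(x)$. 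Plugging these bounds in and using $c_0\ge 2$ to make the perturbation terms at most $\tfrac{\pi}{\sigma_j^2}f(x)$ in operator norm, one obtains $f''(x)\preceq -\tfrac{\pi}{\sigma_j^2}f(x)\,I$, and since $\sigma_j\le\smax$ this in turn is $\preceq -\tfrac{3\pi}{4\smax^2}f(x)\,I$. The only real subtlety is verifying that the three perturbations above jointly eat strictly less than half of the leading eigenvalue $-\tfrac{2\pi}{\sigma_j^2}f(x)$; all the slack comes from our very small choice of $\eps'\le e^{-c_0 d}$.
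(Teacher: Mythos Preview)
Your proposal is correct and follows essentially the same line as the paper's proof: isolate the $j$-th component in each of $f$, $f'$, $f''$, bound the remaining $i\ne j$ contributions via the leakage estimates \eqref{eq:initial:leak0}--\eqref{eq:initial:leak2}, and verify that the rank-one correction from the $j$-th Hessian term together with the leakage eats at most half of the leading $-\tfrac{2\pi}{\sigma_j^2}w_j g_{\mu_j,\sigma_j}(x)I$. The only cosmetic difference is that the paper bundles the entire $i\ne j$ Hessian contribution into a single matrix $M$ and bounds $\|M\|$ in one shot, whereas you split it into its rank-one and identity parts before applying \eqref{eq:initial:leak2} and \eqref{eq:initial:leak0} separately.
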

\begin{proof}[Proof of Lemma~\ref{lem:initial:goodstart}]
The lemma follows in a straightforward way from \eqref{eq:initial:leak0}, \eqref{eq:initial:leak1}, \eqref{eq:initial:leak2}, since $f, f', f''$ at $x$ are dominated by the $j$th component. Firstly by considering just the contribution to the p.d.f. from the $j$th component, the lower bound on $f(x)$ follows. Now we bound $\norm{f'(x)}_2$.
\begin{align*}
\norm{f'(x)}_2 & \le   2\pi \sum_{i=1}^k w_i g_{\mu_i, \sigma_i}(x) \cdot \frac{\norm{x - \mu_i}_2}{\sigma_i^2} \le 4 \pi w_j g_{\mu_j, \sigma_j}(x) \cdot \frac{\norm{x - \mu_j}_2}{\sigma_j^2} ~\text{ (from \eqref{eq:initial:leak1})}. \\
&\le \frac{4\pi \eps' \sqrt{d}}{\smax} \cdot w_j g_{\mu_j, \sigma_j}(x) \le \frac{4\pi \eps' \sqrt{d}}{\smax} \cdot f(x).
\end{align*}
\noindent 
We argue about $f''(x)$ similarly.
Suppose we use $M$ to denote the following matrix, and $\norm{M}$ to represent its maximum singular value,
\begin{align*}
M& =4\pi^2 \sum_{i \ne j}  w_i g_{\mu_i, \sigma_i}(x)\cdot \frac{1}{\sigma_i^{2}} \left( \frac{1}{\sigma_i^2}(x - \mu_i)^{\otimes 2} - \frac{1}{2\pi} I_{d \times d} \right).\\
\norm{M} &\le 4\pi^2 \sum_{i \ne j} w_i g_{\mu_i, \sigma_i}(x) \cdot \frac{1}{\sigma_i^{2}}\Paren{ \frac{\norm{x - \mu_i}_2^2}{\sigma_i^2} + \frac{1}{2\pi} }  \\
&\le \frac{4 \pi^2}{\smax^2} w_j g_{\mu_j, \sigma_j}(x) \exp(-c_0 d) < \frac{\pi}{2\sigma_j^2} w_j g_{\mu_j, \sigma_j}(x),
\end{align*}
where the last line follows from \eqref{eq:initial:leak0}, \eqref{eq:initial:leak2} and since $\exp(-c_0 d)< 1/(8\pi)$.  
Further $2 \pi \norm{x - \mu_j}^2_2/\sigma_j^2 < 1/4$. Substituting in \eqref{eq:pdf:hessian}, we get
\begin{align*}
f'' (x) &\preceq \frac{2 \pi}{\sigma_j^2} w_j g_{\mu_j, \sigma_j}(x) \Paren{ - I + \frac{2\pi\norm{x - \mu_j}_2^2}{ \sigma_j^2} I } + \norm{M} I \\
&\preceq -\frac{2 \pi}{\sigma_j^2} w_j g_{\mu_j, \sigma_j}(x) \Paren{I - \tfrac{1}{4}I - \tfrac{1}{4} I} \preceq \frac{- \pi}{\sigma_j^2} w_j g_{\mu_j, \sigma_j}(x) \preceq \frac{- 3\pi}{4\sigma_j^2} f(x),
\end{align*}
where the last inequality follows from \eqref{eq:initial:leak0}.
\end{proof}

%
%
%
%
%
%
%
%
%
%
%
%

We now proceed to the algorithm and proof of Proposition~\ref{prop:initial}.

\begin{proof}[Proof of Proposition~\ref{prop:initial}]
The algorithm first considers a $\delta$-net ${\calX_\delta}$ in $\R^d$ over a ball of radius $2 \rho$, and estimates $f(y)$ up to additive accuracy $\gamma$ where $\gamma = \wmin \smax^{-(d+4)} \eps_0^3 \delta^2/4$ and $\delta =\eps_0 \sqrt{d}\smin^3/(64\smax^2)$ will suffice\footnote{For our purposes, it suffices to have estimates of $f, f', f''$ at each of the points of the net ${\calX_\delta}$. Hence we can just take histogram counts in a small ball around each of the net points, and estimate $f$ up to accuracy $\gamma$ with $\abs{\calX_\delta}/\delta^2$ samples. Similarly the derivatives can also estimated using $\poly(\abs{{\calX_\delta}}, \tfrac{1}{\delta}, \tfrac{1}{\gamma})$ samples.}. Similarly, we can also estimate $f'(y)$ in $\ell_2$ norm, and $f''(y)$ in operator norm within additive accuracy $\gamma$. The size of the net is $\abs{{\calX_\delta}} = (\rho/ \delta)^{O(d)}$, and the sample complexity is $O(1/\gamma^2) \cdot (\rho/\delta)^{O(d)}$ samples. 

From Lemma~\ref{lem:initial:notcritical} and Lemma~\ref{lem:initial:notmaxima} we have that if
\begin{equation} \label{eq:init:soundness}
f(x) \ge \wmin \smax^{-d}/2, ~~ f''(x) \preceq -\frac{\pi}{2\smax^2} f(x) I, ~\text{and }~ \frac{\norm{f'(x)}_2}{f(x)} \le \frac{\pi \eps_0 \sqrt{d}\smin}{4\smax^2},
\end{equation}
then there exists $j \in [k]$ s.t. $\norm{x-\mu_j}_2 \le \eps_0 \sqrt{d} \smin$. On the other hand, applying Lemma~\ref{lem:initial:goodstart} with $\eps'=\eps_0 \smin/(32\smax)$, any point that is within $O(\eps_0 \sqrt{d}\smin^3/\smax^2)$ close to $\mu_j$ satisfy
\begin{equation}\label{eq:init:completeness}
f(x) \ge \frac{3\wmin}{4 \smax^d} ,~~ f''(x)= \grad^2 f (x) \preceq -\frac{3\pi}{4\smax^2} f(x) I , ~\text{and }~\frac{\norm{f'(x)}_2}{f(x)} \le \frac{\pi \eps_0 \sqrt{d}\smin}{8\smax^2}. 
\end{equation}

Our accuracy $\gamma$ of estimating $f,f',f''$ is chosen so that we can distinguish between the bounds in \eqref{eq:init:soundness} and \eqref{eq:init:completeness}. For convenience, since we have sufficiently accurate estimates, we will abuse notation and also use $f(x), f'(x), f''(x)$ to represent the estimate of the $f, f',f''$ at $x$.

First using our estimates, we consider all points
$$ T = \Big\{y \in {\calX_\delta} \mid f(y) \ge \wmin \smax^{-d}/2, ~ f''(y) \preceq -\frac{\pi}{2\smax^2} f(y) I, ~~ \frac{\norm{f'(y)}_2}{f(y)} \le \frac{\pi \eps_0 \sqrt{d}\smin}{4\smax^2}\Big\}.$$
We can find $T$ from our estimates since $\gamma < \wmin \smax^{-(d+2)}/8$. From \eqref{eq:init:soundness}, we have that for every $y \in T$, there is some $j \in [k]$, such that $\norm{y-\mu_j}_2 \le \eps_0 \sqrt{d} \smin$.  

Further the means are well separated i.e., $\norm{\mu_i - \mu_j}_2 > 4 (\sigma_i + \sigma_j)$. Hence, suppose we define
$$ T^*_j = \set{y \in T \mid \norm{\mu_j -y}_2 \le \eps_0 \sqrt{d} \sigma_j },$$
the sets $\set{T^*_j: j \in [k]}$ are disjoint and form a partition of $T$ that is consistent with the $k$ components of the Gaussian mixture $\calG$.
From the separation conditions we see that the distances between any two points in the same cluster $T^*_j$ are smaller than the distance between any two points in different clusters $T^*_{j_1}$ and $T^*_{j_2}$ ($j_1 \ne j_2$). Hence, we can use single-linkage clustering algorithm (see Awasthi et al.~\cite{ABS12} for a proof that single-linkage algorithm suffices)
to find the clustering $\set{T^*_j: j \in [k]}$ in time $\poly(|T|)\le \poly(|{\calX_\delta}|)$.

Finally, for each $j \in [k]$, let $\tmu_j$ be any point in $T^*_j$. Since, the coarseness of the net $\delta$ satisfies $\delta< \eps_0 \sqrt{d} \smin^3/(64\smax^2)$, we have from \eqref{eq:init:completeness} that there is at least one point $y \in T$ close to each $\mu_j$. Hence, $\norm{\tmu_j - \mu_j}_2 \le  \eps_0 \sqrt{d} \smin$, as required.


\end{proof}

\noindent { {\em Note.} In fact, the above proposition can also be used to show that $f$ has exactly $k$ local maxima $r_1, r_2 , \dots, r_k$, such that there is a unique $r_j$ satisfying $\norm{r_j - \mu_j}_2 \le \eps \sigma_j \sqrt{d}$. This is by using a quantitative version of the inverse function mapping theorem with the function $h(x)=\grad f(x)=0$ (one can use the Newton method as in Section~\ref{sec:amplify}).

\vspace{5pt}
Again, in what follows, when it is clear that we have sufficiently accurate estimates, we will abuse notation and also use $f(x), f'(x), f''(x)$ to represent the estimate of the $f,f',f''$ at $x$.

\begin{lemma}\label{lem:initial:sigma}
Assume the conditions in Theorem~\ref{thm:initializer}, and let $\eps_0=\exp(-c_0 d)$. Suppose we have $\tmu_j \in \R^d$ such that $\norm{\tmu_j - \mu_j}_2 \le \exp(-2c_0 d) \sigma_j$. Then there is an algorithm running in $\Paren{\frac{\rho}{\eps_0^3 \wmin}}^{O(d)}$ time, that w.h.p. finds $\tsigma_j \in R_+$ such that $\abs{\tsigma_j - \sigma_j} \le \eps_0 \sigma_j$.
\end{lemma}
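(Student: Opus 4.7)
}
The plan is to read off $\sigma_j$ from the ratio of two quantities that we can estimate from samples at the single point $\tmu_j$: the p.d.f.\ value $f(\tmu_j)$ and the trace of the Hessian $\tr\bigl(f''(\tmu_j)\bigr)$. The key observation is that both quantities scale with the unknown weight $w_j$ in the same way, so their ratio depends only on $\sigma_j$. Concretely, I would define
\[
\tsigma_j^2 \;:=\; \frac{-\,2\pi d\cdot \widehat{f}(\tmu_j)}{\widehat{\tr(f'')}(\tmu_j)},
\]
where $\widehat{f}(\tmu_j)$ and $\widehat{\tr(f'')}(\tmu_j)$ are Monte-Carlo estimates produced by the same histogram/derivative-estimation procedure used in the proof of Proposition~\ref{prop:initial}. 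The sample complexity to estimate $f$ and $\tr(f'')$ at one fixed point up to additive accuracy $\gamma$ is $\poly(1/\gamma,\rho)^{O(d)}$, which will fit the claimed running time once $\gamma$ is chosen below.

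The heart of the argument is to show that, \emph{exactly}, $-2\pi d\,f(\tmu_j)/\tr(f''(\tmu_j))=\sigma_j^2\bigl(1\pm O(\exp(-c_0 d))\bigr)$. This has two parts. First, I would expand $f(\tmu_j)$ using \eqref{eq:pdf:hessian}: the $j$-th component contributes $w_j g_{\mu_j,\sigma_j}(\tmu_j)=w_j \sigma_j^{-d}\exp(-\pi\|\tmu_j-\mu_j\|_2^2/\sigma_j^2)$, which, since $\|\tmu_j-\mu_j\|_2^2/\sigma_j^2\le\exp(-4c_0 d)$, equals $(1\pm O(\exp(-4c_0 d)))\,w_j\sigma_j^{-d}$; the leakage from other components is bounded by \eqref{eq:initial:leak0}, giving a relative error of at most $\exp(-c_0 d)(\smin/\smax)^2$. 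Second, I would expand $\tr(f''(\tmu_j))$ similarly using \eqref{eq:pdf:hessian}: the $j$-th term has trace
\[
4\pi^2 \frac{w_j g_{\mu_j,\sigma_j}(\tmu_j)}{\sigma_j^2}\Bigl(\frac{\|\tmu_j-\mu_j\|_2^2}{\sigma_j^2}-\frac{d}{2\pi}\Bigr)
\;=\;-\,\frac{2\pi d\, w_j}{\sigma_j^{d+2}}\,\bigl(1\pm O(\exp(-4c_0 d))\bigr),
\]
while the contribution from components $i\ne j$ is controlled by \eqref{eq:initial:leak0} and \eqref{eq:initial:leak2} (applied at $x=\tmu_j$, which is well within the required $\sqrt{d/\pi}\sigma_j$ radius of $\mu_j$), yielding a relative error of at most $\exp(-c_0 d)(\smin/\smax)^4$. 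Dividing the two expansions, the $w_j g_{\mu_j,\sigma_j}(\tmu_j)$ factors cancel and I obtain $-2\pi d\, f(\tmu_j)/\tr(f''(\tmu_j))=\sigma_j^2(1+O(\exp(-c_0 d)))$; taking a square root then gives $|\tsigma_j-\sigma_j|\le \eps_0\sigma_j$ with $\eps_0=\exp(-c_0 d)$, as desired.

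Finally, to convert the exact statement above into a sample-based algorithm I would choose $\gamma$ small enough that the estimates $\widehat f(\tmu_j)$ and $\widehat{\tr(f'')}(\tmu_j)$ have \emph{multiplicative} error at most $\eps_0$. Since the true values have magnitude at least $\Omega(\wmin/\smax^d)$ and $\Omega(d\wmin/\smax^{d+2})$ respectively (by the bounds derived above), taking $\gamma=\Theta(\eps_0\wmin/\smax^{d+2})$ suffices, and the corresponding sample complexity is $\poly(\rho,1/\wmin,1/\eps_0)^{O(d)}=(\rho/(\eps_0^3\wmin))^{O(d)}$, matching the claimed running time. The main (mild) obstacle in executing this plan is just careful bookkeeping of error propagation when dividing two noisy estimates and then taking a square root; all the analytic inputs — the leakage bounds \eqref{eq:initial:leak0}--\eqref{eq:initial:leak2} and the sampling estimates analogous to Lemma~\ref{lem:app:box:samplingerror} — are already in place from earlier in the paper.
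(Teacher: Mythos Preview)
Your proposal is correct and the error analysis goes through essentially as you describe: both $f(\tmu_j)$ and $\tr f''(\tmu_j)$ are dominated by the $j$th component up to relative error $O(\exp(-c_0 d))$ via \eqref{eq:initial:leak0}--\eqref{eq:initial:leak2}, and the ratio isolates $\sigma_j^2$ since the unknown factor $w_j g_{\mu_j,\sigma_j}(\tmu_j)$ cancels.

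However, the paper takes a genuinely different route. Instead of using second-derivative information at a single point, it evaluates the p.d.f.\ at \emph{two} points: $\tmu_j$ itself and a second point $y$ chosen at distance $\tfrac{1}{\sqrt{\pi}}\kappa\sqrt{d}$ from $\tmu_j$ for some $\kappa\le\sigma_j$ (guessed by binary search or from the cluster diameter). Since both points lie within the radius where Lemma~\ref{lem:smallleakage} applies, one has $f(\tmu_j)\approx w_j\sigma_j^{-d}$ and $f(y)\approx w_j\sigma_j^{-d}\exp(-\kappa^2 d/\sigma_j^2)$, and hence $\log\bigl(f(\tmu_j)/f(y)\bigr)\approx \kappa^2 d/\sigma_j^2$; inverting gives $\tsigma_j=\kappa\sqrt{d}/\sqrt{\log(f(\tmu_j)/f(y))}$. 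Your approach is arguably cleaner in that it avoids the auxiliary parameter $\kappa$ and the need to locate a second evaluation point at a controlled distance, and it reuses the Hessian estimates already computed in Proposition~\ref{prop:initial}. The paper's approach, on the other hand, only requires estimating zeroth-order information (the p.d.f.\ itself), which is marginally simpler to estimate, at the cost of the extra $\kappa$-guessing step.
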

\begin{proof}
Let $\kappa$ be a fixed number chosen so that $\kappa \le \sigma_j$, and pick any point $y \in \R^d$ such that
$\norm{y - \tmu_j}_2 = \tfrac{1}{\sqrt{\pi}}\kappa \sqrt{d}$.\footnote{We can guess such a $\kappa$ by either doing a binary search, or set it to be one-eighth of the diameter of cluster $T^*_j$ defined in proof of Proposition~\ref{prop:initial}} Based on the estimates of the p.d.f. at $\tmu_j, y$, we will set
$$\tsigma_j = \frac{\kappa \sqrt{d}}{\sqrt{\log\Paren{\frac{f(\tmu_j)}{f(y)}}}}.$$
Both $\tmu_j, y$ are in a ball of radius at most $\sigma_j \sqrt{d}/\pi$ around $\mu_j$. Further from Lemma~\ref{lem:smallleakage}, and since we have good estimate for the p.d.f. $f(\tmu_j)$ and $f(y)$ at these points, we have
$f(\tmu_j) = w_j \sigma_j^{-d} \Paren{1 \pm \exp(-2c_0 d)}$, and $f(y) = w_j \sigma_j^{-d} \exp\Paren{-\frac{\kappa^2 d}{\sigma_j^2}}\Paren{1 \pm \exp(-2c_0 d)}$. Dividing,
\begin{align*}
\Abs{\log\Paren{\frac{f(\tmu_j)}{f(y)}} - \frac{\kappa^2 d}{\sigma_j^2}} &\le 2\exp(-2c_0 d).
\end{align*}
Substituting for $\tsigma_j$ we have
\begin{align*}
\sigma_j^2 &= \frac{\tsigma_j^2 \log\Paren{\frac{f(\tmu_j)}{f(y)}}}{\log\Paren{\frac{f(\tmu_j)}{f(y)}} +\eta} ~~\text{ where } \abs{\eta}\le 2\exp (-2c_0 d) \\
&= \tsigma_j^2 \Paren{ 1 + \frac{\eta}{\log\Paren{\frac{f(\tmu_j)}{f(y)}}}}.
\end{align*}
Hence,  $\Abs{\tfrac{\sigma^2_j}{\tsigma_j^2} - 1}  \le \eps_0$,
where the last inequality follows from our choice of $y$, since $\log(f(\tmu_j)/f(y))\le d < \exp(c_0 d)$.
\end{proof}

\begin{lemma}\label{lem:initial:w}
Assume the conditions in Theorem~\ref{thm:initializer}, and let $\eps_0=\exp(-c_0 d)$. Suppose we have $\tmu_j \in \R^d, \tsigma_j \in \R^d$ such that $\norm{\tmu_j - \mu_j}_2 + \abs{\tsigma_j - \sigma_j} \le \exp(-2c_0 d) \sigma$. Then there is an algorithm running in $\Paren{\frac{\rho}{\eps_0^3 \wmin}}^{O(d)}$ time, that w.h.p. finds $\tw_j \in [0,1]$ such that $\abs{\tw_j - w_j} \le \eps_0 w_j$.
\end{lemma}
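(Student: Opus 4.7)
The plan is to read off the weight $w_j$ from the p.d.f.\ value at $\tmu_j$, exploiting the fact that at a point very close to $\mu_j$ the $j$th component completely dominates the mixture density. Concretely, I will set
\[
\tw_j := f(\tmu_j) \cdot \tsigma_j^{\,d},
\]
where $f(\tmu_j)$ is an empirical estimate obtained from a histogram count in a small ball around $\tmu_j$ (analogously to the estimator used in the proof of Proposition~\ref{prop:initial} and Lemma~\ref{lem:initial:sigma}).

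The analysis will have three steps. First, applying the leakage bound~\eqref{eq:initial:leak0} at the point $x = \tmu_j$, which lies within $\exp(-2c_0 d)\sigma_j \ll \sigma_j\sqrt{d/\pi}$ of $\mu_j$, gives
\[
f(\tmu_j) \;=\; w_j\, g_{\mu_j,\sigma_j}(\tmu_j)\cdot\bigl(1 \pm \exp(-c_0 d)\bigr).
\]
Second, since $\norm{\tmu_j - \mu_j}_2 \le \exp(-2c_0 d)\sigma_j$, a direct computation using the Gaussian p.d.f.\ formula gives
\[
g_{\mu_j,\sigma_j}(\tmu_j) \;=\; \sigma_j^{-d}\,\exp\!\bigl(-\pi \norm{\tmu_j-\mu_j}_2^2/\sigma_j^2\bigr) \;=\; \sigma_j^{-d}\bigl(1 \pm \exp(-3c_0 d)\bigr).
\]
Combining the two estimates yields $f(\tmu_j) = w_j\sigma_j^{-d}(1\pm 2\exp(-c_0 d))$, and hence
\[
\tw_j \;=\; w_j \cdot \Paren{\tsigma_j/\sigma_j}^{d}\cdot \bigl(1 \pm 3\exp(-c_0 d)\bigr).
\]

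Third, I have to propagate the error from $|\tsigma_j/\sigma_j - 1| \le \exp(-2c_0 d)$ through the $d$th power. Using $(1+x)^d \le 1 + 2d|x|$ for $|x| \le 1/(2d)$, we get $\abs{(\tsigma_j/\sigma_j)^d - 1} \le 2d\exp(-2c_0 d) \le \exp(-c_0 d)/2$ provided $c_0$ is at least a sufficiently large constant (using $2d \le \exp(c_0 d)/2$, which holds for $c_0 \ge 2$). Thus $\abs{\tw_j - w_j} \le \eps_0 w_j$, as required. The only part requiring even minor care is this error propagation through the $d$th power — this is the main, though entirely elementary, obstacle — and it dictates that the error in $\tsigma_j$ be smaller than $\eps_0/d$ (which is the reason the hypothesis asks for $\exp(-2c_0 d)$ accuracy on the earlier parameters rather than merely $\eps_0$).

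Finally, to estimate $f(\tmu_j)$ to the required additive accuracy $\gamma \le \exp(-c_0 d)\wmin/(2\smax^d)$ with high probability, one uses standard Chernoff bounds: count the fraction of samples falling in a ball of radius $\delta$ around $\tmu_j$ and normalize by the ball's volume, as is done elsewhere in the paper. The total sample (and time) complexity is $\poly(1/\gamma,\rho)^{O(d)} = \Paren{\rho/(\eps_0^3\wmin)}^{O(d)}$, matching the bound in the lemma.
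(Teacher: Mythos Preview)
Your proposal is correct and takes a genuinely different route from the paper. The paper estimates $w_j$ by counting the fraction of samples falling in a \emph{large} ball $T_j=\{x:\norm{x-\tmu_j}_2\le \sqrt{d}\,\tsigma_j/\sqrt{2\pi}\}$ and dividing by the known constant $c_d=\int_{\norm{x}\le\sqrt{d/2\pi}}e^{-\pi\norm{x}^2}\,dx$ (an incomplete Gamma value), then bounds separately the sampling error, the leakage from other components inside $T_j$ via Lemma~\ref{lem:smallleakage}, and the geometric discrepancy between $T_j$ and the ``ideal'' ball centered at $\mu_j$ with radius $\sqrt{d}\,\sigma_j/\sqrt{2\pi}$. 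You instead estimate the pointwise density $f(\tmu_j)$ and set $\tw_j=f(\tmu_j)\,\tsigma_j^{\,d}$, reusing the kernel/histogram density estimator already invoked in Proposition~\ref{prop:initial} and Lemma~\ref{lem:initial:sigma}.

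Your approach is conceptually cleaner (no auxiliary constant $c_d$, and it directly mirrors the identity $f(\mu_j)\approx w_j\sigma_j^{-d}$), and your error propagation through $(\tsigma_j/\sigma_j)^d$ correctly explains why the hypothesis demands $\exp(-2c_0 d)$ accuracy on the earlier parameters. The paper's approach, on the other hand, is more sample-efficient: because it estimates a probability mass of order $w_j$ rather than a density value via a tiny ball, it needs only $\poly(\rho,1/\eps_0,1/\wmin)$ samples rather than the $(\cdot)^{O(d)}$ your pointwise estimator requires---though both fit the lemma's stated bound. One minor caution: citing \eqref{eq:initial:leak0} verbatim gives leakage $\exp(-c_0 d)=\eps_0$, so your accumulated multiplicative error is a small constant times $\eps_0$ rather than $\eps_0$ itself; to make the constants close, invoke Lemma~\ref{lem:smallleakage} directly with $C=4c_0$, which yields the much smaller leakage $\exp(-2c_0^2 d)$ (this is exactly what the paper does in its own proof).
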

\begin{proof}
Let $\eta=w_j \exp(-2c_0 d)$. Let $c_d$ be the constant that is only dependent on $d$ given by
$$c_d = \int_{x \in \R^d: \norm{x}_2 \le \tfrac{1}{\sqrt{2\pi}}\sqrt{d}}  \exp\Paren{-\pi \norm{x}_2^2} \dx.$$
In fact, $c_d= \gamma(\tfrac{d}{2}-1,\tfrac{1}{2})$ is the incomplete Gamma function evaluated at $(\tfrac{d}{2} - 1, \tfrac{1}{2})$ which has an asymptotic approximation given in \cite[8.11(ii)]{DLMF}. Also $c_d \ge 2^{-d/2}/d$.
To get an estimate of $\tw_j$, we will consider the set
$$T_j=\set{x \in \R^d \mid \norm{x-\tmu_j}_2 \le \tfrac{1}{\sqrt{2\pi}}\sqrt{d} \tsigma_j}.$$
We will now generate $N=O(\rho\log(dk)/\eta^2)$ samples $x^{(1)}, \dots, x^{(N)}$ from the mixture of $k$ Gaussians and estimate the fraction of samples that are in $T_j$:\footnote{We could also integrate the estimated p.d.f. over the set $T_j$ to get this estimate. }
\begin{equation} \label{eq:initial:tw}
\tw_j = \frac{1}{c_d N} \sum_{\ell=1}^N \I[x^{(\ell)} \in T_j].
\end{equation}
\noindent From Lemma~\ref{lem:app:box:samplingerror}, we have small contribution from the other components
$$\Abs{\tw_j - \E[\tw_j]}= \Abs{\tw_j - \frac{1}{c_d} \int_{y \in T_j} f(y) \dy}\le \eta \le \exp(-2cd) w_j. $$
From Lemma~\ref{lem:smallleakage}, we have
\begin{align*}
\forall y \in T_j, ~ \Abs{f(y) - w_j g_{\mu_j, \sigma_j}(y) } & \le \exp(-2c_0 d) w_j g_{\mu_j, \sigma_j}(y) \\
\text{Hence, } \Abs{\E[\tw_j] -  \frac{w_j}{c_d} \int_{y \in T_j} g_{\mu_j, \sigma_j}(y) \dy} & \le \exp(-2c_0 d) \cdot \frac{w_j}{c_d} \int_{y \in T_j} g_{\mu_j, \sigma_j}(y) \dy < \frac{w_j}{c_d} \exp(-2c_0 d).
\end{align*}
Let $B = \set{y \mid \norm{y-\mu_j}_2 \le \tfrac{1}{2\pi}\sqrt{d}\sigma_j}$, and let $S_{d}$ be the surface area of the unit ball in $d$-dimensions (volume of the $d$-sphere). Since $\norm{\tmu_j - \mu_j} + \abs{\sigma_j - \tsigma_j} \le \exp(-2c_0 d) \sigma_j$, the probability mass on $T_j \setminus B$ is small
\begin{align*}
\Abs{ \int_{y \in T_j} g_{\mu_j,\sigma_j}(y) \dy - \int_{y \in B} g_{\mu_j,\sigma_j}(y) \dy} &\le \int_{\abs{\frac{\sqrt{2 \pi}\norm{y-\mu_j}_2}{\sigma_j} -1} \le \exp(-2c_0 d)}
g_{\mu_j, \sigma_j}(y) \dy  .\\
&\le S_d \Paren{\frac{d}{2\pi}}^{d/2} \times 2\exp(-2c_0 d) \le \exp(-2(c_0 -1)d).
\end{align*}
Further, the probability mass inside $B$ is given by $\frac{w_j}{c_d}\int_{y \in B} g_{\mu_j,\sigma_j}(y) \dy = w_j$. Hence,
\begin{align*}
\Abs{\E[\tw_j] - w_j} &\le  \Abs{\E[\tw_j] -  \frac{w_j}{c_d} \int_{y \in T_j} g_{\mu_j, \sigma_j}(y) \dy}+\Abs{ \frac{w_j}{c_d}\int_{y \in B} g_{\mu_j,\sigma_j}(y) \dy -  \frac{w_j}{c_d} \int_{y \in T_j} g_{\mu_j, \sigma_j}(y) \dy}\\
&\le \frac{w_j}{c_d} \exp(-2(c_0-1)d) + \frac{w_j}{c_d} \exp(-2c_0 d) \le w_j \exp(-c_0 d).
\end{align*}

\end{proof}

\begin{proof}[Proof of Theorem~\ref{thm:initializer}]
The proof follows by using Proposition~\ref{prop:initial}, followed by Lemma~\ref{lem:initial:sigma} and Lemma~\ref{lem:initial:w} in that order.
Set $\eps_0 = \exp(-c_0 d)$. First we use Proposition~\ref{prop:initial} to find w.h.p. initializers for the means $(\tmu_j :j \in [k])$ such that $\norm{\tmu_j - \mu_j}_2 \le \exp(-4 c_0 d) \smin$. Then using Lemma~\ref{lem:initial:sigma}, we find w.h.p. initializers $(\tsigma_j : j \in [k])$ such that $\abs{\tsigma_j - \sigma_j} \le \exp(-2c_0 d) \sigma_j$. Finally, these initializers $\tmu_j, \tsigma_j$ can be used in Lemma~\ref{lem:initial:w} to find w.h.p. $\tw_j~ \forall j \in [k]$ such that $\abs{\tw_j - w_j} \le \exp(-c_0 d) w_j$. By choosing the failure probability of at most $1/3k^2$ in each step, we see that the algorithm succeeds w.h.p. and runs in time $\Paren{\frac{\rho}{\eps_0^3 \wmin}}^{O(d)}$.
\end{proof}


\appendix
\noindent {\Large\bf Appendix}

\section{Standard Properties of Gaussians}
\begin{lemma}
Suppose $x \in \R$ be generated according to $N(0,\sigma^2)$, let $\phit(t)$ represents the probability that $x >t$, and let $\iphit{y}$ represent the quantile $t$ at which $\phit(t)\le y$.
Then 
\begin{equation}\label{eq:tails}
\frac{\frac{t}{\sigma}}{(\frac{t^2}{\sigma^2}+1)} e^{-\frac{t^2}{2 \sigma^2}}  \le \phit(t) \le \frac{\sigma}{t} e^{-\frac{t^2}{2 \sigma^2}}.
\end{equation}
Further, there exists a universal constant $c \in (1,4)$ such that
\begin{equation}\label{eq:tails2}
 \frac{1}{c} \sqrt{\log (1/y)} \le \frac{t}{\sigma}  \le c \sqrt{\log (1/y)}.
\end{equation}
\end{lemma}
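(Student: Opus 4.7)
The plan is to prove the two displays as standard Mills-ratio style bounds on the Gaussian tail and then invert them to control the quantile. After rescaling $x \mapsto x/\sigma$ we may reduce to the case $\sigma=1$; everything then follows from bounding $\phi(t) = \frac{1}{\sqrt{2\pi}}\int_t^{\infty} e^{-u^2/2}\,du$ for $t>0$.

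First I would prove the upper bound in \eqref{eq:tails}. Using $u \ge t$ on the range of integration, one has
\[
\phi(t) \;\le\; \int_t^{\infty} \frac{u}{t}\cdot \frac{1}{\sqrt{2\pi}} e^{-u^2/2}\,du \;=\; \frac{1}{t\sqrt{2\pi}} e^{-t^2/2} \;\le\; \frac{1}{t}e^{-t^2/2},
\]
since $1/\sqrt{2\pi}<1$. Rescaling gives $\phit(t) \le (\sigma/t) e^{-t^2/(2\sigma^2)}$.

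Next I would prove the lower bound. The cleanest approach is integration by parts: writing $\int_t^{\infty}e^{-u^2/2}\,du = \int_t^{\infty}\frac{1}{u}\cdot u\, e^{-u^2/2}\,du$ and integrating by parts yields
\[
\int_t^{\infty} e^{-u^2/2}\,du \;=\; \frac{1}{t}e^{-t^2/2} - \int_t^{\infty} \frac{1}{u^2} e^{-u^2/2}\,du.
\]
Iterating once more, or using $\int_t^\infty u^{-2} e^{-u^2/2}du \le \frac{1}{t^3}e^{-t^2/2}\cdot (\text{something})$ and solving, one obtains the standard two-sided Mills estimate $\phi(t) \ge \frac{t}{t^2+1}\cdot \frac{1}{\sqrt{2\pi}} e^{-t^2/2}$. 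This gives the claimed $\frac{t/\sigma}{(t/\sigma)^2+1} e^{-t^2/(2\sigma^2)}$ bound (absorbing the $1/\sqrt{2\pi}$ into a universal constant if needed).

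Finally, for \eqref{eq:tails2}, I would invert both bounds. Setting $s = t/\sigma$ and $y=\phit(t)$, the upper bound of \eqref{eq:tails} gives $y \le s^{-1} e^{-s^2/2}$, so $\log(1/y) \ge s^2/2 - \log s \ge s^2/4$ when $s \ge 2$, yielding $s \le 2\sqrt{\log(1/y)}$. For small $s$ (and therefore $y$ bounded away from $0$) the inequality is trivial after enlarging the constant. Conversely, the lower bound of \eqref{eq:tails} gives $\log(1/y) \le s^2/2 + \log((s^2+1)/s)$, and for $s\ge 1$ the logarithmic correction is at most $s^2/2$, so $\log(1/y) \le s^2$, i.e., $s \ge \sqrt{\log(1/y)}$; again, trivial or easily absorbed in the constant when $s \le 1$. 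Choosing $c\in(1,4)$ to cover both regimes completes the proof. The only mildly delicate step is dealing with the low-$t$ regime where the Mills bounds degenerate, which I would handle by noting that $y=\phit(t)$ is bounded away from $0$ there so the quantile inequality \eqref{eq:tails2} holds trivially by adjusting $c$.
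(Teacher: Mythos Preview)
The paper does not include a proof of this lemma; it is stated in the appendix as a standard fact about Gaussian tails and left unproved. Your approach is exactly the classical Mills-ratio argument (upper bound by inserting the factor $u/t\ge 1$ into the integrand, lower bound via integration by parts and the inequality $\int_t^\infty u^{-2}e^{-u^2/2}\,du \le t^{-2}\int_t^\infty e^{-u^2/2}\,du$, then inversion for the quantile statement), and it is correct. One minor remark: the lower bound in \eqref{eq:tails} as literally written in the paper is missing the $1/\sqrt{2\pi}$ factor that your derivation produces; you handle this appropriately by noting it can be absorbed into constants, and in any case only \eqref{eq:tails2} is used downstream, where such constants are immaterial.
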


\begin{lemma}\label{lem:app:truncmoments}
For any $\sigma >0$, $q \in \Z_+$ and any $\tau\ge 2q$, 
\begin{equation}\label{eq:truncmoments}
\int_{\abs{x} \ge \tau \sigma} |x|^q \exp(-\pi x^2/\sigma^2) \dx \le \sigma^q \exp(-2\tau^2).
\end{equation}
\end{lemma}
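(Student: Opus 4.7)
The plan is to reduce the estimate to a dimensionless tail integral and then exploit the strict inequality $\pi>2$ to absorb the polynomial factor $|x|^q$ into the exponential, using the hypothesis $\tau\ge 2q$.

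First I would substitute $u=x/\sigma$, which turns the left-hand side into $\sigma^{q+1}\int_{|u|\ge \tau}|u|^q e^{-\pi u^2}\,du$. By symmetry it suffices to control $\int_\tau^\infty u^q e^{-\pi u^2}\,du$. I would then split the integrand as
\[
u^q e^{-\pi u^2} \;=\; \bigl(u^q e^{-\pi u^2/2}\bigr)\cdot e^{-\pi u^2/2},
\]
and handle the two pieces separately.

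For the first factor, differentiation shows $\tfrac{d}{du}\bigl(u^q e^{-\pi u^2/2}\bigr) = u^{q-1}e^{-\pi u^2/2}(q-\pi u^2)$, which is negative for $u^2>q/\pi$. Since $\tau\ge 2q\ge \sqrt{q/\pi}$, the map $u\mapsto u^q e^{-\pi u^2/2}$ is decreasing on $[\tau,\infty)$, hence bounded there by $\tau^q e^{-\pi\tau^2/2}$. For the second factor I use the standard Gaussian tail $\int_\tau^\infty e^{-\pi u^2/2}\,du \le \frac{1}{\pi\tau}e^{-\pi\tau^2/2}$. Multiplying gives
\[
\int_\tau^\infty u^q e^{-\pi u^2}\,du \;\le\; \frac{1}{\pi}\,\tau^{q-1}\,e^{-\pi\tau^2}.
\]

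The final step is to absorb the $\tau^{q-1}$ into the exponential by trading the gap $\pi-2>0$. Writing $e^{-\pi\tau^2} = e^{-2\tau^2}\cdot e^{-(\pi-2)\tau^2}$, it suffices to verify $\tau^{q-1}\le e^{(\pi-2)\tau^2}$, i.e.\ $(q-1)\log\tau \le (\pi-2)\tau^2$. Since $\tau\ge 2q\ge 2$ we have $\log\tau\le \tau$, so $(q-1)\log\tau\le q\tau$; moreover $\tau\ge 2q$ gives $q\le\tau/2 \le (\pi-2)\tau$, and therefore $q\tau\le(\pi-2)\tau^2$. Combining all the inequalities (and doubling for the two symmetric tails) yields the claimed bound. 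The only subtle step is the exponent bookkeeping at the end — everything else is straight Gaussian tail manipulation — but the margin $\pi-2$ together with $\tau\ge 2q$ comfortably suffices.
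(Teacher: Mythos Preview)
The paper states this lemma in the appendix without proof, so there is nothing to compare against. Your argument is sound: the split $u^q e^{-\pi u^2}=(u^q e^{-\pi u^2/2})\cdot e^{-\pi u^2/2}$, monotonicity of the first factor on $[\tau,\infty)$, the Gaussian tail bound for the second, and then trading the margin $\pi-2>1/2$ against $q\le\tau/2$ to kill the polynomial prefactor --- each step checks out for $q\ge 1$ and $\tau\ge 2q$.

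One bookkeeping remark. After your substitution $u=x/\sigma$ the left-hand side is $\sigma^{q+1}\int_{|u|\ge\tau}|u|^q e^{-\pi u^2}\,du$, so what your argument actually yields is the bound $\sigma^{q+1}e^{-2\tau^2}$, not $\sigma^{q}e^{-2\tau^2}$. This is not a flaw in your reasoning: the inequality as printed is dimensionally inconsistent (the left side scales like $\sigma^{q+1}$, the right like $\sigma^{q}$) and cannot hold for all $\sigma>0$. The intended statement evidently has either a normalizing factor $\sigma^{-1}$ in the integrand (so that the left side is a truncated moment of the density $\sigma^{-1}e^{-\pi x^2/\sigma^2}$, consistent with how the lemma is invoked elsewhere in the paper) or $\sigma^{q+1}$ on the right; your proof establishes both corrected versions.
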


%

\begin{lemma}\label{app:kl}
Let $p,q$ correspond to the (weighted) probability density functions of the spherical Gaussian components in $d$ dimensions with parameters $(w_1,\mu_1,\sigma_1^2)$ and $(w_2,\mu_2,\sigma_2^2)$ respectively. Then
\begin{equation}
\norm{p-q}_1 \le 
  |w_1-w_2|+ 
	\min\set{w_1,w_2}\left(
	    \frac{\sqrt{2\pi}\cdot \norm{\mu_1 - \mu_2}_2}{ \sigma_2}+
			\frac{\sqrt{d\abs{\sigma_1^2 - \sigma_2^2}}}{\sigma_2}+
			\sqrt{2d \ln\parens[\Big]{\frac{\sigma_2}{\sigma_1}}} 
	\right).
\end{equation}
\end{lemma}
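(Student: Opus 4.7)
The plan is to peel off the weight difference first and then reduce to bounding the $L_1$ distance between two unit-weight spherical Gaussians via a hybrid argument together with Pinsker's inequality. Writing $g_i := g_{\mu_i,\sigma_i}$ for the unit-weight p.d.f.s and assuming without loss of generality $w_1 \ge w_2$, I would decompose $p - q = (w_1 - w_2)\,g_1 + w_2(g_1 - g_2)$. Since $g_1$ integrates to $1$, the triangle inequality yields $\norm{p-q}_1 \le \abs{w_1 - w_2} + \min\set{w_1,w_2}\cdot\norm{g_1 - g_2}_1$, reducing the task to bounding $\norm{g_{\mu_1,\sigma_1} - g_{\mu_2,\sigma_2}}_1$ by the parenthesized expression in the statement.

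To separate the mean shift from the variance change, I would introduce the intermediate density $g_{\mu_1,\sigma_2}$ (same mean as $g_1$, same variance as $g_2$) and apply the triangle inequality once more: $\norm{g_{\mu_1,\sigma_1} - g_{\mu_2,\sigma_2}}_1 \le \norm{g_{\mu_1,\sigma_1} - g_{\mu_1,\sigma_2}}_1 + \norm{g_{\mu_1,\sigma_2} - g_{\mu_2,\sigma_2}}_1$. Each summand compares two spherical Gaussians differing in exactly one parameter, and I would bound each via Pinsker's inequality $\norm{f-h}_1 \le \sqrt{2\,\mathrm{KL}(f\|h)}$ combined with the standard closed-form KL divergence between multivariate normals. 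Recalling the paper's convention that $g_{\mu,\sigma}$ is the p.d.f.\ of $\mathcal{N}(\mu,(\sigma^2/(2\pi))I_d)$, the mean-shift KL equals $\pi\norm{\mu_1 - \mu_2}_2^2/\sigma_2^2$, and Pinsker converts this into $\norm{g_{\mu_1,\sigma_2} - g_{\mu_2,\sigma_2}}_1 \le \sqrt{2\pi}\,\norm{\mu_1 - \mu_2}_2/\sigma_2$, matching one target term exactly.

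The main subtle step is the variance-change piece. A direct KL computation gives $2\,\mathrm{KL}(g_{\mu_1,\sigma_1}\|g_{\mu_1,\sigma_2}) = A + B$, where $A := d(\sigma_1^2 - \sigma_2^2)/\sigma_2^2$ and $B := 2d\ln(\sigma_2/\sigma_1)$. Here $A$ and $B$ always carry opposite signs (exactly one is non-negative, depending on whether $\sigma_1 \le \sigma_2$ or not), yet their sum is non-negative since it is twice a KL divergence. The elementary inequality $\sqrt{A+B} \le \sqrt{\max(A,0)} + \sqrt{\max(B,0)}$, valid whenever $A + B \ge 0$ (drop the negative summand, then use $\sqrt{x+y} \le \sqrt{x}+\sqrt{y}$ for $x,y \ge 0$), then gives $\norm{g_{\mu_1,\sigma_1} - g_{\mu_1,\sigma_2}}_1 \le \sqrt{d\abs{\sigma_1^2 - \sigma_2^2}}/\sigma_2 + \sqrt{2d\ln(\sigma_2/\sigma_1)}$, with the convention that the second radicand is $0$ when $\sigma_1 > \sigma_2$. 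Summing the three bounds and combining with the weight-peeling step yields the stated inequality. The only places I foresee technicalities are (a)~carefully tracking the $2\pi$ factors arising from the paper's non-standard normalization of the covariance and (b)~the sign bookkeeping in the variance-change term; everything else is a routine application of Pinsker and the KL formula for Gaussians.
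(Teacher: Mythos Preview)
Your proposal is correct and uses essentially the same ingredients as the paper's proof: peel off the weight difference by the triangle inequality, then bound the $L_1$ distance between the unit-weight Gaussians via Pinsker's inequality and the closed-form KL divergence for Gaussians. The only structural difference is that the paper applies Pinsker \emph{once} to $d_{KL}(g_{\mu_1,\sigma_1}\|g_{\mu_2,\sigma_2})$ (which already contains all three terms---mean shift, variance ratio, and log variance ratio) and then splits $\sqrt{a+b+c}\le\sqrt{a}+\sqrt{b}+\sqrt{c}$, whereas you insert the hybrid $g_{\mu_1,\sigma_2}$ and apply Pinsker twice; both routes require the same sign bookkeeping you describe and yield the identical bound.
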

\begin{proof}
Without loss of generality let $w_2 \le w_1$. 
The KL divergence between any two multivariate Gaussian distributions with means $\mu_1, \mu_2$ and covariances $\Sigma_1, \Sigma_2$ respectively is given by~\cite{Duchinotes}
$$d_{KL}\left(N(\mu_1, \Sigma_1) \| N(\mu_2, \Sigma_2) \right)=\frac{1}{2} \left(\text{tr}(\Sigma_2^{-1} \Sigma_1)+
(\mu_1-\mu_2)^T \Sigma_2^{-1} (\mu_1 - \mu_2)-d +\ln\Big(\frac{\det(\Sigma_2)}{\det(\Sigma_1)} \Big)  \right).$$

\noindent Applying this to $p':=p/w_1,q':=q/w_2$ we get
\begin{align*}
d_{KL}(p' \| q') &= \frac{1}{2}\left(\frac{2 \pi \norm{\mu_1-\mu_2}_2^2}{\sigma_2^2}+ \frac{d\sigma_1^2}{\sigma_2^2} -d +2d\ln\parens[\Big]{\frac{\sigma_2}{\sigma_1}} \right)\\
&= \frac{\pi \norm{\mu_1-\mu_2}_2^2}{\sigma_2^2}+ \frac{d(\sigma_1^2 - \sigma_2^2)}{2\sigma_2^2}+d \ln\parens[\Big]{\frac{\sigma_2}{\sigma_1}}.
\end{align*}
Hence, by Pinsker inequality
\[ 
 \norm{p'-q'}_1 \le 
 \sqrt{2 d_{KL}(p' \| q')} \le  
 \frac{\sqrt{2\pi}\norm{\mu_1-\mu_2}_2}{\sigma_2}+
     \frac{\sqrt{d\abs{\sigma_1^2 - \sigma_2^2}}}{\sigma_2}+
		 \sqrt{2 d \log\parens[\Big]{\frac{\sigma_2}{\sigma_1}}}.
\]
By triangle inequality, 
$$\norm{p-q}_1 \le \norm{p-w_2 p'}_1+ \norm{w_2 p'-w_2 q'}_1 +\norm{w_2 q' - q}_1 \le |w_1 - w_2|+w_2 \norm{p'-q'}_1,$$
which gives the required bound. An identical proof works when $w_1 \le w_2$. 
\end{proof}

%
\paragraph{Higher Dimensional Gaussians and Approximations.}

Let $\gamma_d$ be the Gaussian measure associated with a standard Gaussian with mean $0$ and variance $1$ in each direction. 

Using concentration bounds for the $\chi^2$ random variables, we have the following bounds for the lengths of vectors picked according to a standard Gaussian in $d$ dimensions (see (4.3) in \cite{laurent2000}).
\begin{lemma}\label{lem:lengthconc}
For a standard Gaussian in $d$ dimensions (mean $0$ and variance $1/(2\pi)$ in each direction), and any $t>0$  
\begin{align*}
\Pr_{x \sim \gamma_d}\Big[ \norm{x}^2 \ge \frac{1}{2\pi}(d + 2 \sqrt{d t}+2 t) \Big] &\le e^{-t}.\\
\Pr_{x \sim \gamma_d}\Big[ \norm{x}^2 \le \frac{1}{2\pi}(d - 2 \sqrt{d t}) \Big] &\le e^{-t}.
\end{align*}
\end{lemma}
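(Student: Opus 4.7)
The plan is to reduce the statement to a standard chi-squared tail bound by an appropriate rescaling. If $x \sim \gamma_d$, meaning each coordinate $x(r)$ is an independent $N(0, 1/(2\pi))$ random variable, then the rescaled coordinates $\sqrt{2\pi}\, x(r)$ are i.i.d.\ standard normals, so $Y := 2\pi \norm{x}^2 = \sum_{r=1}^d (\sqrt{2\pi}\, x(r))^2$ follows a $\chi^2$ distribution with $d$ degrees of freedom.

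Next, I would invoke the classical tail bounds for $\chi^2_d$ variables due to Laurent and Massart (equation (4.3) in \cite{laurent2000}), which state that for any $t > 0$,
\[
\Pr\bigl[Y \ge d + 2\sqrt{dt} + 2t\bigr] \le e^{-t}, \qquad
\Pr\bigl[Y \le d - 2\sqrt{dt}\bigr] \le e^{-t}.
\]
These bounds are obtained via a routine Chernoff/moment-generating-function argument applied to the $\chi^2_d$ distribution, combined with convexity estimates of the form $-\log(1-2u) \le 2u + 2u^2/(1-2u)$; I would not reprove them here but simply cite them.

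Finally, translating back via $Y = 2\pi \norm{x}^2$ gives
\[
\Pr\Bigl[\norm{x}^2 \ge \tfrac{1}{2\pi}(d + 2\sqrt{dt} + 2t)\Bigr]
= \Pr\bigl[Y \ge d + 2\sqrt{dt} + 2t\bigr] \le e^{-t},
\]
and analogously for the lower tail, yielding the two inequalities in the lemma. There is no real obstacle in this proof; the only subtle point is keeping track of the $2\pi$ normalization introduced by the paper's convention that each coordinate has variance $1/(2\pi)$ rather than $1$.
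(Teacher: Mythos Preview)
Your proposal is correct and matches the paper's approach exactly: the paper does not give a proof but simply cites equation~(4.3) of Laurent--Massart \cite{laurent2000} for the $\chi^2$ tail bounds, which is precisely the reduction-by-rescaling argument you describe.
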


Similarly, the following lemma shows a simple bound for the truncated moments, when $x$ is generated according to $N(0,\sigma^2/2\pi)^d$. 
\begin{lemma}\label{lem:app:truncmoments:highd}
For any $\tau\ge q$, and any $q \in \Z_+$
\begin{equation}\label{eq:truncmoments:highd}
\int_{\norm{x}_2 \ge 2q \sqrt{d} \sigma} \norm{x}_2^q \exp(-\pi \norm{x}_2^2/\sigma^2) \dx \le \sigma^q \exp(-4d).
\end{equation}
\end{lemma}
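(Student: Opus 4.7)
The plan is to prove this by reducing to one dimension via a change of variables, and then using the fact that the integrand is heavily suppressed by the Gaussian once we are well outside the typical shell of $\|x\| \approx \sigma\sqrt{d/(2\pi)}$. (I am reading the statement as bounding $\int_{\|x\|\ge 2q\sqrt{d}\sigma}\|x\|^{q}\,e^{-\pi\|x\|^{2}/\sigma^{2}}\,dx$ by $\sigma^{q+d}e^{-4d}$ up to an absolute constant; the factor $\sigma^d$ from the Lebesgue measure appears to be implicit in the stated RHS, and this is consistent with the way the lemma is invoked in the proof of Lemma~\ref{lem:box:samecomponent}, where it is applied inside an integral against $g_{\mu_j,\sigma_j}$ which already carries a $\sigma_j^{-d}$ normalization.)

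First I would substitute $y = x/\sigma$, so that the integral becomes
\[
 \sigma^{q+d}\int_{\|y\|\ge 2q\sqrt{d}} \|y\|^{q}\,e^{-\pi\|y\|^{2}}\,dy
 \;=\; S_{d-1}\,\sigma^{q+d}\int_{2q\sqrt{d}}^{\infty} r^{q+d-1}\,e^{-\pi r^{2}}\,dr,
\]
where $S_{d-1}$ is the surface area of the unit sphere in $\R^d$ and I have passed to polar coordinates.

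The key estimate is a pointwise bound on the radial integrand. I would split $e^{-\pi r^{2}} = e^{-\pi r^{2}/2}\cdot e^{-\pi r^{2}/2}$ and note that the function $r \mapsto r^{q+d-1}e^{-\pi r^{2}/2}$ is monotonically decreasing for $r \ge \sqrt{(q+d-1)/\pi}$. Since $q\ge 1$, the lower endpoint $2q\sqrt{d}$ comfortably exceeds this threshold, so the supremum on $[2q\sqrt{d},\infty)$ is attained at the left endpoint and equals $(2q\sqrt{d})^{q+d-1}\,e^{-2\pi q^{2}d}$. The leftover Gaussian tail satisfies $\int_{2q\sqrt{d}}^{\infty} e^{-\pi r^{2}/2}\,dr \le \sqrt{2}$, so the radial integral is bounded by $\sqrt{2}\,(2q\sqrt{d})^{q+d-1}\,e^{-2\pi q^{2}d}$. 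Combining this with the Stirling estimate $S_{d-1} \le (2\pi e/d)^{d/2}\sqrt{d}$ gives an overall bound of the form $\sigma^{q+d}\cdot \exp\!\bigl(-2\pi q^{2}d + (q+d)\log(2q\sqrt{d}) + \tfrac{d}{2}\log(2\pi e/d) + O(\log d)\bigr)$.

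The main (minor) obstacle is just bookkeeping of constants: one must check that the negative term $-2\pi q^{2}d$ dominates the polynomial-in-$d$ corrections $(q+d)\log(2q\sqrt{d})$ and $\tfrac{d}{2}\log(2\pi e/d)$ by at least $4d$. Since the hypothesis $\tau = 2q\sqrt{d} \ge q$ forces $q\ge 1$, the leading negative term satisfies $2\pi q^{2}d \ge 2\pi d$, and a direct inequality-chasing (using $(q+d)\log(2q\sqrt{d}) \le (q+d)(\log(2q) + \tfrac12\log d) \le \pi q^{2} d$ for $q\ge 1$ and $d\ge 1$) shows the exponent is at most $-4d$, as required. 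This completes the proof.
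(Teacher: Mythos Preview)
Your polar-coordinates approach is sound in outline but takes a different route from the paper's. The paper instead absorbs the polynomial factor into half of the Gaussian via the pointwise bound $\|x\|^q \le \exp(\pi\|x\|^2/2)$ (valid once $\|x\| \ge 2q\sqrt{d}$), reducing the integral to the pure Gaussian tail $\int_{\|x\|\ge 2q\sqrt{d}} e^{-\pi\|x\|^2/2}\,dx$; after a rescaling this is bounded directly by the $\chi^2$-concentration of Lemma~\ref{lem:lengthconc}. This sidesteps Stirling and the radial integral and is a three-line argument.

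There is, however, a concrete slip in your final bookkeeping: the inequality $(q+d)\log(2q\sqrt{d}) \le \pi q^2 d$ you invoke is \emph{false} for $q=1$ and $d$ large (the left side grows like $\tfrac{d}{2}\log d$, the right side only like $\pi d$). Your argument is rescued precisely because the surface-area term $\tfrac{d}{2}\log(2\pi e/d)$ carries a compensating $-\tfrac{d}{2}\log d$; you must combine the two \emph{before} bounding. After that cancellation the positive part of the exponent is $(q+d)\log(2q)+\tfrac{q}{2}\log d + \tfrac{d}{2}\log(2\pi e)+O(\log d)$, which for $q=1$ is $\approx 2.11\,d + O(\log d)$, leaving total exponent $\lesssim -(2\pi-2.11)d < -4d$; for $q\ge 2$ the $-2\pi q^2 d \le -8\pi d$ term dominates with room to spare. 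Your observation about the implicit $\sigma^d$ on the right-hand side is correct and consistent with how the lemma is applied.
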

\begin{proof}
Assume w.l.o.g that $\sigma=1$. For $\norm{x}_2 \ge 2q \sqrt{d}/\sqrt{2 \pi}$, $\norm{x}_2^q \le \exp(\pi \norm{x}_2^2/2)$. Hence,
\begin{align*}
\int_{\norm{x}_2 \ge 2q \sqrt{d}} \norm{x}_2^q \exp( - \pi \norm{x}_2^2 )\dx & \le \int_{\norm{x}_2 \ge 2q \sqrt{d}} \exp(-\pi  \norm{x}^2/2) \dx \\
&\le 2^{d/2} \int_{\norm{y}_2 \ge \frac{2q \sqrt{d}}{\sqrt{2}}} \exp\parens[\Big]{-\pi \norm{y}_2^2/2} d~y \le \exp(-5d+d/2) \le \frac{1}{16 \pi d^2},
\end{align*} 
where $y$ is distributed as a normal $d$-dimensional r.v. with mean $0$ and variance $1$ in each direction. 
\end{proof}

%

\onote{8/7: why is it here? also, the only place it's used is in the stuff I just added about the volume of a ball... or was it already used before?}
\anote{8/8 It's also used in the lower bound derivations involving Taylor expansion, but I don't refer to this explicitly.}
\begin{fact} [Stirling Approximation] \label{fact:stirling}
For any $n \ge 1$, $\sqrt{2 \pi n} \parens{n/e}^n \le n! \le e \sqrt{n} \parens{n/e}^n$.
\end{fact}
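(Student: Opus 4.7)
The plan is to study the sequence
\[
a_n := \frac{n!}{\sqrt{n}\,(n/e)^n},
\]
establish its monotonicity, and read off both inequalities as bounds on the resulting monotone sequence.

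First, I would compute $a_1 = 1/(1/e) = e$, which already matches the upper bound at $n=1$. Then I would examine
\[
\log\!\frac{a_{n+1}}{a_n} = 1 - \bigl(n + \tfrac{1}{2}\bigr) \log\!\Bigl(1 + \tfrac{1}{n}\Bigr),
\]
obtained by direct simplification from $\log a_n = \log n! - n\log n + n - \tfrac{1}{2}\log n$. The main step is to show that this ratio is $\le 0$, i.e.\ that $a_n$ is non-increasing. This reduces to the elementary inequality
\[
\log\!\Bigl(1 + \tfrac{1}{n}\Bigr) \ge \frac{2}{2n+1},
\]
which follows by expanding $\log(1+x) - \tfrac{2x}{2+x}$ at $x = 1/n > 0$: both sides vanish at $x=0$, and the derivative of the difference is $\tfrac{x^2}{(1+x)(2+x)^2} \ge 0$. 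Combining monotonicity with $a_1 = e$ yields the upper bound $n! \le e\sqrt{n}\,(n/e)^n$.

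For the lower bound I would use the classical fact that $a_n \to \sqrt{2\pi}$ as $n\to\infty$ (Stirling's formula). Together with monotonicity from the previous step, this forces $a_n \ge \sqrt{2\pi}$ for every finite $n$, giving $\sqrt{2\pi n}\,(n/e)^n \le n!$. The limit $a_n \to \sqrt{2\pi}$ itself can be proved by the standard Wallis product argument: writing $\pi/2 = \prod_{k\ge 1} \frac{(2k)^2}{(2k-1)(2k+1)}$ and expressing the partial product $\frac{2^{4n}(n!)^4}{((2n)!)^2(2n+1)}$ in terms of $a_n$ and $a_{2n}$, one obtains $a_n^2/a_{2n} \to \sqrt{2\pi}$; combined with the monotone convergence of $\{a_n\}$ (which has a limit in $[0,e]$ by the first step), this pins the limit down to $\sqrt{2\pi}$.

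The only nontrivial obstacle is the Wallis-product step needed to identify the limit constant as exactly $\sqrt{2\pi}$; the monotonicity half of the argument is purely mechanical. Since the paper only uses this fact in the crude form $(n/2)! \ge \sqrt{\pi n}\,(n/(2e))^{n/2}$ inside volume estimates, one could in fact cite Stirling directly rather than reprove it.
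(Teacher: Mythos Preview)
Your argument is correct and follows the standard route: show $a_n = n!/(\sqrt{n}\,(n/e)^n)$ is non-increasing via the inequality $\log(1+1/n) \ge 2/(2n+1)$, use $a_1 = e$ for the upper bound, and identify $\lim a_n = \sqrt{2\pi}$ via Wallis to get the lower bound. There is nothing to compare against, however: the paper states this as a fact without proof, treating it as a standard result from the literature. Your closing observation is also accurate---the paper only invokes the bound in coarse form (e.g., $(d/2)! \ge (d/(2e))^{d/2}$ in the volume estimate of Lemma~\ref{lem:ell2toell1}, and $r! \ge \sqrt{2\pi r}\,(r/e)^r$ in the Taylor expansion of Lemma~\ref{lem:matchingmoments}), so citing Stirling directly would be entirely in keeping with how the paper uses it.
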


\section{Newton's method for solving non-linear equations}

We use a standard theorem that shows quadratic convergence of the Newton method in any normed space~\cite{AtkinsonHall} in the restricted setting where both the range and domain of $f$ is $\R^m$.

Consider a system of $m$ non-linear equations in variables $u_1, u_2, \dots, u_m$:
$$\forall j \in [m], f_j(u_1, \dots, u_m) = b_j. $$

Newton's method starts with an initial point $u^{(0)}$ close to a solution $u^*$ of the non-linear system. Formally, $u^{(0)} \in \calN$ where $\calN$ is an appropriately defined neighborhood set $\calN=\set{y: \norm{y - u^*} \le \eps_0}$. Let $F'(u)=J_f(u) \in \R^{m \times m}$ be the Jacobian of the system given by the non-linear functional $f:\R^{m} \rightarrow \R^{m}$, where the $(j,i)^{th}$ entry is the partial derivate $J_f(j,i)= \frac{\partial f_j (u)}{\partial u_i} \vert_{y}$ is evaluated at $y$. Additionally, for our algorithm, we assume that given any $y \in \calN$, we only have access to an estimates $\tb, \tF(u),\tF'(u)$ of vector $b \in \R^m$, $F(u) \in \R^m$ and $F'(u) \in \R^{m \times m}$ respectively \footnote{These errors in the estimate may occur due to sampling errors or precision errors.}.

Newton's method starts with the initializer $u^{(0)}$, and updates the solution using the iteration:
\begin{equation} \label{eq:newtoniteration}
u^{(t+1)} = u^{(t)} + \left(\tF'(u^{(t)})\right)^{-1} \left( \tb- f(u^{(t)})\right) .
\end{equation}

The convergence error will be measured in the $\ell_p$ norm for any $p \ge 1$. In what follows, $\norm{x}:=\norm{x}_p$ for $x \in \R^m$, $\norm{M}:=\norm{M}_{p \to p}$ for $M \in \R^{m \times m}$ and $\norm{T}:=\norm{T}_{p \times p \to p}$ for $T \in \R^{m \times m \times m}$. We first state a simple mean-value theorem that will be useful in the analysis of the Newton method, as well as in applying the guarantees in the context of mixtures of Gaussians. 

\begin{lemma}[Proposition 5.3.11 in \cite{AtkinsonHall}]\label{lem:meanvalue}
Consider a function $H:K \subset \R^{m_1} \to \R^{m_2}$, with $K$ being an open set. Assume $H$ is differentiable on $K$ and that $F'(u)$ is a continuous function of $u$ on $K$. Assume $u, w \in K$ and assume line segment from joining them is also contained in $K$. Then
$$\norm{F(u)-F(w)} \le \sup_{0\le \theta \le 1} \norm{F'((1-\theta) u +\theta w)}\cdot \norm{u -w}.$$ 
\end{lemma}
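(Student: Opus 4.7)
The plan is to derive this as a standard normed-space mean value inequality via a scalar reduction along the line segment joining $u$ and $w$. Since the statement is cited as Proposition 5.3.11 in \cite{AtkinsonHall}, the goal is to give a self-contained sketch rather than anything novel. I will also implicitly correct the typo: the conclusion is about $F$ (not $H$), and $F:K\to\R^{m_2}$ is assumed differentiable with continuous derivative $F'$.

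First I would define the auxiliary path $\phi:[0,1]\to \R^{m_2}$ by $\phi(\theta)=F((1-\theta)u+\theta w)$. Since the line segment joining $u$ and $w$ lies in the open set $K$ and $F$ is differentiable on $K$ with continuous derivative, $\phi$ is continuously differentiable on $[0,1]$ with $\phi'(\theta)=F'((1-\theta)u+\theta w)(w-u)$. By the fundamental theorem of calculus applied componentwise,
\[
F(w)-F(u) \;=\; \phi(1)-\phi(0) \;=\; \int_0^1 F'((1-\theta)u+\theta w)(w-u)\,d\theta.
\]

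Next, to pass from this integral identity to the claimed norm inequality, I would reduce to a scalar statement using duality. Let $\xi$ be any dual functional on $\R^{m_2}$ with $\norm{\xi}_*=1$ such that $\xi(F(w)-F(u))=\norm{F(w)-F(u)}$ (this exists by Hahn--Banach, or trivially since we are in $\R^{m_2}$ with a $p$-norm). Applying $\xi$ to both sides and using linearity,
\[
\norm{F(w)-F(u)} \;=\; \int_0^1 \xi\bigl(F'((1-\theta)u+\theta w)(w-u)\bigr)\,d\theta.
\]
Bounding the integrand pointwise by $\norm{\xi}_*\cdot\norm{F'((1-\theta)u+\theta w)(w-u)}\le \norm{F'((1-\theta)u+\theta w)}\cdot \norm{w-u}$, and then pulling the supremum over $\theta\in[0,1]$ out of the integral, gives the desired estimate.

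The only mild subtlety is that $K$ is only assumed open (not convex), so the line segment assumption in the hypothesis is essential for $\phi$ to be well-defined on all of $[0,1]$; otherwise there is nothing tricky. I do not anticipate any real obstacle here: the key ingredient is just the continuity of $F'$ along the segment, which makes the Riemann integral well-defined and the componentwise fundamental theorem of calculus applicable. The operator-norm consistency $\norm{Av}\le\norm{A}\cdot\norm{v}$ then finishes the bound.
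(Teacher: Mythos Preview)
Your proof is correct and is the standard argument for the normed-space mean value inequality. Note that the paper does not actually prove this lemma; it simply cites it as Proposition~5.3.11 in \cite{AtkinsonHall}, so there is no ``paper's own proof'' to compare against. Your sketch (parametrize the segment, apply the fundamental theorem of calculus, then bound via a norming functional) is exactly the textbook derivation, and your observation about the $H$/$F$ typo is accurate.
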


The following theorem gives robustness guarantees for Newton's method. It is obtained by using matrix perturbation analysis along with a standard theorem regarding the quadratic convergence of the Newton's method (see Theorem 5.4.1 in \cite{AtkinsonHall}). 

\begin{theorem}\label{thm:newton:errors}
Assume $u^* \in \R^m$ is a solution to the equation $F(u)=b$ where $F: \R^m \rightarrow \R^m$ such that $J^{-1}=(F')^{-1}$ exists in a neighborhood $\calN=\set{y: \norm{y- u^*} \le \norm{u^{(0)} - u^*}}$,
and $F':\R^m \rightarrow \R^{m \times m}$ is locally $L$-Lipschitz continuous in the neighborhood $\calN$ i.e.,
$$ \norm{F'(u) - F'(v)} \le L \norm{u - v} ~~ \forall u, v \in \calN. $$
Further, let the estimates $\tb, \tF(u),\tF'(u)$ satisfy for some $\eta_1, \eta_2, \eta_3>0$ and all $y \in \calN$
$$ \norm{\tb-b} \le \eta_1, \norm{\tF(u) - F(u)}\le \eta_2, \norm{\tF'(u)-F'(u)} \le \eta_3. $$
Then if $\eta_3 \norm{F'(u^{(t)})^{-1}} < 1$, $\norm{F'(u)} \le B$, then for all $u \in \calN$, the error $\eps_t = \norm{u^{(t)}-u^*}$ after the $t$ iterations of \eqref{eq:newtoniteration} satisfies
\begin{equation}
  \eps_{t+1} \le \eps_t^2 \cdot L\norm{F'(u^{(t)})^{-1}} + \norm{F'(u^{(t)})^{-1}}  \left(\eta_1 + \eta_2 + 4\eta_3 \eps_t \norm{F'(u^{(t)})^{-1}} B  \right) \label{eq:newton:errors}
\end{equation}
\end{theorem}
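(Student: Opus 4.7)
The plan is to reduce Theorem~\ref{thm:newton:errors} to the noiseless quadratic-convergence statement Theorem~\ref{thm:newton} by comparing the actual iterate $u^{(t+1)}$ to the ``ideal'' (noiseless) Newton iterate
$$v^{(t+1)} \;:=\; u^{(t)} + (F'(u^{(t)}))^{-1}(b - F(u^{(t)})),$$
so that by the triangle inequality $\eps_{t+1} \le \|v^{(t+1)} - u^*\| + \|u^{(t+1)} - v^{(t+1)}\|$. Theorem~\ref{thm:newton} (which applies since the segment from $u^{(t)}$ to $u^*$ lies in $\calN$ and $F'$ is $L$-Lipschitz on $\calN$) handles the first piece, supplying exactly the $L\|F'(u^{(t)})^{-1}\|\eps_t^2$ summand of \eqref{eq:newton:errors}. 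The rest of the argument is devoted to bounding the noise term $\|u^{(t+1)} - v^{(t+1)}\|$.

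For the noise term, I would write
$$u^{(t+1)} - v^{(t+1)} \;=\; (\tF'(u^{(t)}))^{-1}(\tb - \tF(u^{(t)})) - (F'(u^{(t)}))^{-1}(b - F(u^{(t)})),$$
insert $\pm\,(\tF'(u^{(t)}))^{-1}(b - F(u^{(t)}))$, and split the result into a ``right-hand side'' term $(\tF'(u^{(t)}))^{-1}\bigl[(\tb - b) - (\tF(u^{(t)}) - F(u^{(t)}))\bigr]$ and a ``Jacobian'' term $\bigl[(\tF'(u^{(t)}))^{-1} - (F'(u^{(t)}))^{-1}\bigr](b - F(u^{(t)}))$. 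The first is immediately bounded by $\|(\tF'(u^{(t)}))^{-1}\|\cdot(\eta_1 + \eta_2)$ using the hypotheses $\|\tb - b\|\le\eta_1$ and $\|\tF(u^{(t)}) - F(u^{(t)})\|\le\eta_2$. For the second I would use the resolvent identity
$$(\tF'(u^{(t)}))^{-1} - (F'(u^{(t)}))^{-1} \;=\; (\tF'(u^{(t)}))^{-1}\bigl(F'(u^{(t)}) - \tF'(u^{(t)})\bigr)(F'(u^{(t)}))^{-1},$$
together with Lemma~\ref{lem:meanvalue} applied to $F$ along the segment from $u^*$ to $u^{(t)}$ (which lies in $\calN$ and on which $\|F'\|\le B$), yielding $\|b - F(u^{(t)})\| = \|F(u^*) - F(u^{(t)})\| \le B\eps_t$. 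This bounds the Jacobian term by $\eta_3 \|(\tF'(u^{(t)}))^{-1}\|\,\|F'(u^{(t)})^{-1}\|\cdot B\eps_t$.

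To close the argument I need to replace $\|(\tF'(u^{(t)}))^{-1}\|$ by the noiseless quantity $\|F'(u^{(t)})^{-1}\|$. This is the standard matrix-perturbation / Neumann-series step: under the hypothesis $\eta_3\|F'(u^{(t)})^{-1}\| < 1$ one has $\|(\tF'(u^{(t)}))^{-1}\| \le \|F'(u^{(t)})^{-1}\|/(1 - \eta_3\|F'(u^{(t)})^{-1}\|)$, and when $\eta_3\|F'(u^{(t)})^{-1}\|$ is bounded away from $1$ by a small absolute constant the denominator contributes only a constant factor that can be absorbed into the $4$ in \eqref{eq:newton:errors}. Substituting back gives $\|u^{(t+1)} - v^{(t+1)}\| \le \|F'(u^{(t)})^{-1}\|(\eta_1 + \eta_2 + 4\eta_3\eps_t\|F'(u^{(t)})^{-1}\|B)$, and combining with the quadratic-convergence bound via the triangle inequality proves \eqref{eq:newton:errors}.

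The main obstacle I anticipate is not any single step (each is either routine linear algebra or a standard application of Theorem~\ref{thm:newton} and Lemma~\ref{lem:meanvalue}) but the bookkeeping of constants: one has to verify that the Neumann-series approximation $\|(\tF'(u^{(t)}))^{-1}\| \approx \|F'(u^{(t)})^{-1}\|$ is tight enough under the stated hypothesis to yield the clean constant $4$ in the final bound, and one has to check that the iterates $u^{(t)}$ and the interpolating segments used in the applications of Lemma~\ref{lem:meanvalue} and Theorem~\ref{thm:newton} all remain inside $\calN$ throughout. Both are standard for Newton-type analyses but need to be tracked carefully.
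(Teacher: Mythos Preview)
Your proposal is correct and follows essentially the same approach as the paper: introduce the ideal Newton iterate, invoke the standard quadratic-convergence bound for the first piece, and control the noise term via a resolvent/perturbation identity together with Lemma~\ref{lem:meanvalue} for $\|b-F(u^{(t)})\|\le B\eps_t$. The only cosmetic difference is the order of the splitting---the paper inserts $\pm\,A^{-1}(\tb-\tF(u^{(t)}))$ (so the perturbation $(\tA^{-1}-A^{-1})$ hits the noisy residual $\tb-\tF$) whereas you insert $\pm\,(\tF')^{-1}(b-F(u^{(t)}))$---but both routes yield the same bound after applying the Neumann/perturbation estimate under $\eta_3\|F'(u^{(t)})^{-1}\|<1$.
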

\anote{8/30:Rewrote the proof here.}
\begin{proof}

We sketch the proof here. For convenience, let us use $A:=F'(u^{(t)})$ and $\tA:=\tF'(u^{(t)})$ to denote the derivate and its estimate at $u^{(t)}$. Let $z^{(t+1)}$ be the Newton iterate after the $(t+1)$th step if we had the exact values of $F(u^{(t)}$ and $F'(u^{(t)})$ i.e., $z^{(t+1)}= u^{(t)}+ A^{-1} \Paren{b-F(u^{(t)})}$. 
From the standard analysis of the Newton method (see Theorem 5.4.1 in ~\cite{AtkinsonHall}),
$$\norm{z^{(t+1)} - u^*}  \le L \norm{A^{-1}} \norm{u^{(t)}-u^*}^2.$$ 
\anote{8/30: Did not write down this proof since it's repetitive. Could include those 3-4 lines if we want it to be self-contained.}
Further, the error between the actual Newton update $u^{(t+1)}$ and $z^{(t)}$ due to the estimates $\tA$ and $\tF(u^{(t)})$ is
\begin{align*}
u^{(t+1)} - z^{(t+1)} &= \tA^{-1} \Paren{\tb-\tF(u^{(t)})} - A^{-1} \Paren{b-F(u^{(t)})}  \\
&= (\tA^{-1} - A^{-1})\Paren{\tb-\tF(u^{(t)})} + A^{-1} \Paren{\tb- b + \tF(u^{(t)})-F(u^{(t)})}  \\
\norm{u^{(t+1)} - z^{(t+1)}} &\le \norm{\tA^{-1} - A^{-1}} \norm{\tb-\tF(u^{(t)})} + \norm{A^{-1}} (\eta_1+\eta_2).
\end{align*}
From perturbation bounds on matrix inverses~\cite{Bhatia}, if $\norm{A^{-1} E} <1$, 
$$\norm{A^{-1} - (A+E)^{-1}} \le \norm{A^{-1}} \cdot \frac{\norm{A^{-1} E}}{1- \norm{A^{-1} E}}.$$
Also from Lemma~\ref{lem:meanvalue}, $\norm{b-F(u^{(t)})} \le \norm{F'(u')} \norm{u^{(t)} - u^*}$ for some $u' \in \calN$. Substituting $E=\tA-A$,
\begin{align*}
\norm{u^{(t+1)} - z^{(t+1)}} &\le \norm{A^{-1}} \frac{\norm{A^{-1}(\tA- A)}}{1- \norm{A^{-1}(\tA- A)}} \Paren{\eta_1+\eta_2+\norm{b-F(u^{(t)})}} + \norm{A^{-1}} (\eta_1+\eta_2)\\
& \le  2\eta_3 \norm{A^{-1}}^2 \Paren{\eta_1+\eta_2+ \norm{F'(u')} \norm{u^{(t)}-u^*}} + \norm{A^{-1}} (\eta_1+\eta_2)\\
& \le 4 \eta_3 \eps_t \norm{A^{-1}}^2 \norm{F'(u')}+ \norm{A^{-1}} (\eta_1+\eta_2).
\end{align*}
\end{proof}
\anote{8/31: Added remark. Reword maybe?} 
\begin{remark}
While the above theorem requires that the derivative $F'$ is locally $L$-Lipschitz, this is a weaker condition than requiring a upper bound on the operator norm of the second derivative $F''$. Lemma~\ref{lem:meanvalue} shows that it also suffices if $\norm{F''(u)} \le L$ for all $u \in \calN$.  
\end{remark}

\begin{corollary}\label{corr:newton:errors}
Under the conditions of Theorem~\ref{thm:newton:errors}, there exists $0<\eps_0< \frac{1}{2L\norm{F'(u^{(t)})^{-1}} }$, such that for any given $\delta \in (0,1)$, there is an $\eta_1, \eta_2, \eta_3>0$ with
$$(\eta_1 + \eta_2) < \frac{\delta}{ 4\norm{F'(u^{(t)})^{-1}}} \text{ and } \eta_3< \frac{\delta}{4\norm{F'(u^{(t)})^{-1}}^2} \cdot \min\set{1, \frac{1}{B}}$$
such that after $T=\log\log(1/\delta)$ iterations of the Newton's method, we have
$$	\norm{u^{(T)}- u^*} \le \delta.  $$
\end{corollary}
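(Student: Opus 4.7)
The plan is to apply the error-sensitive recurrence from Theorem~\ref{thm:newton:errors} with a suitable choice of $\eps_0$ and then verify that, with the stated bounds on $\eta_1,\eta_2,\eta_3$, the iterates contract to the target accuracy $\delta$ in $T=O(\log\log(1/\delta))$ steps. Let $K:=\max_{u\in\calN}\norm{F'(u)^{-1}}$, assume without loss of generality that $LK\ge 1$ and $B\ge 1$ (otherwise absorb into the constants), and pick $\eps_0:=1/(6LK)$, which satisfies $\eps_0<1/(2LK)$. The hypotheses $(\eta_1+\eta_2)< \delta/(4K)$ and $\eta_3<\delta\min\{1,1/B\}/(4K^2)$ translate into $K(\eta_1+\eta_2)<\delta/4$ and $4K^2B\eta_3<\delta$, so plugging into \eqref{eq:newton:errors} gives the simplified recurrence
\[
\eps_{t+1}\;\le\;LK\,\eps_t^{\,2}\;+\;\delta\,\eps_t\;+\;\tfrac{\delta}{4}.
\]

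I would split the analysis into a \emph{quadratic phase} and a \emph{cleanup step}. In the quadratic phase, define the threshold $\theta:=\sqrt{\delta/(4LK)}$; since $LK\ge 1$ and $\delta<1$ this satisfies $\theta\ge \delta/(LK)$. Whenever $\eps_t\ge \theta$, both error terms are dominated by the quadratic one: $\delta/4\le LK\,\eps_t^{\,2}$ and $\delta\,\eps_t\le LK\,\eps_t^{\,2}$, so $\eps_{t+1}\le 3LK\,\eps_t^{\,2}$. The rescaling $\xi_t:=3LK\,\eps_t$ then gives $\xi_{t+1}\le \xi_t^{\,2}$ with $\xi_0=1/2$, and induction yields $\xi_t\le 2^{-2^t}$. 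Hence $\eps_t$ drops below $\theta$ once $2^{-2^t}/(3LK)\le \sqrt{\delta/(4LK)}$, which happens after at most $T_1=\lceil\log_2\log_2(1/\delta)\rceil+O(1)$ iterations. In the cleanup step, once $\eps_t\le \theta\le 1/2$ (the latter holds because $LK\ge 1$ and $\delta<1$), the recurrence gives $\eps_{t+1}\le LK\,\theta^{2}+\delta\,\theta+\delta/4\le \delta/4+\delta/2+\delta/4=\delta$.

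Together, $T=T_1+1=O(\log\log(1/\delta))$ iterations of Newton's method suffice to produce $\norm{u^{(T)}-u^*}\le \delta$. The main technical point is the bookkeeping that ensures the threshold $\theta$ lies strictly below $\eps_0$, so that the quadratic phase is non-trivial and the ``number of correct bits doubles per step'' regime really kicks in — this is where the precise constants in the hypotheses on $\eta_1,\eta_2,\eta_3$ are used. The degenerate case $\delta\ge \eps_0$ is immediate, since then the initial error already satisfies $\eps_0\le \delta$ and no Newton updates are required.
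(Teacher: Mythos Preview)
Your proof is correct and follows the same strategy as the paper: plug the $\eta_i$ bounds into the recurrence of Theorem~\ref{thm:newton:errors} and then argue doubly-exponential decay of $\eps_t$. Your explicit two-phase decomposition (quadratic contraction above the threshold $\theta=\sqrt{\delta/(4LK)}$, followed by a single cleanup step) is in fact more careful than the paper's argument, which simply asserts $\eps_t \le 2^{-2^t}/(L\norm{(F')^{-1}})$ by induction without explicitly accounting for the additive $\delta/2$ term in the recurrence.
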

\begin{proof}
For the given setting of $\eta_1, \eta_2, \eta_3$, we have $(\eta_1 + \eta_2) \norm{F'(u^{(t)})^{-1}} < \tfrac{\delta}{4}$, and  $B \norm{F'(u^{(t)})^{-1}}^2 \eps_t \eta_3 \le \delta/4$.
From Theorem~\ref{thm:newton:errors}, we have that for any $t$,
$$ \eps_{t+1} \le \eps_t^2 L \norm{F'(u^{(t)})^{-1}} + \frac{\delta}{2}.$$
Further $\eps_1 \le \eps_0^2 L \norm{F'(u^{(t)})^{-1}} < \frac{1}{4L \norm{F'(u^{(t)})^{-1}}}$. By induction, it follows that
$$ \eps_{t} \le \frac{2^{-2^{t}}}{L \norm{F'(u^{(t)})^{-1}}}.$$
Hence, this gives the required guarantee.
\end{proof}

\section{Dimension Reduction using PCA}\label{app:pca}
\newcommand{\barsig}{\bar{\sigma}}
\newcommand{\Atilde}{\tilde{A}}
\newcommand{\Utilde}{\tilde{U}}
\newcommand{\Vtilde}{\tilde{V}}

Here we give a proof of the assertion that for mixtures of spherical Gaussians, we can assume without loss of generality that $d \le k$. 

\begin{theorem}[Same as Theorem~\ref{thm:pca}]
 Let $\set{(w_i, \mu_i, \sigma_i): i \in [k]}$ be a mixture of $k$ spherical Gaussians that is $\rho$ bounded, and let $\wmin$ be the smallest mixing weight. Let $\mu'_1, \mu'_2, \dots, \mu'_k$ be the projections onto the subspace spanned by the top $k$ singular vectors of sample matrix $X \in \R^{d \times N}$. For any $\eps>0$, with $N= \poly(d, \rho, \wmin^{-1},\eps^{-1})$ samples we have with high probability
 $$ \forall i \in [k],~ \norm{\mu_i - \mu'_i}_2 \le \eps.$$ 
 \end{theorem}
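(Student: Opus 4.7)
My plan is to avoid the usual Davis--Kahan/Wedin route (which runs into trouble when the means span fewer than $k$ dimensions and the top-$k$ eigenspace of the population covariance is not uniquely defined), and instead use a direct variational argument on the sample covariance matrix $\widehat{\Sigma} := \tfrac{1}{N} X X^\top$. Let $\Sigma := \E[\widehat{\Sigma}] = M + \bar{\sigma}^2 I$, where $M := \sum_{i=1}^k w_i \mu_i \mu_i^\top$ and $\bar{\sigma}^2 := \sum_{i=1}^k w_i \sigma_i^2/(2\pi)$, and let $V := \mathrm{span}(\mu_1,\ldots,\mu_k)$, which has some dimension $k^\star \le k$. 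Let $\widehat{\Pi}$ denote the orthogonal projector onto the top-$k$ singular subspace of $X$ (equivalently, the top-$k$ eigenspace of $\widehat{\Sigma}$); then $\mu'_i = \widehat{\Pi}\mu_i$.

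First I would establish concentration: with $N = \poly(d,\rho,\wmin^{-1},\varepsilon^{-1})$ samples, $\|\widehat{\Sigma} - \Sigma\|_{\mathrm{op}} \le \eta$ with high probability, where $\eta := \varepsilon^2 \wmin /(2k)$. This is standard: each $x^{(\ell)}(x^{(\ell)})^\top$ is a rank-one random matrix of bounded-with-high-probability norm (Gaussian tails truncated via Lemma~\ref{lem:lengthconc}), so matrix Bernstein (or a net/Chernoff argument) gives the required tail bound.

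The heart of the proof is a trace comparison that sidesteps any need for an eigenvalue gap at position $k$. Let $\Pi_V$ be any rank-$k$ projector whose range contains $V$ (e.g., the projector onto $V$ padded with $k - k^\star$ arbitrary orthogonal directions). Since $\widehat{\Pi}$ maximizes $\mathrm{tr}(\Pi \widehat{\Sigma})$ over rank-$k$ orthogonal projectors, we have $\mathrm{tr}(\widehat{\Pi}\widehat{\Sigma}) \ge \mathrm{tr}(\Pi_V \widehat{\Sigma})$. Using $|\mathrm{tr}(\Pi(\widehat{\Sigma}-\Sigma))| \le k \cdot \|\widehat{\Sigma}-\Sigma\|_{\mathrm{op}} \le k\eta$ for any rank-$k$ projector $\Pi$, this transfers to
\[
\mathrm{tr}(\widehat{\Pi}\Sigma) + k\eta \;\ge\; \mathrm{tr}(\Pi_V \Sigma) - k\eta.
\]
Now the key cancellation: both projectors have rank exactly $k$, so the isotropic noise contributes $\mathrm{tr}(\widehat{\Pi} \cdot \bar{\sigma}^2 I) = \mathrm{tr}(\Pi_V \cdot \bar{\sigma}^2 I) = k \bar{\sigma}^2$ on both sides. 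Since $\Pi_V$ preserves all of $V$, $\mathrm{tr}(\Pi_V M) = \mathrm{tr}(M)$. Rearranging yields $\mathrm{tr}((I-\widehat{\Pi}) M) \le 2k\eta$.

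Finally, expanding $M = \sum_i w_i \mu_i \mu_i^\top$ and using $\mathrm{tr}((I-\widehat{\Pi})\mu_i \mu_i^\top) = \|(I-\widehat{\Pi})\mu_i\|_2^2$, we obtain $\sum_i w_i \|(I-\widehat{\Pi})\mu_i\|_2^2 \le 2k\eta$, hence for each $i$,
\[
\|\mu_i - \mu'_i\|_2^2 \;=\; \|(I-\widehat{\Pi})\mu_i\|_2^2 \;\le\; \frac{2k\eta}{\wmin} \;\le\; \varepsilon^2,
\]
as desired. The main obstacle I expect is the degenerate-dimension case (addressed above by the trace/variational trick rather than a spectral gap), together with getting clean polynomial bounds on $N$ in the concentration step; both are handled by the outline above.
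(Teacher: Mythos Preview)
Your argument is correct and takes a genuinely different route from the paper's own proof. The paper \emph{does} go through a Davis--Kahan/eigenvector perturbation bound; to handle the degenerate case where the means span fewer than $k$ directions, it introduces an adaptive rank $r \le k$ chosen so that the $r$th and $(r{+}1)$th eigenvalues of $M = \sum_i w_i \mu_i\mu_i^\top$ are separated by at least some threshold $\delta$. It then bounds $\|U^\perp \mu_i\|_2$ via $w_i U^\perp \mu_i\mu_i^\top U^\perp \preceq U^\perp M U^\perp$, applies Davis--Kahan between the top-$r$ population eigenspace $U$ and the bottom-$(d-k)$ empirical eigenspace $\widetilde V$, and combines the two pieces to control $\|\widetilde V \mu_i\|_2$.

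Your trace/variational argument sidesteps the spectral-gap issue entirely: by comparing $\mathrm{tr}(\widehat\Pi\,\widehat\Sigma)$ against $\mathrm{tr}(\Pi_V\,\widehat\Sigma)$ and exploiting that the isotropic part $\bar\sigma^2 I$ contributes identically to any rank-$k$ projector, you get $\mathrm{tr}\big((I-\widehat\Pi)M\big)\le 2k\eta$ directly, and then read off per-mean bounds via $w_i\|(I-\widehat\Pi)\mu_i\|_2^2 \le \mathrm{tr}\big((I-\widehat\Pi)M\big)$. This is cleaner and more elementary, and incidentally yields a better polynomial dependence: you only need the concentration error $\eta$ to be of order $\varepsilon^2 w_{\min}/k$, whereas the paper's route requires $\eta$ to be of order $(w_{\min}\varepsilon^4/\rho^2)^2$. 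The price is minor: the paper's decomposition also localizes the error (separating the ``signal-subspace rotation'' from the ``small-eigenvalue leakage''), which can be informative, but for the stated theorem your approach is simply shorter.
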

\begin{proof}
Let $\delta =  \wmin \eps^4/ (2\rho^2) $ and $\eta = \delta^2/2$.
Let $A$ be the population average, i.e.,  
$$ A = \E [ x x^T] = M+ \barsig^2 I, \text{ where } M=\sum_i w_i \mu_i \mu_i^T, ~\text{and }  \barsig^2= \frac{1}{2 \pi}\sum_{i \in [k]} w_i \sigma_i^2.$$ 
Let $\lambda_1 \ge \lambda_2 \ge \dots \ge \lambda_d \ge 0$ be the 
eigenvalues of $M$
sorted in non-increasing order.
Since $M$ is of rank at most $k$, we have $\lambda_{k+1} = \cdots = \lambda_d = 0$. 
Let $r \le k$ be defined as the smallest index 
such that $\lambda_{r+1} < \delta$. 
Let $U$ be represent the orthogonal projector onto the top-$r$ eigenspace of $M$ (and hence of $A$ too), and $U^{\perp}=I-U$.
Notice that $\norm{U^{\perp} M U^{\perp}} = \lambda_{r+1} < \delta$. 
Then, using 
the positive semidefinite inequality $w_i U^\perp \mu_i \mu_i^T U^\perp \preceq U^\perp M U^\perp$, 
we obtain that
\[
  \norm{U^{\perp} \mu_i}_2^2 \le w_i^{-1} \bignorm{U^{\perp} M U^{\perp}}
	  =  \lambda_{r+1}/w_i < \delta / \wmin \; .
\]

Next, let $\Atilde$ represent the sample average $\Atilde=\frac{1}{N} X X^T$ 
(so $\Atilde$ converges to $A$ as $N \to \infty$).
Since $N \ge \poly(d)/\eta^2$, using standard concentration bounds (see Theorem 6.1.1 in \cite{Tropp2012} and related notes)
we have 
with high probability that $\norm{A - \Atilde} < \eta $. 
Let $\Utilde$ be the orthogonal projector onto the top-$k$ eigenspace of $\Atilde$, and let $\Vtilde=I - \Utilde$ be the orthogonal projector onto the bottom $(d-k)$ eigenspace of $\Atilde$. Hence for each $i \in [k]$, $\mu'_i = \Utilde \mu_i$. 
Notice that from Weyl's perturbation bounds for eigenvalues (see \cite{Bhatia}, Theorem III.2.1) $\lambda_{k+1}(\Atilde) \le \barsig^2+ \eta$. 
Therefore, the eigenvalues of $\Atilde$ corresponding to $\Vtilde$
(which are all at most $\barsig^2+ \eta$), 
and the eigenvalues of $A$ corresponding to $U$ 
(which are all at least $\barsig^2+ \delta$)
are separated by at least $\delta - \eta$. From standard perturbation bounds for eigenvectors (see \cite{Bhatia}, Theorem VII.3.1), we have
$$\norm{U \Vtilde } \le  \frac{\norm{A - \Atilde}}{\delta - \eta} \le \frac{\eta}{\delta - \eta} \le \delta .$$
Hence, for each $i \in [k]$, 
\begin{align*}
\norm{\mu_i - \mu'_i}_2^2 & = \norm{\Vtilde \mu_i}_2^2 = \iprod{\mu_i, \Vtilde \mu_i} \\
&= \iprod{U \mu_i, \Vtilde \mu_i} + \iprod{U^{\perp} \mu_i , \Vtilde \mu_i} = \iprod{\mu_i, U \Vtilde \mu_i} + \iprod{U^{\perp} \mu_i , \Vtilde \mu_i}\\
& \le \norm{U \Vtilde} \norm{\mu_i}_2^2 + \norm{U^{\perp} \mu_i}_2 \norm{\mu_i}_2\\
&\le \delta \norm{\mu_i}_2^2 + \sqrt{\frac{\delta}{\wmin}} \cdot \norm{\mu_i}_2  \le \eps^2
\end{align*}
by our choice of $\delta = \wmin \eps^4/(2\rho^2)$. 
\end{proof}

\section*{Acknowledgements}
The authors thank Santosh Vempala for suggesting the problem of
learning one-dimensional Gaussians, and other helpful discussions.
Part of this work was done when the second author was at the Courant
Institute and the Simons Collaboration on Algorithms and Geometry.

\bibliographystyle{plain}
\bibliography{aravind}

\end{document}